%% file: HOIF_GTM_arXiv.tex
\documentclass[letterpaper,11pt]{article}
\usepackage[margin=1in]{geometry}
\input{preamble.tex}
\usepackage[T1]{fontenc}
\usepackage{lmodern}
\providecommand{\keywords}[1]{\textbf{Keywords:} #1}
\usepackage{bbold}
\usepackage[mathcal]{euscript}
\usepackage{bibunits}
\usepackage{makecell}
\def\sfN{\mathsf{N}}
\usepackage{orcidlink}
\usepackage{epstopdf}
\def\bbeta{\bm{\beta}}

\def\bt{\mathbf{t}}
\def\bT{\mathbf{T}}

\def\nuis{\mathsf{nuis}}
\def\test{\mathsf{test}}
\def\sfwbar{\bar{\sfw}}
\def\sfubar{\bar{\sfu}}
\def\H{\mathrm{H}}
\def\F{\mathrm{F}}
\def\G{\mathrm{G}}
\def\P{\mathbb{P}}

\def\p{\mathsf{p}}

\def\T{\mathrm{T}}
\def\proj{\mathsf{proj}}

\def\T{\mathrm{T}}

\def\ini{\mathsf{init}}

\def\BK{\mathrm{BK}}

\newcommand{\Lin}[1]{{\color{blue}{#1}}}
\newcommand{\Rev}[1]{{\color{red}{#1}}}

\newcommand{\Yulin}[1]{{ \color{blue}{#1}}}

\begin{document}
	
\title{Higher-Order Debiased Estimators for General Treatment Models\footnote{LL and ZZ co-advised the project, are listed in alphabetical order (equal contribution), and serve as co-corresponding authors. The authors would like to thank the Co-Editor, Matias Cattaneo, the Associate Editor and two reviewers for their helpful comments that significantly improved our paper. The authors are also grateful for insightful discussions with Jamie Robins, \href{https://www.matteobonvini.com/}{Matteo Bonvini} (on dose-response estimation) and with \href{https://zijguo.github.io/}{Zijian Guo} (on computation). The research related to this paper is supported by  funds from the National Key R\&D Program of China [grant number 2022YFA1008300] (Zheng Zhang), NSFC Grant No. 12471274 (Lin Liu), the Fundamental Research Funds for the Central Universities, and the Research Funds
of Renmin University of China [project number 23XNA025] (Zheng Zhang), and Science and Technology Talent and Platform Program of Yunnan Province Grant No. 202605AF350070 (Lin Liu).}}

\author[1]{Yulin Zhang\orcidlink{0009-0002-7094-8377}\thanks{E-mail: \href{yulinzhang@ruc.edu.cn}{yulinzhang@ruc.edu.cn}}}
\author[2]{Lin Liu\orcidlink{0000-0002-9883-7962}\thanks{E-mail: \href{linliu@sjtu.edu.cn}{linliu@sjtu.edu.cn}}} 
\author[1]{Zheng Zhang\thanks{E-mail: \href{zhengzhang@ruc.edu.cn}{zhengzhang@ruc.edu.cn}}}
\affil[1]{Institute of Statistics \& Big Data, Renmin University of China}
\affil[2]{Institute of Natural Sciences, MOE--LSC, School of Mathematical Sciences, CMA--Shanghai, SJTU--Yale Joint Center for Biostatistics and Data Science, Shanghai Jiao Tong University}
	
\date{}
	
\maketitle

\vspace{-2.5em}

\begin{bibunit}[plainnat]

\begin{abstract} 
We have witnessed tremendous progress in developing the foundation for econometrics and causal inference in the past decades. The most popular paradigm in the current literature is the classical (first-order) semiparametric theory, in which a key building block is the (first-order) influence function. However, it is now well known that estimators based on influence functions can be sub-optimal in terms of convergence rates in various settings. To address this issue, higher-order influence functions (HOIF) are developed, generalizing the classical semiparametric theory. However, most existing results in this regard focus on treatment effect parameters defined in explicit forms, such as average treatment effects (ATE). In applications, economists are often confronted with tasks of inferring more complex parameters, such as quantile treatment effects (QTE) or effects of complicated treatment regimes/policy. These more complex parameters can often only be implicitly defined as the solution to nonlinear estimating equations, which correspond to $M$/$Z$-estimation problems. Our current understanding of these problems is mainly limited to the classical semiparametric theory. Given the foundational role of HOIF for estimating explicit parameters such as ATE, a modest step toward enriching the statistical foundation of econometrics and causal inference is to develop the corresponding higher-order estimators for those more complex parameters. To this end, we consider parameters of a class of non-separable structural models in the econometrics literature and develop a class of higher-order estimators for the target parameters. Statistical properties of these higher-order estimators are derived using recent advances in $U$-processes theory. Our proposed higher-order estimators relax complexity-reducing assumptions, quantified by \Holder{} smoothness, imposed on the nuisance parameters compared to existing alternative estimators for many important parameters in this class, including QTE and quantile dose-response functions, among others. Numerical experiments, including simulation studies and a real data analysis, are also conducted to corroborate our theoretical claims and illustrate how higher-order estimators can be used in practice.
\end{abstract}

\keywords{Bias Reduction, Causal Inference, General Treatment Models, Higher-Order Influence Functions, Quantile Treatment Effects, Two-Step Procedures}

\allowdisplaybreaks

\section{Introduction}
\label{sec:intro}

\subsection{Motivation}

Recent decades have witnessed surged interests in the econometrics and causal inference literature on the construction of nearly rate-optimal estimators for parameters/estimands/(statistical) functionals, such as the average treatment effect (ATE). One particular fruitful econometric paradigm is through the lens of semiparametric theory \citep{schick1986asymptotically, newey1990semiparametric}, which delivers, among other things, the celebrated doubly-robust Augmented Inverse Probability Weighting (AIPW) estimator of the ATE \citep{robins1994estimation, hahn1998role, chernozhukov2018double}, based on the \emph{(first-order) influence function} of the target parameter. The asymptotic variance of such estimators is known to achieve the \emph{semiparametric efficiency bound}, provided that sufficient  complexity-reducing assumptions are imposed on \emph{the nuisance parameters} (the outcome regression and the propensity score in the case of ATE).

\subsubsection*{Sub-optimality of first-order estimators}

Despite the popularity of estimators based on influence functions (henceforth, first-order estimators), growing evidence suggests that they can be sub-optimal in certain settings. Specifically, even when the parameter of interest is \emph{estimable} at the parametric $n^{-1/2}$ rate, first-order estimators may fail to achieve $n^{-1/2}$-consistency, letting alone attain the semiparametric efficiency bound. This sub-optimality can arise in a variety of contexts, including, e.g., when the nuisance parameters are assumed to belong to \Holder{}-type smoothness classes  \citep{liu2017semiparametric, robins2023minimax}, structure-agnostic classes for certain types of parameters \citep{robins1997toward, balakrishnan2026fundamental, liu2024assumption, jin2025sharp}, or hybrid smoothness--structure-agnostic classes \citep{bonvini2024doubly}; or when the nuisance parameters are in the many-covariate or small bandwidth settings, where a direct plug-in estimator based on first-order influence functions possibly may lead to a ``leave-in'' bias \citep{cattaneo2018kernel, cattaneo2019two}.

Using ATE and classical nonparametric nuisance models as an illustration, when both the outcome regression and the propensity score belong to \Holder{} smoothness classes with smoothness index $s$ and variable dimension $d$, standard\footnote{When we add the qualification ``standard'', we mean first-order estimators where the nuisance parameter estimates are computed from an independent nuisance sample as in \citet{chernozhukov2018double} or \citet{kallus2024localized}.} first-order estimators of ATE are $n^{-1/2}$-consistent when $\frac{2 s / d}{1 + 2 s / d} > 0.5$, i.e., $s / d > 0.5$. However, it is well known that $s / d > 0.25$ is both sufficient and necessary (hence minimal) for ATE to be $n^{-1/2}$-estimable \citep{robins2009semiparametric, robins2023minimax}. Indeed, estimators based on \emph{higher-order influence functions} (HOIFs) \citep{robins2008higher, robins2016technical, liu2017semiparametric}, a generalization of the classical first-order influence functions, have been shown to be $n^{-1/2}$-consistent for ATE when $s / d > 0.25$  \citep{robins2023minimax, liu2017semiparametric}. In the remainder of this paper, we refer to these estimators as the ``higher-order estimators'' for convenience. More importantly, to the best of our knowledge, no other simpler estimators are $n^{-1/2}$-consistent under this minimal \Holder{} smoothness condition, despite years of research efforts. These results also prompt recent advances in applying the HOIF framework to statistical problems similar to but more complex than ATE, such as conditional average treatment effects \citep{kennedy2024minimax} and average dose-response functions \citep{colangelo2026double, bonvini2022fast}. Our approach is also connected to the literature addressing ``leave-in bias'' in the many-covariate and small-bandwidth settings \citep{cattaneo2018kernel, cattaneo2019two}; see Section~\ref{sec:papers}, Remark~\ref{rem:connection} and Supplementary Material Section~\ref{app:connection} for an extended discussion. The present paper extends this line of work to a broader class of parameters central to modern econometric analysis, which will be made precise next.


\subsubsection*{Challenges due to implicitly-defined parameters}

The existing HOIF framework primarily concerns parameters that are explicitly defined, including but not limited to ATE and parameters used in conditional independence tests \citep{shah2020hardness} as prominent examples. However, many causal parameters of interest in modern economics applications lack closed-form expressions and are instead defined as solutions to moment conditions or optimization problems. The quantile treatment effect (QTE) provides a canonical example. In policy evaluation, when outcomes are heterogeneous or heavy-tailed, QTE offers a more informative summary of treatment effects than the mean \citep{manski2004statistical, firpo2007efficient,chernozhukov2005iv,powell2020quantile}. Yet, QTE can only be characterized as the solution to a variational problem ($M$-estimation) or to a moment equation ($Z$-estimation) (see Example~\ref{qte} later). Another prominent example in the recent econometrics literature is the $\alpha$-expected shortfall (ES) \citep{fan2025policy}, which measures the average outcome in the worst $\alpha$ fraction of the distribution. This quantity is of particular interest in risk management and policy evaluation under tail risk, where policymakers are concerned with extreme rather than average outcomes. Borrowing a terminology from \citet{robins2016technical}, such parameters are \emph{implicitly defined}. These implicitly defined parameters can pose significant challenges to estimation and inference. 

Another important challenge arises from the non-separability between the nuisance parameter and the parameter of interest \citep{su2019non, chernozhukov2025linear}. Using QTE with binary treatment as an example, as shown in Example~\ref{qte} and Remark~\ref{rem:binary QTE} later, one of the nuisance parameters, the generalized outcome regression model, depends on the true but unknown QTE parameter. To address this problem, \citet{kallus2024localized} develop the \emph{localized debiased machine learning} (LDML) estimator, which first produces an initial estimator of the QTE using only the propensity score independent of the QTE itself. With this initial estimator, LDML proceeds to estimate the generalized outcome regression model and then constructs a rate doubly-robust estimator based on the influence function of the QTE. However, as shown in \citet{kallus2024localized} and for the sake of completeness in Section~\ref{sec:dml}, for the LDML estimator to be $n^{-1/2}$-consistent, it demands certain strong assumptions on the rate of convergence of the initial QTE estimator, hence also on the regularity of the propensity score; see Proposition~\ref{prop:dml}. \citet{ai2021unified} bypass the two-step procedure and estimate the QTE by using only the propensity score. But unlike the LDML estimator, the QTE estimator of \citet{ai2021unified} fails to be rate double-robust. It is then natural to raise the following question:

\begin{quote}
\emph{Can estimators of QTE or other more general implicitly-defined causal parameters be constructed that improve upon existing ones based on influence functions, in terms of convergence rates and the required assumptions on the initial estimators?}
\end{quote}

We provide a positive answer to the above inquiry by going beyond the classical first-order semiparametric theory, in a sense formalized later in the paper. For implicitly defined parameters, however, the literature beyond first-order estimators is rather scarce, with the only exception being Section~6 of \citet{robins2016technical}, which touches on the issue without providing a treatment with further technical details. Also, the regression slope in a partially linear model was considered in \citet{robins2016technical} as a leading example, which has an explicit form\footnote{Partial-linear semiparametric regression model reads as $Y = \beta A + b (X) + \varepsilon$, where $\varepsilon$ has mean zero and $\varepsilon \indep A, X$. Then the regression slope $\beta \equiv \bbE [\cov (A, Y | X)] / \bbE [\var (A | X)]$ is simply the ratio between the expected conditional covariance of $A, Y$ given $X$ and the expected conditional variance of $A$ given $X$, two well-studied examples of parameters, often referred to as doubly-robust functionals \citep{rotnitzky2021characterization}.}. 
Our paper partially fills this gap. To strike a balance between the generality and practical utility of our results, we opt for considering a broad class of parameters defined via generic moment equations with possibly non-smooth, non-separable generalized residual functions (see \eqref{ident}). This formulation accommodates discrete, continuous, and more complicated treatment types in a unified way, and it aligns with a growing econometrics literature that represents causal targets through generic moments.  For example, \citet{ao2021multivalued} use a generic moment function applied to counterfactual distributions under multivalued treatments. Within this unified formulation, we cover a wide range of parameters encountered in applied causal inference, including ATE, QTE, and more generally the General Treatment Models (GTMs) recently coined in the econometrics literature \citep{ai2021unified, chen2025local}. 
Extending the HOIF framework to parameters of GTMs is thus instrumental towards building a broadly applicable point and interval estimation scheme for parameters encountered in econometrics.

\subsubsection*{Summary of main results and contributions}
	
Compared to the HOIF theory for explicitly defined parameters, the following challenges remain to be addressed, which constitute the main contributions of our paper.

\begin{enumerate}[label = (\arabic*)]
\item We derive HOIFs for causal parameters implicitly defined through GTMs, facilitating the construction of improved higher-order estimating equations, compared to state-of-the-art estimators based on the first-order theory, such as the LDML estimator. While we mainly analyze  estimators solving second-order estimating equations, we refer to them as higher-order estimators to avoid introducing additional terminology and we also present the full $m$-th order construction in Supplementary Material Section~\ref{app:truncation}. Among the parameters of GTMs, the QTE with binary treatment is one of the most commonly encountered examples in practice. We obtain a $n^{-1/2}$-consistent estimator for QTE under the weakest \Holder{} smoothness assumptions on the nuisance parameters to date, almost matching the minimal smoothness assumptions for ATE to be $n^{-1/2}$-estimable.

\item State-of-the-art first-order estimators require an initial estimator of the target parameter to estimate the nuisance parameters. This extra step incurs an additional rate-condition on the initial estimator and, consequently, on certain nuisance parameter(s). As we will see in Section~\ref{sec:examples}, for the QTE with binary treatment, our proposed higher-order estimators can drastically relax this rate-condition, highlighting another advantage of higher-order estimators (see Table~\ref{tab:smoothness} in Section~\ref{sec:examples}).


\item The original formulation of GTMs by \citet{ai2021unified} assumes a parametric specification for the target parameters. When the treatment takes finitely many discrete values, finite-dimensional GTMs are effectively nonparametric. However, in many modern applications, treatment variables are often continuous. It may then be  necessary to consider infinite-dimensional GTMs to avoid model misspecification bias; or otherwise, the finite-dimensional GTMs can at best be interpreted as a projection of the true causal or structural parameters onto a parametric model. To our knowledge, \citet{colangelo2026double} and \citet{bonvini2022fast} are among the first to consider estimating the average dose-response function (see Example~\ref{adrf} later) using higher-order estimators when the treatment is continuous. To incorporate such scenarios, we 
extend our higher-order estimators to nonparametric GTMs, thereby generalizing the results of \citet{colangelo2026double} and \citet{bonvini2022fast} to include, for example, the quantile dose-response function (see Example~\ref{qdrf} later). 
\end{enumerate}

Before proceeding, we illustrate the value of incorporating HOIF theory for GTM parameters through a simulation study; full details are provided in Sections~\ref{sec:review} and~\ref{sec:simulation}. For clarity of exposition, we present the discussion at a high level and defer some technical rigor to later sections. Figure~\ref{fig:QTE_Bias_comparison} reports the simulation results. The parameter of interest $\beta^{*}$ is the QTE at the $25\%$ quantile,  with binary treatment and a one-dimensional covariate that suffices to control for confounding. The histograms display the simulated distribution of the centered estimator, $\hat{\beta}-\beta^{*}$.
In the left panel of Figure~\ref{fig:QTE_Bias_comparison}, we adopt the aforementioned state-of-the-art LDML estimator \citep{kallus2024localized}, with nuisance parameters (see Example~\ref{qte} later) estimated by neural networks \citep{chen2024causal}. We design the simulation such that the nuisance parameters have low smoothness in \Holder{} sense (with smoothness close to $s = 0.4$). As a result, even LDML with neural networks fails to deliver an estimator that is close to being unbiased, which is not too surprising given the empirical evidence on the difficulty of fitting functions of low regularity by neural networks in practice \citep{xu2022deepmed}. In contrast, the right panel of Figure~\ref{fig:QTE_Bias_comparison} shows that, after centering by the true QTE, our HOIF-based estimator is well centered around zero, though with slightly higher variance than the LDML estimator. This empirical result demonstrates that the HOIF framework can be a useful tool even when modern neural networks are employed for nonparametric first-step estimation, complementing state-of-the-art methods based on classical semiparametric theory.

\begin{figure}[htbp]
  \centering
  \includegraphics[width=0.8\textwidth]{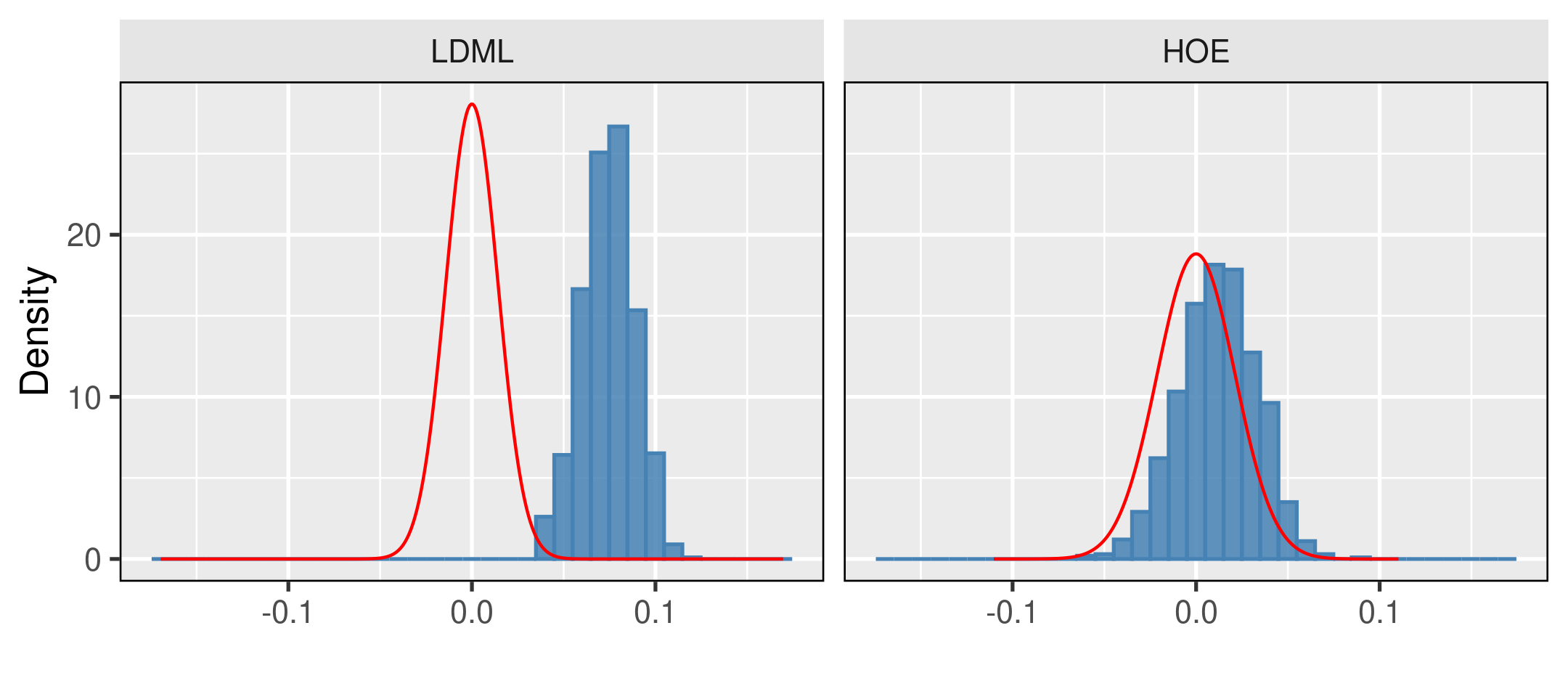} 
  \caption{Comparison between the first-order (LDML) estimator and the higher-order estimator (HOE) of QTE. The histograms over 1000 Monte Carlo simulations are displayed, and the red curves are the corresponding normal probability density function with the same mean and variance as the Monte Carlo mean and variance.}
  \label{fig:QTE_Bias_comparison}
\end{figure}

\subsection{A survey of related works}
\label{sec:papers}

Our work is related to several strands of literature in econometrics and statistics. We now give a brief (and very likely incomplete) survey.

\begin{enumerate}[label = (\arabic*)]
\item  \textbf{Two-step procedures.} The problem studied in this paper is related to classical and modern semiparametric two-step estimation in econometrics, where some first-stage nuisance components may be indexed by the target parameter. Classical semiparametric theory typically imposes conditions under which the first-stage bias is asymptotically negligible, so that the limiting distribution of the target estimator is not affected; see, among others, \citet{newey1990semiparametric, newey1994large,chen2003estimation} and the references therein. 

More recent work develops estimation and inference procedures that remain valid when first-stage bias is non-negligible. For example, \citet{cattaneo2018kernel} and \citet{cattaneo2019two} study two-step estimators where the first-stage bias can enter the asymptotic distribution of the target parameter, using kernel-based and linear-regression-based nuisance estimators, respectively. Specifically, \citet{cattaneo2019two} consider a setting with many covariates, where the first-stage nuisance estimator is obtained by linear regression. They show that a non-negligible leave-in bias arises when the same sample is used for both steps, and propose jackknife-type methods to consistently estimate and remove this bias, yielding a bias-corrected estimator. \citet{cattaneo2018kernel} develop a small-bandwidth asymptotic framework for kernel-based first-step estimators, where the smoothing bias is not assumed negligible. They show that the standard nonparametric bootstrap automatically corrects for this bias, so that percentile confidence intervals achieve correct coverage. When the asymptotic bias cannot be consistently estimated, \citet{cavaliere2024bootstrap} propose a distributional adjustment to restore bootstrap validity. Their method first obtains a bootstrap p-value, which is not asymptotically uniform due to the bias. They show that this p-value converges to a non‑uniform limit that does not depend on the bias. By applying a probability integral transformation estimated via a second bootstrap, they obtain a corrected $p$-value that is asymptotically uniform, thus restoring valid inference without directly estimating the bias. \cite{cattaneo2025higher} develop higher-order distributional approximations for kernel-based estimators of the density-weighted average derivative. By employing Edgeworth expansions, they refine the large-sample approximation of the estimator's distribution, which leads to improved finite-sample inference, particularly enhancing the coverage accuracy of confidence intervals.
Another broad class of approaches focuses on estimators with a small bias property (SBP) \citep{newey2004twicing}, meaning that the bias of the target estimator vanishes faster than the  bias of the first-stage nuisance estimators. \citet{newey2004twicing} shows that a class of semiparametric estimators based on twicing kernels satisfy this property. Tackling a similar problem and motivated by the HOIF theory, \citet{chen2024method} proposed moment-based $\sqrt{n}$-consistent estimators of a class of causal parameters even when the nuisance parameters, assumed to be generalized linear models, are not consistently estimable.

\item \textbf{Influence functions and semiparametric inference. } Our work builds on and extends semiparametric and nonparametric inference based on influence functions, which play a central role in semiparametric efficiency theory. Influence-function-based estimation is a classical topic in semiparametric theory and has been studied for decades. For example, \citet{schick1986asymptotically} develops a general method for constructing asymptotically efficient estimators in semiparametric models. \citet{newey1990semiparametric} provides an introduction to semiparametric efficiency bounds, discussing their nature, calculation, and application in constructing semiparametric estimators and deriving their limiting distributions. For a more detailed and comprehensive treatment of influence functions for general semiparametric models and statistical inference, see \citet{bickel1998efficient}. As mentioned, our results can be viewed as a further development of the higher-order generalization of the classical influence-function-based approaches \citep{robins2008higher}.

For binary treatment, \citet{robins1994estimation} and \citet{hahn1998role} derived the efficient influence functions for ATE and ATT (average treatment effect on the treated). \citet{cattaneo2010efficient} considered a general treatment effect model with possibly over-identified and non-smooth moment conditions under multi-valued treatments, covering a broad class of treatment-effect parameters including average and quantile treatment effects, and established the corresponding efficiency bounds. Building on these efficiency results, the relevant literature focuses on developing practical estimators for treatment effects; see, for example, \citet{hirano2003efficient} and the references therein. Recent work also leverages influence functions to obtain valid inference with
high-dimensional or machine-learned nuisance components \citep{chernozhukov2018double}.  \citet{colangelo2026double} further extend DML-type ideas to kernel-based inference on the average dose--response function under continuous treatments. \citet{rotnitzky2021characterization} and \citet{chernozhukov2022locally} characterized a large class of explicitly-defined parameters, including ATE and ATT, for which DML-type estimators can be constructed based on their influence functions. \citet{newey2018cross} developed refined DML-type estimators that achieve faster convergence rates under smoothness classes with different nuisance parameters estimated from different independent subsamples. 

Papers mentioned above mainly concern explicitly-defined parameters, including ATE and ATT. In terms of implicitly-defined parameters,  \citet{firpo2007efficient} studies efficient influence functions for QTE with binary treatment. \citet{ao2021multivalued} develop a unified moment-based framework for multi-valued treatments, covering means, quantiles, and distribution functions, and propose an efficient influence function based estimator for multivalued treatment effects for the treated. Their approach also enables decomposition analysis, separating wage structure effects from composition effects. Applying their method to evaluate the Workforce Investment Act (WIA) program, they find that heterogeneity in participation levels is an important dimension for evaluating social programs, demonstrating the practical relevance of their methods for policy evaluation.  Complementarily, \citet{ai2021unified} provide a unified framework for continuous treatments and derive efficient influence functions for general treatment models.
As mentioned, more recently, \citet{kallus2024localized} propose the LDML estimator for the QTE, along the line of DML estimators.

\end{enumerate}

\subsection{Notation} 
\label{sec:notation}

Before proceeding, we collect some frequently used notation. Let $L_{2, 0}$ denote the space of mean zero $L_{2}$ functions with respect to a dominating measure $\nu$. Without loss of generality, we take $\nu$ to be the Lebesgue measure throughout the paper. We write $a_n \lesssim b_n$ (resp., $a_n \gtrsim b_n$) if $a_n \leq C b_n$ (resp., $a_{n} \geq C b_{n}$) for some universal constant $C > 0$ independent of $n$;  $a_n \asymp b_n$ if $a_n \lesssim b_n$ and $b_n \gtrsim a_n$; and $a_n \gg b_n$ (resp., $a_n \ll b_n$) if $b_n/a_n \to 0$ (resp., $b_n/a_n \to \infty$). For $a,b \in \bbR$, let $a \vee b = \max\{a,b\}$ and $a \wedge b = \min\{a, b\}$. For a vector $\bar{a} \in \bbR^{q}$, let $\| \bar{a} \| \coloneqq (\bar{a}^{\top}\bar{a})^{1/2}$ denote its Euclidean norm and $\bar{a}^{\otimes 2} = \bar{a}\,\bar{a}^{\top} $ denote  the tensor (Kronecker) product. 
For a matrix $A\in\bbR^{p\times q}$ we write its operator norm as
$
\Vert A\Vert_{\op}
\coloneqq
\inf\{c>0:\,\Vert Av\Vert \le c\Vert v\Vert
\ \text{for every}\ v\in\bbR^{q}\}
$. 
Given any function $f: \calO \to \bbR^{m}$ with fixed integer $m \geq 1$, we denote its $L_{\infty}$-norm as $\|f (\cdot)\|_{\infty} \coloneqq \sup_{o \in \calO} \|f (o)\|$.

The higher-order estimators introduced later involve $L_{2} (\P)$ projections onto the span of a size-$k$ dictionary $\bar{\phi}_{k} \coloneqq (\phi_{1}, \ldots, \phi_{k})$, where each $\phi_{j}: \calO \rightarrow \bbR$. For any $f: \calO \rightarrow \bbR$, this projection is denoted by $\Pi (f \mid \bar{\phi}_{k}) (\cdot) \coloneqq \bbE \{f (O) \bar{\phi}_{k} (O)\}^{\top} \Sigma_k^{-1} \bar{\phi}_{k} (\cdot)$, where $\Sigma_{k} \coloneqq \bbE \{\bar{\phi}_{k} (O)^{\otimes 2}\}$ is the population Gram matrix. The orthocomplement of this projection is $\Pi^{\perp} (f \mid \bar{\phi}_{k}) (\cdot) \coloneqq f (\cdot) - \Pi (f \mid \bar{\phi}_{k}) (\cdot)$.
	
For any measurable function $h: \calO^{m} \rightarrow \bbR$, let $\bbU_{n, m}$ be the $m$-th order $U$-statistic operator, 
\begin{align*}
\bbU_{n, m} h \equiv \bbU_{n, m} h (O_{1}, \cdots, O_{m}) = \frac{(n - m)!}{n!} \sum_{1 \leq i_{1} \neq \cdots \neq i_{m} \leq n} h (O_{i_{1}}, \cdots, O_{i_{m}}).
\end{align*}
$\bbU_{n, 1} \equiv \bbP_{n}$ is understood to be the empirical mean.
 
We next review some basic concepts from empirical processes theory; see \citet{van2023weak} for further details. For a probability measure $Q$ and a constant $p > 1$, let $\| \cdot \|_{Q,p}$ denote the $L_{p} (Q)$-seminorm, i.e., $\|f\|_{Q, p} \coloneqq \{\int |f (x)|^{p} \diff Q (x)\}^{1/p}$ for finite $p$ while $\|f\|_{Q,\infty}$ denotes the essential supremum of $|f|$ with respect to $Q$. Let $\calF$ be a class of symmetric measurable functions $f : \calO^{m} \rightarrow  \bbR$ with envelope $\F$, that is, $\sup_{f\in\calF} |f| \leq \F $. If $\| \F \|_{Q,p} > 0$, we write $\sfN ( \calF,\|\cdot\|_{Q,p} , \epsilon\|\F\|_{Q,p})$ for the minimal number of $\|\cdot\|_{Q,p}$-balls of radius $\epsilon\|\F\|_{Q,p}$ needed to cover $\calF$. We finally recall the following  definition.
\begin{definition}[VC type class]
\label{def:VC-type}
A function class $\calF$ with envelope $\F$ is said to be of VC type with characteristic $(A, v)$ if $\sup_{Q} \sfN \left( \calF, \| \cdot \|_{Q, 2}, \epsilon \| \F \|_{Q, 2} \right) \leq (A / \epsilon)^{v}$ for all $\epsilon \in (0, 1]$, where $\sup_{Q}$ denotes the supremum over all discrete finite probability measures.
\end{definition}

For vector-valued functions $\calF \coloneqq \{f: \mathcal O^{m} \to \bbR^{q}\}$ (with fixed $q$), these definitions are interpreted with $|\cdot|$ replaced by the Euclidean norm $\|\cdot\|$ on $\bbR^{q}$. All mentioned entropy and maximal-inequality results  remain valid (up to universal constants). 


\subsection{Organization}
\label{sec:organization}

The remainder of the paper is organized as follows. Section~\ref{sec:setup} introduces the observation scheme and the general treatment model (GTM) and reviews the state-of-the-art estimators for GTM based on (first-order) influence functions from the standard semiparametric theory. Section~\ref{sec:hoif} presents the main results: In Section~\ref{sec:review}, using the QTE as an illustration, we motivate the importance of introducing higher-order estimators to reduce bias and provide the intuition of why HOIFs can be used for bias correction;  Section~\ref{sec:intuition} introduces the HOIF-based estimator for finite-dimensional GTM parameters, and Section~\ref{sec:properties} establishes its statistical properties; Section~\ref{sec:examples} specializes the results obtained in Section~\ref{sec:properties} to QTE.
Section~\ref{sec:dose} addresses the estimation of infinite-dimensional GTM parameters, allowing greater modeling flexibility. Section~\ref{sec:empirical} presents empirical studies of the HOIF estimator. In particular, Section~\ref{sec:simulation} investigates the finite-sample performance of the HOIF estimator in low-regularity settings using simulation experiments, followed by a real data analysis in Section~\ref{sec:real}. Section~\ref{sec:conclusion} concludes the paper with a discussion of caveats and future directions. Proofs and less essential technical results are collected in the Supplementary Material.

\section{Problem Setup and Review of Existing Results}
\label{sec:setup}

\subsection{Causal parameters defined via general treatment models}
\label{sec:parameters}

Let $T \in \calT$ be a general treatment variable that can be discrete, continuous, or a mix of both. Let $Y (t) \in \calY\subset\mathbb{R}$ denote the real-valued potential outcome when $T$ is intervened/set to $t$, and let $Y$ be the observed outcome. To control for confounding, we observe a vector of compactly supported baseline covariates  $X \in \calX \equiv [0, 1]^{d} \subseteq  \bbR^{d}$, for some integer $d \geq 1$.  Let $\bbP$ denote the joint distribution of the observed data $(X, T, Y)$ and $\bbE$ the corresponding expectation.
Let $\Gamma: \calY \times \calT \times \calB \rightarrow \bbR^{p}$ be a known generalized residual function, which can be nonsmooth and non-separable. We are interested in the parameter $\bbeta^*=(\beta^*_0,\beta^*_1,...,\beta^*_{p-1})^{\top}\in \calB \subset \bbR^{p}$, defined as the \emph{unique} solution to the following integral moment equations:
\begin{equation}
\label{ident} 
\int_{\mathcal{T}} \bbE \{\Gamma
(Y (t), t, \bbeta^{\ast})\} \omega (t) \diff t = 0,
\end{equation}
where  $\omega(\cdot): \calT \rightarrow \bbR$ is a user-specified weighting function.
The general formulation  \eqref{ident} can incorporate most causal parameters in the existing literature with flexible specifications of $\Gamma(\cdot)$ and $\omega(\cdot)$; see Examples \ref{ate}--\ref{qdrf} later in this section. We refer to \eqref{ident} as the \emph{general treatment model} (GTM), a unified moment-based framework for binary, multi-valued, and continuous treatments that covers means, quantiles, distribution functions, and various treatment effects. Similar frameworks are also studied in \citet{cattaneo2010efficient,   ao2021multivalued, ai2021unified}. Our goal is to conduct statistical inference on $\bbeta^{\ast}$ using $n$ independent and identically distributed (i.i.d.) observations $\{O_{i} \equiv (X_{i}, T_{i}, Y_{i}), i = 1, \cdots, n\} \overset{\mathrm{i.i.d.}}{\sim} \P$, under weak complexity-reducing assumptions on the nuisance parameters (e.g., the smoothness/rate conditions imposed in Assumption~\ref{as:Holder}), to be defined soon.

Since \eqref{ident} involves the potential outcome $Y (t)$, we need to impose the following identification assumptions to turn \eqref{ident} into a functional of the observed data distribution. Let $\p_{T\mid X}(t\mid x)$ denote the conditional density/mass function of $T$ given $X=x$,  referred to as the \emph{generalized propensity score}.

\begin{assumption}\label{as:identification}
We assume the following: 
(i)~(Consistency) $Y \equiv Y (T)$;
(ii)~(Positivity) The generalized propensity score satisfies $0 < c_1 \leq \p_{T|X}(t|x) \leq c_2 < \infty$ for all $t \in \mathcal{T}$ with $\omega(t) > 0$, for constants $c_1, c_2$;
(iii)~(Unconfoundedness) $Y(t) \indep T \mid X$ for all $t \in \mathcal{T}$.
\end{assumption}
Assumption~\ref{as:identification} comprises the standard identification conditions for causal effects in observational studies. Notably, the stated positivity assumption allows practitioners to choose weight tailored for regions of $\calT$ where positivity holds. Under Assumption~\ref{as:identification}, $\bbeta^{\ast}$, which is defined through the potential-outcome moment equations \eqref{ident}, is identified as the unique solution to the following weighted moment equations based on the observed data distribution:
\begin{equation}\label{newmodel}
\bbE \{\xi (X, T) \Gamma (Y, T, \bbeta^*) \}= 0,
\end{equation}
where the function $\xi (x, t) \coloneqq {\omega (t)}/{\p_{T | X} ( t |x)}$ is also called the \textit{stabilized weight} in the epidemiology literature \citep{hernan2000marginal}. In this paper, we focus on two types of $\omega$: (1) $\omega (\cdot) \equiv \p_{T} (\cdot)$, the marginal distribution of the treatment $T$, which yields a finite-dimensional causal parameter  (see Section~\ref{sec:dml} and Section~\ref{sec:hoif}); (2) $\omega (\cdot) \equiv \p_T(\cdot)\delta_t(\cdot)$, the point mass at a particular value $t$ of $T$, which yields a nonparametric causal curve as a function of $t$  (see Section~\ref{sec:dose}).  
Our framework can be readily extended to the case where $\omega (\cdot)$ is any known function of $T$ or additionally depends on $X$ (see Example \ref{sto}), but we focus on these two choices to avoid unnecessary complications.

We present several commonly encountered examples of treatment effect parameters defined via GTM \eqref{ident}, demonstrating the importance of developing new and improved statistical procedures for this general class.
\begin{example}[Average Treatment Effect (ATE)]
\label{ate}
When $\Gamma (Y(t),t,\bbeta)  = ( (Y(t) - \beta_0)\cdot (1 - t),  (Y(t) -  \beta_1)\cdot t )^{\top}$, $\omega(t) = \p_{T}(t)$ and  $\calT = \{0, 1\}$, where $\bbeta = (\beta_{0}, \beta_{1})^{\top}$, the solution to \eqref{ident} is $\beta^*_{0} = \bbE [Y (0)]$ and $\beta^*_{1} = \bbE[Y (1)]$. ATE has been widely studied in the econometric literature; see \cite{hahn1998role,hirano2003efficient,chernozhukov2018double}, among others.
\end{example}

\begin{example}[Quantile Treatment Effect (QTE)]
\label{qte}
 When  $\Gamma(Y(t),t,\bbeta) = ( (\tau - \mathbbm{1} \{Y(t) \leq \beta_0\})(1 - t), (\tau - \mathbbm{1} \{Y(t) \leq \beta_1\})t)^{\top}$, $\omega(t) = \p_{T}(t)$ and  $\calT = \{0, 1\}$, where again $\bbeta =(\beta_0,\beta_{1})^{\top}$, the solution to \eqref{ident} is $\beta^*_0 = \inf\{q: \mathbb{P}(Y (0) \leq q) \geq \tau\}$ and $\beta^*_{1} = \inf\{q: \mathbb{P}(Y (1) \leq q) \geq \tau\}$, which are the $\tau^{th}$ quantiles of the potential outcomes.  QTE has been studied in \cite{chernozhukov2005iv,firpo2007efficient,kallus2024localized}, among others. We will revisit QTE in Sections~\ref{sec:intuition} and~\ref{sec:examples} later as the most important concrete example of GTM parameters.
\end{example}


\begin{example}[Distributional Treatment Effect (DTE)]
\label{dte}
For any $y\in\mathbb{R}$, let $\Gamma (Y(t),t,\bbeta)  = ( (\mathbbm{1}\{Y(t)\leq y\} - \beta_0 )( 1- t)  , (\mathbbm{1}\{Y(t)\leq y\} - \beta_1) t)^{\top}$, $\omega(t) = \p_{T}(t)$, and $\calT = \{0, 1\}$, where $\bbeta = (\beta_{0}, \beta_{1})^{\top}$. The solution to \eqref{ident} is $\beta^*_{0} = \P (Y (0) \leq y)$ and $\beta^*_{1} = \P (Y (1) \leq y)$. DTE has been studied in  \cite{chernozhukov2013inference} and \citet{donald2014estimation}, among others.
\end{example}

\begin{example}[$\alpha$-Expected Shortfall (ES)]
\label{es}
 Fix $\alpha\in(0,1)$,  let $\calT=\{0,1\}$ and  set $\omega(t)\equiv \p_T(t)$.  Define
\begin{equation*}
\Gamma\big(Y(t),t,\bbeta\big) = t \cdot
\begin{pmatrix}
\alpha-\mathbbm{1}\{Y(t)\le \beta_0\}\\[3pt]
\dfrac{\mathbbm{1}\{Y(t) \le \beta_0\}}{\alpha}\big(Y(t)-\beta_1\big)
\end{pmatrix},
\end{equation*}
with $\bbeta =(\beta_0,\beta_{1})^{\top}$. Then the solution to \eqref{ident} is given by
\begin{align*}
\beta_0^* = \inf\{q: \mathbb{P}(Y (1) \leq q) \geq \alpha\}, 
\qquad 
\beta_1^* = \E (Y(1)\mid Y(1) \le \beta_0^*).
\end{align*}
Thus, $\beta_0^*$ is the $\alpha$-quantile of $Y(1)$ and $\beta_1^*$ is the associated expected shortfall.  Expected shortfall and related tail-risk functionals are widely used in policy learning in econometrics and related areas; see, e.g., \citet{patton2019dynamic, fan2025policy}.
\end{example}

Up to this point, all the examples have set $\omega (\cdot) \equiv \p_{T} (\cdot)$. In the next example, $\omega (\cdot)$ is taken as a known function of $T$.

\begin{example}[Stochastic Intervention Effect]
\label{sto}
The weighting function $\omega (t)$ can be further generalized to depend on the covariates, i.e., $\omega (t, x)$. When $\omega (t, x)$ is a known, user-specified density function of $T$,
$\bbeta^{\ast}$ corresponds to the stochastic (dynamical) intervention effect \citep{kennedy2019nonparametric}. 
\end{example}


\begin{example}[Marginal Structural Models]
\label{msm}
In its most common formulation, marginal structural models (MSMs) \citep{hernan2000marginal} postulate a parametric model for the marginal mean of potential outcomes:
\begin{equation}
\label{main:msm}
\bbE Y (t) = g (t; \bbeta^{\ast}).
\end{equation}
Model~\eqref{main:msm} can be rewritten as Model~\eqref{ident} by letting $\Gamma (Y (t), t, \bbeta^{\ast}) = Y (t) - g (t; \bbeta^{\ast})$. The weighting function $\omega$ needs to be arbitrary if \eqref{main:msm} is taken as part of the modeling assumptions. See Supplementary Material Section~\ref{app:msm} for further discussions.
\end{example}

If we set the weighting function $\omega (\cdot)$ in \eqref{ident} as the product of the marginal density function of $T$ and the Dirac delta function $\p_T(\cdot)\delta_t (\cdot)$ for each possible treatment value $t \in \calT$, then the target parameter  $\bbeta^* \equiv \beta^{\ast} (\cdot)\in \mathbb{R}^{\infty}$ becomes  a function of $t$, that is, an infinite-dimensional ($p=\infty$) parameter.  In Section~\ref{sec:dose}, we further develop our approach for such infinite-dimensional GTM parameters. Notably, this framework
 encompasses two important infinite-dimensional parameters: the average and quantile dose–response functions (ADRF and QDRF).

\begin{example}\label{adrf}[Average Dose-Response Function (ADRF)]
   When $\Gamma (Y(t),t,\beta)  = Y(t) - \beta$ and $\omega (\cdot) = \p_T(\cdot)\delta_t (\cdot)$, the solution to \eqref{ident} is  $\beta^*(t) = \bbE [ Y(t)] $. The ADRF has been widely studied in the literature; see \cite{kennedy2017non}, \cite{su2019non} and \cite{colangelo2026double}, among others.
\end{example}

\begin{example}\label{qdrf}[Quantile Dose-Response Function (QDRF)]
When  $\Gamma(Y(t),t,\beta) = \tau - \mathbbm{1} \{Y(t) \leq \beta\}$ and $\omega (\cdot) = \p_T(\cdot)\delta_t (\cdot)$,  the solution to \eqref{ident} is $\beta^*(t) = \inf\{q : \P (Y(t)\leq q) \geq \tau\}$, the $\tau^{th}$ quantile of the potential outcome $Y(t)$. The QDRF has been studied by \cite{galvao2015uniformly} and \cite{su2019non}.
\end{example}

\begin{remark}
\label{ade}
Finally, though not our main focus, we note that other unknown weighting functions $\omega$ can also be considered. For example, when $T \in \mathcal{T} = \mathbb{R}$ is a continuous treatment variable with probability density function $\p_T(t)$, let $\Gamma(Y(t), t, \beta_0) = Y(t) + \beta_0 t$ and $\omega(t) = \partial_t \p_T(t)$, the derivative of $\p_T(t)$. Under Assumption~\ref{as:identification} and using integration by parts, the solution to \eqref{ident} becomes $\beta_0^* =\int_{\calT} \partial_t \bbE \{\p_{T \mid X}^{-1}(T \mid X)Y \mid T=t\} \p_T^2(t) \diff t$. This is the density-weighted average derivative effect (DWADE), which has been studied in recent works such as \citet{cattaneo2025higher} under a randomized experiment, i.e., with a known propensity score function $\p_{T \mid X}$; see Remark~\ref{rem:cattaneo2025higher} for further discussion. Extending the analysis to observational studies is a topic worth pursuing in future work.
\end{remark}

\subsection{Review of existing results based on influence functions}
\label{sec:dml}

As mentioned, we take $\omega (\cdot) \equiv \p_{T} (\cdot)$ here and in Section~\ref{sec:hoif}, so that $\xi (x,t) = \p_{T}(t)/ \p_{T|X} (t|x)$. Before presenting our new results, we first review state-of-the-art estimators of $\bbeta^{\ast}$ based on influence functions from the (first-order) semiparametric efficiency theory. For convenience, we define $b_{\bbeta} (x, t) \coloneqq \bbE \{\Gamma (Y, T, \bbeta) | X = x, T = t\}$. Since $\Gamma (Y, T, \bbeta)$ essentially plays the same role as the outcome $Y$ in ATE, we refer to  $b_{\bbeta} (\cdot,\cdot)$ as the \emph{generalized outcome regression model}.  We denote by $\theta_{\bbeta} \coloneqq (\xi, b_{\bbeta})$ the collection of nuisance parameters for a fixed $\bbeta$. It is worth noting that $\bbeta$ may be non-separable from the nuisance function $b_{\bbeta}$ and can enter in a nonlinear manner.  From \eqref{newmodel} and by the law of iterated expectations, $\bbeta^{\ast}$ solves the equation 
\begin{equation}
\label{est eq}
\psi (\bbeta) \equiv \psi (\theta_{\bbeta}) \coloneqq  \bbE \{\xi (X, T) \Gamma (Y, T, \bbeta)\} = \bbE \{\xi (X, T) b_{\bbeta} (X, T)\} = 0.
\end{equation}

\citet{ai2021unified} developed the semiparametric efficiency theory for $\bbeta^*$. Since the rest of this paper relies on this preliminary result, we reproduce it below in Proposition~\ref{thm:eif}:
\begin{proposition}
\label{thm:eif}
Under Assumption~\ref{as:identification}, the efficient influence function (EIF) of $\psi (\bbeta)$ is
\begin{align*}
\IF_{\psi}^{(1)}(\bbeta) = & \ \xi (X, T) \left\{ \Gamma (Y, T, \bbeta) - b_{\bbeta} (X, T) \right\} + \left\{ \int_{ \calT} b_{\bbeta} (X, t) \p_{T} (t) \diff t - \bbE \{\xi (X, T) b_{\bbeta} (X, T)\} \right\} \notag \\
& + \left\{ \int_{\calX} b_{\bbeta} (x, T) \p_{X} (x) \diff x - \bbE \{\xi (X, T) b_{\bbeta} (X, T)\} \right\}. 
\end{align*}
Furthermore, the EIF of $\bbeta^{\ast}$ that solves $\psi (\bbeta) \equiv 0$ is
\begin{equation}
\label{eif_beta}
\IF_{\bbeta^*}^{(1)} = H_{\bbeta^{\ast}}^{-1} \cdot \IF_{\psi}^{(1)}(\bbeta^{\ast}), \text{ where } H_{\bbeta} \coloneqq - \nabla_{\bbeta} \psi (\bbeta) = - \nabla_{\bbeta} \bbE \{\xi (X, T) \Gamma (Y, T, \bbeta)\}.
\end{equation}
\end{proposition}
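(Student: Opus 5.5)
The plan is a direct pathwise-derivative computation in the nonparametric model for $\P$, followed by an implicit-function argument for $\bbeta^{\ast}$. The key preliminary observation is that, with $\omega=\p_T$, the stabilized weight is $\xi(x,t)=\p_T(t)\p_X(x)/\p_{X,T}(x,t)$. Hence, for fixed $\bbeta$,
\begin{equation*}
\psi(\bbeta)=\bbE\{\xi(X,T)\,b_{\bbeta}(X,T)\}=\int_{\calX}\int_{\calT} b_{\bbeta}(x,t)\,\p_X(x)\,\p_T(t)\,\diff t\,\diff x .
\end{equation*}
Thus $\psi(\bbeta)$ is a functional of three pieces of $\P$: the conditional law of $Y$ given $(X,T)$ through $b_{\bbeta}$, the marginal $\p_X$, and the marginal $\p_T$. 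Assumption~\ref{as:identification} (positivity) guarantees that $\xi$ is bounded, so all the expressions below are well defined and square integrable.

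\textbf{Step 1: pathwise derivative of $\psi(\bbeta)$ at fixed $\bbeta$.} Take a regular one-dimensional parametric submodel $\P_\varepsilon$ with score $s(O)=s_X(X)+s_{T\mid X}(T\mid X)+s_{Y\mid X,T}(Y\mid X,T)$, where each component has conditional mean zero. Differentiating the display above at $\varepsilon=0$ by the product rule gives three contributions.
\begin{itemize}
\item[(a)] The derivative of $b_{\bbeta,\varepsilon}(x,t)$ equals $\bbE\{(\Gamma-b_{\bbeta})s_{Y\mid X,T}\mid X=x,T=t\}$. Integrating against $\p_X\p_T=\xi\,\p_{X,T}$ yields $\bbE[\xi(X,T)\{\Gamma(Y,T,\bbeta)-b_{\bbeta}(X,T)\}s(O)]$; the other score components can be added for free because $\Gamma-b_{\bbeta}$ has conditional mean zero given $(X,T)$.
\item[(b)] The derivative of $\p_X$ yields $\bbE[\{\int b_{\bbeta}(X,t)\p_T(t)\diff t\}\,s_X(X)]$. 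Since $\bbE\{s(O)\mid X\}=s_X(X)$, this can be written as $\bbE[\{\int b_{\bbeta}(X,t)\p_T(t)\diff t-\psi(\bbeta)\}s(O)]$.
\item[(c)] The derivative of $\p_T(t)$ is $\bbE\{s(O)\mid T=t\}\p_T(t)$, which yields $\bbE[\{\int b_{\bbeta}(x,T)\p_X(x)\diff x-\psi(\bbeta)\}s(O)]$.
\end{itemize}
The centering constants in (b) and (c) are exactly $\bbE\{\xi b_{\bbeta}\}=\psi(\bbeta)$. Summing the three terms shows that $\frac{\diff}{\diff\varepsilon}\psi_\varepsilon(\bbeta)=\bbE\{\IF^{(1)}_\psi(\bbeta)s(O)\}$, with $\IF^{(1)}_\psi(\bbeta)$ mean zero and in $L_2$. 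Because the model is nonparametric (only the conditions of Assumption~\ref{as:identification} are imposed, and these are open or untestable), the tangent space is all of $L_{2,0}(\P)$. The gradient is therefore unique and is the EIF.

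\textbf{Step 2: EIF of $\bbeta^{\ast}$.} Let $\bbeta_\varepsilon$ be the unique root of $\psi_\varepsilon(\bbeta)=0$. Differentiating the identity $\psi_\varepsilon(\bbeta_\varepsilon)=0$ at $\varepsilon=0$ gives
\begin{equation*}
\partial_\varepsilon\psi_\varepsilon(\bbeta^{\ast})+\nabla_{\bbeta}\psi(\bbeta^{\ast})\,\partial_\varepsilon\bbeta_\varepsilon=0,
\qquad\text{so}\qquad
\partial_\varepsilon\bbeta_\varepsilon=H_{\bbeta^{\ast}}^{-1}\bbE\{\IF^{(1)}_\psi(\bbeta^{\ast})s(O)\}.
\end{equation*}
This identifies $H_{\bbeta^{\ast}}^{-1}\IF^{(1)}_\psi(\bbeta^{\ast})$ as the EIF in \eqref{eif_beta}.

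\textbf{Main obstacle.} The hard step is justifying Step 2 rigorously when $\Gamma$ is nonsmooth, for example the indicators in QTE. Although $\Gamma$ itself is not differentiable in $\bbeta$, the smoothing by expectation makes $\psi_\varepsilon(\bbeta)$ differentiable in $\bbeta$. I would proceed as follows.
\begin{enumerate}
\item Assume $H_{\bbeta^{\ast}}$ is nonsingular and that $(\varepsilon,\bbeta)\mapsto\psi_\varepsilon(\bbeta)$ is $C^1$ near $(0,\bbeta^{\ast})$, which requires interchanging differentiation and integration via dominated convergence using bounded scores.
\item Apply the implicit function theorem to obtain a differentiable root path $\bbeta_\varepsilon$.
\item Use uniqueness of the root to identify $\bbeta_\varepsilon$ with the parameter value under $\P_\varepsilon$.
\end{enumerate}
These regularity conditions are implicit in the statement; they match the framework of \citet{ai2021unified}, and I would state them explicitly.
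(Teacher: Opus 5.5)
Your proposal is correct: the score factorization, the three product-rule contributions from $b_{\bbeta}$, $\p_X$ and $\p_T$ (using $\p_X\p_T=\xi\,\p_{X,T}$), and the implicit differentiation of $\psi_\varepsilon(\bbeta_\varepsilon)=0$ together give exactly the stated $\IF_{\psi}^{(1)}(\bbeta)$ and $H_{\bbeta^{\ast}}^{-1}\IF_{\psi}^{(1)}(\bbeta^{\ast})$. The paper gives no proof of its own and simply reproduces the result from \citet{ai2021unified}, where it is obtained by this same standard pathwise-derivative argument. Your write-up therefore follows essentially the same route, and you are right to make explicit the regularity it needs beyond Assumption~\ref{as:identification}: differentiability and invertibility of $\nabla_{\bbeta}\psi$ at $\bbeta^{\ast}$, and square integrability of $\Gamma$.
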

For certain results in this paper, we impose the following \Holder{} smoothness assumptions on the nuisance parameters $\xi$ and $b_{\bbeta}$.
\begin{assumption}
\label{as:Holder}
$\xi (\cdot, \cdot)$ and $b_{\bbeta} (\cdot, \cdot)$ are assumed to be \Holder{} smooth functions with smoothness indices $s_{1}$ and $s_{2}$ respectively, for every $\bbeta \in \calB$. We let $s \coloneqq (s_{1} + s_{2}) / 2$ denote the average smoothness of the two nuisance parameters. The nuisance estimators $\hat{\xi}$ and $\hat{b}_{\bbeta}$ converge to $\xi$ and $b_{\bbeta}$ respectively at the minimax optimal rates ($n^{- \frac{s_{1}}{d + 2 s_{1}}}$ and $n^{- \frac{s_{2}}{d + 2 s_{2}}}$) in $L_{2} (\P)$-distance.
\end{assumption}

The \Holder{} smoothness conditions in Assumption~\ref{as:Holder} are standard for quantifying the complexity of nuisance functions in nonparametric and semiparametric theory. The stated minimax rates are attainable by classical series, sieve, and kernel estimators under appropriate choice of tuning parameters such as the bandwidth or the number of basis functions; see, e.g., \citet{stone1982optimal}. Assumption~\ref{as:Holder} is used to translate the high-level rate conditions on the nuisance estimators into primitive smoothness requirements; see the discussion following Proposition~\ref{prop:dml}, Remark~\ref{rk:improved rate}, and the discussion below Assumption~\ref{as:sigma}. In the binary QTE example in Section~\ref{sec:examples}, this condition is specialized as Assumption~\ref{as:Holder QTE} and is used in the proof of Theorem~\ref{thm:QTE}.

As noted in the Introduction, a major statistical challenge in applying $\IF_{\psi}^{(1)}$ to construct a first-order estimator of $\bbeta$ is that the nuisance parameter $b_{\bbeta}$ itself depends on the unknown parameter $\bbeta$. Directly applying Proposition~\ref{thm:eif} therefore requires estimating the entire nuisance function, meaning that $b_{\bbeta}$ needs to be estimated for every candidate $\bbeta$. One simple way to bypass this difficulty is to first obtain an initial estimator $\bbeta_{\mathsf{init}}$ close to $\bbeta^{\ast}$, for example, by solving the empirical version of \eqref{est eq} and then substitute $b_{\bbeta_{\mathsf{init}}}$ into the influence function $\IF_{\psi}^{(1)}$. This approach is referred to as localized debiased machine learning (LDML), recently developed in \citet{kallus2024localized}.

To formally construct this first-order estimator $\hat{\bbeta}^{(1)}$, we define
\begin{align*}
\hat{\Xi} (O, O'; \bbeta) \coloneqq \hat{\xi} (X, T) \{\Gamma (Y ,T , \bbeta) - \hat{b}_{\bbeta_{\mathsf{init}}} (X, T)\} +  \hat{b}_{\bbeta_{\mathsf{init}}} (X, T') ,
\end{align*}
where $O'=(Y',T',X')$ is an independent copy of $O = (Y,T,X)$.

We construct the first-order estimator $\hat{\psi}^{(1)}(\bbeta)$ of  $\psi(\bbeta)$ by the following second-order $U$-statistic (similar construction also appeared in \citet{tchetgen2010doubly, bonvini2022sensitivity}):
\begin{align}
\label{eq:sample_est_eq}
\hat{\psi}^{(1)}(\bbeta) \coloneqq \bbU_{n, 2} \{\hat{\Xi} (O_{1}, O_{2}; \bbeta)\} =\frac{1}{n(n-1)} \sum_{i = 1}^{n}\sum_{i \neq j, j= 1}^{n} \hat{\Xi} (O_{i}, O_{j}; \bbeta) , 
\end{align}
and define $\hat{\bbeta}^{(1)}$ as the solution to $\hat{\psi}^{(1)}(\bbeta) \equiv 0$. For convenience, we assume that the nuisance estimators $\hat{\xi}$, $\hat{b}_{\bbeta_{\mathsf{init}}}$ and the initial estimator $\bbeta_{\mathsf{init}}$ of $\bbeta^*$ are all computed using an independent separate sample of size $n$, referred to as the \emph{nuisance sample}. The $U$-statistic operator $\bbU_{n, m}$, for any $m \geq 2$, is taken only over the main sample throughout the paper.
All results are stated conditional on the nuisance sample. For brevity, we often suppress this dependence in our notation, particularly the dependence on $\bbeta_{\ini}$. To improve efficiency, one can follow the standard cross-fitting procedure of \citet{chernozhukov2018double} by flipping the roles of the main and nuisance samples to construct a cross-fit version of the estimator $\hat{\psi}^{(1)} (\bbeta)$. However, to avoid notation clutter, we will stick to the estimator without cross-fitting.

\begin{remark}
\label{rem:binary QTE}
In the case of a binary treatment $T \in \{0,1\}$, we have $\xi(X,T) = T/\P(T = 1 | X) + (1 - T)/\P(T = 0 | X)$, $\bbeta=(\beta_0,\beta_1)^{\top}$ and $b_{\bbeta} (X, T) = T \cdot \bbE \{\Gamma(Y,T,\beta_1) \mid X, T=1\} +  (1 - T) \cdot \bbE \{\Gamma(Y,T,\beta_0) \mid X, T = 0\}$. The influence function of $\IF_{\psi}^{(1)}(\bbeta)$ then reduces to 
\begin{align*}
\IF_{\psi}^{(1)}(\bbeta) = \left( \begin{matrix} 
T \cdot \xi(X,T) \left\{ \Gamma(Y,1,\beta_1) - b_{\bbeta} (X, T = 1) \right\} + b_{\bbeta} (X, T=1) \\
(1-T)\cdot \xi(X,T) \left\{ \Gamma(Y,0,\beta_0) - b_{\bbeta} (X, T=0)   \right\} + b_{\bbeta} (X, T = 0) 
\end{matrix} \right).
\end{align*}
This expression  was used to construct the first-order LDML estimating equation for QTE, as proposed in Section 1.1 of  \citet{kallus2024localized}; see Section~\ref{sec:intuition} for a further discussion. For a continuous treatment $T$, the influence function leads to a second-order $U$-statistic estimator. 
\end{remark}

The statistical properties of $\hat{\bbeta}^{(1)}$ are formally stated in Proposition~\ref{prop:dml} below. This result is analogous to Theorem~1 of \citet{kallus2024localized}.  A proof sketch is provided in Section~\ref{sec:review}, which also serves to motivate the introduction of our higher-order estimators.

\begin{proposition}
\label{prop:dml}
Suppose that Assumption~\ref{as:identification} holds and there exist three diminishing sequences $r_{n, \xi}, r_{n, b},  r_{n, \bbeta_{\ini}}\rightarrow 0$ as $n \rightarrow \infty$ such that
\begin{equation*}
\begin{split}
\Vert \hat{\xi} - \xi \Vert_{\P,2} \lesssim r_{n, \xi},\  \Vert \hat{b}_{\bbeta_{\ini}} - b_{\bbeta_{\ini}} \Vert_{\P,2} \lesssim r_{n, b} \text{ and } \Vert \bbeta_{\ini} - \bbeta \Vert \lesssim r_{n, \bbeta_{\ini}}.
\end{split}
\end{equation*}
If we further assume that conditions~(i)--(vi) of Theorem~3 in \cite{kallus2024localized} hold  after the substitutions  
$\theta\mapsto\bbeta,\; \mu\mapsto b,\; U+V\mapsto\psi$ and that $r_{n, \xi} ( r_{n, b} + r_{n, \bbeta_{\ini}}) = o (n^{- 1 / 2})$, then
\begin{align*}
\sqrt{n} (\hat{\bbeta}^{(1)} - \bbeta^{\ast}) \overset{d}{\rightarrow} N (0, \sigma^{2}),
\end{align*}
where $\sigma^{2} = \var (\IF_{\bbeta^*}^{(1)})$ and $\IF_{\bbeta^*}^{(1)}$ is the EIF of $\bbeta^*$ defined in \eqref{eif_beta}. 
\end{proposition}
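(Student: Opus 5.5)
The plan follows the standard $Z$-estimation route of \citet{kallus2024localized}. The estimating function is $\hat{\psi}^{(1)}(\bbeta)$, a second-order $U$-statistic computed on the main sample, conditional on the nuisance sample. I would prove consistency first and then a linear expansion. Consistency of $\hat{\bbeta}^{(1)}$ should follow from the uniqueness of the root of $\psi$ in \eqref{est eq} together with uniform convergence. Conditions (i)--(vi) of Theorem~3 in \citet{kallus2024localized} (identifiability, boundedness, VC-type classes for $\bbeta\mapsto\Gamma(\cdot,\cdot,\bbeta)$, and invertibility of $H_{\bbeta^\ast}$) are assumed, so $\sup_{\bbeta\in\calB}\|\hat\psi^{(1)}(\bbeta)-\bar\psi(\bbeta)\|=o_\P(1)$. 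Here $\bar\psi(\bbeta)$ is the conditional mean of $\hat\Xi$, and it is uniformly close to $\psi(\bbeta)$.

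For the expansion, I would write
\[
0=\hat\psi^{(1)}(\hat\bbeta^{(1)})=\{\hat\psi^{(1)}-\bar\psi\}(\hat\bbeta^{(1)})+\bar\psi(\hat\bbeta^{(1)}).
\]
\textbf{Step 1 (stochastic equicontinuity).} Take the Hoeffding decomposition of the $U$-process $\hat\psi^{(1)}(\bbeta)-\bar\psi(\bbeta)$. Its degenerate second-order part is $O_\P(n^{-1})$ uniformly over a shrinking neighborhood, by maximal inequalities for VC-type degenerate $U$-processes. Its linear part is an empirical process indexed by a VC class with $L_2$-continuous envelope, so $\{\hat\psi^{(1)}-\bar\psi\}(\hat\bbeta^{(1)})-\{\hat\psi^{(1)}-\bar\psi\}(\bbeta^\ast)=o_\P(n^{-1/2})$. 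At $\bbeta^\ast$ the linear part equals $\bbP_n\IF_\psi^{(1)}(\bbeta^\ast)+o_\P(n^{-1/2})$, because $\hat\xi\to\xi$ and $\hat b_{\ini}\to b_{\bbeta^\ast}$ in $L_2$. That convergence uses $r_{n,b}+r_{n,\bbeta_\ini}\to0$ together with continuity of $\bbeta\mapsto b_\bbeta$.

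\textbf{Step 2 (bias).} This is the key computation. I would compute
\[
\bar\psi(\bbeta)-\psi(\bbeta)=\bbE\{(\hat\xi-\xi)(b_\bbeta-\hat b_{\bbeta_\ini})\}
\]
using the identity $\bbE\{\xi(X,T)f(X,T)\}=\int\!\int f(x,t)\p_X(x)\p_T(t)\,dx\,dt$. Then I would split $b_\bbeta-\hat b_{\bbeta_\ini}=(b_\bbeta-b_{\bbeta_\ini})+(b_{\bbeta_\ini}-\hat b_{\bbeta_\ini})$. By Cauchy--Schwarz and Lipschitz continuity of $\bbeta\mapsto b_\bbeta$ in $L_2$, the bias is at most $r_{n,\xi}(\|\bbeta-\bbeta_\ini\|+r_{n,b})$. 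For $\bbeta=\hat\bbeta^{(1)}$ the first factor is bounded by $r_{n,\bbeta_\ini}+\|\hat\bbeta^{(1)}-\bbeta^\ast\|$. The product assumption makes this $o(n^{-1/2})+o_\P(\|\hat\bbeta^{(1)}-\bbeta^\ast\|)$.

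\textbf{Step 3 (linearization).} A Taylor expansion with the differentiability condition gives $\psi(\hat\bbeta^{(1)})=-H_{\bbeta^\ast}(\hat\bbeta^{(1)}-\bbeta^\ast)+o(\|\hat\bbeta^{(1)}-\bbeta^\ast\|)$. Combining the three steps yields
\[
H_{\bbeta^\ast}(\hat\bbeta^{(1)}-\bbeta^\ast)\{1+o_\P(1)\}=\bbP_n\IF^{(1)}_\psi(\bbeta^\ast)+o_\P(n^{-1/2}).
\]
This first gives a $\sqrt n$-rate, and then, by \eqref{eif_beta} and the CLT, the stated normal limit with $\sigma^2=\var(\IF^{(1)}_{\bbeta^\ast})$.

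The main obstacle is Step~1. Because $\Gamma$ is nonsmooth (indicators), equicontinuity must come from entropy bounds rather than derivatives, and the $U$-process structure means the degenerate term has to be controlled separately. My first attempt would use the VC-type maximal inequalities for $U$-processes assumed in condition (iv) of \citet{kallus2024localized}, conditioning on the nuisance sample so that the nuisance estimates act as fixed functions.
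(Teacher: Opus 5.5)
Your proposal is correct and follows essentially the same route as the paper. The paper's proof is only the bias computation of Section~\ref{sec:review}, which is your Step~2 (splitting $b_{\bbeta}-\hat{b}_{\bbeta_{\ini}}$ through $b_{\bbeta_{\ini}}$ and applying Cauchy--Schwarz), combined with an appeal to the $Z$-estimation argument of \citet{kallus2024localized}, which your Steps~1 and~3 spell out. Step~1 is also easier than you fear: $\bbeta$ enters $\hat{\Xi}(O_1,O_2;\bbeta)$ only through the single-observation term $\hat{\xi}(X_1,T_1)\Gamma(Y_1,T_1,\bbeta)$, so the $\bbeta$-increments of $\hat{\psi}^{(1)}$ form an ordinary empirical process and the degenerate $U$-statistic part does not depend on $\bbeta$ at all.
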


Proposition~\ref{prop:dml} shows that, for $\hat{\bbeta}^{(1)}$ to be $n^{-1/2}$-consistent and asymptotically normal, the initial estimator $\bbeta_{\ini}$ must converge to $\bbeta^{\ast}$ fast enough to accurately estimate the nuisance function $b_{\bbeta}$. If $\bbeta_{\ini}$ is obtained by solving the empirical version of \eqref{est eq} using estimated stabilized weights $\hat{\xi}$ that converge to the true $\xi$ at rate $r_{n,\xi}$, then $r_{n,\bbeta_{\ini}} = O(r_{n,\xi})$. Thus, unless $r_{n,\bbeta_{\ini}}$ converges to zero at an even faster rate than $r_{n,\xi}$ (a highly unlikely scenario), we require both $r_{n,\xi}$ and $r_{n,b}$ to be $o_{\mathbb{P}}(n^{-1/4})$.

Under Assumption~\ref{as:Holder}, these rate conditions imply that $\frac{2s_{1}/d}{1 + 2 s_{1}/d} > 0.5$ and $\frac{s_{1}/d}{1 + 2 s_{1}/d} + \frac{s_{2}/d}{1 + 2 s_{2}/d} > 0.5$.  In the special case $s_{1}=s_{2}=s$, this further reduces to $s/d > 0.5$, under which both nuisance parameters even satisfy the Donsker condition. However, inspired by the known optimal rate for ATE estimation under \Holder{}-type nuisance assumptions \citep{robins2023minimax}, it is reasonable to conjecture that $s/d \geq 0.25$ suffices to guarantee a $n^{-1/2}$‐consistent estimator for finite-dimensional parameters of GTM under reasonable regularity conditions on the generalized residual function $\Gamma$. The remainder of this section is devoted to constructing a higher-order estimator of $\bbeta^{\ast}$ requiring much relaxed \Holder{}-type smoothness assumptions on the nuisance parameters and the initial $\bbeta_{\mathsf{init}}$, by leveraging the HOIF framework.


\section{Higher-Order Estimators: The Finite-Dimensional Case}
\label{sec:hoif}

In this section, we present our main methodological and theoretical results on new higher-order estimators for finite-dimensional GTM parameters. We first motivate the need for higher-order estimators beyond the first-order estimator $\hat{\bbeta}^{(1)}$ through a simple numerical experiment in Section~\ref{sec:review}, accompanied by a brief and intuitive introduction to HOIF theory. Section~\ref{sec:intuition} then discusses how to construct higher-order estimators for GTM parameters, and Section~\ref{sec:properties} studies their statistical properties. Finally, Section~\ref{sec:examples} specializes these general results to the QTE.

\subsection{Motivation for higher-order estimators: The QTE example}
\label{sec:review}
Before proceeding, we provide a more technically rigorous motivation for constructing higher-order estimators for GTMs, complementing the empirical motivation based on the simulation results in Figure~\ref{fig:QTE_Bias_comparison} in the Introduction. We focus on $\beta_{1}^{\ast}$, the $\tau$-quantile of the potential outcome $Y(1)$, which was shown to be a GTM parameter in Example~\ref{qte}. As discussed in Remark~\ref{rem:binary QTE} and Proposition~\ref{prop:dml} (following \citet{kallus2024localized}), we can analyze the bias of the LDML estimator $\hat{\beta}^{(1)}$ by examining the bias of the first-order moment equation $\hat{\psi}^{(1)}(\beta_1)$ for $\psi(\beta_1)$ at a given $\beta_1$:
\begin{equation*}
\begin{split}
\E \{\hat{\psi}^{(1)} (\beta_{1}) - \psi (\beta_{1})\} & = \E \left[ \left\{ \frac{\P (T = 1 \mid X)}{\hat{\P} (T = 1 \mid X)} - 1 \right\} \{b_{\beta_{1}} (X, T = 1) - \hat{b}_{\beta_{1, \mathsf{init}}} (X, T = 1)\} \right] \\
& = \E \left[ \{\hat{\xi} (X, T) - \xi (X, T)\} \{b_{\beta_{1}} (X, T) - \hat{b}_{\beta_{1, \mathsf{init}}} (X, T)\} \right],
\end{split}
\end{equation*}
where we recall that $\xi (x, t) = t / \P (T = 1 \mid X = x)$, $b_{\beta_{1}} (x, t) = t \cdot \E [\tau - \mathbbm{1} \{Y \leq \beta_{1}\} \mid X = x, T = t]$, and $\hat{\xi}$ and $\hat{b}_{\beta_{1}}$ are their respective estimates.  Recall that $\hat{\xi}$ and $\hat{b}_{\beta_{1,\ini}}$ are constructed on the nuisance sample, whereas the expectation is taken with respect to the main sample. Hence, conditional on the nuisance sample, $\hat{\xi}$ and $\hat{b}_{\beta_{1,\ini}}$ can be treated as fixed functions, and the bias $\E [ \{\hat{\xi} (X, T) - \xi (X, T)\} \{b_{\beta_{1}} (X, T) - \hat{b}_{\beta_{1, \mathsf{init}}} (X, T)\} ]$ captures only the approximation bias of these fixed estimators, not the sampling variability in the nuisance estimates.

Decomposing  $b_{\beta_{1}} - \hat{b}_{\beta_{1, \mathsf{init}}}$ as $(b_{\beta_{1}} - b_{\beta_{1, \mathsf{init}}}) + (b_{\beta_{1, \mathsf{init}}} - \hat{b}_{\beta_{1, \mathsf{init}}})$  and applying the triangle inequality, we can bound the bias of $\hat{\psi}^{(1)}(\beta_1)$ under suitable regularity conditions on $\beta_1 \mapsto b_{\beta_1}$ as follows:
\begin{align}
\label{bias1}
\Vert \hat{\xi} - \xi \Vert_{\P, 2} (\Vert \hat{b}_{\beta_{1, \mathsf{init}}} - b_{\beta_{1, \mathsf{init}}} \Vert_{\P, 2} + \Vert \beta_{1, \mathsf{init}} - \beta_{1} \Vert).
\end{align}
Since $\beta_{1, \mathsf{init}}$ depends on $\hat{\xi}$, the bias of $\hat{\psi}^{(1)}(\beta_1)$ is consequently bounded by
\begin{align}
\label{bias2}
\Vert \hat{\xi} - \xi \Vert_{\P, 2} \Vert \hat{b}_{\beta_{1, \mathsf{init}}} - b_{\beta_{1, \mathsf{init}}} \Vert_{\P, 2} + \Vert \hat{\xi} - \xi \Vert_{\P, 2}^{2}.
\end{align}
If $\hat{\xi}$ or $\hat{b}_{\beta_{1, \mathsf{init}}}$ does not converge sufficiently fast to their true values $\xi$ or $b_{\beta_{1, \mathsf{init}}}$, the bias of $\hat{\psi}^{(1)} (\beta_{1})$ might not be sufficiently small, as indicated by the bounds \eqref{bias1} and \eqref{bias2}.

The higher-order estimator $\hat{\beta}_{1}^{(2)}$ is motivated by the HOIF theory initially developed in \citet{robins2008higher}, which is essentially a bias reduction technique. To apply the HOIF theory to estimate $\beta_{1}^{\ast}$, we first fix a set of basis functions of $(X, T)$ of dimension $k$, denoted by $\bar{\phi}_{k}$. Since $T$ is binary, we have $\bar{\phi}_{k} (x, t) = t \sfzbar_{k} (x)$ for some basis functions $\sfzbar_{k}$ over $X$. Given $\bar{\phi}_{k}$, we introduce the following projections of the residuals $\hat{\xi} - \xi$ and $\hat{b}_{\beta_{1, \mathsf{init}}} - b_{\beta_{1}}$ onto the space spanned by $\phi_{k}$ (defined in Section~\ref{sec:notation}):
\begin{align*}
& \Pi (\hat{\xi} - \xi \mid \bar{\phi}_{k}) (x, t) = \bar{\phi}_{k} (x, t)^{\top} \Sigma_{k}^{-1} \E [\{\hat{\xi} (X, T) - \xi (X, T)\} \bar{\phi}_{k} (X, T)], \\
& \Pi (\hat{b}_{\beta_{1, \mathsf{init}}} - b_{\beta_{1}} \mid \bar{\phi}_{k}) (x, t) = \bar{\phi}_{k} (x, t)^{\top} \Sigma_{k}^{-1} \E [\{\hat{b}_{\beta_{1, \mathsf{init}}} (X, T) - b_{\beta_{1}} (X, T)\} \bar{\phi}_{k} (X, T)],
\end{align*}
where $\Sigma_{k} = \E \{\bar{\phi}_{k} (X, T)^{\otimes 2}\} = \E \{T \sfzbar_{k} (X)^{\otimes 2}\}$.

Although it is generally impossible to estimate the bias of an estimator without imposing strong assumptions, the following term can be estimated unbiasedly by a second-order $U$-statistic computable from the data (if $\Sigma_{k}$ is not known, we can plug-in some estimator $\hat{\Sigma}_{k}$ of $\Sigma_{k}$):
\begin{align*}
\mathrm{B}_{\psi, k} (\beta_{1}) & \coloneqq \E \{\Pi (\hat{\xi} - \xi \mid \bar{\phi}_{k}) (X, T) \cdot \Pi (b_{\beta_{1}} - \hat{b}_{\beta_{1, \mathsf{init}}} \mid \bar{\phi}_{k}) (X, T)\} \\
& = \E [\{\hat{\xi} (X, T) - \xi (X, T)\} \bar{\phi}_{k} (X, T)^{\top}] \Sigma_{k}^{-1} \E [\{b_{\beta_{1}} (X, T) - \hat{b}_{\beta_{1, \mathsf{init}}} (X, T)\} \bar{\phi}_{k} (X, T)] \\
& = \E [\{\hat{\xi} (X, T) - 1\} \sfzbar_{k} (X)^{\top}] \Sigma_{k}^{-1} \E [\{T (\tau - \mathbbm{1} \{Y \leq \beta_{1}\}) - \hat{b}_{\beta_{1, \mathsf{init}}} (X, T)\} \sfzbar_{k} (X)],
\end{align*}
as it is a product of two expectations of random variables that can be directly evaluated from the data. The last equality follows from $\bar{\phi}_k(x,t) = t \sfzbar_k(x)$ and $\xi(x,t) = t / \P (T = 1 \mid X = x)$, which give $\bbE \{\xi(X,T) \bar{\phi}_k(X,T)\} = \bbE \{\sfzbar_k(X)\}$, and by the law of iterated expectations:
\begin{align*}
\E \{b_{\beta_{1}} (X, T) \bar{\phi}_{k} (X, T)\} = \E \{T \cdot \E (\tau - \mathbbm{1} \{Y \leq \beta_{1}\} \mid X, T) \cdot \sfzbar_k(X)\} = \E \{T (\tau - \mathbbm{1} \{Y \leq \beta_{1}\})  \sfzbar_{k} (X)\}.
\end{align*}
Furthermore, the bias of $\hat{\psi}^{(1)} (\beta_{1})$ can be decomposed as
\begin{align*}
\E \{\hat{\psi}^{(1)} (\beta_{1}) - \psi (\beta_{1})\} = \mathrm{B}_{\psi, k} (\beta_{1}) + \E \{\Pi^{\perp} (\hat{\xi} - \xi \mid \bar{\phi}_{k}) (X, T) \cdot \Pi^{\perp} (b_{\beta_{1}} - \hat{b}_{\beta_{1, \mathsf{init}}} \mid \bar{\phi}_{k}) (X, T)\},
\end{align*}
where $\Pi^{\perp}$ denotes the orthocomplement of the projection and following \citet{robins2008higher}, we denote the second term in the decomposition as $\TB_{\psi, k} (\beta_{1})$. $\TB_{\psi, k} (\beta_{1})$ is referred to as the \emph{truncation bias} in \citet{robins2008higher}, because it represents the remaining bias, after ``truncating'' the potentially infinite-dimensional residual functions $\hat{\xi} - \xi$ and $b_{\beta_{1}} - \hat{b}_{\beta_{1, \mathsf{init}}}$ up to a finite dimension $k$ by the basis $\bar{\phi}_{k}$. Under appropriate assumptions on  $\hat{\xi} - \xi$ and  $b_{\beta_{1}} - \hat{b}_{\beta_{1, \mathsf{init}}}$ and a suitable choice of  $\bar{\phi}_{k}$ (to be clarified later), $\TB_{\psi, k} (\beta_{1})$ can be made negligible compared to sampling variability. 

The key idea of our higher-order estimation approach is to decompose the bias of the first-order estimator $\hat{\psi}^{(1)} (\beta_{1})$ of $\psi (\beta_{1})$ into two parts $\mathrm{B}_{\psi, k} (\beta_{1}) + \TB_{\psi, k} (\beta_{1})$, so that $\mathrm{B}_{\psi, k} (\beta_{1})$ can be estimated at a sufficiently fast rate by an estimator $\hat{\mathrm{B}}_{\psi, k}$ and $\TB_{\psi, k} (\beta_{1})$ is negligible. We then solve a bias-reduced estimating equation $\hat{\psi}^{(1)} - \hat{\mathrm{B}}_{\psi, k}$ (typically a $U$-statistic) to construct the higher-order estimator $\hat{\beta}_{1}^{(2)}$. We summarize the construction of $\hat{\beta}_{1}^{(2)}$ in Figure~\ref{fig:alg}.

\begin{figure}
    \centering
        \begin{tikzpicture}[
    >=Latex,
    font=\small,
    node distance=1.8cm and 2.0cm,
    box/.style={
        draw,
        semithick,
        rectangle,
        minimum width=2.6cm,
        minimum height=1.0cm,
        align=center
    },
    plug/.style={->,semithick},
    est/.style={
    ->,
    semithick,
    decorate,
    decoration={snake, amplitude=0.9pt, segment length=5pt},
    shorten >=-1.1pt
},
    every node/.style={inner sep=2pt}
]

\node[box] (nuis) at (0,1) {nuisance sample};
\node[box, below=0.5cm of nuis] (main) {main sample};

\node[left=0.8cm of nuis] (xi) {$\hat{\xi}$};
\node[above=0.5cm of nuis] (betainit) {$\beta_{1,\mathrm{init}}$};
\node[right=1.2cm of nuis.north east, anchor=west, yshift=-0.1cm] (bhat) {$\hat{b}_{\beta_{1,\mathrm{init}}}$};
\node[right=0.8cm of nuis.south east, anchor=west,, yshift=0.1cm] (sigma)
    {$\hat{\Sigma}_k$};

\node[ right=2.2cm of main] (grp) {$
\left\{
\begin{array}{c}
\hat{\psi}^{(1)}\\
\hat{\mathrm{B}}_{\psi,k}
\end{array}
\right\}
$};

\node[right=1.3cm of grp] (phi2) {$\hat{\psi}^{(2)}_k = \hat{\psi}^{(1)} - \hat{\mathrm{B}}_{\psi,k}$};
\node[right=1.3cm of phi2] (beta2) {$\hat{\beta}^{(2)}_1$};

\draw[est] (nuis.west) -- (xi.east);
\draw[est] (nuis.north) -- (betainit.south);

\draw[plug] (xi.north) |- (betainit.west);
\draw[est] ([, yshift=-0.1cm]nuis.north east) -- (bhat.west);
\draw[est] ([, yshift=0.1cm]nuis.south east) -- (sigma.west);
\draw[plug] (betainit.east) -| ([xshift=-0.8cm]bhat.north);
\draw[plug] (bhat.south) --([yshift=-1.6cm]bhat.south);
\draw[plug] (sigma.south) --([yshift=-0.8cm]sigma.south);
\draw[plug]
    ([yshift=-0.1cm]xi.south)
    -- ([yshift=-2.4cm]xi.south)
    -- ([yshift=-2.8cm]bhat.south)
    -- ([yshift=-1.7cm]bhat.south);

\draw[plug] (main.east) -- (grp.west);
\draw[plug] (grp.east) -- (phi2.west);
\node[right=0.4cm of phi2] (arr) {\large$\Longrightarrow$};
\end{tikzpicture}
    \caption{A schematic illustration on how to construct the higher-order estimator $\hat{\beta}_{1}^{(2)}$ of $\beta_{1}^*$ in the example of QTE. $\hat{\mathrm{B}}_{\psi, k}$ is a second-order $U$-statistic estimator of $\mathrm{B}_{\psi, k}$.}
    \label{fig:alg}
\end{figure}

As mentioned in the Introduction, Figure~\ref{fig:QTE_Bias_comparison} displays the histograms of different QTE estimators after centered on the true QTE value (i.e., the estimate minus the truth) in a simulation setting where nuisance estimation errors $\|\hat{\xi} - \xi\|_{\P,2}$ and $\|\hat{b}_{\beta_{1,\ini}} - b_{\beta_{1,\ini}}\|_{\P,2}$ are designed to be large. In the left panel, the LDML estimator exhibits a noticeable bias relative to its sampling variability. In contrast, the right panel shows that the histogram of our higher-order estimators is much more centered around zero, indicating a negligible bias relative to its sampling variability. The variance of the higher-order estimator is only slightly larger than that of the LDML estimator. These results demonstrate that the proposed higher-order estimator can be useful not only in theory but also in practice.

\subsection{Higher-order estimators: The general case}
\label{sec:intuition}

In the general case, we begin by noting that the bias of the first-order estimator $\hat{\psi}^{(1)} (\bbeta)$ for the moment equation $\psi(\bbeta)$ can also be written as the mean of a product of two nuisance estimation residuals:
\begin{align}\label{eq:bias}
\bbE \{\hat{\psi}^{(1)}(\bbeta) - \psi(\bbeta)\} = \bbE [\{\hat{\xi} (X, T) - \xi (X, T)\} \{b_{\bbeta} (X, T) - \hat{b}_{\bbeta_{\mathsf{init}}} (X, T)\}].
\end{align}
To reduce this bias, we project each residual onto a size-$k$ dictionary $\bar{\phi}_{k} \coloneqq \{\phi_{1}, \cdots, \phi_{k}\}$ functions.
Specifically, we select $\bar{\phi}_{k} \equiv \sfzbar_{k_{x}} \otimes \sfwbar_{k_{t}}$ as the tensor product of a  size-$k_{x}$ dictionary $\sfzbar_{k_{x}} \coloneqq \{\mathsf{z}_{1}, \cdots, \mathsf{z}_{k_{x}}\}$ over $\calX$ and a size-$k_{t}$ dictionary $\sfwbar_{k_{t}} \coloneqq \{\mathsf{w}_{1}, \cdots, \mathsf{w}_{k_{t}}\}$ over $\calT$, as in general $T$ is no longer binary. Using the same projection arguments as in Section~\ref{sec:review}, the bias decomposes as
\begin{align*}
\bbE \{\hat{\psi}^{(1)} (\bbeta) - \psi (\bbeta)\} 
& = \bbE \{\Pi (\hat{\xi} - \xi \mid \bar{\phi}_{k}) (X, T) \cdot \Pi (b_{\bbeta} - \hat{b}_{\bbeta_{\mathsf{init}}} \mid \bar{\phi}_{k}) (X, T)\} \\
&\quad + \bbE \{\Pi^{\perp} (\hat{\xi} - \xi \mid \bar{\phi}_{k}) (X, T) \cdot \Pi^{\perp} (b_{\bbeta} - \hat{b}_{\bbeta_{\mathsf{init}}} \mid \bar{\phi}_{k}) (X, T)\} \\
& \eqqcolon \mathrm{B}_{\psi, k} (\bbeta)+ \mathrm{TB}_{\psi, k}  (\bbeta),
\end{align*}
where $\Pi$ denotes projection onto the span of $\bar{\phi}_k$ and $\Pi^{\perp}$ is its orthocomplement. Moreover,
\begin{equation}
\label{EB}
\mathrm{B}_{\psi, k} (\bbeta) = \bbE [\{\hat{\xi} (X, T) - \xi (X, T)\} \bar{\phi}_{k} (X, T)^{\top}] \Sigma_{k}^{-1} \bbE [\bar{\phi}_{k} (X, T) \{b_{\bbeta} (X, T) - \hat{b}_{\bbeta_{\mathsf{init}}} (X, T)\}].
\end{equation}
A key insight underlying the improved second-order estimator is that the bias term $\mathrm{B}_{\psi,k}(\bbeta)$ admits an unbiased sample analogue, if $\Sigma_k$ is known. First, by the law of iterated expectations,
\begin{align*}
\mathbb{E}[\bar{\phi}_k (X,T) \{b_{\bbeta}(X,T)-\hat{b}_{\bbeta_{\mathsf{init}}}(X,T)\}] = \mathbb{E}[\bar{\phi}_k(X,T)\{\Gamma(Y,T,\bbeta)-\hat{b}_{\bbeta_{\mathsf{init}}}(X,T)\}].
\end{align*}
Next, recall that $\xi(X,T)=\p_T(T)/\p_{T\mid X}(T\mid X)$. Weighting by $\xi$ transforms the joint distribution of $(X,T)$ into the product of their marginal distributions. Indeed,
\begin{align*}
\mathbb{E}[\xi(X,T)\bar{\phi}_k(X,T)] = \iint \bar{\phi}_k(x,t)\,\p_T(t)\,\p_X(x) \diff t \diff x.
\end{align*}
This identity is crucial: it shows that the expectation of $\xi(X,T)\bar{\phi}_k(X,T)$ can be estimated by averaging $\bar{\phi}_k(X_i,T_j)$ over independent pairs $(X_i,T_j)$ from the main sample. Combining these two observations, we obtain an unbiased estimator of $\mathrm{B}_{\psi,k}(\bbeta)$ as the following third-order $U$-statistic:
\begin{align*}
\tilde{\mathrm{B}}_{\psi,k}(\bbeta) = \mathbb{U}_{n,3}\Big[\big\{\hat{\xi}(X_1,T_1)\bar{\phi}_k(X_1,T_1) - \bar{\phi}_k(X_1,T_3)\big\}^{\top} \Sigma_k^{-1} \bar{\phi}_k(X_2,T_2) \big\{\Gamma(Y_2,T_2,\bbeta)-\hat{b}_{\bbeta_{\mathsf{init}}}(X_2,T_2)\big\}\Big].
\end{align*}
Recall that the $U$-statistics operator $\mathbb{U}_{n,3}$ is taken over the main sample. We then construct the improved oracle
second-order estimator of $\psi (\bbeta)$ as
\begin{align*}
\tilde{\psi}_{ k}^{(2)} (\bbeta) \coloneqq \hat{\psi}^{(1)} (\bbeta) - \tilde{\mathrm{B}}_{\psi,k} (\bbeta).
\end{align*} 
We call $\tilde{\psi}_{ k}^{(2)}$ an oracle estimator because it requires the knowledge of $\Sigma_{k}$, which is generally unknown. The corresponding oracle second-order estimator $\tilde{\bbeta}^{(2)}$ for  $\bbeta^*$ is defined as the solution to $\tilde{\psi}_{k}^{(2)} (\bbeta) = 0$. When $\Sigma_{k}$ is replaced by an estimator $\hat{\Sigma}_{k}$ (see Remark~\ref{rem:unknown Sigma}) computed from the nuisance sample, we obtain the feasible estimators $\hat{\mathrm{B}}_{\psi, k}$ and $\hat{\psi}_{k}^{(2)} (\bbeta) \coloneqq \hat{\psi}^{(1)} (\bbeta) - \hat{\mathrm{B}}_{\psi, k}(\bbeta)$. The feasible estimator $\hat{\bbeta}^{(2)}$ for $\bbeta^*$ is then defined as the solution to $\hat{\psi}_k^{(2)}(\bbeta) = 0$.

Let $\bar{\psi}_{k} (\bbeta) \coloneqq \bbE [\tilde{\psi}_{k}^{(2)} (\bbeta)]$ denote the truncated parameter of $\psi (\bbeta) $ \citep{robins2016technical}. The bias of the second-order estimator $\tilde{\psi}_{k}^{(2)} (\bbeta)$ then reduces to
\begin{align*}
\bbE \{\tilde{\psi}_{k}^{(2)} (\bbeta) - \psi (\bbeta)\} = \bar{\psi}_{k} (\bbeta)- \psi (\bbeta) = \TB_{\psi, k} (\bbeta),
\end{align*}
which can be made much smaller than the original bias. Remark~\ref{rk:improved rate} discusses why this yields an improved convergence rate for  $\tilde{\psi}_{k}^{(2)} (\bbeta)$ under \Holder{}-type assumptions in Assumption~\ref{as:Holder}. Formally analyzing the statistical properties of $\tilde{\bbeta}^{(2)}$ and $\hat{\bbeta}^{(2)}$ requires a detailed examination of the impact of the initial estimator $\bbeta_{\ini}$ and heavy use of $U$-processes theory, since  they are third-order $Z$-estimators. We will make these  intuitive arguments precise in Section~\ref{sec:properties}.

\begin{remark}\label{rk:bias}
The initial nuisance estimators $\hat{\xi}$ and $\hat{b}_{\bbeta_{\mathsf{init}}}$ can be obtained by any flexible method (e.g., kernel smoothing, series estimation, or machine learning) and are treated as fixed functions conditional on the nuisance sample. The dictionary $\bar{\phi}_k$ is introduced solely for the purpose of bias correction: we project the residual functions $\hat{\xi}-\xi$ and $b_{\bbeta}-\hat{b}_{\bbeta_{\mathsf{init}}}$ onto the linear span of $\bar{\phi}_k$, and the bias correction term $\tilde{\mathrm{B}}_{\psi,k}(\bbeta)$ estimates the inner product of these projections. The improvement comes from choosing $\bar{\phi}_k$ with sufficient approximation power for these residuals;  in particular, the dictionary used for bias correction should be richer than or different from any dictionary used in the initial nuisance estimation. 
\end{remark}

\begin{remark}\label{rk:improved rate}
We now provide intuition for the improved convergence rate of the debiased estimator $\tilde{\psi}_{k}^{(2)} (\bbeta)$. For simplicity, we ignore the impact of $\bbeta_{\ini}$ by assuming $\bbeta_{\ini} = \bbeta$.  Under Assumption~\ref{as:Holder}
and with an appropriate choice of basis  $\bar{\phi}_{k}$ \citep{belloni2015some}, the truncation bias satisfies
 $\TB_{\psi, k}(\bbeta) \asymp k^{- 2 s/d}$, where $s$ is the average smoothness defined in Assumption~\ref{as:Holder}. The variance of $\tilde{\psi}_{k}^{(2)} (\bbeta)$ is of order $\frac{k}{n^{2}} + \frac{1}{n}$. Balancing the truncation bias against the variance by choosing $k \asymp n^{\frac{2 d}{d + 4 s}}$ yields
\begin{align*}
\left[ \bbE \{\tilde{\psi}_{k}^{(2)} (\bbeta) - \psi (\bbeta)\}^{2} \right]^{1 / 2} \lesssim \left\{ \begin{array}{ll}
n^{-1 / 2} & s \geq \frac{d}{4}, \\
n^{- \frac{4 s}{d + 4 s}} & s < \frac{d}{4}.
\end{array} \right.
\end{align*}
Thus, compared to the first-order estimator $\hat{\psi}^{(1)}$, the debiased estimator $\tilde{\psi}_k^{(2)}$ relaxes the smoothness requirement for $\sqrt{n}$-consistency. Specifically, $\tilde{\psi}_k^{(2)}$ achieves the $\sqrt{n}$ rate whenever $s \ge d/4$, whereas $\hat{\psi}^{(1)}$ fails in the regime $d/4 < s < d/2$ (see the discussion following Proposition~\ref{prop:dml}).
\end{remark}

\begin{remark}
\label{rem:unknown Sigma}
Following \cite{mcgrath2024nuisance} and \cite{chen2024method}, we focus on the case where $\Sigma_{k}^{-1}$ is known or some estimator $\hat{\Sigma}_{k}^{-1}$ of $\Sigma_{k}^{-1}$ is sufficiently close to $\Sigma_{k}^{-1}$, which generally requires extra smoothness assumptions on the marginal density $\p_{X}$ of $X$. How the minimax-optimal rate depends on the smoothness of $\p_{X}$ is a long-standing open problem \citep{richardson2014causal} and is beyond the scope of this paper. For completeness, in Supplementary Material Section~\ref{app:truncation}, we derive the HOIFs of $\psi (\bbeta)$ (or more precisely, of $\bar{\psi}_{k} (\bbeta) $) and construct higher-order estimators of arbitrary order $m$, facilitating future work in this direction. When $k = o (n)$, we briefly discuss in Supplementary Material Section \ref{app:unknown} how to estimate $\bbeta$ using higher-order estimators with $m \asymp \log n$ without any smoothness assumption on $\p_{X}$, where $\Sigma_{k}$ is estimated by its sample average estimator.
\end{remark}


\subsection{Asymptotic properties of higher-order estimators: The general case}
\label{sec:properties}

\emph{En route} to deriving statistical properties of our proposed second-order estimators, we further require
the following regularity conditions. When stating these conditions, we introduce positive and finite constants $c_{3}, \ldots, c_{8}$ that do not vary with $n$. 

\begin{assumption}
\label{as:bounded}
 The support $\mathcal{T}$ of the treatment variable $T$ is a compact subset of $\bbR$. 
\end{assumption}

\begin{assumption}\label{as:uniqueness}
(i) The parameter space $\calB$ is a compact subset of $\bbR^{p}$ and the true parameter $\bbeta^{*}$ is in the interior of $\calB$. (ii)
$\bbeta^{\ast}$ is the unique solution to the population first-order estimating equation $\psi (\bbeta) = 0$ and for every $\epsilon > 0$, $\inf_{\|\bbeta - \bbeta^{\ast}\| \geq \epsilon} \|\psi (\bbeta)\| > 0$.
\end{assumption}
\begin{assumption}\label{as:smoothness}
(i) $\psi (\bbeta)$ is differentiable with respect to $\bbeta$ at any $\bbeta \in \calB$; denote its derivative with respect to $\bbeta$ as $\nabla_{\bbeta} \psi (\bbeta) \in \bbR^{p \times p}$. (ii) Each component of $\nabla_{\bbeta} \psi (\bbeta)$ is continuous at $\bbeta^{\ast}$. (iii) The eigenvalues of $\nabla_{\bbeta} \psi (\bbeta^{\ast})$ are bounded between constants $c_3$ and $c_4$. 
\end{assumption}


\begin{assumption}\label{as:nuisance_est}
(i) $\sup_{\bbeta \in \calB} \|b_{\bbeta} (\cdot, \cdot)\|_{\infty} \lesssim 1$,  $\sup_{\bbeta \in \calB} \|\hat{b}_{\bbeta} (\cdot, \cdot)\|_{\infty} \lesssim 1$ and $ \|\hat{\xi} (\cdot, \cdot) \|_{\infty}  \lesssim 1$. (ii) For any $\bbeta_{1}, \bbeta_{2} \in\calB$, we have $   \|  {b}_{\bbeta_{1}}(X,T)-{b}_{\bbeta_{2}}  (X,T) \|_{\P,2} \lesssim \| \bbeta_{1} - \bbeta_{2}\|.$ (iii) For any $(x, t) \in \calX \times \calT$, we have $ \ c_5 \leq \p_{X}(x)\leq c_6 $ and $ \  c_5 \leq \p_{T}(t)\leq c_6$.
\end{assumption}

\begin{assumption}
\label{as:criterion}
(i) The function class \ $\left\{ \Gamma (\cdot, \cdot, \bbeta): \bbeta \in \calB \right\}$ is of VC type with fixed characteristic $(A, v)$. (ii) $\sup_{\bbeta \in \calB} \| \Gamma (\cdot,\cdot,\bbeta) \|_{\infty} \lesssim 1$. (iii) There exists some constant $\alpha_{0}\in(0,1]$ such that, for any $\bbeta_{1}, \bbeta_{2} \in\calB$, any small $\delta>0$,  $\bbE^{1/2} [\sup_{||\bbeta_{1}-\bbeta_{2}||\lesssim\delta} \left\| \Gamma(Y,T,\bbeta_1)-\Gamma(Y,T,\bbeta_2) \right\|^2 ]\lesssim \delta^{\alpha_{0}}$. 
\end{assumption} 

\begin{assumption}
\label{as:basis}
(i) For each $j$, the $j^{th}$ basis functions $\mathsf{z}_{j}(\cdot)$ and $\mathsf{w}_{j}(\cdot)$ are supported on sets of probability at most $c_7/k_x$ and $c_7/k_t$, respectively. Let $\calS_{X,j}$ and $\calS_{T,j}$ denote the supports of  $\mathsf{z}_{j}(\cdot)$ and $\mathsf{w}_{j}(\cdot)$, respectively. (ii) At any point $(x,t)\in\calX \times \calT$, at most $c_8$ elements of $\sfzbar_{k_{x}} (x)$ and $\sfwbar_{k_{t}} (t)$
 are simultaneously non-zero.  (iii) For every $k_{x}$ and $k_{t}$, and for any unit vectors $\bma=(a_1,\ldots,a_{k_x})\in \bbR^{k_x}$ and $\bmb=(b_1,\ldots,b_{k_t})\in \bbR^{k_t}$, we have $\bma^{\top}\int_{\calS_{X,j}} \sfzbar_{k_{x}} (x)\sfzbar_{k_{x}} (x)^{\top} \diff x \bma \gtrsim a_j^2 $ and $\bmb^{\top}\int_{\calS_{T,j}}  \sfwbar_{k_{t}} (t) \sfwbar_{k_{t}} (t)^{\top}  \diff t \bmb \gtrsim b_j^2$.
(iv) There exist
two sequences of  constants $\zeta_{x}(k_{x})$ and $\zeta_{t}(k_{t})$
such that $\Vert \sfwbar_{k_{t}} (\cdot) \Vert_{\infty} \leq \zeta_{t}(k_{t}) \lesssim \sqrt{k_t}$, $\Vert \sfzbar_{k_{x}} (\cdot) \Vert_{\infty} \leq \zeta_{x}(k_{x}) \lesssim \sqrt{k_x}$, and   $\zeta(k)\coloneqq\zeta
_{x}(k_{x})\zeta_{t}(k_{t})$. (v) The dimension $k=k_x k_t$ satisfies  $k \to \infty$  and $k = o (n^2)$ as $n \to \infty$.
\end{assumption}

 Assumption~\ref{as:bounded} requires that the treatment $T$ be bounded. This condition is imposed to simplify the analysis, as most existing technical tools for $U$-statistics and $U$-processes require the $U$-statistic kernel to be bounded. We expect that it could be weakened to light-tailed conditions \citep{chakrabortty2025tail}, but this is not the main focus of our paper. Assumptions~\ref{as:uniqueness} and~\ref{as:smoothness}  are standard identification and regularity conditions for $Z$-estimators. Assumption~\ref{as:nuisance_est} imposes sufficient regularity conditions on the nuisance functions and their estimators.  In particular, Assumption~\ref{as:nuisance_est}(iii), together with Assumption~\ref{as:identification}(ii), implies that $\xi(\cdot,\cdot)$ is bounded away from zero and infinity, which is commonly required in the existing literature; see \cite{ai2021unified} and \cite{kennedy2017non}, among others.

Assumption~\ref{as:criterion}(i) is a high-level complexity condition on the generalized residual function that defines GTM treatment effect parameters in \eqref{ident}. In applications, this assumption shall be verified case-by-case.   
Assumptions~\ref{as:criterion}(ii)-(iii) control the envelope functions, a standard requirement in $M/Z$-estimation when the criterion is non-smooth (see, e.g., \citealt{chen2003estimation,ai2021unified}). Notably, Assumption~\ref{as:criterion}(iii) is weaker than the Lipschitz continuity condition and accommodates various continuous and discontinuous functions. Fortunately, in various examples, these assumptions are straightforward to verify. In Supplementary Material Section~\ref{app:Verification}, we show that they hold for both ATE and QTE from Examples~\ref{ate} and~\ref{qte}.

Assumptions~\ref{as:basis}(i)--(ii) ensure control over the sup norm of series projection estimators, which are essential for analyzing the $U$-process and establishing variance bounds.  These assumptions are satisfied by many commonly used sieve bases, including Cohen–Daubechies–Vial wavelet series, B-splines and local polynomial partition series;  see \citealp{belloni2015some,  liu2017semiparametric,cattaneo2020large} for detailed discussions. Assumption~\ref{as:basis}(iii) imposes a mild local non-collinearity requirement on the basis functions and is verified by the basis functions  mentioned above. This condition is also adopted in \citet{cattaneo2020large}.
Assumption~\ref{as:basis}(iv) is standard in the nonparametric regression literature and is verified for the basis functions  mentioned above.    Assumption~\ref{as:basis}(v) restricts the basis dimension $k$ from growing too fast (slower than $n^{2}$), a necessary condition to achieve asymptotic normality of the estimators.

We now establish the consistency and convergence rate of the oracle second-order estimator $\tilde{\bbeta}^{(2)}$.
\begin{theorem}
\label{th:consistency}
Suppose that Assumptions~\ref{as:identification} and \ref{as:bounded}--\ref{as:basis} hold. 
Let $\tilde{r}_{n,\bbeta} \to 0$ be a diminishing sequence such that
\begin{align*}
   \bigg( \frac{\sqrt{k}}{n} \vee \frac{1}{\sqrt{n}}\bigg)\cdot \log n  
+ \sup_{\bbeta\in\calB}\| \Pi^{\perp} [\hat{\xi} - \xi \mid \bar{\phi}_{k}] \|_{\P,2} \cdot \|  \Pi^{\perp} [b_{\bbeta} - \hat{b}_{\bbeta_{\mathsf{init}}} \mid \bar{\phi}_{k}]  \|_{\P,2} \lesssim \tilde{r}_{n,\bbeta}, 
\end{align*}
where $\Pi^{\perp}[f \mid \bar{\phi}_{k}]$ denotes the orthocomplement of the projection of $f$ onto the linear span of $\bar{\phi}_{k}$ (see Section~\ref{sec:notation}). Then $\tilde{\bbeta}^{(2)}$ is consistent and satisfies $\bbE[\|\tilde{\bbeta}^{(2)} - \bbeta^{*}\|] \lesssim \tilde{r}_{n,\bbeta}$.
\end{theorem}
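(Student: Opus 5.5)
Conditional on the nuisance sample (so that $\hat\xi$, $\hat b_{\bbeta_{\ini}}$, $\bbeta_{\ini}$ are fixed functions), the argument follows the standard $Z$-estimation template: a uniform deviation bound for $\tilde\psi^{(2)}_k(\bbeta)$ around $\psi(\bbeta)$ over $\calB$, combined with identifiability and local invertibility of $\psi$. The key decomposition, valid for every $\bbeta\in\calB$, is
\begin{align*}
\tilde\psi^{(2)}_k(\bbeta)-\psi(\bbeta) = \{\tilde\psi^{(2)}_k(\bbeta)-\bar\psi_k(\bbeta)\} + \TB_{\psi,k}(\bbeta).
\end{align*}
The truncation bias is handled by Cauchy--Schwarz,
\begin{align*}
|\TB_{\psi,k}(\bbeta)|\le \|\Pi^\perp[\hat\xi-\xi\mid\bar\phi_k]\|_{\P,2}\,\|\Pi^\perp[b_{\bbeta}-\hat b_{\bbeta_{\ini}}\mid\bar\phi_k]\|_{\P,2},
\end{align*}
so its supremum over $\calB$ is at most the second part of $\tilde r_{n,\bbeta}$. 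The remaining work is to show
\begin{align*}
\bbE\sup_{\bbeta\in\calB}\|\tilde\psi^{(2)}_k(\bbeta)-\bar\psi_k(\bbeta)\|\lesssim \Big(\frac{\sqrt k}{n}\vee\frac1{\sqrt n}\Big)\log n .
\end{align*}

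For this centered $U$-process, I first write $\tilde\psi^{(2)}_k(\bbeta)$ as a single third-order $U$-statistic. The second-order kernel $\hat\Xi$ is padded trivially, and its $\bbeta$-dependence enters only through $\Gamma(Y_1,T_1,\bbeta)$. The bias-correction kernel is linear in $\Gamma(Y_2,T_2,\bbeta)$, multiplied by $\{\hat\xi(X_1,T_1)\bar\phi_k(X_1,T_1)-\bar\phi_k(X_1,T_3)\}^\top\Sigma_k^{-1}\bar\phi_k(X_2,T_2)$.

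I then take the Hoeffding decomposition into the linear, degenerate second-order and degenerate third-order parts.

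\begin{itemize}
\item \textbf{Linear part.} Its kernel class is $\{\Gamma(\cdot,\cdot,\bbeta)\}$ times fixed bounded functions. I need the projections $\Pi(\cdot\mid\bar\phi_k)$ of bounded functions to be uniformly bounded in sup-norm. This follows from the local-support basis conditions in Assumption~\ref{as:basis}(i)--(iii) together with Assumption~\ref{as:nuisance_est}(iii), since $\Sigma_k^{-1}$ is then effectively banded with $\|\Sigma_k^{-1}\|_{\op}\lesssim1$. The class is VC type by Assumption~\ref{as:criterion}(i) and the VC-preservation of products and sums, so a standard maximal inequality gives $n^{-1/2}$.
\item \textbf{Degenerate parts.} Here I invoke maximal inequalities for degenerate $U$-processes of VC type (e.g.\ Chen--Kato type local maximal inequalities). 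The second-order degenerate term has kernel $L_2$ norm of order $\sqrt{k}$ and envelope of order $\zeta(k)^2\lesssim k$. This yields $\sqrt{k}/n$ times entropy factors, plus envelope-driven remainders of order $k/n^{3/2}$ times logarithmic factors. These remainders are dominated because $k=o(n^2)$ forces $k/n^{3/2}\le(\sqrt k/n)\cdot\sqrt{k/n}$, and the regime $k\gtrsim n$ is covered by the $\sqrt k/n$ term once the logs are absorbed. I expect to need to check this carefully. The third-order term is smaller, at most $k/n^{3/2}\lesssim\sqrt k/n$ times $\sqrt{k/n}$ handled likewise, though I may need the kernel's $L_2$ norm $\sqrt k$ rather than $k$.
\end{itemize}

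The $\log n$ factor comes from the uniform entropy integrals of the VC classes.

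Consistency then follows from the standard argument. The uniform bound plus $\tilde r_{n,\bbeta}\to0$ gives $\sup_{\bbeta}\|\tilde\psi^{(2)}_k-\psi\|=o_\P(1)$, and the well-separated zero in Assumption~\ref{as:uniqueness}(ii) yields $\tilde\bbeta^{(2)}\to\bbeta^*$. For the rate, Assumption~\ref{as:smoothness} provides a neighborhood of $\bbeta^*$ on which
\begin{align*}
\|\psi(\bbeta)\|\ge (c_3/2)\|\bbeta-\bbeta^*\|.
\end{align*}
Since $\psi(\tilde\bbeta^{(2)})=\psi(\tilde\bbeta^{(2)})-\tilde\psi^{(2)}_k(\tilde\bbeta^{(2)})$, this gives
\begin{align*}
\|\tilde\bbeta^{(2)}-\bbeta^*\|\lesssim\sup_{\bbeta}\|\tilde\psi^{(2)}_k(\bbeta)-\psi(\bbeta)\|.
\end{align*}
Off the neighborhood, the separation constant together with compactness of $\calB$ bounds $\|\tilde\bbeta^{(2)}-\bbeta^*\|$ by a constant times the same supremum. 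Hence the inequality holds globally, and taking expectations gives $\bbE\|\tilde\bbeta^{(2)}-\bbeta^*\|\lesssim\tilde r_{n,\bbeta}$.

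The main obstacle I expect is the degenerate $U$-process bound. It requires verifying that the $k$-dependent kernel classes indexed by $\bbeta$ are VC type with characteristics not growing with $k$, apart from the envelope. It also requires obtaining the sharp $\sqrt k/n$ rather than the cruder $k/n$ from envelope-based bounds. I would first try the local maximal inequality with the $L_2$ variance proxy, using Assumption~\ref{as:criterion}(ii) and the bounded projection operator to compute the second moments.
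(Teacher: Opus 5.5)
Your overall architecture matches the paper's proof: you bound the truncation bias by Cauchy--Schwarz, bound $\tilde{\psi}^{(2)}_k-\bar{\psi}_k$ uniformly through the Hoeffding decomposition, and conclude from $\|\psi(\bbeta)\|\gtrsim\|\bbeta-\bbeta^*\|\wedge c$. Your final step is fine: you bound $\|\tilde{\bbeta}^{(2)}-\bbeta^*\|$ by a constant times $\sup_{\bbeta}\|\tilde{\psi}^{(2)}_k(\bbeta)-\psi(\bbeta)\|$ deterministically and then take expectations.

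\textbf{The gap is in the degenerate components.} You bound them with a local maximal inequality driven by the sup-norm envelope $\zeta(k)^2\lesssim k$. This leaves a remainder of order $(k/n^{3/2})\log n=(\sqrt{k}/n)\sqrt{k/n}\,\log n$. Your claim that this is absorbed when $k\gtrsim n$ is false, because $\sqrt{k/n}$ is a polynomial factor, not a logarithm. For $k\gg n$ the remainder dominates $\sqrt{k}\log n/n$. For $k\gtrsim n^{3/2}$, which is allowed since only $k=o(n^2)$ is assumed, it does not even tend to zero. As written, your argument delivers the stated rate only for $k\lesssim n$.

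\textbf{The missing observation is that the degenerate parts need no sup-norm at all.} The envelope of the bias-correction class factorizes as
$\H_{2,n}=\bigl|\{\hat{\xi}(X_1,T_1)\bar{\phi}_k(X_1,T_1)-\bar{\phi}_k(X_1,T_3)\}^{\top}\Sigma_k^{-1}\bar{\phi}_k(X_2,T_2)\bigr|\cdot\sup_{\bbeta}\|\Gamma(Y_2,T_2,\bbeta)-\hat{b}_{\bbeta_{\ini}}(X_2,T_2)\|$.
The supremum falls only on a bounded factor, so $\|\H_{2,n}\|_{\P^3,2}\lesssim\zeta(k)\lesssim\sqrt{k}$, the same order as a single kernel. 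The envelope-based bound you were worried about is therefore not $k/n$. The paper applies the non-local bound of Chen and Kato (their Corollary 5.6): $\bbE\|\bbU_{n,j}(\pi_j h)\|_{\calH_{2,n}}\lesssim J_j(1)\,\|\H_{2,n}\|_{\P^3,2}\,n^{-j/2}$ for $j=2,3$. This gives $\sqrt{k}/n$ and $\sqrt{k}/n^{3/2}$ with no sup-norm remainder, and the local inequality is reserved for the linear term.

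\textbf{Your linear term is also incomplete.} The $O_1$- and $O_3$-projections of the bias-correction kernel involve $\hat{\xi}\,\Pi[b_{\bbeta}-\hat{b}_{\bbeta_{\ini}}\mid\bar{\phi}_k]$ and its integrals over $t$ or $x$. These are not $\Gamma(\cdot,\cdot,\bbeta)$ times fixed functions, so preservation of VC type under products and sums does not cover them. The paper proceeds in three steps:
\begin{itemize}
\item It uses that conditional expectations of a VC class are VC type relative to the projected envelope (Chen--Kato, Lemma 5.4), whose sup-norm may be of order $k$.
\item It renormalizes to the bounded envelope $m_n\lesssim 1$ supplied by the sup-norm stability of $\Pi$.
\item The ratio $M_n/m_n\lesssim k\lesssim n^2$ then enters only through $\log(k\vee n)\lesssim\log n$.
\end{itemize}
That renormalization is the actual source of the $\log n$ in $\tilde{r}_{n,\bbeta}$.
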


The proof of Theorem~\ref{th:consistency} is presented in Section~\ref{app:beta-consistency} of the Supplementary Material. The convergence rate $\tilde{r}_{n,\bbeta}$ comprises two components: the first term captures variance, and the second term represents the bias arising from approximating the residuals of the estimated and true nuisance parameters using the basis $\bar{\phi}_{k}$.

 The next result characterizes the asymptotic distribution of $\tilde{\bbeta}^{(2)}$.
\begin{theorem}
\label{th:beta-dist}
Suppose Assumptions~\ref{as:identification} and \ref{as:bounded}--\ref{as:basis} hold. Further, assume $\tilde{r}_{n,\bbeta}^{\alpha_{0}}\log n \rightarrow 0$, where $\alpha_{0}$ is the constant from Assumption~\ref{as:criterion}, and that
\begin{equation} \label{eq:tbias-beta0}
\frac{n}{\sqrt{k + n}} \sup_{ \|\bbeta - \bbeta^{\ast}\| \lesssim \tilde{r}_{n,\bbeta} } \| \Pi^{\perp} [\hat{\xi} - \xi | \bar{\phi}_{k}] \|_{\P,2} \cdot \|  \Pi^{\perp} [b_{\bbeta} - \hat{b}_{\bbeta_{\mathsf{init}}} | \bar{\phi}_{k}]  \|_{\P,2} \rightarrow 0.
\end{equation}
Then 
\begin{align*}
\frac{n}{\sqrt{k + n}} (\tilde{\bbeta}^{(2)} -\bbeta^{*}) \stackrel{d}{\to} N\left(0,   \{\nabla_{\bbeta}\psi (\bbeta^{*})\}^{-1} \left\{ V_{1} ( \bbeta^{\ast} )  + \frac{1}{2}  V_2 ( \bbeta^{\ast} ) \right\} \{\nabla_{\bbeta}^\top\psi (\bbeta^{*})\}^{-1} \right),
\end{align*}
where $V_1(\bbeta)$ and $V_2(\bbeta)$ are defined in \eqref{eq:V1}--\eqref{eq:V2} in Section~\ref{app:beta-dist} of the Supplementary Material. In particular, the asymptotic variance depends on the relative scaling between $k$ and $n$: $V_2(\bbeta^{\ast})=0$ when $k\ll n$, while $V_1(\bbeta^{\ast})=0$ when $k\gg n$. Consequently, 
\begin{itemize}
\item [(i)] if $k \ll n$, then
\begin{align*}
\sqrt{n} (\tilde{\bbeta}^{(2)} -\bbeta^{*}) & \stackrel{d}{\to} N \left( 0, \{\nabla_{\bbeta}\psi (\bbeta^{*})\}^{-1}V_1  (\bbeta^{\ast}) \{\nabla_{\bbeta}^{\top}\psi (\bbeta^{*})\}^{-1} \right);
\end{align*}

\item [(ii)] if $k \gg n$ (recall Assumption~\ref{as:basis} requires $k = o (n^{2})$), then
\begin{align*}
\frac{n}{\sqrt{k}} (\tilde{\bbeta}^{(2)} -\bbeta^{*}) & \stackrel{d}{\to} N \left(0, \{\nabla_{\bbeta}\psi (\bbeta^{*})\}^{-1} \frac{1}{2} V_2 (\bbeta^{\ast}) \{\nabla_{\bbeta}^{\top}\psi (\bbeta^{*})\}^{-1} \right);
\end{align*}

\item [(iii)] and if $k / n \rightarrow \tau$, for some $\tau \in (0, \infty)$, 
\begin{align*}
\sqrt{n} (\tilde{\bbeta}^{(2)} -\bbeta^{*}) \stackrel{d}{\to} N\left(0,   (1+\tau) \{\nabla_{\bbeta}\psi (\bbeta^{*})\}^{-1} \left\{ V_{1} ( \bbeta^{\ast} )  + \frac{1}{2}  V_2 ( \bbeta^{\ast} ) \right\} \{\nabla_{\bbeta}^{\top}\psi (\bbeta^{*})\}^{-1} \right).
\end{align*}
\end{itemize}
\end{theorem}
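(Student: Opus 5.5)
The argument follows the standard $Z$-estimation template, with the empirical-process step replaced by $U$-process arguments and an explicit Hoeffding decomposition. Conditional on the nuisance sample, write $\tilde{\psi}^{(2)}_k(\bbeta) = \bar{\psi}_k(\bbeta) + \{\tilde{\psi}^{(2)}_k(\bbeta) - \bar{\psi}_k(\bbeta)\}$, where $\bar{\psi}_k(\bbeta) = \psi(\bbeta) + \TB_{\psi,k}(\bbeta)$. Since $\tilde{\psi}^{(2)}_k(\tilde{\bbeta}^{(2)}) = 0$ and $\psi(\bbeta^*)=0$, we obtain
\begin{align*}
\psi(\tilde{\bbeta}^{(2)}) - \psi(\bbeta^*) = -\{\tilde{\psi}^{(2)}_k(\bbeta^*) - \bar{\psi}_k(\bbeta^*)\} - \Delta_n - \TB_{\psi,k}(\tilde{\bbeta}^{(2)}),
\end{align*}
where $\Delta_n \coloneqq \{\tilde{\psi}^{(2)}_k - \bar{\psi}_k\}(\tilde{\bbeta}^{(2)}) - \{\tilde{\psi}^{(2)}_k - \bar{\psi}_k\}(\bbeta^*)$ is the stochastic equicontinuity term. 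Theorem~\ref{th:consistency} gives $\|\tilde{\bbeta}^{(2)}-\bbeta^*\| = O_{\P}(\tilde r_{n,\bbeta})$. By condition \eqref{eq:tbias-beta0} and Cauchy--Schwarz, $\TB_{\psi,k}(\tilde{\bbeta}^{(2)}) = o_{\P}(\sqrt{k+n}/n)$. A first-order expansion of $\psi$ at $\bbeta^*$, using Assumption~\ref{as:smoothness}, then reduces the claim to three facts: (a) $\Delta_n = o_{\P}(\sqrt{k+n}/n)$; (b) a CLT for $\frac{n}{\sqrt{k+n}}\{\tilde{\psi}^{(2)}_k(\bbeta^*) - \bar{\psi}_k(\bbeta^*)\}$; and (c) the final step, inverting $\nabla_{\bbeta}\psi(\bbeta^*)$.

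For (b), I will apply the Hoeffding decomposition to the combination of the second-order $U$-statistic $\hat{\psi}^{(1)}$ and the third-order $U$-statistic $\tilde{\mathrm{B}}_{\psi,k}$.

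The linear (H\'ajek) part is an i.i.d. sum with bounded summands, by Assumptions~\ref{as:nuisance_est} and \ref{as:criterion}(ii). The projection factors $\Sigma_k^{-1}\bar{\phi}_k$ enter only through $L_2$-projections, which have bounded operator norm. Its variance is therefore $V_1(\bbeta^*)/n$, with $V_1$ essentially the variance of the influence function with the projected nuisance corrections.

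The degenerate second-order part is driven by the kernel $\hat{\xi}(X_1,T_1)\bar{\phi}_k(X_1,T_1)^\top\Sigma_k^{-1}\bar{\phi}_k(X_2,T_2)\{\Gamma_2-\hat b_2\}$ and its companion from $\bar{\phi}_k(X_1,T_3)$. Its variance is of order $\mathrm{tr}(\Sigma_k^{-1}\Omega_1\Sigma_k^{-1}\Omega_2)/n^2 \asymp k/n^2$, which defines $V_2$. The factor $1/2$ comes from symmetrization of the degenerate kernel.

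For asymptotic normality of the degenerate part when $k\to\infty$, I will use a martingale CLT for degenerate $U$-statistics with growing kernels (Hall/de Jong conditions). The fourth-moment and $\|G\|_{\op}$-type conditions are verified through the localization in Assumption~\ref{as:basis}(i)--(iv). The relevant quantities are $\sup|\bar{\phi}_k^\top\Sigma_k^{-1}\bar{\phi}_k|\lesssim k$ and the $O(1)$ overlap of supports, which give ratio bounds of order $k/n^2 \cdot$ something $\to 0$ precisely when $k=o(n^2)$. The third-order degenerate term has variance $O(k/n^3)$ and is negligible.

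The linear and quadratic parts are uncorrelated, so the joint limit has variance $\{V_1 + \tfrac12 V_2\}(k+n)/n^2$ after normalization. Cases (i)--(iii) then follow by reading off which component dominates, with $V_1$ and $V_2$ interpreted as the limits of the rescaled variances.

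The main obstacle is (a). It requires a maximal inequality for the degenerate $U$-process $\{\tilde{\psi}^{(2)}_k(\bbeta)-\bar{\psi}_k(\bbeta)\}$ indexed over the shrinking ball $\|\bbeta-\bbeta^*\|\lesssim \tilde r_{n,\bbeta}$. I will handle each Hoeffding component separately.

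For the linear part, the VC property of $\{\Gamma(\cdot,\cdot,\bbeta)\}$ (Assumption~\ref{as:criterion}(i)) is preserved after multiplication by fixed bounded functions. The $L_2$-modulus of continuity is $\delta^{\alpha_0}$ by Assumption~\ref{as:criterion}(iii), so a standard local maximal inequality gives $n^{-1/2}\tilde r_{n,\bbeta}^{\alpha_0}\sqrt{\log n}$.

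For the degenerate order-2 and order-3 parts, I will use the local maximal inequalities for VC-type degenerate $U$-processes of Chen and Kato. These give a bound of order $\frac{\sqrt{k}}{n}\tilde r_{n,\bbeta}^{\alpha_0}\log n$ plus envelope terms controlled by $\zeta(k)\lesssim\sqrt{k}$.

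Combined, $\Delta_n = O_{\P}(\frac{\sqrt{k+n}}{n}\tilde r_{n,\bbeta}^{\alpha_0}\log n) = o_{\P}(\sqrt{k+n}/n)$ by the assumption $\tilde r_{n,\bbeta}^{\alpha_0}\log n\to0$. The delicate point is computing the $L_2$ modulus of the order-2 kernel difference in $\bbeta$. Because the $\Gamma$ factor appears only in one argument, this reduces to $\delta^{\alpha_0}$ times the operator norm of $\Sigma_k^{-1/2}$-whitened basis products, and that is where Assumption~\ref{as:basis}(iii) is used.
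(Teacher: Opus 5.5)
\textbf{Overall.} Your architecture is the paper's: a mean-value expansion of $\psi$, then a split into three pieces. These are $\TB_{\psi,k}(\tilde{\bbeta}^{(2)})$ (handled by \eqref{eq:tbias-beta0}), the centered statistic at $\bbeta^*$, and the equicontinuity term $\Delta_n$ over $\|\bbeta-\bbeta^*\|\lesssim\tilde{r}_{n,\bbeta}$. The gap is in (a), which is exactly where the paper's new technical work lies.

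\textbf{The gap in (a).} Your argument for the linear part is that VC type is preserved under multiplication by fixed bounded functions. That covers $\hat{\Xi}_{\bbeta}-\hat{\Xi}_{\bbeta^*}=\hat{\xi}(X,T)\{\Gamma_{\bbeta}-\Gamma_{\bbeta^*}\}$. It does not cover the H\'ajek projection of the third-order difference $\IF^{(2)}_{\mathrm{B}_{\psi,k}}(\bbeta)-\IF^{(2)}_{\mathrm{B}_{\psi,k}}(\bbeta^*)$. That projection contains $\hat{\xi}(X,T)\,\Pi[\Gamma_{\bbeta}-\Gamma_{\bbeta^*}\mid\bar{\phi}_k](X,T)$ and its integrals against $\p_T$ and $\p_X$. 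These are $\bbeta$-indexed integral transforms, not products with a fixed function.
\begin{itemize}
\item If you treat them as a subset of the $k$-dimensional span of $\bar{\phi}_k$, a factor $\sqrt{k}$ enters the entropy integral, and the assumptions do not absorb it.
\item If you treat them as conditional expectations of a VC class (Lemma~5.4 of Chen and Kato), you control covering numbers only relative to the projected envelope $\P^2\H_{2,n,\delta}$. The readily available bound for that envelope is $\|\P^2\H_{2,n,\delta}\|_\infty\lesssim\zeta(k)^2\lesssim k$.
\end{itemize}
With that envelope, a standard (or Chen--Kato) local maximal inequality leaves a remainder you can only bound by $k\log n/n$. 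After multiplying by $n/\sqrt{k+n}$ this is $k\log n/\sqrt{k+n}$, which vanishes only if $k=o(\sqrt{n}/\log n)$. That is far short of the theorem's range $k=o(n^2)$, and it is why the paper says the Chen--Kato inequalities are too conservative here.

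\textbf{What is missing.} You need the $L_\infty$-stability \eqref{eq:supnorm_proj} of $\Pi[\cdot\mid\bar{\phi}_k]$, plus a way to feed it into the maximal inequality. The paper proceeds as follows.
\begin{itemize}
\item It uses \eqref{eq:supnorm_proj} to get $m_n\coloneqq\sup_h\|\P^2h\|_\infty\lesssim1$.
\item It rescales covering numbers to the constant envelope $m_n$, at the price of the VC constant $4\sqrt{A}M_n/m_n$ (Lemma~\ref{lemma:PcalF-VC}(ii)).
\item Hence $M_n\lesssim k\lesssim n^2$ enters only through $\log(M_n/m_n\vee n)\lesssim\log n$ (Lemmas~\ref{lemma:sup-p1} and~\ref{lemma:Umaximal_inequality_order3}).
\end{itemize}
Alternatively, \eqref{eq:supnorm_proj} amounts to a bounded Lebesgue constant $\sup_{(x,t)}\bbE|\bar{\phi}_k(x,t)^\top\Sigma_k^{-1}\bar{\phi}_k(X,T)|\lesssim1$. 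From this you could show directly that $\|\P^2\H_{2,n,\delta}\|_\infty\lesssim1$ and then apply a standard inequality. Either route closes (a); your proposal contains neither.

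\textbf{Related points in (b).}
\begin{itemize}
\item The same $L_\infty$ bound, not the $L_2$ contraction you cite, is what makes the H\'ajek summands bounded, as the Lindeberg condition requires.
\item The degenerate order-2 and order-3 pieces are fine. As in the paper, use the non-localized Chen--Kato bound with the small $L_2$ envelope $\|\H_{2,n,\delta}\|_{\P^3,2}\lesssim\sqrt{k}\,\tilde{r}_{n,\bbeta}^{\alpha_0}$.
\item In regime (iii), uncorrelatedness plus marginal normality does not give normality of the sum. You need the martingale CLT for the combined array $g_n(O_j)+\sum_{i<j}h_n(O_i,O_j)$, which the paper gets from Corollary~1.5 of Bhattacharya and Ghosh.
\item Hall's fourth-moment ratio is of order $k/n$ for partition-type bases. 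So only de~Jong-type or Bhattacharya--Ghosh conditions reach $k\gg n$.
\end{itemize}
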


The complete proof of Theorem~\ref{th:beta-dist} is presented in Supplementary Material Section~\ref{app:beta-dist}. We provide a proof sketch as follows. We begin by expanding $\psi (\bbeta^{*})$ around  $\tilde{\bbeta}^{(2)}$:
\begin{align*}
0 = \psi (\bbeta^{*}) = \psi (\tilde{\bbeta}^{(2)}) + \nabla_{\bbeta}\psi (\bbeta^{\dag}) \cdot (\bbeta^{\ast} - \tilde{\bbeta}^{(2)}),
\end{align*}
where $\bbeta^{\dag}$ lies between $\bbeta^{*}$ and $\tilde{\bbeta}^{(2)}$.  Next,  we decompose $\psi (\tilde{\bbeta}^{(2)})$ as follows:
\begin{align*}
\frac{n}{\sqrt{k+n}} \psi (\tilde{\bbeta}^{(2)}) 
&=  \frac{n}{\sqrt{k+n}} \{\psi (\tilde{\bbeta}^{(2)}) - \bar{\psi}_{k} (\tilde{\bbeta}^{(2)})\} 
 - \frac{n}{\sqrt{k+n}} \{\tilde{\psi}_{k}^{(2)} (\bbeta^{*}) - \bar{\psi}_{k} (\bbeta^{\ast})\}\\
& \quad + \frac{n}{\sqrt{k+n}} \{ (\bar{\psi}_{k} (\tilde{\bbeta}^{(2)}) - \tilde{\psi}_{k}^{(2)} (\tilde{\bbeta}^{(2)})) - (\bar{\psi}_{k} (\bbeta^{*}) - \tilde{\psi}_{k}^{(2)} (\bbeta^{*}))\} \\ 
& \eqqcolon A_n + B_n - C_n .
\end{align*}
Term $A_n$ is bias-related and is $o_{\mathbb{P}}(1)$ under Condition \eqref{eq:tbias-beta0}. Term $B_n$ is a centered $U$-statistic, determining the asymptotic distribution of $\tilde{\bbeta}^{(2)}$. Its asymptotic normality follows from the classical martingale CLT \citep{bhattacharya1992class}. Since the kernel of this $U$-statistic depends on $k$, the linear component in the Hoeffding decomposition does not always dominate and thus the order of its asymptotic variance depends on the scaling between $n$ and $k$, split into three different regimes as indicated in Theorem~\ref{th:beta-dist}.
The main technical challenge lies in analyzing the $U$-process term $C_n$, indexed by  $\tilde{\bbeta}^{(2)}$ and defined over a function class changing with the sample size $n$. Existing $U$-process maximal inequalities, such as those of \citet{chen2020jackknife}, 
are conservative for our setting. To address this, we establish a refined local maximal inequality in Supplementary Material Section~\ref{app:Maximal Inequalities}.

\begin{remark}\label{rem:cattaneo2025higher}
Edgeworth expansions offer a different notion of optimality when non-asymptotic or finite sample performance is of concern. \citet{cattaneo2025higher} study inference for the density-weighted average derivative effect (DWADE) parameter $\beta^{\ast}_0$ (see Remark~\ref{ade}) under a randomized experiment, i.e., the propensity score $\p_{T|X}$ is known, using a kernel-based estimator $\hat{\beta}_0$. Their estimator is a second-order $U$-statistic with Hoeffding decomposition $\hat{\beta}_0-\beta^{\ast}_0 = \bar{L} + \bar{Q}$, where $\bar{L}$ and $\bar{Q}$ are the linear and quadratic terms, respectively. Classical asymptotics impose bandwidth conditions that make $\bar{Q}$ negligible, yielding an asymptotic linear representation. With smaller bandwidths, $\bar{Q}$ becomes non-negligible, leading to a quadratic approximation where $\mathbb{V}[\hat{\beta}_0] = \mathbb{V}[\bar{L}] + \mathbb{V}[\bar{Q}]$ and $\mathbb{V}[\hat{\beta}_0]^{-1/2}(\hat{\beta}_0-\beta^{\ast}_0) \overset{d}{\to} N(0,1)$. Using Edgeworth expansions, they derive refined distributional approximations that improve finite-sample confidence interval coverage.

Our estimator $\hat{\psi}^{(2)}_k$ of the estimating equation is also a $U$-statistic whose asymptotic distribution depends on quadratic terms when $k \gtrsim n$. However, the goal of \citet{cattaneo2025higher} is inference refinement (improving finite-sample coverage), while ours is bias correction to achieve rate-optimal estimation under low smoothness conditions. In future work, we can also consider such more refined higher-order distributional approximation if concerned about the finite-sample performance beyond the asymptotic distribution results in Theorem~\ref{th:beta-dist}.
\end{remark}

We next derive the convergence rate of the feasible estimator $\hat{\bbeta}^{(2)}$. To this end, we need the following assumption on $\hat{\Sigma}_{k}$.
\begin{assumption}
\label{as:sigma}
We assume that the estimator $\hat{\Sigma}_{k}$ of $\Sigma_{k}$ satisfies the following condition:
\begin{align*}
\Vert \hat{\xi} - \xi \Vert_{\P,2} \cdot \left( \Vert \bbeta_{\mathsf{init}} - \bbeta^{\ast} \Vert + \Vert \hat{b}_{\bbeta_{\mathsf{init}}} - b_{\bbeta_{\mathsf{init}}} \Vert_{\P,2} \right) \cdot \Vert \hat{\Sigma}_{k} - \Sigma_{k} \Vert_{\op} = o_{\P} (n^{- 1 / 2}).
\end{align*}
\end{assumption}
Assumption~\ref{as:sigma} ensures that estimating $\Sigma_{k}$ by $\hat{\Sigma}_{k}$ does not introduce excessive bias. When $\bbeta_{\mathsf{init}}$ is estimated using only $\hat{\xi}$, we have $\Vert \bbeta_{\mathsf{init}} - \bbeta^{\ast} \Vert \lesssim \Vert \hat{\xi} - \xi \Vert_{\P,2}$. Then the condition in Assumption \ref{as:sigma} reduces to
\begin{equation}
\label{reduced sigma rate}
\left( \Vert \hat{\xi} - \xi \Vert_{\P,2}^{2} + \Vert \hat{\xi} - \xi \Vert_{\P,2} \cdot \Vert \hat{b}_{ \bbeta_{\mathsf{init}} } - b_{ \bbeta_{\mathsf{init}} } \Vert_{\P,2} \right) \cdot \Vert \hat{\Sigma}_{k} - \Sigma_{k} \Vert_{\op} = o_{\P} (n^{-1 / 2}).
\end{equation}
Under Assumption \ref{as:Holder}, the first two terms satisfy $\Vert \hat{\xi} - \xi \Vert_{\P,2}^{2} \lesssim n^{- \frac{ 2s_{1} }{d + 2 s_{1}}}$ and $\Vert \hat{\xi} - \xi \Vert_{\P,2} \cdot \Vert \hat{b}_{\bbeta_{\mathsf{init}}} - b_{\bbeta_{\mathsf{init}}} \Vert_{\P,2} \lesssim n^{- \frac{s_{1}}{d + 2 s_{1}} - \frac{s_{2}}{d + 2 s_{2}}}$. Consequently, \eqref{reduced sigma rate} requires
\begin{align*}
\Vert \hat{\Sigma}_{k} - \Sigma_{k} \Vert_{\op} \lesssim n^{- \frac{(d - 2 s_{1} ) \vee 0}{2 (d + 2 s_{1})}} \wedge n^{- \frac{(d^{2} - 4 s_{1} s_{2}) \vee 0}{2 (d + 2 s_{1}) (d + 2 s_{2})}},
\end{align*}
which holds if the joint density $\p_{X, T} (\cdot, \cdot)$ is sufficiently smooth. Similar assumptions appear in most recent papers related to HOIF frameworks \citep{kennedy2024minimax, bonvini2022fast, mcgrath2024nuisance} and can be relaxed using diverging-order $U$-statistic estimators, as detailed in Supplementary Material Section~\ref{app:truncation} following \citet{liu2017semiparametric} or \citet{robins2023minimax}.

Theorem~\ref{thm:feasible rates} below establishes the convergence rate of the feasible estimator $\hat{\bbeta}^{(2)}$ using Theorem~\ref{th:consistency} together with Assumption~\ref{as:sigma}. A proof is provided in Supplementary Material Section~\ref{app:feasible rates}.
\begin{theorem}
\label{thm:feasible rates}
Suppose that for some constant $c>0$, 
$
 \Vert \bbE [ \hat{\psi}^{(2)}_{k}(\bbeta_1) ] - \bbE [ \hat{\psi}^{(2)}_{k}(\bbeta_2) ] \Vert \geq c \|\bbeta_1 - \bbeta_2\| .$
Under the assumptions of Theorem \ref{th:consistency}, we have
\begin{align*}
\bbE[\Vert \hat{\bbeta}^{(2)} - \bbeta^{\ast} \Vert] \lesssim \tilde{r}_{n,\bbeta} + \|\xi - \hat{\xi}\|_{\P,2} \cdot \|b_{\bbeta^*} - \hat{b}_{\bbeta_{\mathsf{init}} }\|_{\P,2} \cdot \|\hat{\Sigma}_k - \Sigma_k\|_{\mathrm{op}}.
\end{align*}
If we further impose Assumption~\ref{as:sigma}, then $\bbE[\Vert \hat{\bbeta}^{(2)} - \bbeta^{\ast} \Vert] \lesssim \tilde{r}_{n,\bbeta}$.
\end{theorem}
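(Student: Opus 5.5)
The plan is to compare the feasible estimator $\hat{\bbeta}^{(2)}$ with the oracle $\tilde{\bbeta}^{(2)}$, whose rate Theorem~\ref{th:consistency} already gives. The only difference between $\hat{\psi}_k^{(2)}$ and $\tilde{\psi}_k^{(2)}$ is that $\Sigma_k^{-1}$ is replaced by $\hat{\Sigma}_k^{-1}$ in the third-order $U$-statistic. Conditional on the nuisance sample, $\hat{\Sigma}_k$ is fixed, so
\begin{align*}
\bbE[\hat{\psi}_k^{(2)}(\bbeta)] - \bbE[\tilde{\psi}_k^{(2)}(\bbeta)] = -\,\bbE[\{\hat{\xi}-\xi\}\bar{\phi}_k^{\top}]\,(\hat{\Sigma}_k^{-1}-\Sigma_k^{-1})\,\bbE[\bar{\phi}_k\{b_{\bbeta}-\hat{b}_{\bbeta_{\mathsf{init}}}\}].
\end{align*}
Here I would use the same identities as in Section~\ref{sec:intuition}: the $\xi$-weighting identity for the first factor and iterated expectations for the second.

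To bound this difference, write $\hat{\Sigma}_k^{-1}-\Sigma_k^{-1} = \Sigma_k^{-1}(\Sigma_k-\hat{\Sigma}_k)\hat{\Sigma}_k^{-1}$. Assumption~\ref{as:basis}(iii) together with Assumption~\ref{as:nuisance_est}(iii) gives eigenvalues of $\Sigma_k$ bounded away from zero; on the event that $\hat{\Sigma}_k$ is close to $\Sigma_k$ in operator norm, the same holds for $\hat{\Sigma}_k$. Bessel's inequality, i.e.\ $\|\Sigma_k^{-1/2}\bbE[f\bar{\phi}_k]\|\le\|f\|_{\P,2}$, then bounds the display by a constant multiple of
\begin{align*}
\|\hat{\xi}-\xi\|_{\P,2}\,\|b_{\bbeta}-\hat{b}_{\bbeta_{\mathsf{init}}}\|_{\P,2}\,\|\hat{\Sigma}_k-\Sigma_k\|_{\op},
\end{align*}
uniformly in $\bbeta\in\calB$. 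Near $\bbeta^*$, Assumption~\ref{as:nuisance_est}(ii) lets me replace $b_{\bbeta}$ by $b_{\bbeta^*}$ at the cost of a term $\|\bbeta-\bbeta^*\|$ multiplied by a vanishing factor, which can be absorbed on the left.

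The rate argument then uses the identifiability condition in the statement. Let $\hat{m}(\bbeta)=\bbE[\hat{\psi}_k^{(2)}(\bbeta)]$, and let $\bbeta^{\natural}$ denote the zero of $\bar{\psi}_k$ (or simply work with $\bbeta^*$). Then
\begin{align*}
c\|\hat{\bbeta}^{(2)}-\bbeta^*\| \le \|\hat{m}(\hat{\bbeta}^{(2)})-\hat{m}(\bbeta^*)\| \le \|\hat{m}(\hat{\bbeta}^{(2)})-\hat{\psi}_k^{(2)}(\hat{\bbeta}^{(2)})\| + \|\hat{m}(\bbeta^*)\|,
\end{align*}
using $\hat{\psi}_k^{(2)}(\hat{\bbeta}^{(2)})=0$. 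For the second term,
\begin{align*}
\|\hat{m}(\bbeta^*)\| \le \|\bar{\psi}_k(\bbeta^*)-\psi(\bbeta^*)\| + \|\hat{m}(\bbeta^*)-\bar{\psi}_k(\bbeta^*)\|.
\end{align*}
The first piece is the truncation bias, bounded by the product of $\Pi^{\perp}$ norms and hence by $\tilde{r}_{n,\bbeta}$. The second piece is the $\hat{\Sigma}_k$ error term bounded above.

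The first term is a centered $U$-process supremum, $\sup_{\bbeta}\|\hat{\psi}_k^{(2)}(\bbeta)-\hat{m}(\bbeta)\|$. I would control its expectation exactly as in the proof of Theorem~\ref{th:consistency}, using the maximal inequality for $U$-processes (Supplementary Section~\ref{app:Maximal Inequalities}) to get order $(\sqrt{k}/n\vee n^{-1/2})\log n$. The kernel with $\hat{\Sigma}_k^{-1}$ has the same VC structure (Assumption~\ref{as:criterion}) and an envelope of the same order, since $\|\hat{\Sigma}_k^{-1}\|_{\op}\lesssim1$. Taking expectations gives the first display of the theorem. Under Assumption~\ref{as:sigma}, with $\|b_{\bbeta^*}-\hat{b}_{\bbeta_{\mathsf{init}}}\|\le\|\bbeta_{\mathsf{init}}-\bbeta^*\|+\|\hat{b}_{\bbeta_{\mathsf{init}}}-b_{\bbeta_{\mathsf{init}}}\|$, the extra term is $o(n^{-1/2})\lesssim\tilde{r}_{n,\bbeta}$, which gives the second claim.

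The main obstacle is ensuring that $\hat{\Sigma}_k^{-1}$ has bounded operator norm, so that both the Bessel-type bound and the $U$-process envelope go through. I would handle this by restricting to the high-probability event $\|\hat{\Sigma}_k-\Sigma_k\|_{\op}\le\lambda_{\min}(\Sigma_k)/2$ and bounding the complement trivially, using compactness of $\calB$.
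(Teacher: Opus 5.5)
Your proposal is correct and follows the paper's proof essentially step for step. Like the paper, you apply the identifiability condition to $\bbE[\hat{\psi}^{(2)}_k(\cdot)]$ at $\hat{\bbeta}^{(2)}$ and $\bbeta^*$, then split into three pieces: the centered $U$-process supremum (handled as in Theorem~\ref{th:consistency}), the $\hat{\Sigma}_k$-versus-$\Sigma_k$ bilinear discrepancy at $\bbeta^*$, and the truncation bias $\mathrm{TB}_{\psi,k}(\bbeta^*)$, finishing with Assumption~\ref{as:nuisance_est}(ii) and Assumption~\ref{as:sigma}. Your explicit control of $\|\hat{\Sigma}_k^{-1}\|_{\op}$, via $\hat{\Sigma}_k^{-1}-\Sigma_k^{-1}=\Sigma_k^{-1}(\Sigma_k-\hat{\Sigma}_k)\hat{\Sigma}_k^{-1}$ and a high-probability event, is a refinement the paper leaves implicit; the uniform-in-$\bbeta$ replacement of $b_{\bbeta}$ by $b_{\bbeta^*}$ is unnecessary, since the discrepancy is only evaluated at $\bbeta^*$.
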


The final result in this section characterizes the asymptotic normality of the feasible second-order estimator $\hat{\bbeta}^{(2)}$, which is a direct consequence of Theorem~\ref{th:beta-dist} and Assumption~\ref{as:sigma}.

\begin{theorem}
\label{thm:feasible beta-dist}
Under the assumptions of Theorem~\ref{th:beta-dist}, together with Assumption~\ref{as:sigma}, all conclusions in Theorem~\ref{th:beta-dist} continue to hold for $\hat{\bbeta}^{(2)}$.
\end{theorem}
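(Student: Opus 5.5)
The plan is to show that $\hat{\bbeta}^{(2)}$ and $\tilde{\bbeta}^{(2)}$ differ by $o_{\P}((k+n)^{1/2}/n)$. Theorem~\ref{th:beta-dist} then transfers directly through Slutsky's lemma. Let $\Delta_k \coloneqq \hat{\Sigma}_k^{-1} - \Sigma_k^{-1}$. Since $\hat{\Sigma}_k$ comes from the nuisance sample, $\Delta_k$ is fixed conditional on it. The feasible equation then splits as
\begin{align*}
\hat{\psi}_k^{(2)}(\bbeta) = \tilde{\psi}_k^{(2)}(\bbeta) - R_n(\bbeta),
\end{align*}
where
\begin{align*}
R_n(\bbeta) \coloneqq \mathbb{U}_{n,3}\Big[\{\hat{\xi}(X_1,T_1)\bar{\phi}_k(X_1,T_1)-\bar{\phi}_k(X_1,T_3)\}^{\top}\Delta_k\,\bar{\phi}_k(X_2,T_2)\{\Gamma(Y_2,T_2,\bbeta)-\hat{b}_{\bbeta_{\ini}}(X_2,T_2)\}\Big].
\end{align*}
I will rerun the sketch given after Theorem~\ref{th:beta-dist} with $\hat{\bbeta}^{(2)}$ in place of $\tilde{\bbeta}^{(2)}$. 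This produces one additional term, $\frac{n}{\sqrt{k+n}} R_n(\hat{\bbeta}^{(2)})$, and the whole task is to show that it is $o_{\P}(1)$. The terms $A_n$, $B_n$ and $C_n$ are handled exactly as before. Consistency of $\hat{\bbeta}^{(2)}$ at rate $\tilde r_{n,\bbeta}$ is supplied by Theorem~\ref{thm:feasible rates} under Assumption~\ref{as:sigma}, so the local sup in \eqref{eq:tbias-beta0} and the $U$-process bound for $C_n$ remain available.

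The first step computes the mean of $R_n(\bbeta)$. Using the identity $\bbE[\xi\bar{\phi}_k]=\iint \bar{\phi}_k\,\p_X\p_T$ and iterated expectations,
\begin{align*}
\bbE R_n(\bbeta) = \bbE[(\hat{\xi}-\xi)\bar{\phi}_k^{\top}]\,\Delta_k\,\bbE[\bar{\phi}_k(b_{\bbeta}-\hat{b}_{\bbeta_{\ini}})].
\end{align*}
I will bound this by Cauchy--Schwarz. By Bessel's inequality together with the bounded eigenvalues of $\Sigma_k$ (which follow from Assumption~\ref{as:basis}(iii) and Assumption~\ref{as:nuisance_est}(iii)), each vector factor is at most a constant times $\|\hat{\xi}-\xi\|_{\P,2}$ and $\|b_{\bbeta}-\hat{b}_{\bbeta_{\ini}}\|_{\P,2}$ respectively. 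Also $\|\Delta_k\|_{\op}\lesssim\|\hat{\Sigma}_k-\Sigma_k\|_{\op}$ once the latter is small.

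Next I write $b_{\bbeta}-\hat{b}_{\bbeta_{\ini}} = (b_{\bbeta}-b_{\bbeta^*}) + (b_{\bbeta^*}-b_{\bbeta_{\ini}}) + (b_{\bbeta_{\ini}}-\hat{b}_{\bbeta_{\ini}})$ and apply Assumption~\ref{as:nuisance_est}(ii) to the first two pieces. For $\|\bbeta-\bbeta^*\|\lesssim\tilde r_{n,\bbeta}$ this gives
\begin{align*}
\sup_{\|\bbeta-\bbeta^*\|\lesssim \tilde r_{n,\bbeta}}|\bbE R_n(\bbeta)| \lesssim \|\hat{\xi}-\xi\|_{\P,2}\big(\|\bbeta_{\ini}-\bbeta^*\|+\|\hat{b}_{\bbeta_{\ini}}-b_{\bbeta_{\ini}}\|_{\P,2}+\tilde r_{n,\bbeta}\big)\|\hat{\Sigma}_k-\Sigma_k\|_{\op}.
\end{align*}
The first two contributions are $o_{\P}(n^{-1/2})$ by Assumption~\ref{as:sigma}. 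The $\tilde r_{n,\bbeta}$ piece is of smaller order because $\|\hat{\xi}-\xi\|_{\P,2}\|\hat{\Sigma}_k-\Sigma_k\|_{\op}\to0$. The normalization is $n/\sqrt{k+n}\le \sqrt n$, so this mean term is negligible.

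The remaining and main obstacle is the fluctuation $R_n(\bbeta)-\bbE R_n(\bbeta)$, taken uniformly over the shrinking neighborhood. Its kernel has exactly the structure of the kernel of $\tilde{\mathrm B}_{\psi,k}$, with $\Sigma_k^{-1}$ replaced by $\Delta_k$. My plan is therefore to reuse the Hoeffding decomposition and the refined local maximal inequality of Supplementary Material Section~\ref{app:Maximal Inequalities} that control $B_n$ and $C_n$. In those variance and entropy bounds, $\|\Sigma_k^{-1}\|_{\op}=O(1)$ enters only multiplicatively. Replacing it by $\|\Delta_k\|_{\op}=o_{\P}(1)$ should give
\begin{align*}
\sup_{\|\bbeta-\bbeta^*\|\lesssim\tilde r_{n,\bbeta}}|R_n(\bbeta)-\bbE R_n(\bbeta)| = O_{\P}\Big(\frac{\sqrt{k+n}}{n}\|\Delta_k\|_{\op}\log n\Big).
\end{align*}
The step I expect to be delicate is the sup-norm control of $\bar{\phi}_k^{\top}\Delta_k\bar{\phi}_k$. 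The local-support argument of Assumption~\ref{as:basis}(i)--(ii) used for $\Sigma_k^{-1}$ need not carry over to a generic $\Delta_k$, so I would fall back on the crude bound $\zeta(k)^2\|\Delta_k\|_{\op}\lesssim k\|\Delta_k\|_{\op}$. This may force an extra requirement, for instance $\|\Delta_k\|_{\op}\log n\to0$, which I would argue is implicit in Assumption~\ref{as:sigma}.

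With both the mean and the fluctuation of $R_n$ controlled, $\frac{n}{\sqrt{k+n}}R_n(\hat{\bbeta}^{(2)})=o_{\P}(1)$. The expansion from Theorem~\ref{th:beta-dist} then yields the same limit in all three regimes for $k$, which proves the claim.
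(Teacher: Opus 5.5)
\textbf{There is a genuine gap in the fluctuation step, at the point you flag as delicate.}

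Your bound on $\sup|R_n(\bbeta)-\bbE R_n(\bbeta)|$ only helps if $\|\Delta_k\|_{\op}\log n\to0$. This is \emph{not} implicit in Assumption~\ref{as:sigma}, which controls only the product $\|\hat{\xi}-\xi\|_{\P,2}\,(\|\bbeta_{\ini}-\bbeta^*\|+\|\hat{b}_{\bbeta_{\ini}}-b_{\bbeta_{\ini}}\|_{\P,2})\,\|\hat{\Sigma}_k-\Sigma_k\|_{\op}$. When the nuisance errors are fast (e.g.\ $s_1>d/2$ and $s_1s_2>d^2/4$ in the display after Assumption~\ref{as:sigma}), it lets $\|\hat{\Sigma}_k-\Sigma_k\|_{\op}$ stay of order one. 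It does not even give $\|\Delta_k\|_{\op}\lesssim\|\hat{\Sigma}_k-\Sigma_k\|_{\op}$. And nothing in Theorem~\ref{th:beta-dist} rules out $k\gtrsim n$.

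Nor is this an artifact of your crude envelope bound. Write the kernel of $R_n$ as $u(O_1,O_3)^{\top}\Delta_k v(O_2)$ and set $\tilde u(O_1)=\bbE[u(O_1,O_3)\mid O_1]$. The degenerate second-order Hoeffding component is $\{\tilde u(O_1)-\bbE\tilde u\}^{\top}\Delta_k\{v(O_2)-\bbE v\}$. When the covariances of $\tilde u$ and $v$ are well conditioned, its variance is of order $\|\Delta_k\|_F^2$, which can be as large as $k\|\Delta_k\|_{\op}^2$. After scaling by $n/\sqrt{k+n}$ it contributes order $\|\Delta_k\|_F/\sqrt{k+n}$. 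That is order $\|\Delta_k\|_{\op}$ when $k\gtrsim n$ and $\Delta_k$ is, say, a multiple of the identity. So in regimes (ii)--(iii) the fluctuation of $R_n$ genuinely changes $V_2$ unless $\|\hat{\Sigma}_k-\Sigma_k\|_{\op}\to0$. That is an extra hypothesis, not a consequence of Assumption~\ref{as:sigma}.

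In regime (i) your decomposition can be rescued with only $\|\Delta_k\|_{\op}=O_{\P}(1)$. Each H\'ajek projection of $R_n$ carries a factor $\|\hat{\xi}-\xi\|_{\P,2}$ or $\|b_{\bbeta}-\hat{b}_{\bbeta_{\ini}}\|_{\P,2}$, and the degenerate part is $O_{\P}(\sqrt{k}\,\|\Delta_k\|_{\op}/n)=o_{\P}(n^{-1/2})$. The bound $\frac{\sqrt{k+n}}{n}\|\Delta_k\|_{\op}\log n$ discards this structure.

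\textbf{How the paper avoids this.} The paper calls the result a direct consequence of Theorem~\ref{th:beta-dist} and Assumption~\ref{as:sigma}. In the proof of Theorem~\ref{thm:feasible rates}, the centered feasible equation is analyzed exactly like the oracle one, and Assumption~\ref{as:sigma} is used only for $\bbE\hat{\psi}^{(2)}_k(\bbeta^*)-\bbE\tilde{\psi}^{(2)}_k(\bbeta^*)$. In other words, it never treats the fluctuation of $R_n$ as a remainder. Conditionally on the nuisance sample, $\hat{\Sigma}_k$ is just another fixed, well-conditioned matrix. So the Hoeffding decomposition, the CLT for $B_n$ and the local maximal inequality for $C_n$ apply to $\hat{\psi}^{(2)}_k-\bbE\hat{\psi}^{(2)}_k$ with $\hat{\Sigma}_k^{-1}$ in place of $\Sigma_k^{-1}$. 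The only new term is the mean shift $\bbE R_n$, which is exactly what Assumption~\ref{as:sigma} controls. The limiting variances are then \eqref{eq:V1}--\eqref{eq:V2} with $\hat{\Sigma}_k$ inside. In regimes (ii)--(iii) these match the oracle ones only under the additional condition $\|\hat{\Sigma}_k-\Sigma_k\|_{\op}\to0$.

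\textbf{Two smaller points.}
\begin{enumerate}
\item Your handling of the $\tilde r_{n,\bbeta}$ piece of $\bbE R_n(\hat{\bbeta}^{(2)})$ fails. Since $\tilde r_{n,\bbeta}\gtrsim\frac{\sqrt{k+n}}{n}\log n$, $o(\tilde r_{n,\bbeta})$ need not be $o(\sqrt{k+n}/n)$. Instead, bound that piece by $C\|\hat{\xi}-\xi\|_{\P,2}\|\Delta_k\|_{\op}\|\hat{\bbeta}^{(2)}-\bbeta^*\|$ using Assumption~\ref{as:nuisance_est}(ii), and absorb it into $\nabla_{\bbeta}\psi(\bbeta^{\dag})(\hat{\bbeta}^{(2)}-\bbeta^*)$.
\item Theorem~\ref{thm:feasible rates}, which you invoke for the rate of $\hat{\bbeta}^{(2)}$, carries an extra lower-Lipschitz hypothesis on $\bbE\hat{\psi}^{(2)}_k$. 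That hypothesis is not among the assumptions of the statement.
\end{enumerate}
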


\begin{remark}
\label{rem:approximation}
We briefly discuss how the rate of the initial estimator $\bbeta_{\ini}$ impacts the results in this section. Condition~\eqref{eq:tbias-beta0} in Theorem~\ref{th:beta-dist} specifies the approximation power of $\bar{\phi}_{k}$ for the residuals between estimated and true nuisance parameters. Suppose that $b_{\bbeta}$ is differentiable with respect to $\bbeta$. It is natural to assume that the second-order estimator is at least as close to the truth as the sub-optimal initial estimator, i.e., $\tilde{r}_{n,\bbeta} \lesssim \|\bbeta_{\mathsf{init}} - \bbeta^{\ast}\|$. Under this assumption, using the triangle inequality and a mean-value expansion in $\bbeta$, we can decompose the left-hand side of condition~\eqref{eq:tbias-beta0} as
\begin{align}
   & \ \sup_{\|\bbeta - \bbeta^{\ast}\| \leq \tilde{r}_{n,\bbeta} } \| \Pi^{\perp} (\hat{\xi} - \xi \mid \bar{\phi}_{k}) \|_{\P,2} \cdot \|  \Pi^{\perp} [b_{\bbeta} - \hat{b}_{\bbeta_{\mathsf{init}}} \mid \bar{\phi}_{k}]  \|_{\P,2} \label{our0} \\
   \leq & \ \| \Pi^{\perp} (\hat{\xi} - \xi \mid \bar{\phi}_{k}) \|_{\P,2} \cdot \Big\{ \sup_{\|\bbeta - \bbeta^{\ast}\| \leq \tilde{r}_{n,\bbeta} } \|  \Pi^{\perp} (b_{\bbeta} - b_{\bbeta_{\mathsf{init}}} \mid \bar{\phi}_{k})  \|_{\P,2} +  \|  \Pi^{\perp} (b_{\bbeta_{\mathsf{init}}} - \hat{b}_{\bbeta_{\mathsf{init}}} \mid \bar{\phi}_{k})  \|_{\P,2} \Big\} \notag \\
   \lesssim & \ \| \Pi^{\perp} (\hat{\xi} - \xi \mid \bar{\phi}_{k}) \|_{\P,2} \cdot \Big\{ \sup_{\bbeta^{\dag} \in \mathcal{N}} \|  \Pi^{\perp} (\nabla_{\bbeta}b_{\bbeta^{\dag}}  \mid \bar{\phi}_{k}) \|_{\P,2} \cdot \|\bbeta_{\mathsf{init}} - \bbeta^{\ast}\|  + \| \Pi^{\perp} (b_{\bbeta_{\mathsf{init}}} - \hat{b}_{\bbeta_{\mathsf{init}}} \mid \bar{\phi}_{k})  \|_{\P,2}\Big\}. \notag
\end{align}
Here, $\mathcal{N}\coloneqq \{ \bbeta: \|\bbeta-\bbeta^{*}\| \lesssim \|\bbeta_{\mathsf{init}}-\bbeta^{*}\| \}$,  which contains all line segments between $\bbeta_{\mathsf{init}}$ and any $\bbeta$ with $\|\bbeta-\bbeta^{*}\|\le \tilde r_{n,\bbeta}$.
The expression \eqref{our0} will be used in Section~\ref{sec:examples} to further clarify the rate requirement for $\bbeta_{\ini}$ in QTE with binary treatment. 
\end{remark}

\begin{remark}\label{rem:connection}
As suggested by a referee, the econometrics literature contains several related proposals for bias correction in two-step semiparametric settings, including jackknife- and bootstrap-based methods \citep{cattaneo2019two, cattaneo2018kernel}. In Supplementary Material Section~\ref{app:connection}, we use the average treatment effect (ATE) as a concrete example to compare our approach with these alternatives.
All three methods address the problem that first-stage nuisance estimation can bias the second-stage estimator. However, the sources of bias and the correction mechanisms differ. \citet{cattaneo2019two} and \citet{cattaneo2018kernel} aim to estimate the estimating equation $\psi(\bbeta)$ itself, with nuisance functions estimated by linear regression and kernel methods, respectively. The resulting estimator is affected by leave-in bias due to the reuse of the same sample in both steps, particularly when the covariate dimension is large or the bandwidth is small. By contrast, we use sample splitting and focus on estimating the approximation bias defined in \eqref{eq:bias}. Our construction accommodates generic nuisance learners, such as logistic regression, random forests, or neural networks, because we use linear approximation only for the nuisance estimation residuals $\hat{\xi}-\xi$ and $\hat{b}_{\bbeta_{\ini}}-b_{\bbeta}$, although to obtain sharp theoretical results we need to impose certain smoothness assumptions on $\hat{\xi} - \xi$ and $\hat{b}_{\bbeta_{\ini}}-b_{\bbeta}$ (see Assumption~\ref{as:Holder}).

Despite these differences, there are important connections. Specifically, we show that, as \citet{cattaneo2019two} impose (approximately) linear assumptions on the nuisance functions whereas we do so only for the residuals of the nuisance estimates, both our target parameter and that of \citet{cattaneo2019two} reduce to a bilinear functional \citep{robins2007comment, bruns2026augmented} of the form $\mathrm{B}_{\psi,k} (\bbeta)$ defined in \eqref{EB}. We show that the jackknife- and bootstrap-based (under some assumption on the scaling between $k$ and $n$) bias-corrected estimators are, respectively, exactly equivalent and asymptotically equivalent to our second-order $U$-statistic estimator $\tilde{\mathrm{B}}_{\psi,k}(\bbeta)$ of $\mathrm{B}_{\psi,k}(\bbeta)$. Moreover, the HOIF framework provides a systematic way to construct higher-order $U$-statistic estimators even when $\Sigma_k$ is unknown (see Supplementary Material Section~\ref{app:truncation}), which we believe can also motivate similar higher-order jackknife- or bootstrap-based bias reduction methods \citep{koltchinskii2022bootstrap}.
\end{remark}

\begin{remark}\label{rem:cavaliere2024bootstrap}
As also noted by one referee, both \citet{cavaliere2024bootstrap} and our paper study settings where a non-negligible asymptotic bias affects the limiting distribution.
However, the way how bias is handled differs. \citet{cavaliere2024bootstrap} consider statistics whose limiting distribution takes the form $T_n = \sqrt{n}(\hat{\theta}_n - \theta) \stackrel{d}{\to} B_n + Z_1$, where $Z_1$ is centered normal and $B_n$ is a non-vanishing asymptotic bias. They construct a bootstrap analogue $T_n^* := \sqrt{n}(\hat{\theta}_n^* - \hat{\theta}_n)$, where $\hat{\theta}_n^*$ is a bootstrap version of $\hat{\theta}_n$, such that $T_n^* - \hat{B}_n \stackrel{d^*}{\to} Z_1$, with $\hat{B}_n$ a bootstrap-based estimator of $B_n$. When the bootstrap fails to replicate the bias, i.e., $\hat{B}_n - B_n$ does not converge to zero but converges in distribution to a zero-mean random variable, the standard bootstrap $p$-value
$$
\hat{p}_n := \mathbb{P}^*(T_n^* \le T_n) = \mathbb{P}^*\bigl(T_n^* - \hat{B}_n \le T_n - \hat{B}_n\bigr) = \mathbb{P}^*\bigl(T_n^* - \hat{B}_n \le (T_n - B_n) - (\hat{B}_n - B_n)\bigr)
$$
is no longer asymptotically uniform. Instead, its limiting distribution converges to a distribution $H$ that depends on the joint limiting distribution of $(T_n - \hat{B}_n, T_n^* - \hat{B}_n)$ but not on the unknown bias $B_n$ itself. The key insight of \citet{cavaliere2024bootstrap} is to apply a prepivoting transformation: $H(\hat{p}_n)$ is asymptotically uniform. Since $H$ is unknown, they estimate it via a double bootstrap to obtain $\hat{H}_n$, and output the corrected $p$-value $\tilde{p}_n := \hat{H}_n(\hat{p}_n)$. Therefore, their procedure does not directly estimate or remove the bias $B_n$; instead, it bypasses the bias by transforming the bootstrap $p$-value and estimates the distribution function $H$.

In contrast, our approach directly estimates the bias itself. As shown in Section~\ref{sec:intuition}, we further decompose the first-order bias $ \bbE [\{\hat{\xi} (X, T) - \xi (X, T)\} \{b_{\bbeta} (X, T) - \hat{b}_{\bbeta_{\mathsf{init}}} (X, T)\}]$ into a leading bias component $\mathrm{B}_{\psi,k}(\bbeta)$ and a truncation bias $\mathrm{TB}_{\psi,k}(\bbeta)$ by projecting the residuals of the estimated nuisance functions onto a sieve basis $\bar{\phi}_k$. The leading bias $\mathrm{B}_{\psi,k}(\bbeta)$ is then unbiasedly estimated by a $U$-statistic-based correction term $\tilde{\mathrm{B}}_{\psi,k}(\bbeta)$ when $\Sigma_k$ is known (see Section~\ref{sec:intuition}). Subtracting this correction term from the first-order estimating equation yields a bias-corrected estimator $\hat{\bbeta}^{(2)}$ that achieves $\sqrt{n}$-consistency under weaker smoothness conditions than first-order methods (see Theorem~\ref{thm:QTE} and the discussion in Section~\ref{sec:examples}).
\end{remark}

\subsection{Application: QTE with a binary treatment over \Holder{} smoothness classes}
\label{sec:examples}
In this section, we revisit Section~\ref{sec:review} and specialize the general results from Theorem~\ref{th:consistency} to Theorem~\ref{thm:feasible beta-dist} to the QTE setting with binary treatment $T \in \{0,1\}$ (see Example~\ref{qte}).  To keep the exposition simple, we take the target parameter $\beta_1^*$ to be the $\tau^{th}$ quantile of the potential outcome $Y(1)$,  i.e. $\beta_1^* = \inf\{q: \mathbb{P}(Y (1) \leq q) \geq \tau\}$.  Under Assumption~\ref{as:identification}, $\beta_1^*$ is identified as the unique solution to
\begin{align*}
   \psi(\beta_1) = \bbE (\tau - \mathbbm{1} \{Y(1) \leq \beta_1\}) =  \bbE \left\{ \frac{T}{\bbP (T= 1 \mid X )} \cdot (\tau - \mathbbm{1} \{Y \leq \beta_1\}) \right\}.
\end{align*}
This example fits into our framework with the following correspondences:
\begin{align*}
& \Gamma(Y(t),t,\beta_1) = \left(\tau - \mathbbm{1} \{Y(t) \leq \beta_1\} \right) \cdot t ,\quad  \xi(x,t) = \frac{t}{\bbP (T= 1 \mid X = x )},  \\
& b_{\beta_1} (x,t) = t \cdot \bbE (\tau - \mathbbm{1} \{Y \leq \beta_1 \} \mid X = x, T=t).
\end{align*}
In the QTE setting, several general regularity conditions for GTMs from Section~\ref{sec:properties} (Assumptions~\ref{as:Holder}–\ref{as:criterion}) can be simplified. In particular, Assumption~\ref{as:bounded} holds automatically because $T$ is binary. Assumption~\ref{as:criterion} also holds for QTE under the regularity conditions stated below; see Supplementary Material Section~\ref{app:Verification} for verification details. To make this section self-contained, we therefore omit Assumptions~\ref{as:bounded} and~\ref{as:criterion} and restate the remaining  relevant conditions as follows.
\begin{assumption} \label{as:smoothness-psit QTE}
(i) The parameter space $\calB$ is compact and $\beta_1^*$ lies in its interior;
(ii) $\beta^{\ast}_1$ is the unique solution to $\psi (\beta_1)= 0$, and for every $\epsilon > 0$, $\inf_{|\beta_1 - \beta_1^{\ast}| \geq \epsilon} |\psi (\beta_1)| > 0$; and (iii) $\psi (\cdot)$ continuously differentiable  on $\calB$ and its derivative function satisfies $\nabla_{\beta_1} \psi (\beta^{\ast}_1)> 0$. 
\end{assumption}

\begin{assumption}\label{as:nuisance_est QTE}
(i)  $\sup_{\beta_1 \in \calB} \|\hat{b}_{\beta_1} (\cdot, 1)\|_{\infty} \lesssim 1$. (ii) $\bbP ( Y \le \beta_1| X = x, T = 1)$ is  Lipschitz continuous in $\beta_1$. (iii) For all $x \in \calX $,  $ \ c_5 \leq \p_{X}(x)\leq c_6$.
\end{assumption}

\begin{assumption}\label{as:Holder QTE}
$\xi (\cdot, 1)$ and $\p_{Y|X,T} (\beta_1|\cdot, T=1)$ are \Holder{} smooth with indices $s_{1}$ and $s_{2}$ respectively, for every $\beta_1 \in \calB$. Let $s = (s_1 + s_2)/2$ denote the average smoothness. The nuisance estimators $\hat{\xi}$ and $\hat{b}_{\beta_1}$ converge to $\xi$ and $b_{\beta_1}$ at the minimax optimal rates ($n^{- \frac{s_{1}}{d + 2 s_{1}}}$ and $n^{- \frac{s_{2}}{d + 2 s_{2}}}$) in $L_{2} (\P)$.   Moreover, we assume $\hat{\xi}$ and $\hat{b}_{\beta_{1,\ini}}$ belong to \Holder{} smoothness classes with smoothness indices $s_{1} > 0$ and $s_{2} > 0$, respectively. 
\end{assumption}

\begin{theorem}
\label{thm:QTE}
For the QTE parameter $\beta_1^* = \inf\{q: \mathbb{P}(Y (1) \leq q) \geq \tau\}$, suppose that Assumptions~\ref{as:identification} and \ref{as:basis}--\ref{as:Holder QTE} hold. Then
\begin{itemize}
\item [(i)] if $s / d > 1 / 4$ and $k = o (n)$,
\begin{align*}
\sqrt{n} (\hat{\beta}^{(2)}_1 - \beta^{\ast}_1) \overset{d}{\rightarrow} N \left( 0, \{\nabla_{\beta_1} \psi (\beta^{\ast}_1)\}^{-1} V_{1} (\beta^{\ast}_1) \{\nabla_{\beta_1}^{\top} \psi (\beta^{\ast}_1)\}^{-1} \right);
\end{align*}
\item [(ii)] if $0 < s / d \leq 1 / 4$ and $k \asymp n^{\frac{2}{1 + 4 s / d}}$,
\begin{equation*}
\bbE^{1 / 2} [|\hat{\beta}^{(2)}_1 - \beta^{\ast}_1|^{2}] \lesssim n^{- \frac{4 s / d}{1 + 4 s / d}}.
\end{equation*}
\end{itemize}
\end{theorem}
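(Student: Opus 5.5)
The plan is to obtain Theorem~\ref{thm:QTE} by specializing the general results, Theorems~\ref{th:consistency}--\ref{thm:feasible beta-dist}, to the binary QTE model. The argument has three parts: (a) verify the general assumptions in this model, (b) bound the truncation bias uniformly in $\beta_1$, and (c) tune $k$ separately in the two smoothness regimes.

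\emph{Step (a): checking the general assumptions.} Assumption~\ref{as:bounded} is automatic because $T$ is binary. For Assumption~\ref{as:criterion}, I will invoke the verification in Supplementary Material Section~\ref{app:Verification}.
\begin{itemize}
\item The class $\{t(\tau-\mathbbm{1}\{y\le\beta_1\})\}$ is VC, being indicators of half-lines.
\item It is bounded.
\item Lipschitz continuity of $\beta_1\mapsto\P(Y\le\beta_1\mid X,T=1)$ gives $\alpha_0=1/2$.
\end{itemize}
Assumptions~\ref{as:uniqueness}--\ref{as:smoothness} follow from Assumption~\ref{as:smoothness-psit QTE}. Assumption~\ref{as:nuisance_est}(ii) follows from Assumption~\ref{as:nuisance_est QTE}(ii). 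Boundedness of $\hat{\xi}$ comes from positivity plus the Hölder (hence bounded) assumption on $\hat{\xi}$. The identifiability condition in Theorem~\ref{thm:feasible rates}, a lower bound on the slope of $\beta_1\mapsto\bbE\hat{\psi}^{(2)}_k(\beta_1)$, is handled as follows. This map equals $\bar{\psi}_k(\beta_1)$ up to a $\Sigma_k$-error term. Moreover, $\bar{\psi}_k(\beta_1)=\psi(\beta_1)+\TB_{\psi,k}(\beta_1)$, where $\TB_{\psi,k}$ has a derivative of order $o(1)$. So the lower bound follows from $\nabla_{\beta_1}\psi(\beta_1^*)>0$ locally. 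I expect this step to need a short separate argument.

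\emph{Step (b): truncation bias.} This is the heart of the specialization and the source of the improvement over Proposition~\ref{prop:dml}.
\begin{itemize}
\item Both $\xi(\cdot,1)$ and $\hat{\xi}(\cdot,1)$ are $s_1$-Hölder. Hence their difference is too, with a bounded norm. Standard approximation for the bases of Assumption~\ref{as:basis} (e.g.\ \citet{belloni2015some}) gives $\|\Pi^{\perp}[\hat{\xi}-\xi\mid\bar{\phi}_k]\|_{\P,2}\lesssim k^{-s_1/d}$.
\item Write $b_{\beta_1}(x,1)=\tau-F_{Y\mid X,T}(\beta_1\mid x,1)$. Hölder smoothness of the conditional density in $x$, uniformly in $\beta_1$, transfers to the conditional CDF after integrating over $y\le\beta_1$ on the compact $\calB$. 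So $b_{\beta_1}(\cdot,1)$ is $s_2$-Hölder with norm bounded uniformly in $\beta_1$. Since $\hat{b}_{\beta_{1,\ini}}$ is also $s_2$-Hölder, $\sup_{\beta_1}\|\Pi^{\perp}[b_{\beta_1}-\hat{b}_{\beta_{1,\ini}}\mid\bar{\phi}_k]\|_{\P,2}\lesssim k^{-s_2/d}$.
\end{itemize}
The product is therefore $\lesssim k^{-2s/d}$ uniformly in $\beta_1$. Crucially, this bound does not involve $|\beta_{1,\ini}-\beta_1|$ at all, which is why no rate condition on $\beta_{1,\ini}$ appears. For Assumption~\ref{as:sigma}, I take $\Sigma_k$ known, or assume $\hat{\Sigma}_k$ satisfies that condition, and then pass from the oracle to the feasible estimator via Theorems~\ref{thm:feasible rates} and~\ref{thm:feasible beta-dist}.

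\emph{Step (c): choosing $k$.}

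Case (i), $s/d>1/4$. I pick $k=o(n)$ growing fast enough, e.g.\ $k\asymp n/\log n$ or any $k$ with $n^{1/2}k^{-2s/d}\to0$; such $k=o(n)$ exists since $2s/d>1/2$. Then:
\begin{itemize}
\item $\tilde r_{n,\beta}\asymp n^{-1/2}\log n$, so $\tilde r_{n,\beta}^{\alpha_0}\log n\to0$.
\item Condition~\eqref{eq:tbias-beta0} reads $\sqrt n\,k^{-2s/d}\to0$.
\end{itemize}
Theorem~\ref{th:beta-dist}(i), transferred by Theorem~\ref{thm:feasible beta-dist}, yields the stated limit.

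Case (ii), $s/d\le1/4$. The general rate is $\sqrt k/n+n^{-1/2}+k^{-2s/d}$. Balancing gives $k\asymp n^{2/(1+4s/d)}$. Note $n\lesssim k=o(n^2)$, consistent with Assumption~\ref{as:basis}(v). This produces the rate $n^{-\frac{4s/d}{1+4s/d}}$.

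The main obstacle is that Theorems~\ref{th:consistency} and~\ref{thm:feasible rates} give only a first-moment bound and carry a $\log n$ factor. The claim in (ii) is an $L_2$ bound without the logarithm. To remove it, I would redo the localization directly:
\begin{itemize}
\item Use the linearization $\hat{\beta}^{(2)}_1-\beta_1^*=\{\nabla\psi(\beta_1^*)\}^{-1}\{B_n+A_n-C_n\}/(n/\sqrt{k+n})+\text{remainder}$ from the proof sketch of Theorem~\ref{th:beta-dist}.
\item Bound $\bbE B_n^2\lesssim k/n^2+1/n$ directly, as a second moment of a degenerate-plus-linear $U$-statistic.
\item Control the $U$-process term $C_n$ through the refined local maximal inequality of Supplementary Material Section~\ref{app:Maximal Inequalities}, at radius $\tilde r$. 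Here $\alpha_0=1/2$ makes it of smaller order.
\item Bound the remainder using boundedness of $\calB$.
\end{itemize}
If the logarithm cannot be removed this way, I would settle for the bound up to a $\log n$ factor.
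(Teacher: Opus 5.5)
Your proposal is correct at the level of rigor of the paper's own argument. It has the same architecture: specialize Theorems~\ref{th:consistency}--\ref{thm:feasible beta-dist} to binary QTE, show the truncation bias is $\lesssim k^{-2s/d}$, then tune $k$.

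The one genuine difference is how you handle the $\beta_1$-dependence of the outcome residual.
\begin{itemize}
\item The paper uses \eqref{our0}. It splits $b_{\beta_1}-\hat b_{\beta_{1,\ini}}$ at $\beta_{1,\ini}$ and applies a mean-value expansion with $\nabla_{\beta_1}b_{\beta_1}(\cdot,1)=-\p_{Y|X,T}(\beta_1\mid\cdot,T=1)$, which Assumption~\ref{as:Holder QTE} makes $s_2$-H\"older directly. This gives $k^{-2s/d}(|\beta_{1,\ini}-\beta_1^*|+1)$, so only boundedness of $\beta_{1,\ini}$ is needed.
\item You instead assert that $b_{\beta_1}(\cdot,1)$ is $s_2$-H\"older uniformly over $\calB$, and bound $\sup_{\beta_1}\|\Pi^\perp[b_{\beta_1}-\hat b_{\beta_{1,\ini}}\mid\bar\phi_k]\|_{\P,2}$ in one step.
\end{itemize}
Your route is shorter and directly gives the global supremum over $\calB$ that Theorem~\ref{th:consistency} needs. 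The paper's route is the general mechanism of Remark~\ref{rem:approximation}, and it effectively justifies your uniformity claim: smoothness at one point plus uniform smoothness of $\nabla_\beta b_\beta$ on compact $\calB$ yields uniform smoothness, with the norm growing like $|\beta_1-\beta_{1,\ini}|$.

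Your phrase ``integrating over $y\le\beta_1$ on the compact $\calB$'' is not quite right. The range $(-\infty,\beta_1]$ leaves $\calB$, so the density-to-CDF transfer needs either density smoothness for all $y$ (as Table~\ref{tab:smoothness} phrases it) or a smooth CDF at some reference point. The paper makes the same unstated step.

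Your additional observations go beyond the paper's sketch and are correct.
\begin{itemize}
\item Part (i) needs $k$ large enough that $\sqrt n\,k^{-2s/d}\to0$. An arbitrary $k=o(n)$ does not suffice, and the paper's displayed bias bound glosses over this.
\item The slope condition of Theorem~\ref{thm:feasible rates} is not among the listed hypotheses, so it does need your separate argument.
\item The general theorems give only a first-moment bound carrying a $\log n$ factor. The paper's sketch therefore does not deliver the log-free $L_2$ claim in (ii) either.
\end{itemize}
If you pursue your localization plan for (ii), note a limitation of ``boundedness of $\calB$''. It controls the contribution off the linearization event only if that event fails with probability $O(n^{-\frac{8s/d}{1+4s/d}})$. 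Showing this requires higher-moment or exponential versions of the maximal inequalities in Section~\ref{app:Maximal Inequalities}, not only the first-moment bounds stated there.
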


The convergence rates for QTE in Theorem~\ref{thm:QTE} match those for ATE established in \citet{robins2016technical}, which shows that these rates are minimax optimal for ATE. The proof of Theorem~\ref{thm:QTE} is straightforward and proceeds as follows. Under the assumptions of Theorem~\ref{thm:QTE} with an appropriate choice of basis $\bar{\phi}_{k}$, we have
\begin{align*}
\| \Pi^{\perp} (\hat{\xi} - \xi \mid \bar{\phi}_{k}) \|_{\P,2} = O(k^{-s_1/d}),\quad \| \Pi^{\perp} (b_{\beta_{1,\mathsf{init}}} - \hat{b}_{\beta_{1,\mathsf{init}}} \mid \bar{\phi}_{k})  \|_{\P,2} = O(k^{-s_2/d}),
\end{align*}
because  the \Holder{}-$s_2$ smoothness of $\p_{Y|X,T}(\beta_1 \mid \cdot, T=1)$ implies that $b_{\beta_1}(\cdot,1)$ is also \Holder{}-$s_2$ smooth for any $\beta_1 \in \calB$. Furthermore, a direct calculation shows that $\nabla_{\beta_1} b_{\beta_1} (\cdot, 1) =  - \p_{Y|X,T}(\beta_1 |\cdot, T = 1 ) $, which is also \Holder{}-$s_{2}$ smooth for any $\beta_1 \in \calB$. Combining these results with \eqref{our0}, the bias of $\hat{\beta}^{(2)}_1$ is bounded above by
\begin{align*}
k^{- 2 s / d} \cdot | \beta_{1,\ini} - \beta^{\ast}_1 | + k^{- 2 s / d} = \left\{ \begin{array}{ll}
o (n^{-1 / 2}) & s / d > 1 / 4, k = o (n), \\
O \left( n^{- \frac{4 s / d}{1 + 4 s / d}} \right) & 0 < s / d \leq 1 / 4, k \asymp n^{\frac{2}{1 + 4 s / d}}.
\end{array} \right.
\end{align*}
Therefore, somewhat surprisingly, the higher-order estimator imposes no rate requirement on $\beta_{1,\ini}$, as long as $\beta_{1,\ini}$ is bounded. This stands in contrast to the first-order LDML estimator, which, as discussed following Proposition~\ref{prop:dml}, requires $s_1/d > 1/2$. We summarize these comparisons in Table~\ref{tab:smoothness} below.
\begin{table}[htbp]
\centering\small
\begin{tabular}{l|cc}
\toprule
 &
\makecell{\textbf{Smoothness conditions}} &
\makecell{\textbf{Rate condition for $\beta_{1,\ini}$}} 
\\ \midrule
Our paper &
$\displaystyle \frac{s_1/d+s_2/d}{2}>\frac14$ &
$\displaystyle  \|\beta_{1,\ini} - \beta^{*}_1\| = O_{\P}(1)$
\\
\citet{kallus2024localized} &
$\displaystyle {s_{1}}/{d} > \frac{1}{2}$ \& $\displaystyle \frac{s_{1}/d}{1 + 2 s_{1}/d} + \frac{s_{2}/d}{1 + 2 s_{2}/d} > \frac{1}{2}$
&
$\displaystyle \|\beta_{1,\ini} - \beta^{*}_1\| = o_{\P}(n^{-1/4})$
\\
\citet{ai2021unified} &
$\displaystyle  {s_1}/{d}>2$ &
no $\beta_{1,\ini}$ required
\\ \bottomrule
\end{tabular}
\caption{Comparison of smoothness and initial estimator rate conditions for $\sqrt{n}$-consistent estimation of the QTE parameter $\beta_1^* = \inf\{q: \mathbb{P}(Y(1) \leq q) \geq \tau\}$. Here $s_1$ denotes the \Holder{} smoothness index of the propensity score $\xi(\cdot, 1) = 1/\mathbb{P}(T=1 \mid X = \cdot)$, and $s_2$ denotes the Hölder smoothness index of the conditional outcome density $\p_{Y|X,T}(y \mid \cdot, T=t)$ for every $y$.}
\label{tab:smoothness}
\end{table}

\begin{remark}
For ATE in Example~\ref{ate}, a direct calculation yields $b_{\bbeta}(x,t) = ( (\bbE(Y | X = x, T = 0 ) - \beta_0)\cdot(1-t),   ( \bbE(Y |X = x, T = 1 ) -  \beta_1)\cdot t )^{\top}$ and $\nabla_{\bbeta}b_{\bbeta} (x,t) = \text{diag}(1-t, t)$. Each component of $\nabla_{\bbeta}b_{\bbeta}$ is infinitely smooth. As a result, condition \eqref{eq:tbias-beta0} imposes no further convergence rate requirements on the initial estimator $\bbeta_{\ini}$ of $\bbeta^{\ast}$.
\end{remark}

\section{Higher-Order Estimators: Toward Infinite-Dimensional Parameters}
\label{sec:dose}

For continuous or high-dimensional treatments \citep{su2019non}, fixed-dimensional parameters in \eqref{ident} are often inadequate. Motivated by \citet{bonvini2022fast} and \citet{colangelo2026double}, we extend our framework to infinite-dimensional GTMs by setting $\omega(\cdot) \equiv \p_T(\cdot)\delta_t(\cdot)$ and letting $p \to \infty$ in \eqref{ident}. One purpose of this section is to further advocate the fruitful philosophy of turning an infinite-dimensional problem into a finite but diverging-dimensional functional estimation problem \citep{kennedy2024minimax}. For notational convenience, we denote the infinite-dimensional GTM parameter as $\beta_t^* \equiv \beta^*(t): \calT \to \bbR$, a function of the treatment value $T=t$. Then $\beta_t^*$ solves
\begin{equation*}
\int_{u\in\calT} \bbE \{\Gamma (Y (u) , u, \beta^{\ast}_{t})\} \p_T(u) \delta_{t} (u) \diff u \equiv 0 , \ \forall \ t \in \calT.
\end{equation*}
Under Assumption~\ref{as:identification}, $\beta^{\ast}_{t}$ is identified as the solution to:
\begin{equation}\label{newmodel_np} 
\psi_{t} (\beta) \equiv \psi_{t} (\theta_{\beta}) \coloneqq \bbE \left\{ \xi (X, T) \Gamma (Y, T, \beta) \delta_{t} (T) \right\}  \equiv 0,
\end{equation}
where $\xi (x,t) = \p_T(t)/\p_{T|X}(t \mid x)$. 
Here $\Gamma: \calY \times \calT \times \calB \rightarrow \bbR$ is a known one-dimensional generalized residual function. To simplify notation, we maintain the conventions of Section~\ref{sec:setup}: for any $\beta\in\calB$, we define $b_{\beta}(x,t) = \bbE \{\Gamma (Y, T, \beta) \mid X = x, T=t\}$. It is worth noting that in this section we are interested in estimating the GTM parameter at some fixed treatment value $T = t$.

\subsection{Higher-order estimators for infinite-dimensional parameters}
\label{sec:formal}

Since the Dirac delta function $\delta_{t} (\cdot)$ cannot be evaluated in practice, we approximate it with a kernel function, following \citet{bonvini2022fast}; see Remark~\ref{rem:comp} for details. Let $K (\cdot)$ be a base kernel function and define   
\begin{align*}
K_{t, h} (u) \coloneqq \frac{1}{h} K \left( \frac{u-t}{h} \right) , \quad g(x) \coloneqq \int_{u\in\calT} K_{t,h} (u) \p_{X,T} ( x, u) \diff u.
\end{align*}
We assume the kernel $K(\cdot)$ satisfies the following condition, which is also imposed by \citet{bonvini2022fast}.
\begin{assumption}
\label{as:kernel}
$K (\cdot): \calT \rightarrow \bbR $ is a  uniformly bounded kernel function of order $\alpha_1 \wedge \alpha_2$, where $\alpha_1,\alpha_2$ are positive integers specified in Assumption~\ref{as:Lip}. It satisfies $\int_{\calT} K (u) \diff u = 1$, $\int_{\calT} u^{i}K (u) \diff u = 0$ for $i=1,\ldots, \alpha_1 \wedge \alpha_2-1$, and $\int_{\calT} u^{\alpha_1 \wedge \alpha_2} K (u) \diff u \eqqcolon \kappa_{\alpha_1 \wedge \alpha_2,1} \neq 0$. 
Moreover, $g(x) \in (\underline{c}_{g}, \bar{c}_g)$ for any $x$ and  $h$, where $0 < \underline{c}_g < \bar{c}_g < \infty$ are two universal constants.
\end{assumption} 
Intuitively, $K_{t, h} (\cdot)\to \delta_{t} (\cdot)$ as $h\to 0$,  so we replace $\delta_{t}(u)$ in \eqref{newmodel_np} with $K_{t,h}(u)$. Analogous to \eqref{eq:sample_est_eq}, we construct the first-order estimating equation:
\begin{align*}
\hat{\psi}_{\beta, t}^{(1)} \coloneqq \bbU_{n, 2} \left[ K_{t,h} (T_1)  \hat{\xi} (X_1, T_1) \{ \Gamma (Y_1, T_1, \beta) - \hat{b}_{\beta_{\mathsf{init}}} (X_1, T_1) \} +  \hat{b}_{\beta_{\mathsf{init}}} (X_1, T_2)K_{t,h}(T_2) \right].
\end{align*}
Following the debiasing argument in Section~\ref{sec:intuition}, we project the localized nuisance errors $\xi(\cdot,t)-\hat{\xi}(\cdot,t)$ and $b_\beta(\cdot,t)-\hat{b}_{\beta_{\mathsf{init}}}(\cdot,t)$ onto the dictionary $\sfzbar_{k_x}(x)$ with respect to the weight $g$ (see Supplementary Material Section~\ref{app:rate-nonparametric}). This yields the oracle second-order estimating equation $\tilde{\psi}_{t,k}^{(2)}(\beta) := \hat{\psi}_t^{(1)}(\beta) - \tilde{\mathrm{B}}_{\psi,t,k}(\beta)$, where
\begin{align*}
 &\tilde{\mathrm{B}}_{\psi, t,k} (\beta)\coloneqq \bbU_{n, 3} \{K_{t,h}(T_{1}) \hat{\xi} (X_{1}, T_{1})  - K_{t,h} (T_{3} )\} \sfzbar_{k_{x}}^{\top} (X_{1}) \tilde{\Omega}_{k_x}^{-1}\\
 &\qquad\qquad\qquad\qquad\times \sfzbar_{k_{x}} (X_{2}) \{ \Gamma (Y_2, T_2, \beta) - \hat{b}_{\beta_{\mathsf{init}}} (X_{2}, T_{2})\} K_{t,h} (T_{2}),
\end{align*}
with $\tilde{\Omega}_{k_x} \coloneqq \bbE \{K_{t,h}(T) \sfzbar_{k_{x}} (X)\sfzbar_{k_{x}}(X)^{\top}\} = \int_{\calX} g(x)\sfzbar_{k_{x}} (x)\sfzbar_{k_{x}}(x)^{\top} \diff x $ treated as known. Here $\beta_{\mathsf{init}}$ is an initial estimator of $\beta^{\ast}_t$ computed from the nuisance sample.  The oracle second-order estimator $\tilde{\beta}^{(2)}_{t}$ is defined by solving $\tilde{\psi}_{t,k}^{(2)} (\beta)= 0$. 

\subsection{Convergence rates of higher-order estimators}
\label{sec:kernels}
To proceed, we impose a set of regularity conditions that closely parallel those for the finite-dimensional case. For brevity, we assume Assumptions~\ref{as:bounded} and \ref{as:nuisance_est}–\ref{as:basis} hold with appropriate notational adjustments. Assumptions~\ref{as:uniqueness} and~\ref{as:smoothness} are restated below in simplified form.
\begin{assumption}\label{as:smoothness-psit}
For any $t\in\calT$: (i) the parameter space $\calB$ is compact and  $\beta^{*}_t$ is in the interior of $\calB$.
(ii) $\beta^{\ast}_{t}$ is the unique solution to $\psi_{t} (\beta)= 0$, and $\inf_{|\beta - \beta^{\ast}_{t}| \geq \epsilon} |\psi_{t} (\beta)| > 0$ for every $\epsilon > 0$. (iii) $\psi_{t} (\beta)$ is continuously differentiable in $\beta$ with $\nabla_{\beta} \psi_{t} (\beta^{\ast}_{t})> 0$.
\end{assumption}

Additionally, we impose the following smoothness conditions on the nuisance functions and their estimators, which facilitate kernel smoothing and are standard in nonparametric estimation.
\begin{assumption}\label{as:Lip}
 The functions $t \mapsto \xi(x, t)$ and $t \mapsto \p_{T}(t)$, $t \mapsto \hat{\xi}(x, t)$ are $\alpha_{1}$-times continuously differentiable with uniformly bounded derivatives for any $x \in \calX$.  Analogously, $t \mapsto b_{\beta} (x, t)$ and  $ t \mapsto \hat{b}_{\beta} (x, t)$ are $\alpha_{2}$-times continuously differentiable with uniformly bounded derivatives for any $x \in \calX$ and $\beta\in\calB$.
\end{assumption}

Theorem~\ref{th:rate-nonparametric} is the main result of this section, characterizing the convergence rate of $\tilde{\beta}^{(2)}_t$. To reduce clutter,  define   $\Delta_{b,t}  (X;\beta) \coloneqq b_{\beta} (X,t) - \hat{b}_{\beta_{\ini}} (X,t)$ and  $\Delta_{\xi,t}  (X) \coloneqq \xi (X,t) - \hat{\xi} (X,t).$
\begin{theorem}\label{th:rate-nonparametric}
   Suppose that Assumptions
   \ref{as:bounded}, \ref{as:nuisance_est}--\ref{as:basis}, and \ref{as:kernel}--\ref{as:Lip} hold.  Let $\tilde{r}_{n,\mathsf{np}}  \to 0$  be a diminishing sequence as $n\to \infty$.  Suppose that $ h (\log n)^2 \to 0 $,  $(\tilde{r}_{n,\mathsf{np}})^{\alpha_{0}} \log n \to 0$ and
\begin{align*}
   & \left( \frac{\sqrt{k_{x}} }{nh} + \frac{ 1 }{\sqrt{nh}}\right)\log n + h^{\alpha_1 \wedge \alpha_2}  + \sup_{ \beta \in\calB } \left\|\Pi_{g}^{\perp} \{\Delta_{b,t}  (\beta) \mid \sfzbar_{k_{x}}\} \right\|_{g,2} \cdot \left\| \Pi_{g}^{\perp} (\Delta_{\xi,t}  \mid \sfzbar_{k_{x}}) \right\|_{g,2} \lesssim \tilde{r}_{n,\mathsf{np}}.
\end{align*}
Then, $\tilde{\beta}^{(2)}_{t}$ is a consistent estimator of $\beta^{*}_{t}$ and  satisfies
\begin{align*}
    \bbE (|\tilde{\beta}^{(2)}_{t} - \beta^{*}_{t}|) \lesssim &\  
      \sup_{|\beta - \beta_t^{\ast}|\leq \tilde{r}_{n,\mathsf{np}}} \left\|\Pi_{g}^{\perp} \{\Delta_{b,t}  (\beta) \mid \sfzbar_{k_{x}}\} \right\|_{g,2}  \left\| \Pi_{g}^{\perp} (\Delta_{\xi,t}  \mid \sfzbar_{k_{x}}) \right\|_{g,2} 
    + \frac{\sqrt{k_{x}} }{nh} + \frac{ 1 }{\sqrt{nh}} + h^{\alpha_1 \wedge \alpha_2}.
\end{align*}
Here, $\Pi^{\perp}_{g} (f \mid \sfzbar_{k_x}) (\cdot) \coloneqq f (\cdot) - \int_{\calX} f(x) \sfzbar_{k_x} (x)^{\top} g(x) \diff x \tilde{\Omega}_{k_x}^{-1} \sfzbar_{k_x} (\cdot)$ denotes the orthocomplement of the $g$-weighted projection onto $\sfzbar_{k_x}$, and $\|f\|_{g,2}^{2} =
\int_{\calX} f(x)^2g(x)\diff x$.
\end{theorem}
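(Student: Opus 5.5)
The proof follows the template of Theorem~\ref{th:consistency}, adapted to the localized problem. We treat $\psi_t$ as the population target and introduce the kernel-smoothed target $\psi_{t,h}(\beta)\coloneqq \bbE\{\xi(X,T)\Gamma(Y,T,\beta)K_{t,h}(T)\}$ and the truncated target $\bar\psi_{t,k}(\beta)\coloneqq \bbE[\tilde\psi^{(2)}_{t,k}(\beta)]$, conditional on the nuisance sample. The argument has three parts: a deterministic bias bound for $\bar\psi_{t,k}-\psi_t$, a uniform (in $\beta$) stochastic bound for $\tilde\psi^{(2)}_{t,k}-\bar\psi_{t,k}$, and the standard $Z$-estimation argument. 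The last part uses Assumption~\ref{as:smoothness-psit}: consistency comes from the well-separated zero, and the rate from $\nabla_\beta\psi_t(\beta_t^*)>0$, giving $|\tilde\beta^{(2)}_t-\beta^*_t|\lesssim |\psi_t(\tilde\beta^{(2)}_t)|$ locally.

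\textbf{Bias.} We write $\bar\psi_{t,k}(\beta)-\psi_t(\beta)=\{\bar\psi_{t,k}(\beta)-\psi_{t,h}(\beta)\}+\{\psi_{t,h}(\beta)-\psi_t(\beta)\}$. For the second bracket, we condition on $T=u$, substitute $u=t+hv$, and Taylor-expand $u\mapsto \bbE\{\xi b_\beta(X,u)\,|\,T=u\}\p_T(u)$ to order $\alpha_1\wedge\alpha_2$. Assumption~\ref{as:Lip} and the moment conditions in Assumption~\ref{as:kernel} then give $O(h^{\alpha_1\wedge\alpha_2})$ uniformly in $\beta$. For the first bracket, we compute the mean of the first-order term $\hat\psi^{(1)}_{t}$, mirroring \eqref{eq:bias}. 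Modulo another $O(h^{\alpha_1\wedge\alpha_2})$ smoothing error from replacing $\Delta_{\xi}(X,T),\Delta_b(X,T)$ with their values at $T=t$, it equals $-\int \Delta_{\xi,t}(x)\Delta_{b,t}(x;\beta)g(x)\,dx$. The mean of $\tilde{\mathrm B}_{\psi,t,k}$ is exactly the $g$-weighted projection analogue, because the weighting by $\xi$ (with the independent $T_3$) turns the $(X,T)$ law into the product of marginals, precisely as in Section~\ref{sec:intuition}. Subtracting, only $\int \Pi_g^\perp(\Delta_{\xi,t})\Pi_g^\perp(\Delta_{b,t}(\beta))g$ remains, and Cauchy--Schwarz in $L_2(g)$ bounds it by the product of norms.

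\textbf{Variance and uniformity.} We take the Hoeffding decomposition of the third-order $U$-statistic $\tilde\psi^{(2)}_{t,k}(\beta)$. Since $K_{t,h}\lesssim h^{-1}$ on a $T$-window of probability $\asymp h$, the linear part has variance $\lesssim 1/(nh)$. The degenerate quadratic part involves the kernel $\sfzbar^\top(X_1)\tilde\Omega^{-1}\sfzbar(X_2)K_{t,h}(T_1)K_{t,h}(T_2)$. By the eigenvalue bounds on $\tilde\Omega_{k_x}$ (from $g\in(\underline c_g,\bar c_g)$ and Assumption~\ref{as:basis}(iii)), its second moment is $\lesssim k_x/h^2$, giving variance $k_x/(n^2h^2)$. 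The cubic part is of smaller order under $k=o(n^2)$. To make this uniform over $\beta\in\calB$, we apply the refined local maximal inequality for $U$-processes from Supplementary Material Section~\ref{app:Maximal Inequalities}. The relevant class is indexed by $\beta$ through $\Gamma(\cdot,\cdot,\beta)$, which is VC type by Assumption~\ref{as:criterion}, with envelopes scaled by $h^{-1}$ and $\zeta(k)$. This produces the $(\sqrt{k_x}/(nh)+1/\sqrt{nh})\log n$ term. Combining the bias and variance bounds yields $\sup_\beta|\tilde\psi^{(2)}_{t,k}-\psi_t|\lesssim \tilde r_{n,\mathsf{np}}$ in expectation, hence consistency.

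\textbf{Sharp rate and main obstacle.} To drop the $\log n$ and localize the bias supremum to $|\beta-\beta_t^*|\le\tilde r_{n,\mathsf{np}}$, we use $0=\tilde\psi^{(2)}_{t,k}(\tilde\beta)$. We decompose $\psi_t(\tilde\beta)$ as in the proof sketch of Theorem~\ref{th:beta-dist}: a bias term at $\tilde\beta$ inside the localized ball, the centered statistic at $\beta_t^*$ (with no log factor), and an increment $U$-process term. The main obstacle is this increment term. Assumption~\ref{as:criterion}(iii) gives an $L_2$ modulus $\delta^{\alpha_0}$, and after the kernel rescaling the localized maximal inequality should bound the term by $(\sqrt{k_x}/(nh)+1/\sqrt{nh})\,\tilde r_{n,\mathsf{np}}^{\alpha_0}\log n$. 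This is $o$ of the main variance term under $\tilde r^{\alpha_0}_{n,\mathsf{np}}\log n\to0$. The condition $h(\log n)^2\to0$ is used to control the Bernstein-type sup-norm terms in that inequality, where the $h^{-1}$-inflated envelope must remain dominated. Verifying that the envelope/variance ratios of the localized class fit the inequality's hypotheses is where I expect most of the work.
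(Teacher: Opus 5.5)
Your proposal is correct and follows essentially the paper's route. The bias is split into a kernel-smoothing error and a $g$-weighted projection remainder (Lemmas~\ref{lemma:psith-kernelbias}--\ref{lemma:EB-kernelbias}). Note that the mean of $\tilde{\mathrm B}_{\psi,t,k}$ matches $\int \Pi_g(\Delta_{\xi,t})\Pi_g(\Delta_{b,t}(\beta))g$ only up to an $O(h^{\alpha_1\wedge\alpha_2})$ error, not exactly. The variance comes from the Hoeffding decomposition, consistency from the $U$-process maximal inequality, and the sharp rate from localizing to $|\beta-\beta_t^*|\lesssim\tilde r_{n,\mathsf{np}}$.

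One correction on where $h(\log n)^2\to0$ enters. In the paper the kernel-weighted envelope only satisfies $\bbE[\sup|\Gamma_\beta-\Gamma_{\beta_t^*}|^2K_{t,h}(T)^2]\lesssim(\epsilon^{2\alpha_0}\vee h)/h$ (Lemma~\ref{lemma:H3_L2}). That bound is obtained from the conditional-on-$T$ modulus \eqref{eq:local L2 Lip} plus a Taylor expansion in $u$; the unconditional Assumption~\ref{as:criterion}(iii) alone does not survive kernel localization. As a result the increment term carries the factor $(\tilde r_{n,\mathsf{np}}^{\alpha_0}\vee\sqrt h)\log n$, and $h(\log n)^2\to0$ is what makes the $\sqrt h\log n$ part vanish. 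The sup-norm (Bernstein-type) term $\log n/(nh)$ is already $o((nh)^{-1/2})$ because $\tilde r_{n,\mathsf{np}}\to0$, so it does not need that condition.
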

The proof of Theorem~\ref{th:rate-nonparametric} is given in Supplementary Material Section~\ref{app:rate-nonparametric} and closely follows that of Theorem~\ref{th:beta-dist}. Using a similar decomposition, we first establish consistency by showing $\bbE (|\tilde{\beta}^{(2)}_t - \beta^*_t|) \lesssim \tilde{r}_{n,\mathsf{np}} \to 0$. Building on this preliminary rate, we refine the analysis to the local region $\{\beta \in \calB: |\beta - \beta_{t}^{\ast}| \lesssim \tilde{r}_{n,\mathsf{np}} \}$  to obtain the final convergence rate. The key difference lies in the kernel-weighted $U$-processes introduced by the kernel function. To apply the maximal inequality from Supplementary Material Section~\ref{app:Maximal Inequalities}, each term in the Hoeffding decomposition must be carefully controlled with bandwidth-dependent envelopes, requiring a refined and delicate argument. These technical calculations distinguish our analysis from that of \citet{kennedy2017non}, where the target parameter is explicitly defined and does not require such $U$-process techniques.

\begin{remark}
   Following the same argument in Remark~\ref{rem:approximation}, we have 
\begin{align*}
        &\sup_{ |\beta - \beta_t^{\ast}|\leq \tilde{r}_{n,\mathsf{np}} } \|\Pi_{g}^{\perp} \{\Delta_{b,t}  (\beta) \mid \sfzbar_{k_{x}}\} \|_{g,2}  \| \Pi_{g}^{\perp} (\Delta_{\xi,t}   \mid \sfzbar_{k_{x}}) \|_{g,2}
        \lesssim \| \Pi_{g}^{\perp} \{ \Delta_{b,t}  (\beta_{\ini}) \mid \sfzbar_{k_{x}}\} \|_{g,2}  \| \Pi_{g}^{\perp} (\Delta_{\xi,t}  \mid \sfzbar_{k_{x}}) \|_{g,2}\\
        &+\ \|\Pi_{g}^{\perp} \{ \nabla_{\beta}b_{\beta^{\dag}} (\cdot,t) \mid \sfzbar_{k_{x}}\}  \|_{g,2}\cdot \|\beta_{\mathsf{init}} - \beta^{*}_t\| \cdot \| \Pi_{g}^{\perp} ( \Delta_{\xi,t}  \mid \sfzbar_{k_{x}}) \|_{g,2},
    \end{align*}
    where $\beta^\dagger$ lies between
$\beta_{\mathsf{init}}$ and $\beta^{\ast}_{t}$.
   Assume that  $\Delta_{\xi,t}(\cdot)$ is  $s_{1}$-\Holder{} smooth, while $\Delta_{b,t}  (\cdot;\beta_{\ini})$ and $\nabla_{\beta}b_{\beta^{\dag}} (\cdot,t)$ are $s_2$-\Holder{} smooth. With an appropriate basis $\sfzbar_{k_{x}}$, we have
\begin{align*}
&\| \Pi^{\perp} (\Delta_{\xi,t} \mid \sfzbar_{k_{x}}) \|_{g,2} = O(k_{x}^{-s_1/d}),\quad \left\|\Pi_{g}^{\perp} \{\nabla_{\beta}b_{\beta^{\dag}} (\cdot,t) \mid \sfzbar_{k_{x}}\}  \right\|_{g,2} = O(k_{x}^{-s_2/d}),\\
&\text{and}\  \| \Pi^{\perp} \{ \Delta_{b,t}  (\beta_{\ini})\mid \sfzbar_{k_{x}}\}  \|_{g,2} = O(k_{x}^{-s_2/d}).
\end{align*}
Thus, if $|\beta_{\mathsf{init}} - \beta^{*}_t| = O_{\P}(1)$, Theorem~\ref{th:rate-nonparametric} gives
\begin{align*}
\bbE (| \tilde{\beta}^{(2)}_{t} - \beta^{*}_{t} |) \lesssim &\  k_{x}^{-\frac{s_1 + s_2}{d}}
+ \ \frac{\sqrt{k_{x}} }{nh} + \frac{ 1 }{\sqrt{nh}} + h^{\alpha_1 \wedge \alpha_2}.
\end{align*}
Choosing $k_x \asymp nh$ makes the variance of order $(nh)^{-1/2}$, reducing the rate to $(nh)^{-(s_1+s_2)/d} + (nh)^{-1/2} + h^{\alpha_1 \wedge \alpha_2}$. If the average smoothness satisfies $(s_1+s_2)/(2d) \geq 1/4$ and $h \asymp n^{-1/(2(\alpha_1 \wedge \alpha_2)+1)}$, we obtain the rate $n^{-(\alpha_1 \wedge \alpha_2)/(2(\alpha_1 \wedge \alpha_2)+1)}$.
\end{remark}

\begin{remark}
\label{rem:comp}
Our higher-order estimator $\tilde{\beta}_t^{(2)}$ is related to two existing proposals for average dose–response function (ADRF) estimation. First, \citet{colangelo2026double} develop a kernel-based DML estimator motivated by an approximate first-order influence function obtained through kernel localization around a target treatment value. Their estimator takes the form
\begin{align*}
\hat{\beta}_{\mathrm{CL}} = & \, \bbU_{n, 1}\left\{\frac{K_{t,h}(T_1)\{Y_1-\hat{\mu}(X_1, t)\}}{ \hat{\p}_{T \mid X} (T = t \mid X_1 ) }+\hat{\mu}(X_1, t)\right\},
\end{align*}
where $\hat{\mu}(x,t) \coloneqq \bbE [Y \mid X=x, T=t]$. They establish asymptotic normality of $\hat{\beta}_{\mathrm{CL}}$ under high-level regularity conditions, including the key product-rate requirement
$\sqrt{n h}\|\hat{\p}_{T \mid X} (T = t \mid \cdot )-\p_{T \mid X} (T = t \mid \cdot )\|_{\P,2} \cdot \|\hat{\mu}(\cdot,t) - \hat{\mu}(\cdot,t)\|_{\P,2}\to 0$. Second, \citet{bonvini2022fast} propose a second-order estimator (the BK estimator) based on the truncated parameter approach of HOIFs \citep{robins2008higher}:
\begin{align*}
\tilde{\beta}_{\BK} = & \, \bbU_{n, 1}\left\{\frac{K_{t,h}(T_1)\{Y_1-\hat{\mu}( X_1, t)\}}{ \hat{\p}_{T \mid X} (T = t \mid X_1 ) }+\hat{\mu}(t, X_1)\right\} \\
& + \, \bbU_{n, 2}\left\{K_{t,h}(T_{1})\{Y_{1}-\hat{\mu}(X_{1}, T_{1})\}\sfzbar_{k_{x}}(X_{1})^{\top}\tilde{\Omega}_{k_x}^{-1}\sfzbar_{k_{x}}(X_{2})\left(\frac{K_{t,h}(T_{2})}{ \hat{ \p}_{T \mid X} (T_{2} \mid X_{2} )}-1\right)\right\}.
\end{align*} 
In essence, $\tilde{\beta}_{\mathrm{BK}}$ can be viewed as a higher-order extension of $\hat{\beta}_{\mathrm{CL}}$ for ADRF.

Our oracle second-order estimator is obtained by solving
\begin{align*}
& 0 = \bbU_{n, 2} \left[ K_{t,h} (T_1)  \hat{\xi} (X_1, T_1) \{ Y_1  - \hat{\mu} (X_1,T_1) - \beta + \beta_{\mathsf{init}} \} +  \{ \hat{\mu} (X_1, T_2) - \beta_{\mathsf{init}}\}K_{t,h} (T_2)  \right]\\
& + \bbU_{n, 3}\big[ \{ K_{t,h}(T_{1})\hat{\xi} (X_{1}, T_{1})  -   K_{t,h} (T_{3} )\}\sfzbar_{k_{x}} (X_{1})^{\top} \tilde{\Omega}_{k_x}^{-1} \sfzbar_{k_{x}} (X_{2})  \{Y_{2} - \hat{\mu} (X_{2}, T_{2}) -\beta +\beta_{\mathsf{init}}\}K_{t,h} (T_{2})\big] .
\end{align*} 
Notably, because $b_\beta(X,t) - \hat{b}_{\beta_{\mathsf{init}}}(X,t) = \mu(X,t) - \beta - \hat{\mu}(X,t) + \beta_{\mathsf{init}}$ and $\Pi_g^{\perp}[\beta - \beta_{\mathsf{init}} \mid \sfzbar_{k_x}] \equiv 0$ for any $\beta$, the convergence rate in Theorem~\ref{th:rate-nonparametric} simplifies to
\begin{align*}
\frac{\sqrt{k_{x}} }{nh} + \frac{ 1 }{\sqrt{nh}}  + h^{\alpha_1 \wedge \alpha_2} +   \left\|\Pi_{g}^{\perp} \left[\mu (\cdot,t) - \hat{\mu} (\cdot,t) \mid \sfzbar_{k_{x}}\right]  \right\|_{g,2} \cdot \left\| \Pi_{g}^{\perp} \left[ \Delta_{\xi,t}\mid \sfzbar_{k_{x}}\right] \right\|_{g,2} \notag,
    \end{align*}
This rate coincides with that of Theorem 1 in \citet{bonvini2022fast}, confirming that our framework generalizes their result to a broader class of nonparametric GTM parameters, including the QDRF (Example~\ref{qdrf}) as a special case.
\end{remark}

\begin{remark}
When the unknown matrix $\tilde{\Omega}_{k_x}$ in $\tilde{\psi}_{t,k}^{(2)}$ is replaced by an estimator $\hat{\Omega}_{k_x}$ computed from the nuisance sample, we obtain the feasible estimator $\hat{\psi}_{t,k}^{(2)}$. Correspondingly, we define the feasible estimator $\hat{\beta}_t^{(2)}$ for $\beta_t^*$ as the solution to $\hat{\psi}_{t,k}^{(2)}(\beta) = 0$. Following similar arguments as in Theorem~\ref{thm:feasible rates}, the convergence rate of $\hat{\beta}_t^{(2)}$ satisfies
    \begin{equation*}
       \bbE [| \hat{\beta}^{(2)}_{t} - \beta^{*}_{t}|]  \lesssim\  \bbE [| \tilde{\beta}^{(2)}_{t} - \beta^{*}_{t}|] +  \|\Delta_{\xi,t} \|_{g,2}\cdot \|\Delta_{b,t}(\beta^*_t)  \|_{g,2} \cdot \|\hat{\Omega}_{k_x} - \tilde{\Omega}_{k_x}\|_{\op}.
    \end{equation*}
 This bound mirrors the structure established in Theorem~\ref{thm:feasible beta-dist} for the finite-dimensional case, and can serve as a building block for establishing asymptotic normality of the feasible estimator under appropriate conditions on $\hat{\Omega}_{k_x}$.
\end{remark}

\section{Numerical Experiments}
\label{sec:empirical}


\subsection{Simulation studies}
\label{sec:simulation}

In this section, we conduct a proof-of-concept simulation study comparing the performance of our proposed higher-order estimator with two competing methods that do not utilize the HOIF framework. We focus on estimating the QTE parameter with binary treatment, $\bbeta^* = (\beta_{0}^*, \beta_{1}^*)^{\top}$, as described in Example~\ref{qte}. Recall that $\beta_{1}^*$ (resp. $\beta_{0}^*$) denotes the $\tau$-quantile of the treatment (resp. control) group. We choose $\tau = 25\%$ in this simulation. Following \cite{xu2022deepmed} and \cite{liu2017semiparametric}, we simulate data from the following two data-generating mechanisms with different numbers of baseline covariates ($d = 1$ for Case~1 and $d = 4$ for Case~2):
\begin{itemize}
    \item Case~1:
    \begin{align*}
\left\{ \begin{array}{l}
X \sim \mathrm{Uniform} ([-1, 1]), \\
T \mid X \sim \mathrm{Bernoulli} (\mathrm{expit} (\eta (0.5 \cdot X;s))), \\
Y (t) \mid X = (1 + 0.3 \cdot \eta (0.5 \cdot X;s)) \cdot t + 0.2 \cdot \varepsilon,
\end{array} \right.
\end{align*}

\item Case~2:
\begin{align*}
\left\{ \begin{array}{l}
X = (X_1,\ldots,X_4)^{\top} \sim \mathrm{Uniform} ([-1, 1]^4), \\
T \mid X \sim \mathrm{Bernoulli} (\mathrm{expit} (\eta (0.5 \cdot X_1;s))), \\
Y (t) \mid X = (1 + 0.3 \cdot \eta (0.5 \cdot X_1;s)) \cdot t + 0.2 \cdot \varepsilon,
\end{array} \right.
\end{align*}
\end{itemize}
where $\varepsilon \sim N (0, 1)$, $\mathrm{expit} (x) = 1 / (1 + \exp(-x))$, $\eta(x; s) = \sum_{j \in J, l \in \bbZ} 2
^{-j (s + 0.25)} w_{j, l}(x)$ with $J = \{0, 3, 6, 9, 10, 16\}$ and $w_{j,l}(\cdot)$ is the D6 father wavelet functions. By construction, $\eta (\cdot; s)$ lies close to the boundary of \Holder{}-smooth functions with smoothness index $s$. We note that this is only an approximation, as exact infinite basis expansions are infeasible in simulations. Cases~1 and~2 share the same true outcome regression and propensity score. The only difference is that Case~2 includes three extra non-contributing covariates. In the analysis, we do not assume such  knowledge so the basis $\sfzbar_{k}$ depends on all four covariates by stacking the univariate spline basis blocks together with an intercept term $\sfzbar_{k} (x) = (1,\sfzbar_{k_{1}} (x_{1})^{\top}, \sfzbar_{k_{2}} (x_{2})^{\top}, \sfzbar_{k_{3}} (x_{3})^{\top}, \sfzbar_{k_{4}} (x_{4})^{\top})^{\top}$, which yields a basis vector of dimension $k = \sum_{j = 1}^{4} k_{j}+1$.

We vary $s \in \{0.25, 0.4, 0.6\}$ to examine its impact on estimator performance and to validate the theoretical results derived earlier. For each combination of sample size $n$ and smoothness level $s$, we run 1000 replications. In each replication, $2n$ observations are drawn and split equally into a nuisance sample and an estimation sample. We estimate $\bbeta^*$ using three methods:
\begin{description}
  \item[HOE:] Our higher-order estimator is defined in Section~\ref{sec:intuition}. For each continuous covariate, we use a degree-1 B-spline expansion with the number of interior knots set to $\lceil n/100\rceil$. As mentioned, we construct the basis $\sfzbar_k(x)$ by stacking the univariate spline basis blocks for all continuous covariates, together with an intercept term. Specifically, in Case~2, $\sfzbar_k(x)
= (1,
\sfzbar_{k_1}(x_1)^\top,
\sfzbar_{k_2}(x_2)^\top,
\sfzbar_{k_3}(x_3)^\top,
\sfzbar_{k_4}(x_4)^\top)^\top,$ with $k_j=\lceil n/100\rceil+1$ and total dimension $k=\sum_{j=1}^4 k_j + 1$.
  On the nuisance sample, we estimate the stabilized weight function $\hat\xi$ via logistic regression, obtain the initial estimator $\bbeta_{\mathrm{init}}$ by stabilized weighting using $\hat\xi$, and estimate  $\hat b_{\bbeta_{\mathrm{init}}}$ also via logistic regression.
  \item[GOE:] The generalized optimization estimator of \citet{ai2021unified}, applied to both the nuisance and estimation samples. See Supplementary Material  Section~\ref{app:simulation} for its explicit form.

  \item[LDML:] The localized debiased machine learning estimator of \citet{kallus2024localized}, implemented with the same settings as in Section 6.1 of that paper, where all nuisance functions are estimated via logistic regression.
\end{description}

Both Figure~\ref{fig:sim_results_1} (for Case~1) and Figure~\ref{fig:sim_results_2} (for Case~2), displaying simulation results for the QTE $\beta_1^\ast - \beta_0^\ast$, show that when the smoothness parameter $s$ is low (e.g., $s = 0.25$ or more precisely $s$ is close to $0.25$),  our higher-order estimator (HOE) substantially outperforms both GOE and LDML, achieving lower MSE and absolute bias, especially when the sample size $n$ is relatively large. In particular, the higher-order estimator is empirically $\sqrt{n}$-consistent even when $s$ is near 0.25 (the lower right panels of Figure~\ref{fig:sim_results_1} and Figure~\ref{fig:sim_results_2}). This advantage stems from the difficulty of estimating nuisance functions under low smoothness, which induces large first-stage bias—due to errors in both the nuisance parameter estimates and the initial estimator $\bbeta_{\mathsf{init}}$. Consequently, GOE and LDML fail to achieve $\sqrt{n}$-consistency (see the lower‑right panel). As $s$ increases, all three methods exhibit improvements, with bias, variance, and MSE all decreasing. These empirical findings are consistent with our theoretical results, which indicate that the higher-order estimator enjoys improved statistical properties when the \Holder{} smoothness of the nuisance parameters is low.

Figures~\ref{fig:sim_QQplot_beta0_1} and~\ref{fig:sim_QQplot_beta1_1} (for Case~1) and Figures~\ref{fig:sim_QQplot_beta0_2} and~\ref{fig:sim_QQplot_beta1_2} (for Case~2) in Supplementary Material Section~\ref{app:simulation} present normal QQ-plots of our higher-order estimators for
$\beta_0^*$ and $\beta_1^*$  across all $(n,s)$ combinations. These plots support the theoretical conclusion (Theorem~\ref{thm:feasible beta-dist}) that the higher-order estimators are asymptotically normal under appropriate scaling.

Finally, we perform sensitivity analyses to examine whether our higher-order estimator is sensitive to the choice of $k$ or to the initial estimator $\bbeta_{\ini}$. Recall that, for each continuous covariate, we use a degree-1 B-spline basis with $\lceil n/100\rceil$ interior knots. To examine the sensitivity to the choice of $k$, we multiply $ n/100$ by a factor $\lambda \in \{0.75,1,1.25\}$, so that the number of knots varies over $\lceil \lambda n/100\rceil$, resulting in $k = d\cdot(\lceil \lambda n/100\rceil + 1) + 1$. Figure~\ref{fig:basis_sensitivity} in Supplementary Material Section~\ref{app:simulation} reports the empirical bias, variance, and mean squared error for HOE QTE estimator across different values of $\lambda$ for both Cases~1 and~2   with $n=2000$ and $s=0.25$, based on 1000 Monte Carlo replications.
The figure shows the expected bias--variance tradeoff: as $k$ increases, the empirical bias of the higher-order estimator decreases, whereas the empirical variance increases.  In practice, one may choose $k$ such that the point estimates do not differ as much compared to the increase in variance. Here, for both Cases~1 and~2, $\lceil n / 100 \rceil$ is a reasonable choice. A more thorough study of how to choose $k$ is left to a future work that is dedicated to the problem of constructing data-driven basis.

To assess sensitivity to the initial estimator $\bbeta_{\ini}$,  we perturb it deterministically as
$
\bbeta_{\ini}^{(\delta)}
=
\bbeta_{\ini}
+
\delta
\cdot (-1,1)^\top,
$ with $
\delta \in \{-0.5,-0.25,0,0.25,0.5\}$. Here, the perturbation level $\delta$ controls both the magnitude and the direction of the perturbation. Figure~\ref{fig:betainit_sensitivity} in Supplementary Material Section~\ref{app:simulation} reports boxplots of the estimation error of the HOE QTE estimator across different values of $\delta$ for both Cases~1 and~2, in the setting $n=2000$ and $s=0.25$, based on 1000 Monte Carlo replications. The results reported in Figure~\ref{fig:betainit_sensitivity} in Supplementary Material Section~\ref{app:simulation} suggest that HOE is fairly robust to moderate perturbations in the initial estimator. However, when the perturbation becomes too large, the estimator becomes less stable, as reflected by the increased dispersion and a slightly greater number of outliers.

\begin{figure}[htbp]
  \centering
  \includegraphics[width=1\textwidth]{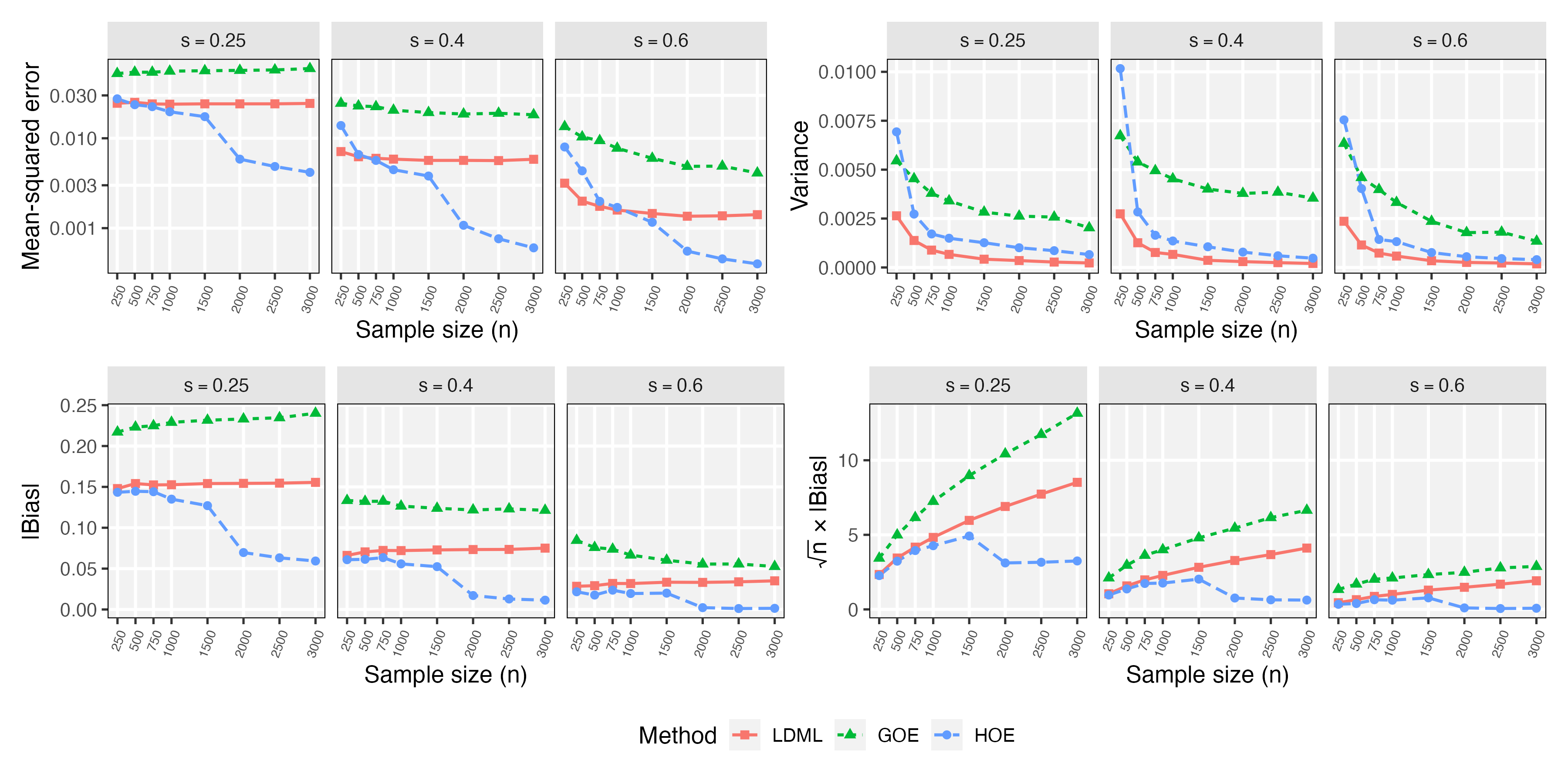} 
  \caption{Results for the simulation study Case~1 of estimating the QTE,  $\beta_1^*-\beta_0^*$, where $\beta_{1}^*$ (resp. $\beta_{0}^*$) denotes the $25\%$-quantile of the treatment (resp. control) group, using different methods, based on 1000 Monte Carlo replications. 
  }
  \label{fig:sim_results_1}
\end{figure}

\begin{figure}[htbp]
  \centering
  \includegraphics[width=1\textwidth]{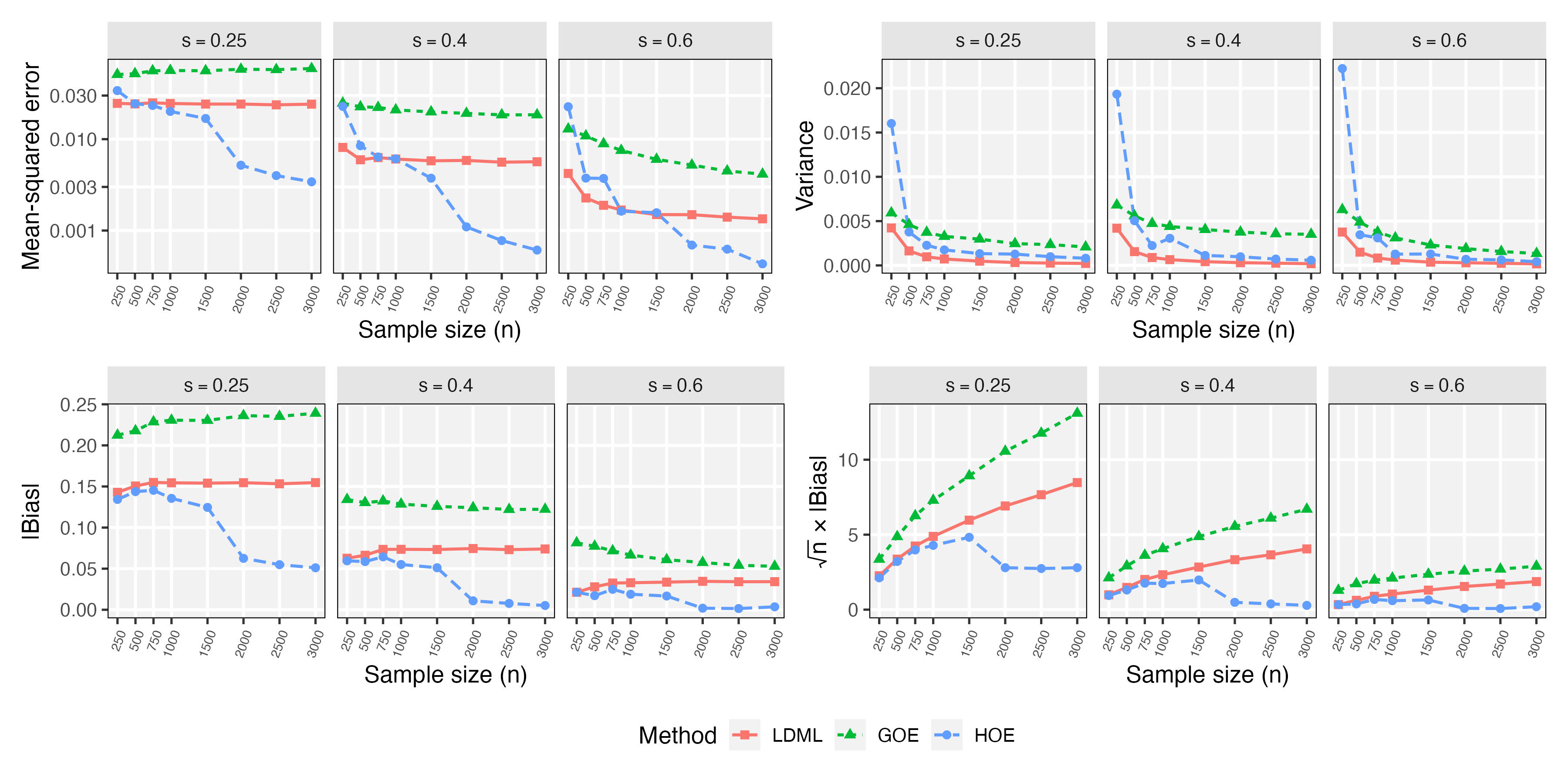} 
  \caption{Results for the simulation study Case~2 of estimating QTE,  $\beta_1^*-\beta_0^*$, where $\beta_{1}^*$ (resp. $\beta_{0}^*$) denotes the $25\%$-quantile of the treatment (resp. control) group, using different methods, based on 1000 Monte Carlo replications. 
  }
  \label{fig:sim_results_2}
\end{figure}

\subsection{Real data analysis}
\label{sec:real}

In this section, we apply our higher-order estimators to estimate the QTE of 401(k) plan eligibility on household wealth, using the Survey of Income and Program Participation (SIPP) data studied in \citet{benjamin2003does,abadie2003semiparametric, chernozhukov2004effects,kallus2024localized}.  The sample contains 9{,}915 observations. The treatment $T$ indicates eligibility for a 401(k) plan, and the outcome $Y$ is net financial assets; our goal is to assess whether 401(k) eligibility increases household saving. The dataset consists of a set of pre-treatment covariates $X$, including age, income, family size, education, marital status, two-earner status, DB pension status, IRA participation, and homeownership (see \citet{chernozhukov2004effects}, Section~3, for details). Recently, \citet{kallus2024localized} analyzed this dataset using LDML estimators.  We implement our higher-order debiased estimators on the same dataset to provide a proof-of-concept for their application in real-world settings.

As in our simulation studies, we split the sample into two folds. With the nuisance sample, we compute the initial estimator $\hat{\bbeta}_{\ini}$ by stabilized weighting, and estimate both the stabilized weights $\hat\xi$ and the generalized outcome regression model $\hat{b}_{\bbeta_{\ini}}$ using four learners: random forest, boosting, LASSO, and a one-hidden-layer neural network. Random forests are implemented using the R package \texttt{randomForest}, boosting using \texttt{gbm}, LASSO using \texttt{hdm}, and the one-hidden-layer neural network using \texttt{nnet}. 
For LASSO, we include polynomial terms of degree 6 for age, 8 for income, 4 for education, and 2 for family size, together with all binary covariates, and all pairwise interactions among these terms, resulting in a total of 275 predictors.

In the debiasing step, we use degree-2 B-spline expansions for the continuous covariates income, age, family size, and education, with interior knot numbers $(\lceil n/200\rceil,\lceil n/200\rceil,4,2)=(25,25,4,2)$, respectively. We then construct $\sfzbar_k(x)$ by stacking the resulting univariate spline basis blocks with the binary covariates and an intercept term. This choice accounts for the heterogeneous supports of the continuous covariates: income and age vary over wide ranges, while family size and education take values on smaller grids. It also ensures numerical stability of $\hat{\Sigma}_k$ by avoiding very small eigenvalues. For comparison, we also report LDML estimates under the same settings as Section~6.2 of \citet{kallus2024localized} (with $K=5$), using a fixed two-fold split without re-randomizing folds to maintain transparency.

\begin{table}[ht] 
\centering
\begin{tabular}{clcccc}
  \toprule
$\tau$ &  & Forest & Neural Net & Boosting & LASSO \\ 
  \midrule
\multirow{2}{*}{0.25} & LDML & 1.06 & 1.02 & 1.04 & 1.06 \\ 
   & HOE & 1.01 & 1.09 & 1.09 & 1.11 \\ 
   \midrule
  \multirow{2}{*}{0.50} & LDML & 4.95 & 4.46 & 4.45 & 4.75 \\ 
   & HOE & 4.74 & 4.37 & 4.40 & 4.35 \\ 
   \midrule
  \multirow{2}{*}{0.75}& LDML & 13.60 & 11.90 & 12.16 & 12.65 \\ 
   & HOE & 12.68 & 11.77 & 12.15 & 11.90 \\ 
   \bottomrule
\end{tabular}
\caption{The QTE of 401(k) eligibility in thousand dollars estimated by LDML and higher-order estimates (HOE) using diﬀerent regression methods (LASSO, neural network, boosting, and random forests). Here $\tau \in \{25\%,50\%,75\%\}$ denotes the quantile level of QTE.}
\label{tab:realdata}
\end{table}

Table~\ref{tab:realdata} reports the point estimates of 25\%, 50\%, and 75\% QTEs. Overall, QTE estimates are generally stable among different nuisance learners, suggesting that the bias of the LDML estimator may be relatively small in this application. The higher-order estimates are close to their LDML counterparts, further strengthening this conclusion.
Consequently, the difference between the higher-order estimator and the LDML estimator provides empirical evidence on the magnitude of the potential bias of the LDML estimator \citep{liu2024assumption}.

We examine sensitivity to both the choice of $k$ and the initial estimator $\bbeta_{\ini}$. To assess sensitivity to $k$, we increase the number of interior knots for age by multiplying $n/200$ by a factor $\lambda \in \{0.75,1,1.25\}$ while keeping other knot numbers fixed. The higher-order estimators perform reasonably well in this range, as shown in Table~\ref{tab:basis_sensitivity} in Supplementary Material Section~\ref{app:real}. Stability is ensured by the numerical inversion of $\hat{\Sigma}_k$ when $k$ is relatively small compared to $n$, whereas the performance degrades if $\hat{\Sigma}_k$ approaches singularity.  Therefore, a heuristic strategy  to explore in the future is to increase $k$ (via knots or degree) until numerical instability occurs. Developing such data-driven rules for selecting $k$ and constructing $\sfzbar_k$ remains an important direction. 

Sensitivity to $\bbeta_{\ini}$ is examined similarly to the simulation studies. We find that the higher-order estimator is generally robust to small perturbations of $\hat{\bbeta}_{\ini}$, as shown in Table~\ref{tab:betainit_sensitivity} in Supplementary Material Section~\ref{app:real}.

\section{Concluding Remarks}
\label{sec:conclusion}

In this paper, we have constructed higher-order estimators for implicitly defined parameters frequently encountered in econometrics, including average treatment effects and quantile treatment effects as special cases. By applying the HOIFs framework to implicitly defined parameters,  we have contributed to the existing econometrics and causal inference literature. Our proposed estimators achieve improved convergence rates compared to first-order methods, which requires only that the initial estimator be consistent rather than imposing stringent rate conditions. For binary-treatment QTE, our estimator attains $\sqrt{n}$-consistency under the Hölder smoothness condition $s/d > 1/4$, substantially relaxing the requirements of existing approaches and matching the minimal Hölder smoothness condition for the ATE to be $\sqrt{n}$-estimable. We have also extended the framework to infinite-dimensional GTMs, providing a unified approach for estimating nonparametric causal functions such as dose–response curves.

Several directions remain for future research. First, while we focused on second-order estimators assuming that $\Sigma_k$ can be estimated sufficiently accurately, relaxing this assumption would require further advances in the theory of higher-order $U$-processes. Second, extending our framework to other complexity-reducing assumptions, such as sparsity classes \citep{athey2018approximate}, remains an open question. Third, with slight modifications, our framework is also expected to be able to address endogeneity problems (including nonparametric and many weak IV models) \citep{breunig2024adaptive}. Finally, it would be interesting to compare our higher-order debiased estimator with other related bias-correction or bias-aware methods \citep{armstrong2020bias, zheng2025perturbed, bonhomme2026higher}.

\putbib[\myreferences]
\end{bibunit}

\newpage
\setcounter{page}{1}
\begin{center}
{\LARGE Supplementary Materials of ``Higher-Order Debiased Estimators for General Treatment Models''}
\end{center}

	\vspace{.4cm} 
	\centerline{Yulin Zhang\textsuperscript{1}, Lin Liu\textsuperscript{2}, Zheng Zhang\textsuperscript{1}} 
	\vspace{.4cm} 
	\begin{center}
  \textsuperscript{1}Institute of Statistics \& Big Data, Renmin University of China; 
  \textsuperscript{2}Institute of Natural Sciences, MOE--LSC, School of Mathematical Sciences, CMA--Shanghai, SJTU--Yale Joint Center for Biostatistics and Data Science, Shanghai Jiao Tong University
\end{center}
	\vspace{.55cm}
    
\appendix

\begin{bibunit}[plainnat]

\allowdisplaybreaks

\begin{appendices}

\setcounter{equation}{0}
\def\theequation{S\arabic{section}.\arabic{equation}}
\def\thesection{S\arabic{section}}

\allowdisplaybreaks

\section{Comparisons of General Treatment Models and Marginal Structural Models}
\label{app:msm}

In this section, we further discuss the similarities and differences between the \emph{General Treatment Models (GTM)} \citep{ai2021unified} and the \emph{Marginal Structural Models (MSM)} (See Example~\ref{msm}), a well-known class of semiparametric causal models developed by Robins and colleagues \citep{hernan2000marginal}. MSM was developed, in part, to overcome the curse-of-dimensionality when drawing causal conclusions from interventional or observational studies when the treatment regimes under study are complex. In its simplest form, MSM directly models the counterfactual mean $\bbE Y (t)$ as a parameterized function $g (t; \bbeta^*)$ of $t \in \calT$ with $\bbeta^* \coloneqq (\beta_{0}^*, \beta_{1}^*)^{\top}$. 

When $\calT = \{0, 1\}$, $g$ can only be a linear model, and this linear model is saturated/nonparametric: MSM \eqref{main:msm} coincides with GTM in Example~\ref{ate}. For example, one can posit the following nonparametric (or saturated) linear MSM:
\begin{equation}
\label{linear}
\bbE Y (t) = g (t; \bbeta^*) = \beta_{0}^* + \beta_{1}^* t.
\end{equation}
This model is nonparametric because the model imposed on $\bbE Y (t)$ does not induce any restrictions on the observed data distribution. 

When $t$ is a continuous variable, however, finite-dimensional parametric MSM is often specified. For example, the statistician could insist on modeling $\bbE Y (t)$ by linear model \eqref{linear} and believe that the model is the truth. Under unconfoundedness, linear MSM \eqref{linear} imposes the following restriction on the observed data distribution:
\begin{equation}
\label{linear MSM}
\bbE Y (t) = \bbE [\bbE (Y | X, T = t)] =  \bbE [\xi (X, T) Y |  T = t]  = \beta_{0}^* + \beta_{1}^* t \equiv g (t; \bbeta^{\ast}),
\end{equation}
where $\xi (x, t) = \p_{T}(x)/\p_{T|X}(t|x)$.
Under unconfoundedness, Model \eqref{linear MSM} imposes the following moment/equality constraint on the data generating process:
\begin{equation}
\bbE \left[ \xi (X, T) \left( Y - g (T; \bbeta^{\ast}) \right) \omega (T) \right] \equiv 0, \, \forall \, \omega: \calT \rightarrow \bbR,
\end{equation}
so induces a conditional moment constraint in the sense of \citet{ai2003efficient}, leading to a non-trivial orthocomplement to the model tangent space. We also refer readers to Section~3 of \citet{chen2025local} for a related discussion. It should be noted that one can also parameterize the MSM using nonparametric models when $t$ is continuous, as in \citet{bonvini2022fast} or as we have done in Section~\ref{sec:dose}.

\section{Verification of Assumption~\ref{as:criterion}  for ATE and QTE}\label{app:Verification}
In this section, we demonstrate that Assumptions~\ref{as:criterion}(i) and \ref{as:criterion}(iii) hold for both ATE and QTE from Examples~\ref{ate} and~\ref{qte}. 
We first recall the following definition from the empirical process theory \citep{van2023weak}.
\begin{definition}[VC-subgraph class]
\label{def:VC-subgraph}
A collection $\calF$ of measurable functions is called a VC-subgraph class, or a VC-class, if the collection of all subgraphs of the functions (see Section 2.6.2 of \citet{van2023weak}) in $\calF$ forms a VC-class of sets (see Section 2.6.1 of \citet{van2023weak}).
\end{definition}

It suffices to verify  Assumption~\ref{as:criterion} for the first coordinate of $\Gamma (\cdot, \cdot, \bbeta)$, which corresponds to the estimating equation for $\beta_0^*$ when $T=0$. We first verify  Assumption~\ref{as:criterion}(i):

\begin{itemize}
    \item ATE in Example~\ref{ate}. We need only to consider the function class  $\calF_{\mathsf{ATE} } \coloneqq \{  Y - \beta_0: \beta_0 \in \calB_0\}$, where $\calB_0$ is the parameter space for $\beta_0^*$. By Lemma~2.6.17 in \cite{van2023weak}, $\calF_{\mathsf{ATE} }$ is of VC-subgraph with index $1$. Theorem 2.6.7 of \citet{van2023weak} further implies that, for any probability measure $Q$,  
      $$   \sfN \left( \calF_{\mathsf{ATE} }, \| \cdot \|_{Q,2}, \epsilon \| \F_{\mathsf{ATE} } \|_{Q,2} \right) \le
        C\,(16e)\,\epsilon^{-2},\quad 0<\epsilon<1,$$
        where  $\F_{\mathsf{ATE} } = \sup_{\beta_0 \in \calB_0}|y - \beta_0|$  is the envelope and $C$ is a universal constant.  Hence, by  Definition~\ref{def:VC-type},
       $\calF_{\mathsf{ATE} }$ is of VC-type with characteristics $(4\sqrt{Ce},2\bigr)$.  

    \item QTE in Example~\ref{qte}. In this case, the relevant function class $ \calF_{\mathsf{QTE} } \coloneqq \{\tau - \mathbbm{1} (Y \leq \beta_0) : \beta_0 \in \calB_{0}\}$. Denote $\rho'(u)\coloneqq \tau -\mathbbm1(u\le0)$, for $u\in\bbR$. Then $\rho'(\cdot)$ is monotone and $\calF_{\mathsf{QTE} } = \rho' \circ \calF_{\mathsf{ATE} }\coloneqq \{\rho'(f): f\in\calF_{\mathsf{ATE}} \}$. Lemma 9.9 (viii) of \citet{kosorok2008introduction}  implies that applying a monotone function to a VC-subgraph class does not increase the VC index. Therefore, $\calF_{\mathsf{QTE} }$ is also of VC-subgraph with index $1$, and thus of VC-type with characteristics $(4\sqrt{Ce}, 2\bigr)$.
\end{itemize}

For verification of Assumption~\ref{as:criterion}(iii),  denote the first coordinate of $\Gamma (\cdot, \cdot, \bbeta)$ as $\Gamma^{(0)} (\cdot, \cdot, \beta_0)$.
\begin{itemize}
    \item ATE in Example~\ref{ate}. The difference is   $\Gamma^{(0)} (Y, T, \beta_0) - \Gamma^{(0)} (Y, T, \beta_0') = (\beta_0 - \beta_0')(  1-T )$, which is linear in $\beta_0 - \beta_0'$. Thus, $\Gamma^{(0)} (\cdot, \cdot, \beta_0)$ meets this condition with index $\alpha_{0} = 1$. 

    \item QTE in  Example~\ref{qte}. We have $\Gamma^{(0)} (Y, T, \beta_0) - \Gamma^{(0)} (Y, T, \beta_0')  =[ \mathbbm{1} ( Y \leq \beta_0') - \mathbbm{1} ( Y \leq \beta_0) ](1-T)$. Then, 
\begin{align*}
    & \ \bbE \Big[\sup_{ |\beta_{0}-\beta_{0}'| \lesssim \delta} \big| \Gamma^{(0)} (Y, T, \beta_0) - \Gamma^{(0)} (Y, T, \beta_0') \big|^2\Big] \\
    = & \ \bbE \Big[ (1-T)^2\cdot\sup_{|\beta_{0}-\beta_{0}' |\lesssim \delta} \mathbbm{1}(\beta_{0} \wedge \beta_{0}' <Y \leq \beta_{0} \vee \beta_{0}' ) \Big] \\
    = &\ \bbE \Big[ \sup_{|\beta_{0}-\beta_{0}' |\lesssim \delta} \mathbbm{1}(\beta_{0} \wedge \beta_{0}' <Y \leq \beta_{0} \vee \beta_{0}' ) \Big] \cdot \bbP(T=0)\\
    = & \  \bbE \Big[ \sup_{|\beta_{0}-\beta_{0}' |\lesssim \delta} \left|\P_{Y}(\beta_{0} ) - \P_{Y}(\beta_{0}' )\right| \Big] \cdot \bbP(T=0)\\
    \lesssim& \  \delta,
\end{align*}
where $\P_{Y} (\beta) \coloneqq \P(Y\le \beta)$ is the marginal distribution of $Y$ and the last inequality holds if $\P_{Y} (\cdot)$ is Lipschitz continuous. Thus, in this case,  $\Gamma^{(0)} (\cdot, \cdot, \beta_0)$ satisfies this condition with index $\alpha_{0} = 1/2$.
\end{itemize}

\section{Higher-Order Estimators for Parameters of GTMs}
\label{app:truncation}
\setcounter{equation}{0}

As mentioned in Remark~\ref{rem:unknown Sigma}, for the sake of completeness, the following lemma characterizes the HOIFs of $\psi (\bbeta)$ (or more precisely, of $\bar{\psi}_{k}(\bbeta)$) in the sense of \citet{robins2016technical}, which paves the way towards further reducing the bias due to estimating $\Sigma_{k}$ by $\hat{\Sigma}_{k}$. However, deriving the statistical properties of these higher-order estimators is beyond the scope of this paper, as the proof will eventually need to handle $U$-processes of order growing at a rate poly-log in sample size $n$.

\begin{lemma}
\label{lem:HOIF}
For any $\bbeta \in \calB$, under Assumption \ref{as:identification}, the $m^{th}$-order influence functions of $\mathrm{B}_{\psi, k}$ for $m \geq 2$ are of the following forms:
\begin{align*}
& \IF_{\mathrm{B}_{\psi, k}}^{(2)} (\bbeta)= - (\hat{\xi} (X_{1}, T_{1}) - \xi (X_{1}, T_{1})) \bar{\phi}_{k} (X_{1}, T_{1})^{\top} \Sigma_{k}^{-1} \bar{\phi}_{k} (X_{2}, T_{2}) (\Gamma (Y_{2}, T_{2}, \bbeta) - \hat{b}_{\bbeta_{\mathsf{init}}} (X_{2}, T_{2})), \\
& \IF_{\mathrm{B}_{\psi, k}}^{(m)} (\bbeta)- \IF_{\mathrm{B}_{\psi, k}}^{( m - 1)} (\bbeta)= \\
& + (-1)^{m + 1} (\hat{\xi} (X_{1}, T_{1}) - \xi (X_{1}, T_{1})) \bar{\phi}_{k} (X_{1}, T_{1})^{\top} \Sigma_{k}^{-1} \prod_{s = 3}^{m} \left\{ (\bar{\phi} (X_{s}, T_{s}) \bar{\phi} (X_{s}, T_{s})^{\top} - \Sigma_{k}) \Sigma_{k}^{-1} \right\} \\
& \times \bar{\phi}_{k} (X_{2}, T_{2}) (\Gamma (Y_{2}, T_{2}, \bbeta) - \hat{b}_{\bbeta_{\mathsf{init}}} (X_{2}, T_{2})).
\end{align*}
Then the $m^{th}$-order influence functions of $\bar{\psi}_{k} (\bbeta)$ for $m \geq 2$ simply take the forms
\begin{align*}
\IF_{\bar{\psi}_{k}}^{(m)} (\bbeta)= \IF_{\psi}^{(1)} (\bbeta) + \IF_{\mathrm{B}_{\psi, k}}^{(m)} (\bbeta).
\end{align*}
\end{lemma}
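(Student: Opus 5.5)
We will prove Lemma~\ref{lem:HOIF} by the standard recursive characterization of higher-order influence functions in \citet{robins2016technical}. In that framework, a kernel $\IF^{(m)}$ is an $m$-th order influence function of a functional $\chi$ if, along every smooth submodel $\P_\epsilon$, the derivatives $\partial_\epsilon^{j}\chi(\P_\epsilon)|_{\epsilon=0}$ for $j\le m$ coincide with those of $\bbE_{\P_\epsilon}[\IF^{(m)}]$. Equivalently, each increment $\IF^{(m)}-\IF^{(m-1)}$ is a degenerate $U$-statistic kernel whose expectation reproduces the $m$-th order term of the von Mises expansion of $\chi$.

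Throughout, we treat $\hat\xi$, $\hat b_{\bbeta_{\ini}}$ and $\bar\phi_k$ as fixed, since they are computed on the nuisance sample.

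\textbf{Step 1: rewrite $\mathrm{B}_{\psi,k}$.} Starting from \eqref{EB}, write
\[
\mathrm{B}_{\psi,k}(\bbeta)=\bar a^{\top}\Sigma_k^{-1}\bar c,
\qquad
\bar a=\bbE[(\hat\xi-\xi)\bar\phi_k],
\qquad
\bar c=\bbE[\bar\phi_k\{\Gamma(Y,T,\bbeta)-\hat b_{\bbeta_{\ini}}\}].
\]
Here $\bar c$ is a linear functional of $\P$ with first-order influence function $\bar\phi_k(O)\{\Gamma-\hat b_{\bbeta_{\ini}}\}-\bar c$. The matrix $\Sigma_k$ is also linear in $\P$, with influence function $\bar\phi_k\bar\phi_k^{\top}-\Sigma_k$.

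The term $\bar a$ depends on $\P$ through $\xi$, but the identity $\bbE[\xi\bar\phi_k]=\iint\bar\phi_k\,\p_T\,\p_X$ from Section~\ref{sec:intuition} shows that $\bar a=\bbE[\hat\xi\bar\phi_k]-\iint\bar\phi_k\,\p_T\,\p_X$. Its influence function is therefore explicit, and it is exactly the kernel that the $\mathbb{U}_{n,3}$ construction of $\tilde{\mathrm{B}}_{\psi,k}$ symmetrizes.

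\textbf{Step 2: expand $\Sigma_{k,\epsilon}^{-1}$ and read off the kernels.} Along a submodel we use the Neumann expansion
\[
\Sigma_{k,\epsilon}^{-1}=\Sigma_k^{-1}\sum_{j\ge0}\{-(\Sigma_{k,\epsilon}-\Sigma_k)\Sigma_k^{-1}\}^{j}.
\]
We substitute this into $\bar a_\epsilon^{\top}\Sigma_{k,\epsilon}^{-1}\bar c_\epsilon$. Each factor $\Sigma_{k,\epsilon}-\Sigma_k$ is represented as the expectation under $\P_\epsilon$ of the mean-zero kernel $\bar\phi_k(O_s)\bar\phi_k(O_s)^{\top}-\Sigma_k$ evaluated at a fresh observation $O_s$.

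The product of the influence functions of $\bar a$ and $\bar c$, each evaluated at a different observation, produces the second-order term. The paper's convention folds the centering pieces (such as $-\bar c$ and the product-marginal part of $\bar a$) into the lower-order and first-order terms. What remains is the displayed $\IF^{(2)}_{\mathrm{B}_{\psi,k}}$, with the minus sign coming from $\mathrm{B}_{\psi,k}$ carrying $b_{\bbeta}-\hat b$ while the kernel carries $\Gamma-\hat b$.

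The $j$-th term of the Neumann series contributes $j$ additional fresh observations $O_3,\dots,O_{j+2}$ and the sign $(-1)^j$. This gives precisely the displayed increment at order $m=j+2$, with sign $(-1)^{m+1}$ after combining with the leading minus. Degeneracy of the increment follows because each factor $\bar\phi_k\bar\phi_k^{\top}-\Sigma_k$ has mean zero.

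To verify the defining property, we would check, as in \citet{robins2016technical}, that $\mathrm{B}_{\psi,k}(\P_\epsilon)-\bbE_{\epsilon}[\IF^{(m)}]$ is $O(\epsilon^{m+1})$. This holds because the truncated Neumann remainder contains $m-1$ factors of $\Sigma_{k,\epsilon}-\Sigma_k=O(\epsilon)$ multiplied by the $O(\epsilon)$ differences in $\bar a$ and $\bar c$.

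\textbf{Step 3: combine with the first-order part.} Since $\bar\psi_k=\psi-\mathrm{B}_{\psi,k}$ up to the truncation bias, and the truncation bias is not estimable, the statement for $\bar\psi_k$ follows. By linearity of (higher-order) influence functions we add $\IF^{(1)}_\psi$ from Proposition~\ref{thm:eif} to $\IF^{(m)}_{\mathrm{B}_{\psi,k}}$. Here $\IF^{(1)}_\psi$ is regarded as an $m$-th order kernel depending on $O_1$ only.

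\textbf{Main obstacle.} The delicate point is the bookkeeping of $\xi$-dependence inside $\bar a$. Perturbing $\P$ also perturbs $\xi=\p_T/\p_{T\mid X}$, and we must confirm that this yields no additional higher-order terms beyond those displayed. We will handle this with the product-marginal identity: because $\bar a$ is then a polynomial of degree at most two in $\P$, any leftover terms are exactly the $T_3$-swap contributions that appear in $\tilde{\mathrm{B}}_{\psi,k}$, and they are absorbed by the ``$-\xi$'' notation in the lemma.
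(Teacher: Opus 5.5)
\emph{Main gap: the remainder bound in your verification step.} Your Neumann-series route differs from the paper's, which computes first-order IFs of the coefficient functionals $\alpha_k,\eta_{\bbeta,k}$ via Lemma~A.2 of \citet{liu2024assumption} and then applies their HOIF calculus. As written, your verification step fails.

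Put $A(O)=(\hat{\xi}-\xi)\bar{\phi}_k$ and $C(O)=\bar{\phi}_k(\Gamma-\hat{b}_{\bbeta_{\ini}})$, so that $\bar a_\epsilon=\bbE_\epsilon[A]$ and $\bar c_\epsilon=\bbE_\epsilon[C]$. Expanding along a path through the true law $\P$ gives exactly
\begin{equation*}
\mathrm{B}_{\psi,k}(\P_\epsilon)-\sum_{j=0}^{m-2}(-1)^{j}\,\bar a_\epsilon^{\top}\Sigma_k^{-1}\{(\Sigma_{k,\epsilon}-\Sigma_k)\Sigma_k^{-1}\}^{j}\bar c_\epsilon
=\bar a_\epsilon^{\top}\Sigma_k^{-1}\sum_{j\geq m-1}\{-(\Sigma_{k,\epsilon}-\Sigma_k)\Sigma_k^{-1}\}^{j}\bar c_\epsilon .
\end{equation*}
This contains $\bar a_\epsilon$ and $\bar c_\epsilon$ themselves, not their increments. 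At $\epsilon=0$ they do not vanish, since $\bar a^{\top}\Sigma_k^{-1}\bar c=\mathrm{B}_{\psi,k}$ is exactly the bias being corrected. So the remainder is $O(\epsilon^{m-1})$, not $O(\epsilon^{m+1})$. Already for $m=2$ the displayed kernel misses the first pathwise derivative coming from $\Sigma_k$.

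Equivalently, the true HOIFs of $\mathrm{B}_{\psi,k}$ at $\P$ contain further kernels. Examples are $A(O_1)^{\top}\Sigma_k^{-1}\{\bar{\phi}_k\bar{\phi}_k^{\top}(O_2)-\Sigma_k\}\Sigma_k^{-1}\bar c$ and $\bar a^{\top}\Sigma_k^{-1}\{\bar{\phi}_k\bar{\phi}_k^{\top}(O_1)-\Sigma_k\}\Sigma_k^{-1}\{\bar{\phi}_k\bar{\phi}_k^{\top}(O_2)-\Sigma_k\}\Sigma_k^{-1}\bar c$. These are not ``centering pieces.'' Your $m\geq 3$ increments are also not degenerate: conditioning on $O_3$ alone leaves $\bar a^{\top}\Sigma_k^{-1}\{\bar{\phi}_k\bar{\phi}_k^{\top}(O_3)-\Sigma_k\}\Sigma_k^{-1}\bar c$.

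What is missing is the device behind the calculus rule the paper cites. There the HOIFs are evaluated at the initial estimate, where the residual functionals ($\alpha_k,\eta_{\bbeta,k}$, i.e.\ your $\bar a,\bar c$) are taken to vanish. Every term carrying them as coefficients then drops, and only the diagonal kernels of the lemma survive. In your language, the usable expansion is around the initial Gram matrix, $\Sigma_k^{-1}=\hat{\Sigma}_k^{-1}\sum_{j\geq0}\{-(\Sigma_k-\hat{\Sigma}_k)\hat{\Sigma}_k^{-1}\}^{j}$, with $\hat{\Sigma}_k$ in the kernels. The truncation error is then of order $\|\bar a\|\,\|\bar c\|\,\|\hat{\Sigma}_k-\Sigma_k\|_{\op}^{m-1}$, which is exactly Lemma~\ref{lem:bias}. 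The smallness comes from all three factors, not from $\epsilon$ at a base point where $\bar a,\bar c\neq 0$.

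\emph{Sign, the Step~3 identity, and $\xi$-dependence.} The leading minus sign does not come from replacing $b_{\bbeta}$ by $\Gamma$. By iterated expectations, $\bbE[\bar{\phi}_k(\Gamma-\hat{b}_{\bbeta_{\ini}})]=\bbE[\bar{\phi}_k(b_{\bbeta}-\hat{b}_{\bbeta_{\ini}})]$, and expanding $\mathrm{B}_{\psi,k}$ gives $+A(O_1)^{\top}\Sigma_k^{-1}C(O_2)$. The sign appears because the lemma's kernels represent $-\mathrm{B}_{\psi,k}$, through the exact relation $\bar{\psi}_k=\bbE[\hat{\psi}^{(1)}]-\mathrm{B}_{\psi,k}$ (cf.\ $-\hat{\mathrm{B}}_{\psi,k}\equiv\bbU_{n,3}[\hat{\IF}^{(2)}_{\mathrm{B}_{\psi,k}}]$).

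Your Step~3 identity ``$\bar{\psi}_k=\psi-\mathrm{B}_{\psi,k}$ up to the truncation bias'' is false: $\bar{\psi}_k$ differs from $\psi$ by the truncation bias alone. Dropping a term because it is not estimable is also not a valid move when computing influence functions; the linearity step must use the exact decomposition above.

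Finally, for the $\xi$-dependence you must do one of two things. Either declare $\xi$ held fixed, as the paper implicitly does when it writes $\IF^{(1)}_{\alpha_k}=\Sigma_k^{-1}(\bar{\xi}_k-\xi)\bar{\phi}_k$. Or carry through the second-order IF of the quadratic functional $\iint\bar{\phi}_k\,\p_T\,\p_X$. Saying it is ``absorbed by the notation'' is not an argument.
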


Since \eqref{ident} is only used for defining the treatment effect parameter $\bbeta^{\ast}$ and there is no restriction on the observed data tangent space, the derivations of HOIFs of $\bar{\psi}_{k} (\bbeta) $ will follow the strategy presented in \citet{liu2024assumption} and \citet{robins2016technical}. When referring to \citet{liu2024assumption} in the proof, we always mean the \href{https://arxiv.org/pdf/2306.10590}{arXiv version} of \citet{liu2024assumption}.

\begin{proof}[Proof of Lemma \ref{lem:HOIF}]
We first derive the HOIFs of $\mathrm{B}_{\psi, k}(\bbeta)$ and $\bar{\psi}_{k}(\bbeta)$. To ease presentation, we introduce some additional notation useful locally in the derivation.
\begin{align*}
& \alpha_{k} \coloneqq \Sigma_{k}^{-1} \E [(\hat{\xi} (X, T) - \xi (X, T)) \bar{\phi}_{k} (X, T)], \eta_{\bbeta, k} \coloneqq \Sigma_{k}^{-1} \E [(\Gamma (Y, T, \bbeta) - \hat{b}_{\bbeta_{\mathsf{init}}} (X, T)) \bar{\phi}_{k} (X, T)], \\
& \bar{\xi}_{k} (\cdot, \cdot) \coloneqq \hat{\xi} (\cdot, \cdot) - \alpha_{k}^{\top} \bar{\phi}_{k} (\cdot, \cdot), \bar{b}_{\bbeta, k} (\cdot, \cdot) \coloneqq \hat{b}_{\bbeta_{\mathsf{init}}} (\cdot, \cdot) + \eta_{\bbeta, k}^{\top} \bar{\phi}_{k} (\cdot, \cdot).
\end{align*}
So $\mathrm{B}_{\psi, k} (\bbeta)\equiv \alpha_{k}^{\top} \Sigma_{k} \eta_{\bbeta, k}$ and $\bar{\psi}_{k} (\bbeta)= \hat{\psi}^{(1)} (\bbeta) - \alpha_{k}^{\top} \Sigma_{k} \eta_{\bbeta, k}$. By Lemma A.2 of \citet{liu2024assumption}, we have
\begin{align*}
& \IF_{\alpha_{k}}^{(1)} = \Sigma_{k}^{-1} \left( \bar{\xi}_{k} (X, T) - \xi (X, T) \right) \bar{\phi}_{k} (X, T), \\
& \IF_{\eta_{\bbeta, k}}^{(1)} = \Sigma_{k}^{-1} \left( \Gamma (Y, T, \bbeta) - \bar{b}_{\bbeta, k} (X, T) \right) \bar{\phi}_{k} (X, T).
\end{align*}

It is a standard calculation to show that the influence function of $\mathrm{B}_{\psi, k} (\bbeta)$ is:
\begin{align*}
& \ \IF_{\mathrm{B}_{\psi, k}}^{(1)} (\bbeta)\\
= & \ (\hat{\xi} (X, T) - \xi (X, T)) \bar{\phi}_{k} (X, T)^{\top} \Sigma_{k}^{-1} \eta_{\bbeta, k} + \alpha_{k}^{\top} \Sigma_{k}^{-1} \bar{\phi}_{k} (X, T) (\Gamma (Y, T, \bbeta) - \hat{b}_{\bbeta_\mathsf{init}} (X, T)) \\
& - \alpha_{k}^{\top} \Sigma_{k}^{-1} \bar{\phi} (X, T) \bar{\phi} (X, T)^{\top} \Sigma_{k}^{-1} \eta_{\bbeta, k} - \alpha_{k}^{\top} \Sigma_{k}^{-1} \eta_{\bbeta, k}.
\end{align*}
By applying the same calculus rule from page 50 to page 51 of \citet{liu2024assumption}, we obtain the second-order influence function of $\mathrm{B}_{\psi, k}$ takes the form
\begin{align*}
\IF_{\mathrm{B}_{\psi, k}}^{(2)} (\bbeta) = - (\hat{\xi} (X_{1}, T_{1}) - \xi (X_{1}, T_{1})) \bar{\phi}_{k} (X_{1}, T_{1})^{\top} \Sigma_{k}^{-1} \bar{\phi}_{k} (X_{2}, T_{2}) (\Gamma (Y_{2}, T_{2}, \bbeta) - \hat{b}_{\bbeta_{\mathsf{init}}} (X_{2}, T_{2})).
\end{align*}
The HOIFs of $\mathrm{B}_{\psi, k} (\bbeta) $ then read as follows: for $j \geq 3$
\begin{align*}
& \ \IF_{\mathrm{B}_{\psi, k}}^{(j)} (\bbeta)- \IF_{\mathrm{B}_{\psi, k}}^{(j-1)} (\bbeta)\\
= & \ (-1)^{j - 1} (\hat{\xi} (X_{1}, T_{1}) - \xi (X_{1}, T_{1})) \bar{\phi}_{k} (X_{1}, T_{1})^{\top} \Sigma_{k}^{-1} \prod_{s = 3}^{j} \left\{ \left( \bar{\phi}_{k} (X_{s}, T_{s}) \bar{\phi}_{k} (X_{s}, T_{s})^{\top} \Sigma_{k} \right) \Sigma_{k}^{-1} \right\} \\
& \times \bar{\phi}_{k} (X_{2}, T_{2}) (\Gamma (Y_{2}, T_{2}, \bbeta) - \hat{b}_{\bbeta_{\mathsf{init}}} (X_{2}, T_{2})).
\end{align*}
The HOIFs of $\bar{\psi}_{k} (\bbeta)$ follows from the HOIFs of $\mathrm{B}_{\psi, k} (\bbeta)$ and the linearity of influence functions.
\end{proof}

Armed with these results, we can construct higher-order estimators of $\psi (\bbeta)$ and of $\bbeta$ itself as follows. Similar to the first-order estimator of $\psi (\bbeta)$,  which is a second-order $U$-statistic, the $m$-th order estimator of $\psi (\bbeta)$ is given by the following $(m + 1)$-th order $U$-statistic:
\begin{align*}
\hat{\psi}_{k}^{(m)} (\bbeta) \coloneqq \hat{\psi}^{(1)}(\bbeta) + \bbU_{n, m + 1} [\hat{\IF}_{\mathrm{B}_{\psi, k}}^{(m)}]
\end{align*}
where $\hat{\IF}_{\mathrm{B}_{\psi, k}}^{(m)}$ is defined in the same way as $\IF_{\mathrm{B}_{\psi, k}}^{(m)}$, except that (i) $\xi (X_{1}, T_{1}) \bar{\phi}_{k} (X_{1}, T_{1})$ is replaced by $\bar{\phi}_{k} (X_{1}, T_{3})$,  (ii) $\prod\limits_{s = 3}^{m}$ becomes $\prod\limits_{s = 4}^{m + 1}$ and (iii) $\Sigma_k$ is replaced by $\hat{\Sigma}_k$.  It is worth noting that $- \hat{\mathrm{B}}_{\psi, k} (\bbeta) \equiv \bbU_{n, 3} [\hat{\IF}_{\mathrm{B}_{\psi, k}}^{(2)} (\bbeta)]$. $\hat{\bbeta}^{(m)}$ is then defined as the solution to the system of $U$-statistic equations
\begin{align*}
\hat{\psi}_{k}^{(m)} (\bbeta) \equiv 0.
\end{align*}

Next, the following result bounds the bias and variance of $\hat{\psi}_{k}^{(m)} (\bbeta)$ as an estimator of the truncated parameter $\bar{\psi}_{k} (\bbeta)$, indicating the potential benefit of using a higher-order estimator to debias the estimation error due to estimating $\Sigma_{k}$. The proof follows immediately from \citet{liu2017semiparametric}, hence omitted.
\begin{lemma}
\label{lem:bias}
For any $\bbeta \in \calB$ and fixed positive integer $m \geq 2$, we have
\begin{equation}
\label{higher-order bias}
\left\Vert \E \left[ \hat{\psi}_{k}^{(m)} (\bbeta) - \bar{\psi}_{k} (\bbeta) \right] \right\Vert \lesssim \Vert \hat{\xi} - \xi \Vert_{\P,2} \cdot \Vert \hat{b}_{\bbeta_{\mathsf{init}}} - b_{\bbeta} \Vert_{\P,2} \cdot \Vert \hat{\Sigma}_{k} - \Sigma_{k} \Vert_{\op}^{m - 1}.
\end{equation}
As a consequence of Assumption \ref{as:nuisance_est}(ii), we in turn have
\begin{equation}
\label{reduced higher-order bias}
\left\Vert \E \left[ \hat{\psi}_{k}^{(m)} (\bbeta) - \bar{\psi}_{k} (\bbeta)\right] \right\Vert \lesssim \Vert \hat{\xi} - \xi \Vert_{\P,2} \cdot \left( \Vert \bbeta_{\mathsf{init}} - \bbeta \Vert + \Vert \hat{b}_{\bbeta_{\mathsf{init}} } - b_{\bbeta_{\mathsf{init}} } \Vert_{\P,2} \right) \cdot \Vert \hat{\Sigma}_{k} - \Sigma_{k} \Vert_{\op}^{m - 1}.
\end{equation}
If we further impose the same conditions as in Theorem \ref{th:consistency} and additionally assume (i) $m \lesssim \sqrt{\log n}$ and $k \lesssim n / \log^{3} (n)$ and (ii) $\Vert \hat{\Sigma}_{k} - \Sigma_{k} \Vert_{\op} = o_{\P} (1)$, we also have
\begin{equation}
\label{higher-order variance}
\begin{split}
\cov [\hat{\psi}_{k}^{(m)} (\bbeta)] \preceq \frac{C}{n} \left( \begin{array}{c}
\dfrac{k}{n} + \left\{ \Vert \Pi [\hat{b}_{\bbeta_{\ini}} - b_{\bbeta} | \bar{\phi}_{k}] \Vert_{\P,2}^{2} + \Vert \Pi [\hat{\xi} - \xi | \bar{\phi}_{k}] \Vert_{\P,2}^{2} \right\} \\
+ \, \underset{(p, q): p, q \geq 1, \frac{1}{p} + \frac{1}{q} = 1}{\min} \Vert \Pi [\hat{b}_{\bbeta_{\ini}} - b_{\bbeta} | \bar{\phi}_{k}] \Vert_{\P,p}^{2} \cdot \Vert \Pi [\hat{b}_{\bbeta_{\ini}} - b_{\bbeta} | \bar{\phi}_{k}] \Vert_{\P,q}^{2}
\end{array} \right),
\end{split}
\end{equation}
for some absolute constant $C > 0$. 
\end{lemma}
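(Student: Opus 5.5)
The bias bound \eqref{higher-order bias} comes from an exact telescoping identity, following \citet{liu2017semiparametric}; \eqref{reduced higher-order bias} follows from it by one triangle inequality; and \eqref{higher-order variance} comes from a Hoeffding decomposition of the $(m+1)$-th order $U$-statistic. For the bias, I work conditionally on the nuisance sample, so $\hat{\xi}$, $\hat{b}_{\bbeta_{\ini}}$ and $\hat{\Sigma}_k$ are fixed. Distinct indices are independent, and the swap $\xi(X_1,T_1)\bar{\phi}_k(X_1,T_1)\mapsto\bar{\phi}_k(X_1,T_3)$ has the same mean (the product-measure identity of Section~\ref{sec:intuition}). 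Hence
\begin{align*}
\E\big[\bbU_{n,m+1}\hat{\IF}^{(m)}_{\mathrm{B}_{\psi,k}}\big] = -\,\alpha^{\top}\Sigma_k\,\hat{\Sigma}_k^{-1}\sum_{j=0}^{m-2}\big\{(\hat{\Sigma}_k-\Sigma_k)\hat{\Sigma}_k^{-1}\big\}^{j}\,\Sigma_k\,\eta,
\end{align*}
where $\alpha=\Sigma_k^{-1}\E[(\hat{\xi}-\xi)\bar{\phi}_k]$ and $\eta=\Sigma_k^{-1}\E[(\Gamma-\hat{b}_{\bbeta_{\ini}})\bar{\phi}_k]$, as in the proof of Lemma~\ref{lem:HOIF}. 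Here I use that each middle factor $\bar{\phi}_k\bar{\phi}_k^{\top}-\hat{\Sigma}_k$ has mean $\Sigma_k-\hat{\Sigma}_k$; the sign conventions in Lemma~\ref{lem:HOIF} turn the alternating signs into powers of $(\hat{\Sigma}_k-\Sigma_k)$.

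Next I apply the Neumann identity
\begin{align*}
\hat{\Sigma}_k^{-1}\sum_{j=0}^{m-2}\{(\hat{\Sigma}_k-\Sigma_k)\hat{\Sigma}_k^{-1}\}^{j} = \Sigma_k^{-1} - \Sigma_k^{-1}\{(\hat{\Sigma}_k-\Sigma_k)\hat{\Sigma}_k^{-1}\}^{m-1}.
\end{align*}
Since $\E\hat{\psi}^{(1)}(\bbeta)-\alpha^{\top}\Sigma_k\eta=\bar{\psi}_k(\bbeta)$, the mean of $\hat{\psi}^{(m)}_k(\bbeta)$ differs from $\bar{\psi}_k(\bbeta)$ only by $\alpha^{\top}\{(\hat{\Sigma}_k-\Sigma_k)\hat{\Sigma}_k^{-1}\}^{m-1}\Sigma_k\eta$. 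I then bound this remainder with Cauchy--Schwarz in the $\Sigma_k$-geometry. Because $\|\Sigma_k^{1/2}\alpha\|=\|\Pi(\hat{\xi}-\xi\mid\bar{\phi}_k)\|_{\P,2}\le\|\hat{\xi}-\xi\|_{\P,2}$, and similarly for $\eta$, and because the eigenvalues of $\Sigma_k$ and $\hat{\Sigma}_k$ are bounded above and below (Assumption~\ref{as:basis} together with the density bounds), I obtain \eqref{higher-order bias} with the residual $\hat{b}_{\bbeta_{\ini}}-b_{\bbeta}$. Here $\eta$ involves $b_{\bbeta}$ rather than $\Gamma$ after iterated expectations. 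For \eqref{reduced higher-order bias} I split $\hat{b}_{\bbeta_{\ini}}-b_{\bbeta}=(\hat{b}_{\bbeta_{\ini}}-b_{\bbeta_{\ini}})+(b_{\bbeta_{\ini}}-b_{\bbeta})$ and apply Assumption~\ref{as:nuisance_est}(ii) to the second piece.

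For the variance, I write $\hat{\psi}^{(m)}_k$ as the sum of $\hat{\psi}^{(1)}$ (second order, variance $O(1/n)$) and the terms $\bbU_{n,j+1}$ of the differences $\hat{\IF}^{(j)}-\hat{\IF}^{(j-1)}$, for $j\le m$. For each such $U$-statistic I Hoeffding-decompose and bound each projection's second moment. The bounds use three facts: the basis is bounded with $\|\bar{\phi}_k\|_\infty\lesssim\sqrt{k}$; the projection operator $\bar{\phi}_k^{\top}\Sigma_k^{-1}\bar{\phi}_k$ has $L_2$ norm $\sqrt{k}$; and the random matrices $\bar{\phi}_k\bar{\phi}_k^{\top}\Sigma_k^{-1}-I$ act as contractions in expectation, which is the argument of \citet{liu2017semiparametric}, Theorem~3. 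The terms of the form $\Pi[\cdot\mid\bar{\phi}_k]$ in the $L_2$ and $L_p\cdot L_q$ norms come from the degree-one projections onto the $X_1$ and $X_2$ coordinates. The $k/n$ term comes from the degenerate degree-two part.

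\textbf{Main obstacle.} The hardest step is keeping the constants uniform in $m$ while $m\lesssim\sqrt{\log n}$. Each additional middle factor contributes roughly a factor $C k/n$ to the variance, plus combinatorial factors up to $m!$ from counting Hoeffding components. I would control these by requiring $k\lesssim n/\log^3 n$, so that the geometric growth $(Ck\log n/n)^{j}$ is summable. I would also use $\|\hat{\Sigma}_k-\Sigma_k\|_{\op}=o_{\P}(1)$ to keep $\hat{\Sigma}_k^{-1}$ well conditioned, which lets all terms of orders $3,\dots,m$ be absorbed into the constant $C$.
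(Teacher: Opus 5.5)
Your proposal follows the same route as the paper, which proves this lemma only by deferring to \citet{liu2017semiparametric}. Your telescoping/Neumann computation isolating the bias remainder $\alpha^{\top}\{(\hat{\Sigma}_k-\Sigma_k)\hat{\Sigma}_k^{-1}\}^{m-1}\Sigma_k\eta$, together with a Hoeffding-decomposition variance bound, is that argument made explicit; there are two small corrections. First, the bound on $\Vert\hat{\Sigma}_k^{-1}\Vert_{\op}$ used in your Cauchy--Schwarz step does not follow from Assumption~\ref{as:basis}, which controls only $\Sigma_k$, so it must be imposed as a well-conditioning requirement on $\hat{\Sigma}_k$ (tacit in the first display, and holding with probability tending to one under condition (ii)). Second, the $L_p$--$L_q$ product of two projected residuals in the variance bound comes from the degree-one projections onto the middle Gram-matrix coordinates, where both projections are evaluated at the same point, not from the $X_1$ and $X_2$ coordinates.
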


Lemma~\ref{lem:bias} can be served as the starting point of deriving statistical properties of $\hat{\bbeta}^{(m)}$, were results related to higher-order $U$-processes established. We leave this important technical problem to future work.

\subsection{Convergence Rates of the Empirical Gram Matrix Estimator}
\label{app:unknown}

When $\bbeta^*$ is $n^{-1/2}$-estimable, we can appeal to the methodology developed in \citet{liu2017semiparametric} and take $k = o (n)$ or more precisely $k \lesssim n / \log^{3} (n)$ without assuming regularity conditions on the marginal density $\p_{X}(x)$ of $X$. Specifically, we use the same nuisance sample for estimating the nuisance parameters to construct the following second-order $U$-statistic estimator of $\Sigma_{k}$:
\begin{align*}
\hat{\Sigma}_{k} & \coloneqq \frac{1}{n (n - 1)} \sum_{1 \leq i \neq j \leq n} \{\sfzbar_{k_{x}} (X_{i}) \sfzbar_{k_{x}} (X_{i})^{\top}\} \otimes \{\sfwbar_{k_{t}} (T_{j}) \sfwbar_{k_{t}} (T_{j})^{\top}\} \\
& = \frac{1}{n (n - 1)} \sum_{1 \leq i \neq j \leq n} \bar{\phi}_{k} (X_{i}, T_{j}) \bar{\phi}_{k} (X_{i}, T_{j})^{\top}.
\end{align*}
Obviously, $\bbE [\hat{\Sigma}_{k}] = \Sigma_{k}$ so it is an unbiased estimator of $\Sigma_{k}$. We note that throughout the article $\hat{\Sigma}_{k}$ denotes a generic estimator of $\Sigma_{k}$ and only in this section $\hat{\Sigma}_{k}$ denotes the empirical Gram matrix estimator.

Let $\bar{\phi}_{k} (X_{i}, T_{j}) \bar{\phi}_{k} (X_{i}, T_{j})^{\top} \eqqcolon Q_{i} \otimes R_{j}$ to further simplify the presentation. Corollary 3.1 of \citet{minsker2019moment}, which we cite below, provides a user-friendly moment bound on the second-order degenerate $U$-statistic component of the following Hoeffding decomposition:
\begin{align*}
& \ \hat{\Sigma}_{k} - \Sigma_{k} \\
\equiv & \ \sum_{i = 1}^{n} \frac{1}{n} \left\{ \bbE \left[ \frac{Q_{i} \otimes R_{j} + Q_{j} \otimes R_{i}}{2} | O_{i} \right] - \bbE Q \otimes \bbE R \right\} \\
& + \sum_{1 \leq i < j \leq n} \frac{1}{n (n - 1)} \left\{ (Q_{i} \otimes R_{j} + Q_{j} \otimes R_{i}) - \bbE (Q_{i} \otimes R_{j} + Q_{j} \otimes R_{i} | O_{i}) \right\} \\
\eqqcolon & \ \sum_{i = 1}^{n} G (O_{i}) + \sum_{1 \leq i < j \leq n} H (O_{i}, O_{j}),
\end{align*}
where $G$ and $H$ are appropriately chosen \emph{$k \times k$-symmetric matrix}-valued functions. For more recent development on tail bounds for matrix-valued $U$-statistics, we refer readers to \citet{bandeira2025matrix}.

\begin{lemma}[Corollary 3.1 of \citet{minsker2019moment}]
\label{lem:matrix-U}
Let $\{O_{i}^{(l)}, i = 1, \cdots, n\}$ for $l = 1, 2$ be two independent copies of $\{O_{i}, i = 1, \cdots, n\}$ and let $\bbE_{l}$ denote the expectation with respect to only the copies indexed by $l$. Then
\begin{equation}
\label{matrix U}
\begin{split}
\bbE \left\Vert \sum_{1 \leq i < j \leq n} H (O_{i}, O_{j}) \right\Vert_{\op} \lesssim & \ (\log k)^{2} \left( \sum_{i = 1}^{n} \bbE \left\Vert \sum_{j: j \neq i} \bbE_{2} H^{2} (O_{i}^{(1)}, O_{j}^{(2)}) \right\Vert_{\op} \right) \\
& + (\log k)^{3} \left( \bbE \max_{i} \left\Vert \sum_{j: j \neq i} H^{2} (O_{i}^{(1)}, O_{j}^{(2)}) \right\Vert_{\op} \right).
\end{split}
\end{equation}
\end{lemma}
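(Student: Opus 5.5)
The plan is to follow Minsker's argument for degenerate matrix-valued $U$-statistics and then match its output to the displayed form \eqref{matrix U}. The first step is \emph{decoupling}. Because $H$ is a degenerate (canonical) kernel in the Hoeffding decomposition of $\hat{\Sigma}_{k}-\Sigma_{k}$, the de la Peña--Montgomery-Smith decoupling inequality, which holds for Banach-space-valued kernels and in particular for $\Vert\cdot\Vert_{\op}$, gives
\begin{align*}
\bbE\Big\Vert \sum_{i<j} H(O_i,O_j)\Big\Vert_{\op} \lesssim \bbE\Big\Vert \sum_{i\neq j} H(O_i^{(1)},O_j^{(2)})\Big\Vert_{\op}.
\end{align*}
Conditionally on one copy, each sum is a sum of independent mean-zero matrices. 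This is where degeneracy is used: $\bbE_2 H(o,O^{(2)})=0$.

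The second step is \emph{symmetrization and conditional matrix Khintchine}. Condition on $\{O^{(1)}_i\}$ and write $S_j \coloneqq \sum_{i\neq j} H(O_i^{(1)},O_j^{(2)})$. Rademacher symmetrization in $j$, followed by the noncommutative Khintchine inequality, bounds the conditional expectation by $\sqrt{\log k}\,\bbE_2\Vert\sum_j S_j^2\Vert_{\op}^{1/2}$. I then symmetrize again in $i$ and apply the Rosenthal-type matrix inequality of Chen--Gittens--Tropp (or Tropp's expected-norm bound for sums of positive semidefinite matrices) to $\sum_j S_j^2$. This splits the quantity into a ``variance'' part, $\sum_i\bbE\Vert\sum_{j\neq i}\bbE_2H^2(O_i^{(1)},O_j^{(2)})\Vert_{\op}$, and a ``max'' part, $\bbE\max_i\Vert\sum_{j\neq i}H^2(O_i^{(1)},O_j^{(2)})\Vert_{\op}$. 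Each application of Khintchine or Rosenthal costs a factor of $\log k$. Tracking these factors yields $(\log k)^2$ on the first term and $(\log k)^3$ on the second, which are exactly the powers in \eqref{matrix U}.

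The step I expect to be the main obstacle is the last one: passing from the natural output of this argument to the display \emph{as worded}. The argument above bounds the left side by the \emph{square roots} of the two bracketed quantities, since Khintchine controls first moments by square roots of second moments. The display instead has the bracketed quantities themselves, which are quadratic in $H$, while the left side is linear in $H$. Replacing $H$ by $\epsilon H$ shows that the unrooted inequality cannot hold uniformly as $\epsilon\to 0$.

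I would therefore first try to close the gap through a normalization. If the bracketed quantities are bounded below by a constant in the regime of interest, then $\sqrt{a}\le a$ for $a\ge 1$, and the stated form follows with the same constants. Under the paper's normalization $H=O\big(\zeta(k)^2/n^2\big)$, however, these quantities are small, not large. So at this point I would concretely check whether the intended statement is the square-rooted one, which is how Minsker's Corollary~3.1 reads and what the argument above proves. My expectation is that the display is correct only under that reading, and that any downstream use of it should take square roots of both right-hand terms.
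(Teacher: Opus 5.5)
Your central observation is correct, and it is the substantive point. The paper gives no argument for this lemma beyond the citation, and the display as printed is false. Under $H\mapsto\epsilon H$ the left side scales by $\epsilon$ and both brackets scale by $\epsilon^{2}$, so letting $\epsilon\to0$ would force the left side to vanish. Because the brackets are small for the kernel used here (your $\zeta(k)^2/n^2$ remark), the $\sqrt a\le a$ rescue indeed fails. The homogeneous statement, which is the form of the cited corollary, is
\begin{align*}
\bbE\Big\Vert\sum_{1\le i<j\le n}H(O_i,O_j)\Big\Vert_{\op}
\lesssim{}&\log k\,\Big(\sum_{i=1}^{n}\bbE\Big\Vert\sum_{j\neq i}\bbE_{2}H^{2}(O_i^{(1)},O_j^{(2)})\Big\Vert_{\op}\Big)^{1/2}\\
&+(\log k)^{3/2}\Big(\bbE\max_{i}\Big\Vert\sum_{j\neq i}H^{2}(O_i^{(1)},O_j^{(2)})\Big\Vert_{\op}\Big)^{1/2}.
\end{align*}
Up to constants, this is the printed display with its left-hand side squared. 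Your log bookkeeping is therefore right only \emph{under} the root; outside it the prefactors are $\log k$ and $(\log k)^{3/2}$. The later bound on the degenerate part of $\hat\Sigma_k-\Sigma_k$ has to be redone with the roots.

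There are two caveats on the sketch itself.
\begin{itemize}
\item It is a roadmap rather than a proof. The hard step is controlling the off-diagonal part $\sum_j\sum_{i\neq l}H(O_i^{(1)},O_j^{(2)})H(O_l^{(1)},O_j^{(2)})$ of $\sum_jS_j^2$. That is what Minsker and Wei's Khintchine inequality for matrix Rademacher chaos supplies; ``symmetrize again in $i$ and apply Rosenthal'' does not do it by itself.
\item That step needs $H$ to be centered in its \emph{first} argument as well, $\bbE_1H(O^{(1)},o)=0$, whereas you only invoke $\bbE_2H(o,O^{(2)})=0$. Two-sided degeneracy is essential even for the rooted bound. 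Take $k\ge 2$ and $H(o,o')=f(o')I_k$ with $\bbE f(O)=0$ and $\bbE f(O)^2=\sigma^2$. This kernel is centered in its second argument. The left side is $\bbE|\sum_j(j-1)f(O_j)|$, of order $n^{3/2}\sigma$, while the rooted right side is $O(n\sigma\log k+\sqrt n\,\sigma(\log k)^{3/2})$.
\end{itemize}

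The kernel displayed just before the lemma, $\frac{1}{n(n-1)}\{Q_i\otimes(R_j-\bbE R)+(Q_j-\bbE Q)\otimes R_i\}$, is of exactly this one-sided type: integrating out $O_i$ leaves a nonzero function of $O_j$. Your premise that it is canonical therefore does not hold as written. Its non-canonical part, a weighted linear term, has to be split off before the corollary is applied.
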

Thus to obtain a useful bound of $\bbE \left\Vert \sum_{1 \leq i < j \leq n} H (O_{i}, O_{j}) \right\Vert_{\op}$, we need to control each of the two terms on the RHS of \eqref{matrix U}, respectively denoted as $(\log k)^{2} I_{1}$ and $(\log k)^{3} I_{2}$ here. We first simplify the following term that will be useful to control both $I_{1}$ and $I_{2}$. Then 
\begin{align*}
& \ \sum_{j: j \neq i} H^{2} (O_{i}^{(1)}, O_{j}^{(2)}) \\
= & \ \frac{1}{n^{2} (n - 1)^{2}} \sum_{j: j \neq i} \left\{ \begin{array}{c}
\{Q_{i}^{(1)} \otimes (R_{j}^{(2)} - \bbE R)\} \{Q_{i}^{(1) \top} \otimes (R_{j}^{(2)} - \bbE R)^{\top}\} \\
+ \, \{(Q_{j}^{(2)} - \bbE Q) \otimes R_{i}^{(1)}\} \{(Q_{j}^{(2)} - \bbE Q)^{\top} \otimes R_{i}^{(1) \top}\} \\ 
+ \, \{Q_{i}^{(1)} \otimes (R_{j}^{(2)} - \bbE R)\} \{(Q_{j}^{(2)} - \bbE Q)^{\top} \otimes R_{i}^{(1) \top}\} \\
+ \, \{(Q_{j}^{(2)} - \bbE Q) \otimes R_{i}^{(1)}\} \{Q_{i}^{(1) \top} \otimes (R_{j}^{(2)} - \bbE R)^{\top}\}
\end{array} \right\} \\
= & \ \frac{1}{n^{2} (n - 1)} \left\{ \begin{array}{c}
\{Q_{i}^{(1)} Q_{i}^{(1) \top}\} \otimes \{\frac{1}{n - 1} \sum\limits_{j: j \neq i} (R_{j}^{(2)} - \bbE R) (R_{j}^{(2)} - \bbE R)^{\top}\} \\
+ \, \{\frac{1}{n - 1} \sum\limits_{j: j \neq i} (Q_{j}^{(2)} - \bbE Q) (Q_{j}^{(2)} - \bbE Q)^{\top}\} \otimes \{R_{i}^{(1)} R_{i}^{(1) \top}\} \\ 
+ \, \frac{1}{n - 1} \sum\limits_{j: j \neq i} \{Q_{i}^{(1)} (Q_{j}^{(2)} - \bbE Q)^{\top}\} \otimes \{(R_{j}^{(2)} - \bbE R) R_{i}^{(1) \top}\} \\
+ \, \frac{1}{n - 1} \sum\limits_{j: j \neq i} \{(Q_{j}^{(2)} - \bbE Q) Q_{i}^{(1) \top}\} \otimes \{R_{i}^{(1)} (R_{j}^{(2)} - \bbE R)^{\top}\}
\end{array} \right\}.
\end{align*}
As for $I_{1}$, we have
\begin{align*}
& \ \sum_{j: j \neq i} \bbE_{2} H^{2} (O_{i}^{(1)}, O_{j}^{(2)}) \\
= & \ \frac{1}{n^{2} (n - 1)} \left\{ \begin{array}{c}
\{Q_{i}^{(1)} Q_{i}^{(1) \top}\} \otimes \bbE_{2} (R_{j}^{(2)} - \bbE R) (R_{j}^{(2)} - \bbE R)^{\top} \\
+ \bbE_{2} (Q_{j}^{(2)} - \bbE Q) (Q_{j}^{(2)} - \bbE Q)^{\top} \otimes \{R_{i}^{(1)} R_{i}^{(1) \top}\} \\ 
+ \, \bbE_{2} \{Q_{i}^{(1)} (Q_{j}^{(2)} - \bbE Q)^{\top}\} \otimes \{(R_{j}^{(2)} - \bbE R) R_{i}^{(1) \top}\} \\
+ \bbE_{2} \{(Q_{j}^{(2)} - \bbE Q) Q_{i}^{(1) \top}\} \otimes \{R_{i}^{(1)} (R_{j}^{(2)} - \bbE R)^{\top}\}
\end{array} \right\}.
\end{align*}
The first two terms in the above display are relatively easy to control. For the last two terms, we use the third term as an example and by symmetry the fourth term follows from the same analysis.
\begin{align*}
& \ \bbE_{2} \{Q_{i}^{(1)} (Q_{j}^{(2)} - \bbE Q)^{\top}\} \otimes \{(R_{j}^{(2)} - \bbE R) R_{i}^{(1) \top}\} = \bbE_{2} \{Q_{i}^{(1)} \otimes (R_{j}^{(2)} - \bbE R)\} \{(Q_{j}^{(2)} - \bbE Q) \otimes R_{i}^{(1)}\}^{\top} \\
= & \ Q_{i}^{(1)} R_{i}^{(1) \top} \otimes \bbE_{2} (R_{j}^{(2)} - \bbE R) (Q_{j}^{(2)} - \bbE R)^{\top}.
\end{align*}
Then
\begin{align*}
I_{1} \lesssim \frac{\bbE \Vert Q Q^{\top} \Vert_{\op} k_{t} + \bbE \Vert R R^{\top} \Vert_{\op} k_{x} + \bbE \Vert Q R^{\top} \Vert_{\op} + \bbE \Vert R Q^{\top} \Vert_{\op}}{n (n - 1)} \lesssim \frac{k}{n^{2}}.
\end{align*}

As for $I_{2}$, we have, by repeatedly applying the matrix Bernstein inequality as in Lemma \ref{lem:matrix Bernstein}, 
\begin{align*}
I_{2} \equiv & \ \bbE \max_{i} \left\Vert \sum_{j: j \neq i} H^{2} (O_{i}^{(1)}, O_{j}^{(2)}) \right\Vert_{\op} \\
\lesssim & \ \frac{k}{n^{3}}.
\end{align*}
Thus combining the above together, we have that the second-order degenerate $U$-statistic part is bounded by
\begin{align*}
\left( \frac{k \log k}{n} \right)^{2} + \left( \frac{\sqrt{k} \log k}{n} \right)^{2} + \left( \frac{k^{1 / 3} \log k}{n} \right)^{3}.
\end{align*}

As for the linear term in the above decomposition, one can directly apply the standard matrix Bernstein inequality \citep{tropp2015introduction}, cited below.

\begin{lemma}[Rephrasing Theorem 6.1.1 of \citet{tropp2015introduction}]
\label{lem:matrix Bernstein}
Let $M \coloneqq \sum_{i = 1}^{n} G (O_{i})$. If for each $i = 1, \cdots, n$, $\Vert G (O_{i}) \Vert_{\op} \lesssim k / n$, then
\begin{align*}
\bbE \Vert M \Vert_{\op} \lesssim \sqrt{\frac{k \log k}{n}} + \frac{k \log k}{n}.
\end{align*}
\end{lemma}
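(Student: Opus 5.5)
The plan is to apply the expectation form of the matrix Bernstein inequality (Theorem 6.1.1 of \citet{tropp2015introduction}) directly to $M=\sum_{i=1}^{n}G(O_i)$. That result states that for independent, mean-zero, symmetric $k\times k$ random matrices $Z_i$ with $\Vert Z_i\Vert_{\op}\le L$ almost surely,
\begin{align*}
\bbE\Big\Vert \sum_i Z_i\Big\Vert_{\op} \le \sqrt{2v\log(2k)} + \tfrac{1}{3}L\log(2k), \qquad v \coloneqq \Big\Vert \sum_i \bbE Z_i^{2}\Big\Vert_{\op}.
\end{align*}
The work therefore reduces to three steps: check the structural hypotheses, identify $L$, and bound $v$.

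\textbf{Structural hypotheses.} The summands $G(O_i)$ are independent because the $O_i$ are i.i.d. They are symmetric since $Q_i$, $R_j$ and their Kronecker products are symmetric. They are mean zero because $G(O_i)$ is, by construction, the centred Hájek projection in the Hoeffding decomposition of $\hat{\Sigma}_k-\Sigma_k$.

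\textbf{Uniform bound.} The hypothesis $\Vert G(O_i)\Vert_{\op}\lesssim k/n$ gives $L\asymp k/n$. This is consistent with Assumption~\ref{as:basis}(iv), since $\Vert Q\Vert_{\op}\le\zeta_x^2\lesssim k_x$ and $\Vert R\Vert_{\op}\le \zeta_t^2\lesssim k_t$. The second Bernstein term is then of order $(k/n)\log k$, as claimed.

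\textbf{Variance proxy (main obstacle).} The naive estimate $v\le n\,\sup_i\Vert G(O_i)\Vert_{\op}^2\lesssim k^2/n$ is too crude: it would give $\sqrt{k^2\log k/n}$ rather than $\sqrt{k\log k/n}$. To sharpen it, I would first use $(a+b+c)^2\preceq 3(a^2+b^2+c^2)$ to reduce to bounding $\bbE(Q\otimes \bbE R)^2$, $\bbE(\bbE Q\otimes R)^2$ and $(\bbE Q\otimes\bbE R)^2$, each multiplied by $n^{-2}$. Using $(A\otimes B)^2=A^2\otimes B^2$ gives
\begin{align*}
\bbE(Q\otimes \bbE R)^2=\bbE Q^2\otimes(\bbE R)^2 .
\end{align*}
Since $Q=\sfzbar_{k_x}\sfzbar_{k_x}^{\top}$, we have $Q^2=\Vert \sfzbar_{k_x}\Vert^2 Q\preceq \zeta_x^2 Q$, and hence $\bbE Q^2\preceq k_x\,\bbE Q$. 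Assumptions~\ref{as:nuisance_est}(iii) and \ref{as:basis}(i)--(iii) (bounded densities, local support, bounded overlap) then give $\Vert\bbE Q\Vert_{\op},\Vert \bbE R\Vert_{\op}\lesssim 1$. Consequently $\Vert \bbE(Q\otimes \bbE R)^2\Vert_{\op}\lesssim k_x$, symmetrically $\Vert\bbE(\bbE Q\otimes R)^2\Vert_{\op}\lesssim k_t$, and the deterministic term is $O(1)$.

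\textbf{Conclusion.} Summing over the $n$ summands yields $v\lesssim n\cdot (k_x+k_t)/n^2\le k/n$. Substituting into Tropp's bound, together with $\log(2k)\asymp\log k$, gives
\begin{align*}
\bbE\Vert M\Vert_{\op}\lesssim \sqrt{\frac{k\log k}{n}}+\frac{k\log k}{n}.
\end{align*}
The only step requiring care is the variance proxy, where the rank-one structure $Q^2\preceq\zeta_x^2Q$ and the bounded Gram matrices supplied by Assumption~\ref{as:basis} must be used, rather than the crude operator-norm bound.
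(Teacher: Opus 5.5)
Your proposal is correct and follows the paper's route: the paper's proof is just an appeal to the matrix Bernstein inequality of \citet{tropp2015introduction}, applied to the independent, centred, symmetric summands $G(O_i)$. Your variance-proxy step ($\bbE Q^{2}\preceq \zeta_x(k_x)^{2}\,\bbE Q$ together with the bounded Gram matrices, giving $v\lesssim (k_x+k_t)/n$) is left implicit in the paper's one-line citation but is genuinely needed. The uniform bound $\Vert G(O_i)\Vert_{\op}\lesssim k/n$ alone only gives $v\lesssim k^{2}/n$ and does not imply the stated conclusion; for example, $G(O_i)=\epsilon_i (k/n) I_k$ with independent Rademacher $\epsilon_i$ gives $\bbE\Vert M\Vert_{\op}\asymp k/\sqrt{n}$.
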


Thus in summary, we have
\begin{align*}
\bbE \Vert \hat{\Sigma}_{k} - \Sigma_{k} \Vert_{\op} \lesssim \sqrt{\frac{k \log k}{n}} + \frac{k \log k}{n} + \left( \frac{k \log k}{n} \right)^{2} + k \left\{ \left( \frac{\log k}{n} \right)^{2} + \left( \frac{\log k}{n} \right)^{3} \right\}.
\end{align*}
We can then combine this result with the empirical HOIF framework developed in \citet{liu2017semiparametric} to construct higher-order estimating equations for $\bbeta^*$ as: 
\begin{align*}
\hat{\psi}_{k}^{(m)} (\bbeta) = 0,
\end{align*}
such that the final estimator $\hat{\bbeta}^{(m)}$ with order $m \asymp \log n$ is $n^{-1/2}$-consistent and asymptotic normal with no extra smoothness assumption imposed on the marginal density of $X$.

\section{Discussion of Assumption~\ref{as:basis}}
\label{app:assumption basis}

In this section, we briefly show how the  localized conditions   in Assumption~\ref{as:basis} imply the standard uniform eigenvalue bounds for the Gram matrices: 
\begin{lemma}\label{lemma:Gram-Matrix eigenvalue bounds}
       For every $k_{x}$ and $k_{t}$, the eigenvalues of $
\bbE \left[ \sfzbar_{k_{x}} (X)^{\otimes 2} \right] $ and $\bbE [\sfwbar_{k_{t}} (T)^{\otimes 2}]$
are bounded from above and away from zero uniformly in $k_{x}$ and $k_{t}$.
\end{lemma}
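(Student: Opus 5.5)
The plan is to treat the two Gram matrices separately and identically; we write the argument for $\bbE[\sfzbar_{k_x}(X)^{\otimes 2}]$, and the one for $\sfwbar_{k_t}(T)$ is the same with $\p_T$ and $\calS_{T,j}$ in place of $\p_X$ and $\calS_{X,j}$. Fix a unit vector $\bma\in\bbR^{k_x}$ and write
\[
\bma^\top\bbE[\sfzbar_{k_x}(X)^{\otimes 2}]\bma=\int_{\calX}\{\bma^\top\sfzbar_{k_x}(x)\}^2\p_X(x)\diff x .
\]
By Assumption~\ref{as:nuisance_est}(iii), $c_5\le\p_X\le c_6$, so it suffices to bound $\int_{\calX}\{\bma^\top\sfzbar_{k_x}(x)\}^2\diff x$ above and below by constants free of $k_x$. (The density bounds also let us move freely between ``probability at most $c_7/k_x$'' and ``Lebesgue measure $\lesssim 1/k_x$'' for the supports $\calS_{X,j}$.)

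\emph{Lower bound.} For any $x$, at most $c_8$ of the supports $\calS_{X,j}$ contain $x$ (Assumption~\ref{as:basis}(ii)). Hence $\sum_{j}\mathbbm 1_{\calS_{X,j}}(x)\le c_8$, which gives
\[
\int_{\calX}\{\bma^\top\sfzbar_{k_x}\}^2\diff x\ \ge\ c_8^{-1}\sum_{j=1}^{k_x}\int_{\calS_{X,j}}\{\bma^\top\sfzbar_{k_x}\}^2\diff x .
\]
Applying Assumption~\ref{as:basis}(iii) to each summand gives $\gtrsim c_8^{-1}\sum_j a_j^2=c_8^{-1}$.

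\emph{Upper bound.} On each $x$ at most $c_8$ coordinates of $\sfzbar_{k_x}(x)$ are nonzero. By Cauchy--Schwarz,
\[
\{\bma^\top\sfzbar_{k_x}(x)\}^2\le c_8\sum_j a_j^2 \mathsf z_j(x)^2 .
\]
Integrating gives $c_8\sum_j a_j^2\int_{\calS_{X,j}}\mathsf z_j^2\diff x\le c_8\sum_j a_j^2\,\zeta_x(k_x)^2\,|\calS_{X,j}|$. Using $\zeta_x(k_x)^2\lesssim k_x$ from Assumption~\ref{as:basis}(iv) and $|\calS_{X,j}|\lesssim 1/k_x$ from Assumption~\ref{as:basis}(i), each term is $\lesssim a_j^2$, so the sum is $\lesssim 1$.

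Taking the infimum and supremum over unit $\bma$ gives the claimed uniform eigenvalue bounds. For $T$, Assumption~\ref{as:nuisance_est}(iii) likewise bounds $\p_T$, and Assumption~\ref{as:bounded} makes $\calT$ compact so Lebesgue measure is finite.

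The main obstacle is the lower bound, namely correctly using the local non-collinearity condition (iii) together with the bounded-overlap condition (ii) to avoid losing a factor of $k_x$. The upper bound is routine once the sup-norm and support-size scalings are seen to cancel. If one also wants the tensor Gram matrix $\Sigma_k=\bbE[\bar\phi_k^{\otimes 2}]$ under $\p_{X,T}$, this does not follow directly, since $\Sigma_k$ is under the joint law rather than the product of marginals. I would first try to sandwich it between $\bbE[\sfzbar^{\otimes2}]\otimes\bbE[\sfwbar^{\otimes2}]$ using joint density bounds. The stated lemma, however, concerns only the marginal Gram matrices.
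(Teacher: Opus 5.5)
Your proposal is correct and follows essentially the same route as the paper: for the lower bound, the bounded-overlap count $\sum_j \mathbbm{1}\{x\in\calS_{X,j}\}\le c_8$ combined with the local non-collinearity condition (iii); for the upper bound, the $c_8$-sparsity of the basis together with the cancellation $\zeta_x(k_x)^2\cdot c_7/k_x\lesssim 1$. Your write-up is slightly cleaner than the paper's in two places. You apply Cauchy--Schwarz pointwise with globally indexed weights $a_j^2$, instead of pulling the point-dependent active indices out of the integral as the paper's display does. You also state explicitly where $c_5\le \p_X\le c_6$ is needed to move between the Lebesgue integrals in (iii) and the probabilities in (i).
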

\begin{proof}

We focus on $\bbE [\sfwbar_{k_{t}} (T)^{\otimes 2}]$; the argument for  $\bbE \left[ \sfzbar_{k_{x}} (X)^{\otimes 2} \right]$ is analogous. 
For any unit vector $\bmb\in\bbR^{k_t}$, by definition 
\begin{align*}
 \bmb^{\top} \bbE [\sfwbar_{k_{t}} (T)^{\otimes 2}] \bmb = \int_{\calT} \left( \sum_{i=1}^{k_{t}} b_i \sfw_{i} (t) \right)^2 \p_{T} (t) \diff t .
\end{align*}
By Assumption~\ref{as:basis}(ii), for any $t\in\calT$, the number of nonzero terms in the basis function $\sfwbar_{k_t}(t)$ is at most $c_{8}$. Denoted them
by $(\sfw_{i_1}(t),\ldots,\sfw_{i_{c_{8}}}(t))^{\top}$. Then, 
\begin{align*}
    \left( \sum_{i=1}^{k_{t}} b_i \sfw_{i} (t) \right)^2 \lesssim \sum_{j=1}^{c_{8}} b_{i_j}^2 \sfw_{i_j}(t)^2 . 
\end{align*}
It follows from Assumptions~\ref{as:basis}(i) and (iv) that
\begin{align*}
    \int_{\calT} \left( \sum_{i=1}^{k_{t}} b_i \sfw_{i} (t) \right)^2 \p_{T} (t) \diff t \lesssim & \int_{\calT} \sum_{j=1}^{c_{8}} b_{i_j}^2 \sfw_{i_j}(t)^2 \p_{T} (t) \diff t \\
    \lesssim & \sum_{j=1}^{c_{8}} b_{i_j}^2  \int_{\calS_{T,i_j}} \zeta_t(k_t)^2 \p_{T} (t) \diff t \\
    \lesssim  & \sum_{j=1}^{c_{8}} b_{i_j}^2  \cdot k_t \cdot 1/k_t \lesssim 1,
\end{align*}
which yields the desired uniform upper bound on the largest eigenvalue.

For the lower bound, note that $\sum_{i=1}^{k_t} \mathbbm{1} \{t\in\calS_{T,i}\} \le c_8 $ for all $t\in\calT$. Hence,
\begin{align*}
   \sum_{i=1}^{k_t} \int_{\calS_{T,i}} \left( \sum_{i=1}^{k_{t}} b_i \sfw_{i} (t) \right)^2 \p_{T} (t) \diff t = &  \int_{\calT} \left( \sum_{i=1}^{k_t} \mathbbm{1} \{t\in\calS_{T,i}\}\right) \left( \sum_{i=1}^{k_{t}} b_i \sfw_{i} (t) \right)^2 \p_{T} (t) \diff t \\
   \le &\  c_8 \int_{\calT}  \left( \sum_{i=1}^{k_{t}} b_i \sfw_{i} (t) \right)^2 \p_{T} (t) \diff t\\
   \lesssim &\
    \bmb^{\top} \mathbb{E}\left[ \sfwbar_{k_{t}} (T) \sfwbar_{k_{t}} (T)^{\top} \right] \bmb.
\end{align*}
Then taking sum over all $\calS_{T,i}$ and  using the assumption that $\bmb^{\top}\int_{\calS_{X,i}}  \sfwbar_{k_{t}} (t) \sfwbar_{k_{t}} (t)^{\top}  \diff
         t \bmb \gtrsim b_i^2$, we obtain the lower bound
         \begin{align*}
             \bmb^{\top} \mathbb{E}\left[ \sfwbar_{k_{t}} (T) \sfwbar_{k_{t}} (T)^{\top} \right] \bmb \gtrsim 1.
         \end{align*}
    
\end{proof}

\section{On the Relationship Between Higher-Order Estimators and Other Related Proposals in Econometrics}
\label{app:connection}

As suggested by a referee, in this section, we illustrate the connection between the higher-order estimators studied in this paper and several other bias correction methods in the recent econometrics literature, including \citet{cattaneo2019two}, \citet{cattaneo2018kernel}, and \citet{cattaneo2025higher}.

\subsection{Discussion of \texorpdfstring{\citet{cattaneo2019two}}{}}\label{app:cattaneo2019}

\citet{cattaneo2019two} study two-step semiparametric estimation and inference in settings where the first-step linear regression may include many covariates. To keep the exposition simple, we focus on the ATE case and take the target parameter to be $\beta^* = \bbE[Y (1)]$. To facilitate comparison with our construction, we  characterize $\beta^*$ via the moment condition
\begin{align}\label{eq:psi ate} 
\psi (\beta^*)  = \E \left[ \xi (X, T) \{Y - b (X,T = 1)\} + b (X,T=1) - \beta^{\ast} \right]=0,
\end{align}
where $\xi (x,t) =  t/\P(T = 1 | X = x) \ $ and $\ b(x,t) = t \bbE [Y |X = x, T=1]$ are nuisance functions. 

In this example, we first would like to emphasize that \citet{cattaneo2019two} and our paper target different parameters, which will be discussed in detail shortly after. However, \citet{cattaneo2019two} and our paper share a common feature, which is that both consider to use linear models (of a possibly nonlinear $k$-dimensional feature map/basis $\bar{\phi}_{k}$ of $(X, T)$) to approximate certain quantities related to nuisance parameters, such that the target parameters of \citet{cattaneo2019two} and ours both reduce to bilinear functionals, with their precise forms to be revealed thereafter. In particular, \citet{cattaneo2019two} use linear models to directly approximate the nuisance parameters, whereas we use a linear approximation to represent the nuisance estimation \emph{errors}; when the nuisance estimates are zeros, these two approaches collapse. For such bilinear functionals, a na\"{i}ve plug-in estimator can have non-negligible leave-in bias in high-dimensional settings when the dimension $k \gtrsim \sqrt{n}$ \citep{cattaneo2019two}. In terms of bias correction strategies, \citet{cattaneo2019two} start from the plug-in estimator and apply a jackknife-based correction, whereas we exploit the bilinear structure directly and construct an unbiased $U$-statistic estimator. We show below that, for bilinear functionals, these two strategies yield the same estimator.

To facilitate the analysis below, let $\sfzbar_k(X)\in\bbR^k$ be a dictionary for $X$. Since $T$ is binary,  define $\bar{\phi}_k (X, T) = T\sfzbar_k(X)$ and the $k\times k$ population Gram matrix $\Sigma_k \coloneqq \E \big[\bar{\phi}_k (X, T)^{\otimes 2}\big] = \E \big[T\sfzbar_k(X)^{\otimes 2}\big]$. 

Strictly following \citet{cattaneo2019two}, they are interested in solving for $\beta^*$ of the estimating equation $\psi (\beta^*) = 0$ in \eqref{eq:psi ate}, and postulate linear approximations for the nuisance components $\xi$ and $b$. Specifically, we assume that
\begin{align*}
    \xi (X_i,T_i) = \bar{\phi}_k(X_i,T_i)^\top \gamma_{\xi} + e_{1i},
\qquad 
\E[\bar{\phi}_k(X_i,T_i) e_{1i}]=0,\\
b(X_i,T_i) =  \bar{\phi}_k(X_i,T_i)^\top \gamma_{b} + e_{2i},
\qquad 
\E[\bar{\phi}_k(X_i,T_i) e_{2i}]=0,
\end{align*}
where $e_{1i}$ and $e_{2i}$  are the best linear approximation error, and $(\gamma_\xi,\gamma_b)$ are the population least-squares coefficients, i.e., 
\begin{align*}
\gamma_{\xi} & = \Sigma_k^{-1} \bbE [\xi(X,T)\sfzbar_{k} (X)] = \Sigma_k^{-1} \bbE [\sfzbar_{k} (X)]\\
\gamma_{b} & = \Sigma_k^{-1} \bbE [b(X,T)\sfzbar_{k} (X)] = \Sigma_k^{-1} \bbE [T Y\sfzbar_{k} (X)].
\end{align*}
Let $\tilde{\xi} (X, T) = \bar{\phi}_k(X,T)^\top \gamma_{\xi}$  and  $\tilde{b} (X, T) = \bar{\phi}_k(X,T)^\top \gamma_{b}$. Plugging $(\tilde\xi,\tilde b)$ into the moment equation \eqref{eq:psi ate} yields the following population quantity
\begin{align*}
   \tilde{\beta} & = \E \left[ \tilde{\xi} (X, T) \{Y - \tilde{b} (X,T = 1)\} + \tilde{b} (X,T=1)  \right]\\
    & = \E \left[ T\sfzbar_{k} (X) Y \right]^\top \gamma_{\xi} - \gamma_{\xi}^\top \E \left[ T\sfzbar_{k} (X) \sfzbar_{k} (X)^\top \right] \gamma_{b}  + \bbE [\sfzbar_{k} (X)]^\top \gamma_{b}\\
     & = \E \left[ T\sfzbar_{k} (X) Y \right]^\top \Sigma_k^{-1} \bbE [\sfzbar_{k} (X)] - \gamma_{\xi}^\top \Sigma_k \gamma_{b}  + \gamma_{\xi}^\top \Sigma_k \gamma_{b}\\
     & = \E \left[ T\sfzbar_{k} (X) Y \right]^\top \Sigma_k^{-1} \bbE [\sfzbar_{k} (X)].
\end{align*}
Hence, under this linear-approximation regime, the target parameter $\beta^{\ast}$ actually reduces to the above \emph{bilinear functional} $\tilde{\beta}$, which has exactly the same structure as $\mathrm{B}_{\psi, k}$ (see e.g. \eqref{EB}), which will also be reintroduced later specifically for ATE $\beta^{\ast}$. 

In contrast, our higher-order construction instead targets the approximation bias of first-order estimating equation $\hat{\psi}^{(1)} (\beta)\coloneqq \frac{1}{n} \sum_{i=1}^{n} \left[ \hat{\xi} (X_i, T_i) \{Y_i - \hat{b}(X_i,1)\} + \hat{b}(X_i,1) - \beta\right]$, and uses linear approximation only to approximate the nuisance residuals $\hat{\xi}-\xi$ and $\hat{b}-b$. The bias of $\hat{\psi}^{(1)} (\beta)$ has a product form involving two first-step estimation errors:
\begin{align}
\label{bias ATE}
\bbE \{\hat{\psi}^{(1)} (\beta)\} - \psi (\beta) = \bbE [\{\hat{\xi} (X,T) - \xi (X,T)\}  \{  b(X,T) - \hat{b} (X,T)\}].
\end{align}
Consistent with the main text, we employ sample splitting: the nuisance estimators  $\hat\xi$ and $\hat b$ are obtained from an independent nuisance sample and all expectations $\E[\cdot]$ heretofore in this section are taken with respect to the main sample conditional on the nuisance sample. Hence $\hat\xi$ and $\hat b$ can be treated as fixed functions.

Our framework imposes no linearity restriction on $\xi$ and $b$. We allow $\hat{\xi}$ and $\hat{b}$ to be obtained using flexible methods (e.g., kernel smoothing, series estimation, or machine learning), although to obtain sharp theoretical results, we sometimes impose certain smoothness assumptions on $\xi - \hat{\xi}$ and $b - \hat{b}$ such that the truncation bias that cannot be captured by the basis $\bar{\phi}_{k}$ is sufficiently small. We project the nuisance approximation residuals $\hat{\xi} - \xi$ and $\hat{b} - b$ onto the sieve space spanned by  $\bar{\phi}_{k}(x,t)$.  
As described in Section~\ref{sec:review}, this projection leads to a decomposition of \eqref{bias ATE} into (i) a truncation bias $\TB_{\psi,k}$ and (ii) a part of the bias \eqref{bias ATE} captured by the $k$-dimensional dictionary $\bar{\phi}_{k}$. The latter can be written as a bilinear functional
\begin{equation*}
    \mathrm{B}_{\psi,k} \equiv \alpha_{k}^{\top} \Sigma_{k}^{-1} \eta_{k},
\end{equation*}
where 
\begin{align*}
& \alpha_{k} \coloneqq   \E [(\hat{\xi} (X,T) - 1) \sfzbar_{k} (X)],\quad \eta_{k} \coloneqq \E [T (Y - \hat{b} (X,T)) \sfzbar_{k} (X)]
\end{align*}
are $k$-dimensional vectors; See Section~\ref{sec:review} and \ref{sec:intuition} for a detailed derivation. 

Next, we use the bilinear functional $\mathrm{B}_{\psi,k} = \alpha_{k}^{\top} \Sigma_{k}^{-1} \eta_{k}$ as a concrete example to show that the two constructions based on jackknife bias correction and $U$-statistics lead to the same estimator. Specifically, we take $\mathrm{B}_{\psi,k}$ as the parameter of interest and $\alpha_{k}$ and $\eta_{k}$ are high-dimensional nuisance parameters.
We also assume that $\Sigma_k \coloneqq \bbE[\bar{\phi}_{k}(X,T)^{\otimes 2} ]$ is known for convenience. Our strategy estimates $\mathrm{B}_{\psi,k}$ by leveraging its equivalent bilinear form:
\begin{align*}
    \mathrm{B}_{\psi,k} \equiv \E [\{\hat{\xi} (X,T) - 1\} \sfzbar_{k} (X)]^{\top} \Sigma_{k}^{-1} \E [T \{Y - \hat{b} (X,T)\} \sfzbar_{k} (X)],
\end{align*}
which can be unbiasedly estimated by the following second-order $U$-statistic:
\begin{align}\label{eq:Bk app}
\hat{\mathrm{B}}_{\psi,k}  \coloneqq \bbU_{n, 2} [\{\hat{\xi} (X_{i},T_i)  - 1\} \sfzbar_{k} (X_{i})^{\top} \Sigma_{k}^{-1} \sfzbar_{k} (X_{j}) T_{j} \{Y_{j} - \hat{b} (X_{j},T_j)\}].
\end{align}

The jackknife-based bias correction procedure in \citet{cattaneo2019two} starts from the plug-in estimator of $\mathrm{B}_{\psi,k}$
\begin{align*}
\hat{\mathrm{B}}_{\psi,k,\text{plug-in}} = \hat{\alpha}_{k}^{\top} \Sigma_{k}^{-1} \hat{\eta}_{k},
\end{align*}
where 
\begin{align*}
    \hat{\alpha}_{k} =  \frac{1}{n} \sum_{i=1}^n \{\hat{\xi} (X_i,T_i) - 1\} \sfzbar_{k} (X_i), \quad\hat{\eta}_{k} \coloneqq   \frac{1}{n} \sum_{i=1}^n T_i \{Y_i - \hat{b} (X_i,T_i)\} \sfzbar_{k} (X_i).
\end{align*}
The estimator $\hat{\mathrm{B}}_{\psi,k,\text{plug-in}}$, which is a $V$-statistic, will have a bias of an order greater than or equal to $n^{-1 / 2}$ when $k \gtrsim \sqrt{n}$. To implement the jackknife-based bias correction procedure in \citet{cattaneo2019two}, for each $\ell\in\{1,\dots,n\}$, we compute the leave-one-out analogs of $\hat{\mathrm{B}}_{\psi,k, \text{plug-in}}$ as
\begin{align*}
\hat{\mathrm{B}}_{\psi,k}^{(-\ell)} =  \hat{\alpha}_{k}^{(-\ell)\top} \Sigma_{k}^{-1} \hat{\eta}_{k}^{(-\ell)},
\end{align*}
where
\begin{align}
    &\hat{\alpha}_{k}^{(-\ell)} \coloneqq  \frac{1}{n-1} \sum_{i=1, i\neq\ell}^n (\hat{\xi} (X_i,T_i) - 1) \sfzbar_{k} (X_i) = \frac{n}{n - 1} \hat{\alpha}_{k} - \frac{1}{n-1}  (T_{\ell}\hat{\xi} (X_{\ell},T_{\ell}) - 1) \sfzbar_{k} (X_{\ell}),\label{eq:l1o aplha}\\
    &\hat{\eta}_{k}^{(-\ell)} \coloneqq \frac{n}{n - 1 } \hat{\eta}_{k} - \frac{1}{n-1}  T_{\ell} ( Y_{\ell} - \hat{b} (X_{\ell},T_{\ell})) \sfzbar_{k} (X_{\ell}).\label{eq:l1o eta}
\end{align}
Let  $ \hat{\mathrm{B}}_{\psi,k}^{(\cdot)} = n^{-1} \sum_{\ell=1}^n  \hat{\mathrm{B}}_{\psi,k}^{(-\ell)} = \frac{1}{n} \sum_{\ell=1}^n \hat{\alpha}_{k}^{(-\ell)\top} \Sigma_{k}^{-1} \hat{\eta}_{k}^{(-\ell)} $ be
the average of the leave-one-out estimators. Then, the jackknife bias-corrected estimator is defined by
\begin{align*}
    \hat{\mathrm{B}}_{\psi,k,\text{jackknife}} =  \hat{\mathrm{B}}_{\psi,k,\text{plug-in}} -  (n-1)(\hat{\mathrm{B}}_{\psi,k}^{(\cdot)} -  \hat{\mathrm{B}}_{\psi,k,\text{plug-in}} ).
\end{align*}
Lemma~\ref{lemma:bhat jackknife} in the end of this subsection shows that the jackknife bias-corrected estimator coincides exactly with our second-order $U$-statistic estimator, i.e.,
\begin{equation*}
 \hat{\mathrm{B}}_{\psi,k,\text{jackknife}}  = \hat{\mathrm{B}}_{\psi,k}.
\end{equation*}

The above derivation shows that, in the special case where the target functional admits the bilinear representation $\mathrm{B}_{\psi,k} = \alpha_{k}^{\top} \Sigma_{k}^{-1} \eta_{k}$, the two constructions $\hat{\mathrm{B}}_{\psi,k}$ and $\hat{\mathrm{B}}_{\psi,k, \text{jackknife}}$ yield the same $U$-statistic. The key ingredient of our approach is to first construct a representation like $\mathrm{B}_{\psi,k}$ and estimate $\mathrm{B}_{\psi,k}$ by a natural $U$-statistic. By contrast, the jackknife-based bias correction method of \citet{cattaneo2019two} can be viewed as a general bias correction scheme for parameters such as $\mathrm{B}_{\psi,k}$, for which standard plug-in estimators (such as the standard $M$-estimation) have non-negligible bias when the dimension is large relative to $\sqrt{n}$. It is worth mentioning that the approach in \citet{cattaneo2019two} can deal with parameters of a more generic nonlinear form than the bilinear form $\mathrm{B}_{\psi,k}$, but the theoretical results are restricted to $k = O (\sqrt{n})$ due to the nonlinearity of the more general problems considered. For $\mathrm{B}_{\psi,k}$, we have seen that the $U$-statistic estimator is unbiased for any $k$ when $\Sigma_{k}$ is known. Finally, if $\Sigma_k$ is unknown and is replaced by $\hat{\Sigma}_k$, our Supplementary Material Section~\ref{app:truncation} derives HOIFs of $\mathrm{B}_{\psi,k}$ to use higher-order $U$-statistics for further bias correction. To our knowledge, systematically extending jackknife-based corrections to control such higher-order remainder terms seems to be beyond the scope of \citet{cattaneo2019two}, but we suspect that constructions similar to those in HOIFs can be considered. We also refer interested readers to \citet{lin2024worthwhile} for a recent result in this direction.

Finally, we record Lemma~\ref{lemma:bhat jackknife} and its proof below.
\begin{lemma}\label{lemma:bhat jackknife}
The following algebraic identity holds:
    \begin{align*}
    \hat{\mathrm{B}}_{\psi,k,\normalfont\textrm{jackknife}}  = \hat{\mathrm{B}}_{\psi,k}.
    \end{align*}
\end{lemma}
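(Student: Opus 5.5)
The identity is purely algebraic, so the plan is to write every estimator as a combination of two basic sums and compare coefficients. Set
\begin{align*}
a_{i} \coloneqq \{\hat{\xi} (X_{i},T_{i}) - 1\} \sfzbar_{k} (X_{i}), \qquad e_{j} \coloneqq T_{j} \{Y_{j} - \hat{b} (X_{j},T_{j})\} \sfzbar_{k} (X_{j}), \qquad h_{ij} \coloneqq a_{i}^{\top} \Sigma_{k}^{-1} e_{j}.
\end{align*}
These depend on the main sample only through the listed observations, since $\hat{\xi}$, $\hat{b}$ and $\Sigma_{k}$ are fixed. Then $\hat{\alpha}_{k} = n^{-1}\sum_{i} a_{i}$ and $\hat{\eta}_{k} = n^{-1}\sum_{j} e_{j}$. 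The target \eqref{eq:Bk app} is $\hat{\mathrm{B}}_{\psi,k} = \{n(n-1)\}^{-1} S_{\mathrm{off}}$, where
\begin{align*}
S_{\mathrm{off}} \coloneqq \sum_{i \neq j} h_{ij}, \qquad S_{\mathrm{diag}} \coloneqq \sum_{i} h_{ii}.
\end{align*}
I will read the leave-one-out formula \eqref{eq:l1o aplha} as $\hat{\alpha}_{k}^{(-\ell)} = (n-1)^{-1}\sum_{i\neq \ell} a_{i}$. This matches its defining sum; the extra factor $T_{\ell}$ in the displayed right-hand side should be replaced by $1$ to agree with the definition of $a_\ell$.

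The first step is to express the plug-in estimator. By bilinearity, $\hat{\mathrm{B}}_{\psi,k,\text{plug-in}} = n^{-2}(S_{\mathrm{off}} + S_{\mathrm{diag}})$.

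The second step is to express the leave-one-out estimators and their average. Again by bilinearity, $\hat{\mathrm{B}}_{\psi,k}^{(-\ell)} = (n-1)^{-2}\sum_{i\neq \ell,\, j \neq \ell} h_{ij}$. Averaging over $\ell$ only requires counting how often each term survives. A diagonal term $h_{ii}$ is kept for every $\ell \neq i$, so it appears $n-1$ times. An off-diagonal term $h_{ij}$ with $i\neq j$ is kept for every $\ell \notin \{i,j\}$, so it appears $n-2$ times. Hence
\begin{align*}
\hat{\mathrm{B}}_{\psi,k}^{(\cdot)} = \frac{(n-2) S_{\mathrm{off}} + (n-1) S_{\mathrm{diag}}}{n (n-1)^{2}}.
\end{align*}

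The final step rewrites the jackknife estimator as $n\,\hat{\mathrm{B}}_{\psi,k,\text{plug-in}} - (n-1)\hat{\mathrm{B}}_{\psi,k}^{(\cdot)}$ and collects coefficients. For $S_{\mathrm{diag}}$ the coefficient is $1/n - (n-1)/\{n(n-1)\} = 0$, so the leave-in (diagonal) terms cancel exactly. For $S_{\mathrm{off}}$ the coefficient is $1/n - (n-2)/\{n(n-1)\} = 1/\{n(n-1)\}$. The jackknife estimator therefore equals $S_{\mathrm{off}}/\{n(n-1)\} = \hat{\mathrm{B}}_{\psi,k}$, which is the claim.

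There is no analytic obstacle in this argument. The only points needing care are the combinatorial count in the second step and the consistent definition of $a_{\ell}$ noted above.
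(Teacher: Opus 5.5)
Your argument is correct and is essentially the paper's own direct algebraic verification. The paper substitutes the rank-one leave-one-out updates \eqref{eq:l1o aplha}--\eqref{eq:l1o eta}, expands, and tracks $\hat{\alpha}_{k}^{\top}\Sigma_{k}^{-1}\hat{\eta}_{k}$ together with the diagonal sum. You instead count how often each $h_{ij}$ survives, which gives the same expression for $\hat{\mathrm{B}}_{\psi,k}^{(\cdot)}$ in terms of $S_{\mathrm{off}}$ and $S_{\mathrm{diag}}$ and makes the exact cancellation of the leave-in terms immediate. Your reading of \eqref{eq:l1o aplha} without the stray factor $T_{\ell}$ is also what the paper's own proof silently uses.
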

\begin{proof}
     Substituting the explicit leave-one-out formulas for $\hat\alpha_k^{(-\ell)}$ and $\hat\eta_k^{(-\ell)}$ in \eqref{eq:l1o aplha}--\eqref{eq:l1o eta} into the definition of $\hat{\mathrm{B}}_{\psi,k}^{(\cdot)}$, we have
\begin{align*}
   &\ \hat{\mathrm{B}}_{\psi,k}^{(\cdot)}  
   =  \ \frac{1}{n} \sum_{\ell=1}^n \hat{\alpha}_{k}^{(-\ell)\top} \Sigma_{k}^{-1} \hat{\eta}_{k}^{(-\ell)}\\
    = & \ \frac{1}{n} \sum_{\ell=1}^n \left\{ \frac{n}{n - 1} \hat{\alpha}_{k} - \frac{1}{n-1}  (\hat{\xi} (X_{\ell}, T_{\ell}) - 1) \sfzbar_{k} (X_{\ell}) \right\}^{\top}  \Sigma_{k}^{-1} \left\{ \frac{n}{n - 1} \hat{\eta}_{k} - \frac{1}{n-1}  T_{\ell}( Y_{\ell} - \hat{b} (X_{\ell},T_{\ell})) \sfzbar_{k} (X_{\ell}) \right\} \\
    = & \ \frac{n^2}{(n-1)^2} \hat{\alpha}_{k}^{\top} \Sigma_{k}^{-1} \hat{\eta}_{k} - \frac{2n}{(n-1)^2} \hat{\alpha}_{k}^{\top} \Sigma_{k}^{-1} \hat{\eta}_{k} \\
    & + \frac{1}{n(n-1)^2} \sum_{\ell=1}^n  (\hat{\xi} (X_{\ell},T_{\ell}) - 1) \sfzbar_{k} (X_{\ell})^{\top} \Sigma_{k}^{-1} \sfzbar_{k} (X_{\ell}) T_{\ell} ( Y_{\ell} - \hat{b} (X_{\ell},T_{\ell}))\\
    = & \ \left\{ 1 - \frac{1}{(n-1)^2}\right\} \hat{\alpha}_{k}^{\top} \Sigma_{k}^{-1} \hat{\eta}_{k}  + \frac{1}{n(n-1)^2} \sum_{\ell=1}^n  (\hat{\xi} (X_{\ell},T_{\ell}) - 1) \sfzbar_{k} (X_{\ell})^{\top} \Sigma_{k}^{-1} \sfzbar_{k} (X_{\ell}) T_{\ell} (Y_{\ell} - \hat{b} (X_{\ell},T_{\ell})).
\end{align*}
Then, we obtain 
\begin{align*}
    &(n-1)(\hat{\mathrm{B}}_{\psi,k}^{(\cdot)} -  \hat{\mathrm{B}}_{\psi,k,\text{plug-in}} ) \\
    =&   - \frac{1}{n-1} \hat{\alpha}_{k}^{\top} \Sigma_{k}^{-1} \hat{\eta}_{k}  + \frac{1}{n(n-1)} \sum_{\ell=1}^n  (\hat{\xi} (X_{\ell},T_{\ell}) - 1) \sfzbar_{k} (X_{\ell})^{\top} \Sigma_{k}^{-1} \sfzbar_{k} (X_{\ell}) T_{\ell} (Y_{\ell} - \hat{b} (X_{\ell},T_{\ell})).
\end{align*}
Hence, using the above result, we derive
\begin{align*}
&\ \hat{\mathrm{B}}_{\psi,k,\text{jackknife}} 
=  \ \hat{\mathrm{B}}_{\psi,k,\text{plug-in}} -  (n-1)(\hat{\mathrm{B}}_{\psi,k}^{(\cdot)} -  \hat{\mathrm{B}}_{\psi,k,\text{plug-in}} )\\
= & \ \left( 1 + \frac{1}{n-1} \right) \hat{\alpha}_{k}^{\top} \Sigma_{k}^{-1} \hat{\eta}_{k}   - \frac{1}{n(n-1)} \sum_{\ell=1}^n  (\hat{\xi} (X_{\ell},T_{\ell}) - 1) \sfzbar_{k} (X_{\ell})^{\top} \Sigma_{k}^{-1} \sfzbar_{k} (X_{\ell}) T_{\ell} ( Y_{\ell} - \hat{b} (X_{\ell},T_{\ell})) \\
= & \ \frac{n}{(n-1)n^2} \sum_{i = 1}^n \sum_{j = 1}^n (\hat{\xi} (X_i,T_{i}) - 1) \sfzbar_{k} (X_i)^{\top} \Sigma_{k}^{-1} \sfzbar_{k} (X_j) T_j( Y_j - \hat{b} (X_j,T_{j})) \\
& - \frac{1}{n(n-1)} \sum_{\ell=1}^n  (\hat{\xi} (X_{\ell},T_{\ell}) - 1) \sfzbar_{k} (X_{\ell})^{\top} \Sigma_{k}^{-1} \sfzbar_{k} (X_{\ell}) T_{\ell}( Y_{\ell} - \hat{b} (X_{\ell},T_{\ell}))\\
= & \ \frac{1}{(n-1)n} \sum_{1 \leq i \neq j \leq n} (\hat{\xi} (X_i,T_{i}) - 1) \sfzbar_{k} (X_i)^{\top} \Sigma_{k}^{-1} \sfzbar_{k} (X_j) T_j( Y_j - \hat{b} (X_j,T_{j}))\\
= & \ \hat{\mathrm{B}}_{\psi,k},
\end{align*}
which completes the proof.
\end{proof}

\subsection{Discussion of \texorpdfstring{\citet{cattaneo2018kernel}}{}}\label{app:cattaneo2018}

We next discuss how the bootstrap-based bias correction approach developed in \citet{cattaneo2018kernel} is also connected to our higher-order estimators.

\citet{cattaneo2018kernel} study two-step semiparametric estimation and inference with kernel-based first-step estimators, allowing the first step to converge slowly when the bandwidth is small. To facilitate comparison with our framework, we again use the ATE example in Section~\ref{app:cattaneo2018}, where the target moment $\psi(\beta)$ is linear in the nuisance functions $\xi$ and $b$. In this linear case, the bias analyzed in \citet{cattaneo2018kernel}  is mainly driven by the leave-in bias that results from the use of the same sample in both steps, when the bandwidth is small. They show that, certain nonparametric bootstrap procedures can also recover the leading bias term and hence deliver valid  inference.

Our construction allows the nuisance estimators $\hat{\xi}$ and $\hat{b}$ to be obtained by generic learners, such as random forests, or neural networks. By using sample splitting, we focus on the approximation bias. We first approximate this bias by a bilinear functional $\mathrm{B}_{\psi,k}$ induced by a dictionary-based projection, and then estimate this functional directly via a $U$-statistic estimator.

Despite this difference, the two approaches are closely connected. As discussed in Section~\ref{app:cattaneo2018}, once the target is reduced to the bilinear functional $\mathrm{B}_{\psi,k}$, we will show that the resulting bootstrap bias-corrected estimator is asymptotically equivalent to our $U$-statistics-based estimator $\hat{\mathrm{B}}_{\psi,k}$.

Let $\{O_i=(Y_i,T_i,X_i)\}_{i=1}^n$ be the original sample and  let  $O_1^*,\ldots,O_n^*$ be a bootstrap sample drawn with replacement from $\{O_i\}_{i=1}^n$.  Define
$$
\hat\alpha_k^*=\frac1n\sum_{j=1}^n (\hat\xi(X_j^*,T_j^*)-1)\sfzbar_{k} (X_j^*),
\qquad
\hat\eta_k^*=\frac1n\sum_{j=1}^n T_j^*(Y_j^*-\hat b(X_j^*,T_j^*))\sfzbar_{k}(X_j^*).
$$
The bootstrap plug-in estimator of $\mathrm{B}_{\psi,k}=\alpha_k^\top\Sigma_k^{-1}\eta_k$ is
$$
\hat{\mathrm{B}}_{\psi,k}^*=(\hat\alpha_k^*)^\top\Sigma_k^{-1}\hat\eta_k^*.
$$

Let  $M_{ni}$ be the number of times that $O_i$ is ``redrawn'' from $\{O_1, \ldots, O_n \}$ to form $O_1^*, \ldots, O_n^*, $. By construction, the vector of counts $(M_{n1}, \ldots , M_{nn})$ is independent of
$O_1, \ldots, O_n$ and multinomially distributed with parameters $n$ and (probabilities) $1/n, \ldots,
1/n$. This representation allows us to rewrite any bootstrap empirical average as a weighted average over the original sample, with random weights $\{M_{ni}\}_{i=1}^n$. Then
\begin{align*}
    &\hat\alpha_k^*=\frac1n\sum_{j=1}^n (\hat\xi(X_j^*,T_j^*)-1)\sfzbar_{k} (X_j^*) = \frac1n\sum_{i=1}^n M_{ni}(T_i\hat\xi(X_i,T_i)-1) \sfzbar_{k} (X_i)
\end{align*}
and similarly
\begin{equation*}
\hat\eta_k^*=\frac1n\sum_{i=1}^n M_{ni} T_i(Y_i-\hat b(X_i,T_i)) \sfzbar_{k} (X_i).
\end{equation*}

The difference $\hat{\mathrm{B}}_{\psi,k}^* - \hat{\mathrm{B}}_{\psi,k,\text{plug-in}}$  provides a natural estimator of the bias of the  plug-in estimator $\hat{\mathrm{B}}_{\psi,k,\text{plug-in}}$ for $\mathrm{B}_{\psi,k}$. For simplicy, we define the oracle bootstrap bias corrected estimator as
\begin{equation}
    \hat{\mathrm{B}}_{\psi,k,\text{bootstrap}}  = \hat{\mathrm{B}}_{\psi,k,\text{plug-in}} -  \bbE^* [\hat{\mathrm{B}}_{\psi,k}^* - \hat{\mathrm{B}}_{\psi,k,\text{plug-in}}], 
\end{equation}
where $\bbE^*[]$ denotes expectation conditional on the original sample. 
We refer to this estimator as oracle because it involves the conditional expectation $\bbE^*[]$. In practice, one can approximate this conditional expectation by the Monte Carlo bootstrap average over $B$ resamples. 

Lemma~\ref{lemma:bhat boot} below shows that 
\begin{equation*}
    \hat{\mathrm{B}}_{\psi,k,\text{bootstrap}} = \hat{\mathrm{B}}_{\psi,k} + O_{\bbP} \left( \frac{k}{n^2} \right).
    \end{equation*}
Under $k= o(n^2)$, it follows that
\begin{align*}
  \frac{k}{n^2} \cdot \frac{n}{\sqrt{k+n}} =   \frac{k}{n\sqrt{k+n}} = o(1). 
\end{align*}
Hence, the remainder term is asymptotically negligible under the normalization used in our distributional results. Therefore, $\hat{\mathrm{B}}_{\psi,k,\text{bootstrap}}$ and $\hat{\mathrm{B}}_{\psi,k}$ are asymptotically equivalent up to an error term of order $k / n^{2}$, which is $o (n^{-1 / 2})$ (negligible compared to the parametric rate) if $k = o (n^{3 / 2})$.

\begin{lemma} \label{lemma:bhat boot}
The following hold under Assumptions~\ref{as:nuisance_est}--\ref{as:basis}:
    \begin{equation*}
    \hat{\mathrm{B}}_{\psi,k, \normalfont\text{bootstrap}} = \hat{\mathrm{B}}_{\psi,k} + O_{\bbP} \left( \frac{k}{n^2} \right).
    \end{equation*}
\end{lemma}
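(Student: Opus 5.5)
The plan is to compute the oracle bootstrap estimator in closed form and compare it algebraically with $\hat{\mathrm{B}}_{\psi,k}$ in \eqref{eq:Bk app}, as in the proof of Lemma~\ref{lemma:bhat jackknife}; only the final remainder needs probabilistic bounds. Write
\[
a_i \coloneqq (\hat\xi(X_i,T_i)-1)\sfzbar_k(X_i),\qquad e_j \coloneqq T_j(Y_j-\hat b(X_j,T_j))\sfzbar_k(X_j),\qquad w_{ij} \coloneqq a_i^\top\Sigma_k^{-1}e_j .
\]
Then $\hat{\mathrm{B}}_{\psi,k,\text{plug-in}} = n^{-2}\sum_{i,j} w_{ij}$ and $\hat{\mathrm{B}}_{\psi,k} = \{n(n-1)\}^{-1}\sum_{i\neq j} w_{ij}$.

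\textbf{Step 1: exact bootstrap mean.} Using the multinomial representation, $\hat{\mathrm{B}}^*_{\psi,k} = n^{-2}\sum_{i,j} M_{ni}M_{nj}w_{ij}$. For Multinomial$(n;1/n,\dots,1/n)$ counts, the second moments are
\[
\bbE^*[M_{ni}M_{nj}] = 1-1/n \quad (i\neq j), \qquad \bbE^*[M_{ni}^2] = 2-1/n .
\]
Hence
\[
\bbE^*\hat{\mathrm{B}}^*_{\psi,k} = n^{-2}\Big\{(1-1/n)\sum_{i\neq j}w_{ij} + (2-1/n)\sum_i w_{ii}\Big\}.
\]

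\textbf{Step 2: algebraic comparison.} Since $\hat{\mathrm{B}}_{\psi,k,\text{bootstrap}} = 2\hat{\mathrm{B}}_{\psi,k,\text{plug-in}} - \bbE^*\hat{\mathrm{B}}^*_{\psi,k}$, Step~1 gives
\[
\hat{\mathrm{B}}_{\psi,k,\text{bootstrap}} = n^{-2}\Big\{(1+1/n)\sum_{i\neq j}w_{ij} + n^{-1}\sum_i w_{ii}\Big\}.
\]
Subtracting $\hat{\mathrm{B}}_{\psi,k}$ and simplifying the coefficient $(n+1)/n^3 - 1/\{n(n-1)\} = -1/\{n^3(n-1)\}$ yields the exact identity
\[
\hat{\mathrm{B}}_{\psi,k,\text{bootstrap}} - \hat{\mathrm{B}}_{\psi,k} = -\frac{1}{n^2}\hat{\mathrm{B}}_{\psi,k} + \frac{1}{n^3}\sum_{i=1}^n w_{ii}.
\]

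\textbf{Step 3: bounding the remainder.} Both terms are controlled by a uniform bound on the $w_{ij}$. By Assumption~\ref{as:nuisance_est}(i) together with boundedness of $\Gamma$ (here $Y$ bounded), $|\hat\xi-1|$ and $|T(Y-\hat b)|$ are $O(1)$. Lemma~\ref{lemma:Gram-Matrix eigenvalue bounds} gives $\|\Sigma_k^{-1}\|_{\op}\lesssim 1$; the same argument applies to $\bbE[T\sfzbar_k(X)^{\otimes 2}]$ using positivity in Assumption~\ref{as:identification}(ii). Assumption~\ref{as:basis}(iv) gives $\|\sfzbar_k\|_\infty^2\lesssim k$. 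Combining these,
\[
\max_{i,j}|w_{ij}| \lesssim k \quad\text{deterministically}.
\]
Therefore $n^{-3}\sum_i |w_{ii}| \lesssim k/n^2$, and $n^{-2}|\hat{\mathrm{B}}_{\psi,k}| \le n^{-2}\max_{i\neq j}|w_{ij}| \lesssim k/n^2$. This gives the claimed $O_{\P}(k/n^2)$ bound, in fact deterministically. A sharper bound on the first term follows from $\hat{\mathrm{B}}_{\psi,k} = O_{\P}(1)$, but it is not needed.

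The only delicate point is Step~1: getting the multinomial second moments right so that the off-diagonal coefficient nearly cancels. The rest is bookkeeping, and the main thing to check is the uniform eigenvalue lower bound on the relevant Gram matrix, which is needed so that $|w_{ii}|\lesssim k$ rather than something larger.
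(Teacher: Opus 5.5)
Your proposal is correct and follows the paper's proof: the same multinomial second moments give the same exact identity $\hat{\mathrm{B}}_{\psi,k,\text{bootstrap}} - \hat{\mathrm{B}}_{\psi,k} = -n^{-2}\hat{\mathrm{B}}_{\psi,k} + n^{-3}\sum_i w_{ii}$. The only difference is in bounding the remainder. The paper invokes Lemma~\ref{lemma:V remainder}, using a mean--variance argument to get $\hat{\mathrm{B}}_{\psi,k} = O_{\P}(1)$ and Markov's inequality for the diagonal sum. Your deterministic bound $\max_{i,j}|w_{ij}| \lesssim k$ is cruder on the off-diagonal piece, but it already gives the $k/n^2$ rate, since the diagonal piece is of that order anyway.
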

\begin{proof}
   Using the definitions of  $\hat{\mathrm{B}}_{\psi,k}^* $ and $ \hat{\mathrm{B}}_{\psi,k,\text{plug-in}}$, we have
\begin{align*}
    \hat{\mathrm{B}}_{\psi,k}^* - \hat{\mathrm{B}}_{\psi,k,\text{plug-in}} = & \ (\hat\alpha_k^*)^\top\Sigma_k^{-1}\hat\eta_k^* - \hat\alpha_k^\top\Sigma_k^{-1}\hat\eta_k\\
    = & \ \frac{1}{n^2} \sum_{i=1}^n \sum_{j=1}^n  \{ M_{ni} M_{nj} - 1\}  (\hat\xi(X_i,T_i)-1) \sfzbar_{k} (X_i)^{\top} \Sigma_{k}^{-1}  \sfzbar_{k} (X_j)  T_j(Y_j-\hat b(X_j,T_j)) \\
    = & \ \frac{1}{n^2} \sum_{i=1, i \neq j}^n \sum_{j=1}^n  \{ M_{ni} M_{nj} - 1\}  (\hat\xi(X_i,T_i)-1) \sfzbar_{k} (X_i)^{\top} \Sigma_{k}^{-1} \sfzbar_{k} k(X_j) T_j(Y_j-\hat b(X_j,T_j)) \\
    & + \frac{1}{n^2} \sum_{i=1}^n   \{ M_{ni}^2 - 1\}  (\hat\xi(X_i,T_i)-1) \sfzbar_{k} (X_i)^{\top} \Sigma_{k}^{-1} \sfzbar_{k} (X_i) T_i(Y_i-\hat b(X_i,T_i)). 
\end{align*}
Since 
\begin{align*}
    &\bbE^*[M_{ni}M_{nj}]
=\frac{n(n-1)}{n^2} = 1  - \frac{1}{n},\\
&\bbE^*[M_{ni}^2]
=\frac{n(n-1)}{n^2}+1 = \frac{n-1}{n}+1 = 2 - \frac{1}{n},
\end{align*}
taking $\bbE^*[\cdot]$ conditional on the original sample yields
\begin{align*}
    \bbE^* [\hat{\mathrm{B}}_{\psi,k}^* - \hat{\mathrm{B}}_{\psi,k,\text{plug-in}}] =& - \frac{1}{n^3} \sum_{i=1, i \neq j}^n \sum_{j=1}^n    (\hat\xi(X_i,T_i)-1) \sfzbar_{k} (X_i)^{\top} \Sigma_{k}^{-1} \sfzbar_{k} (X_j) T_j (Y_j-\hat b(X_j,T_j)) \\
    & + \frac{n-1}{n^3} \sum_{i=1}^n    (\hat\xi(X_i,T_i)-1) \sfzbar_{k} (X_i)^{\top} \Sigma_{k}^{-1} \sfzbar_{k} (X_i) T_i (Y_i-\hat b(X_i,T_i)) \\
    =& - \frac{1}{n^3} \sum_{i=1}^n \sum_{j=1}^n    (\hat\xi(X_i,T_i)-1) \sfzbar_{k} (X_i)^{\top} \Sigma_{k}^{-1} \sfzbar_{k} (X_j) T_j (Y_j-\hat b(X_j,T_j)) \\
    &+  \frac{1}{n^2} \sum_{i=1}^n    (\hat\xi(X_i,T_i)-1) \sfzbar_{k} (X_i)^{\top} \Sigma_{k}^{-1} \sfzbar_{k} (X_i) T_i(Y_i-\hat b(X_i,T_i)) \\
    =&  - \frac{1}{n} \hat{\alpha}_{k}^{\top} \Sigma_{k}^{-1} \hat{\eta}_{k} +  \frac{1}{n^2} \sum_{i=1}^n    (\hat\xi(X_i,T_i)-1) \sfzbar_{k} (X_i)^{\top} \Sigma_{k}^{-1} \sfzbar_{k} (X_i) T_i (Y_i-\hat b(X_i,T_i)).
\end{align*}
Hence,
\begin{align*}
\hat{\mathrm{B}}_{\psi,k,\text{bootstrap}}  = & \ \hat{\mathrm{B}}_{\psi,k,\text{plug-in}} -  \bbE^* [\hat{\mathrm{B}}_{\psi,k}^* - \hat{\mathrm{B}}_{\psi,k,\text{plug-in}}] \\
= & \ \frac{n + 1}{n} \hat{\alpha}_{k}^{\top} \Sigma_{k}^{-1} \hat{\eta}_{k} -  \frac{1}{n^2} \sum_{i=1}^n    (\hat\xi(X_i,T_i)-1) \sfzbar_{k} (X_i)^{\top} \Sigma_{k}^{-1}  \sfzbar_{k} (X_i) T_i(Y_i-\hat b(X_i,T_i)) \\
= & \ \frac{n + 1}{n^3} \sum_{i=1, i \neq j}^n \sum_{j=1}^n    (\hat\xi(X_i,T_i)-1) \sfzbar_{k} (X_i)^{\top} \Sigma_{k}^{-1}  \sfzbar_{k} (X_j) T_j(Y_j-\hat b(X_j,T_j)) \\
 & + \left(\frac{n + 1}{n^3} - \frac{1}{n^2}  \right) \sum_{i=1}^n    (\hat\xi(X_i,T_i)-1) \sfzbar_{k} (X_i)^{\top} \Sigma_{k}^{-1}  \sfzbar_{k} (X_i) T_i(Y_i-\hat b(X_i,T_i)) \\
 = & \ \hat{\mathrm{B}}_{\psi,k} - \frac{1}{n^3(n-1)} \sum_{i=1, i \neq j}^n \sum_{j=1}^n (\hat\xi(X_i,T_i)-1) \sfzbar_{k} (X_i)^{\top} \Sigma_{k}^{-1} \sfzbar_{k} (X_j) T_j (Y_j-\hat b(X_j,T_j)) \\
 & +\frac{1}{n^3}   \sum_{i=1}^n    (\hat\xi(X_i,T_i)-1) \sfzbar_{k} (X_i)^{\top} \Sigma_{k}^{-1} \sfzbar_{k} (X_i) T_i(Y_i-\hat b(X_i,T_i)) \\
 =& \ \hat{\mathrm{B}}_{\psi,k} + O_{\bbP} \left( \frac{k}{n^2} \right),
\end{align*}
where the last equality holds by   Lemma~\ref{lemma:V remainder}.  
\end{proof}

\begin{lemma}\label{lemma:V remainder}
The following hold under Assumptions~\ref{as:nuisance_est}--\ref{as:basis}:
    \begin{align*}
         & \frac{1}{(n-1)n^2} \sum_{i = 1}^n \sum_{i \neq j, j = 1}^n (\hat{\xi} (X_i,T_i) - 1) \sfzbar_{k} (X_i)^{\top} \Sigma_{k}^{-1} \sfzbar_{k} (X_j) T_j( Y_j - \hat{b} (X_j,T_j)) = O_{\P} \left(1\right), \\
         & \frac{1}{n^2} \sum_{i = 1}^n (\hat{\xi} (X_i,T_i) - 1) \sfzbar_{k} (X_i)^{\top} \Sigma_{k}^{-1} \sfzbar_{k} (X_i) T_i( Y_i - \hat{b} (X_i,T_i)) =  O_{\bbP} \left( \frac{k}{n}  \right).
    \end{align*}
\end{lemma}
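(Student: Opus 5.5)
The plan is to handle the two displays separately. The first is a rescaled second-order $U$-statistic, so I will bound its mean and variance. The second is a diagonal sum, so I will bound it deterministically using the sup-norm bound on the basis. Write $a(O) \coloneqq (\hat{\xi}(X,T)-1)\sfzbar_{k}(X)$ and $c(O) \coloneqq T\{Y-\hat{b}(X,T)\}\sfzbar_{k}(X)$. Then the kernel is $h(O_i,O_j)=a(O_i)^{\top}\Sigma_k^{-1}c(O_j)$. Everything is conditional on the nuisance sample, so $\hat{\xi}$ and $\hat{b}$ are fixed functions.

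\textbf{Preliminary facts.}
\begin{enumerate}
\item The smallest eigenvalue of $\Sigma_k=\bbE[T\sfzbar_k(X)^{\otimes 2}]$ is bounded away from zero. By positivity (Assumption~\ref{as:identification}(ii)), $\bbE[T\mid X]\ge c_1$, so $\Sigma_k \succeq c_1\,\bbE[\sfzbar_k(X)^{\otimes 2}]$. The latter has eigenvalues bounded below uniformly in $k$ by Lemma~\ref{lemma:Gram-Matrix eigenvalue bounds}. Hence $\|\Sigma_k^{-1}\|_{\op}\lesssim 1$.
\item The scalar multipliers $\hat{\xi}-1$ and $T(Y-\hat{b})$ are uniformly bounded. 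For $\hat{\xi}$ and $\hat{b}$ this is Assumption~\ref{as:nuisance_est}(i). For $Y-\hat b$, the bound comes from the boundedness of the residual $\Gamma$ in Assumption~\ref{as:criterion}(ii), which in the ATE case amounts to bounded $Y$; I would state this explicitly if needed.
\end{enumerate}

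\textbf{Second display.} Each summand is bounded by $C\|\sfzbar_k(X_i)\|^2\|\Sigma_k^{-1}\|_{\op}\lesssim\zeta_x(k)^2\lesssim k$, by Assumption~\ref{as:basis}(iv). Summing $n$ such terms and dividing by $n^2$ gives a deterministic bound of order $k/n$, which is stronger than $O_{\P}(k/n)$.

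\textbf{First display.} This expression equals $n^{-1}\,\bbU_{n,2}h$, so it suffices to show $\bbU_{n,2}h=O_{\P}(1)$; this already gives the stated bound with room to spare.
\begin{enumerate}
\item Mean. The mean is $\bbE[a]^{\top}\Sigma_k^{-1}\bbE[c]$. For any bounded $f$, $\bbE[f\,\sfzbar_k]^{\top}\Sigma_k^{-1}\bbE[f\,\sfzbar_k]$ is, up to the eigenvalue comparison in the first preliminary fact, an $L_2$-projection norm $\|\Pi(f\mid\sfzbar_k)\|^2\le\|f\|^2_{\P,2}$. By Cauchy--Schwarz in the $\Sigma_k^{-1}$ inner product, the mean is therefore $O(1)$.
\item Variance. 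I will use the Hoeffding decomposition. The linear part has variance at most $n^{-1}$ times $\bbE[(a(O)^{\top}\Sigma_k^{-1}\bbE c)^2]+\bbE[(\bbE a^{\top}\Sigma_k^{-1}c(O))^2]$. Each of these is $O(1)$ by the same projection argument, since $\|\Sigma_k^{-1/2}\bbE c\|\lesssim 1$ and the largest eigenvalue of $\bbE[\sfzbar_k^{\otimes 2}]$ is bounded.
\item Degenerate part. Its variance is $O(n^{-2})$ times $\bbE[h(O_1,O_2)^2]\le\operatorname{tr}\{\Sigma_k^{-1}\bbE[aa^{\top}]\Sigma_k^{-1}\bbE[cc^{\top}]\}\lesssim k$.
\item Conclusion. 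The variance is therefore $O(1/n+k/n^2)=o(1)$, because $k=o(n^2)$ by Assumption~\ref{as:basis}(v). Chebyshev's inequality then gives $\bbU_{n,2}h=O_{\P}(1)$.
\end{enumerate}

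\textbf{Main obstacle.} The only delicate step is the variance of the degenerate term. It requires controlling $\operatorname{tr}(\Sigma_k^{-1}M_a\Sigma_k^{-1}M_c)$ with $M_a=\bbE[aa^{\top}]$ and $M_c=\bbE[cc^{\top}]$. My approach is to use $M_a,M_c\preceq C\,\bbE[\sfzbar_k^{\otimes 2}]$ together with the uniform eigenvalue bounds, which yields $O(k)$. Everything else is routine bookkeeping.
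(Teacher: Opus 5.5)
Your proposal is correct and follows essentially the paper's route. For the off-diagonal $U$-statistic you bound the mean and show a variance of order $1/n+k/n^2$ (the paper cites the calculation of Lemma~\ref{eq:var-F3}), then apply Chebyshev. For the diagonal term you use the sup-norm bound $\sfzbar_k(x)^{\top}\Sigma_k^{-1}\sfzbar_k(x)\lesssim k$ deterministically, where the paper bounds the first absolute moment and applies Markov's inequality.

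You also supply details the paper leaves implicit: the lower eigenvalue bound on $\Sigma_k$ via positivity, and the explicit Hoeffding computation. As in the paper's proof, your argument actually gives $O_{\P}(1/n)$ for the first display, which is the strength needed in Lemma~\ref{lemma:bhat boot}.
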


\begin{proof}
Denote
    \begin{align*}
      I \coloneqq &\  \frac{1}{n^2} \sum_{i = 1}^n \sum_{i \neq j, j = 1}^n (\hat{\xi} (X_i,T_i) - 1) \sfzbar_{k} (X_i)^{\top} \Sigma_{k}^{-1} \sfzbar_{k} (X_j) T_j( Y_j - \hat{b} (X_j,T_j)),\\
      II \coloneqq  & \ \frac{1}{n^2} \sum_{i = 1}^n (\hat{\xi} (X_i,T_i) - 1) \sfzbar_{k} (X_i)^{\top} \Sigma_{k}^{-1} \sfzbar_{k} (X_i) T_i( Y_i - \hat{b} (X_i,T_i)).
    \end{align*}
For the term $I$, the same variance calculation as in Lemma~\ref{eq:var-F3} yields $\var(I) = O (1/n + k/n^2)$. For the mean, 
\begin{align*}
  \bbE[I] = \bbE [(\hat{\xi} (X,T) - 1) \sfzbar_{k} (X)]^{\top} \Sigma_{k}^{-1} \bbE[\sfzbar_{k} (X) T( Y - \hat{b} (X,T))]  = \alpha_k^{\top} \Sigma_{k} \eta_k  = O\left(1\right),
\end{align*}
    where the last equality holds by the definitions of $\alpha_k$ and $\eta_k$, and the condition that the largest eigenvalue of $\Sigma_{k}$ is bounded. Combining mean and variance bounds,
    \begin{align*}
        I=O_{\P} \left(\frac{1}{\sqrt{n}}+\frac{\sqrt{k}}{n}+1\right)
=O_{\P} \left(1\right).
    \end{align*}

For the term $II$, by $\sfzbar_{k}(x)^\top\Sigma_k^{-1}\bar\sfzbar_{k}(x)\le \|\Sigma_k^{-1}\|_{\mathrm{op}}\| \sfzbar_{k} (x)\|^2
\lesssim \zeta_k^2 $,
\begin{align*}
    &\bbE \left[ \left| \frac{1}{n^2} \sum_{i = 1}^n (\hat{\xi} (X_i,T_i) - 1) \sfzbar_{k} (X_i)^{\top} \Sigma_{k}^{-1} \sfzbar_{k} (X_i) T_i( Y_i - \hat{b} (X_i,T_i)) \right| \right]\\
    \lesssim & \  \frac{\zeta_k^2}{n} \bbE \left[ \left| (\hat{\xi} (X_i,T_i) - 1) T_i( Y_i - \hat{b} (X_i,T_i)) \right| \right] = O \left( \frac{\zeta_k^2}{n} \right) = O \left( \frac{k}{n}  \right). 
\end{align*}
Thus, by Markov's inequality,  we have 
\begin{align*}
  II =   \frac{1}{n^2} \sum_{i = 1}^n (\hat{\xi} (X_i,T_i) - 1) \sfzbar_{k} (X_i)^{\top} \Sigma_{k}^{-1} \sfzbar_{k} (X_i) T_i( Y_i - \hat{b} (X_i,T_i))  =  O_{\bbP} \left( \frac{k}{n}  \right).
\end{align*}

Combining the results for terms $I$ and $II$, we complete the proof.
\end{proof}

\section{Technical Results Related to \texorpdfstring{$U$}{}-Processes}
\label{app:U-processes}
\setcounter{equation}{0}
In this section, we establish refined local maximal inequalities for $U$-processes that can be used to establish convergence rates and asymptotic normality of higher-order estimators considered in this paper. These are key technical tools for the main results in Sections~\ref{sec:properties} and~\ref{sec:kernels}, and may also be of independent interest for other related applications. We begin by introducing the necessary notation.

\subsection{Notation and Review of Concepts from \texorpdfstring{$U$}{}-Process Theory}\label{app:notation}
  We first recall the key notation introduced in Section~\ref{sec:notation}, and then introduce additional terminologies for $U$-statistics and $U$-processes that will be used in the technical lemmas that follow.

Recall that $\bbU_{n, m}$ is the $m$-th order $U$-statistic operator, i.e., 
\begin{align*}
\bbU_{n, m} h \equiv \bbU_{n, m} h (O_{1}, \cdots, O_{m}) = \frac{(n - m)!}{n!} \sum_{1 \leq i_{1} \neq \cdots \neq i_{m} \leq n} h (O_{i_{1}}, \cdots, O_{i_{m}})
\end{align*}
for any measurable function (kernel) $h: \calO^{m} \rightarrow \bbR$.  So, $\bbU_{n, 1} \equiv \bbP_{n}$ is understood to be the empirical mean operator.  Let $S_m$ denote the symmetrization of a function of $m$ variables, 
\begin{align*}
    S_m h (O_{1}, \cdots, O_{m}) = \frac{1}{m!} \sum_{1 \leq i_{1} \neq \cdots \neq i_{m} \leq m} h (O_{i_{1}}, \cdots, O_{i_{m}}).
\end{align*}
Because a $U$-statistic averages over unordered subsets, $\bbU_{n,m} h = \bbU_{n,m} \bigl(S_{m}h\bigr)$. 
Accordingly, all subsequent notation and results are stated for
symmetric kernels; when a kernel is asymmetric, they apply by default to
its symmetrized version.

For any symmetric measurable function $h:\calO^{m} \to \bbR$ and $j = 1,\ldots,m$, let $\P^{m-j}h$ denote the function on $\calO^{j}$ defined by
\begin{align}\label{recall notation}
\begin{split}
    \P^{m-j}h(o_1,...,o_j) = & \ \bbE[h(o_1,\ldots,o_j, O_{j+1},\ldots, O_{m})]\\
= & \int\cdots \int h(o_1,\ldots,o_j,o_{j+1},\ldots,o_{m})\diff \P( o_{j+1})\cdots \diff \P(o_m),
\end{split}
\end{align}
where we always take $O_{1}, \cdots, O_{m}$ to be i.i.d. drawn from a common law $\bbP$ in this paper. Thus, $\P^{0} h \equiv h$ and $\P^{m} h \equiv \bbE [h(O_1,\ldots,O_m)]$. We also introduce the following shorthand notation for Hoeffding projection of a symmetric measurable function $h$, for $j \leq m$:
\begin{equation}
\label{hajek}
\pi_{j}  h (o_1 ,\ldots, o_j) \coloneqq (\delta_{o_1} -\P)\cdots(\delta_{o_j} -\P)\P^{m-j} h(o_1 ,\ldots,o_m ).
\end{equation}
For a non-symmetric kernel $h$ we work with its symmetrised form
$S_m h$; all the above definitions are understood with that replacement, i.e., $\P^{m-j} h \equiv \P^{m-j} S_m h$ and $\pi_{j} h \equiv \pi_{j} S_m h$.  Then the Hoeffding decomposition of $\bbU_{n, m} h$ of any measurable functions $h$ is given by
\begin{align*}
    \bbU_{n, m} h - \P^m h = \sum_{j=1}^{m} \binom{m}{j} \bbU_{n, j} (\pi_{j} h).
\end{align*}

 We next review some basic terminologies about $U$-processes used throughout. 
Let $\calF$ be a class of symmetric measurable functions $f : \calO^{m} \rightarrow  \bbR$. We assume that there is a symmetric measurable envelope $\F$ for $\calF$, that is $\sup_{f\in\calF} |f| \leq \F $, 
such that $\P^{m} \F^2 < \infty$.  We will assume certain uniform covering number conditions for the function class $\calF$. Let $\|\cdot\|_{\calF}$ denote the supremum norm over $\calF$, i.e.  $\|f\|_{\calF}\coloneqq \sup_{f\in\calF}|f|$.
For a probability measure $Q$ and a constant $p > 1$, let $\| \cdot \|_{Q,p}$ denote the $L_{p} (Q)$-seminorm, i.e., $\|f\|_{Q, p} \coloneqq \{\int |f (x)|^{p} \diff Q (x)\}^{1/p}$ for finite q while $\|f\|_{Q,\infty}$ denotes the essential supremum of $|f|$ with respect to $Q$. For any function class $\calF$ with $\| \F \|_{Q,p} > 0$, we use $\sfN ( \calF,\|\cdot\|_{Q,p} , \epsilon\|\F\|_{Q,p})$ to denote the minimal number of $\|\cdot\|_{Q,p}$-balls of radius $\epsilon\|\F\|_{Q,p}$ needed to cover $\calF$. See Section 2.1 of \cite{van2023weak} for details. For $j = 1,\ldots,m$, we define the uniform entropy integral
\begin{align*}
   J_{j}(\delta,\calF,\F ) \coloneqq   \int_{0}^{\delta} \sup_{Q}  \left\{ 1 + \log \sfN (\P^{m-j}\calF, \|\cdot\|_{Q,2} , \tau \| \P^{m-j} \F\|_{Q,2}) \right\}^{j/2} \diff  \tau,
\end{align*}
where $\P^{m-j}\calF = \{\P^{m-j}f : f \in \calF\}$ and $\sup_Q$ is taken over all finitely discrete distributions on $\calO^j$. It is straightforward to verify that $\P^{m-j} \F$ is an envelope for $\P^{m-j} \calF$. To avoid measurability difficulties, we will assume that $\F$ is pointwise measurable. If $\F$ is pointwise measurable and $\P^m \F < \infty$ (which we have assumed), then $\pi_{j}\calF \coloneqq\{\pi_{j} f : f \in\calF\}$ and $\P^{m-j}  \F$ for $j = 1,\ldots, m$ are all pointwise measurable by the dominated convergence theorem.  

For completeness, we recall Definitions~\ref{def:VC-type} and~\ref{def:VC-subgraph}.
\setcounter{definition}{0}
\begin{definition}[VC-subgraph class]
A collection $\calF$ of measurable functions is called a VC-subgraph class, or a VC-class, if the collection of all subgraphs of the functions (see Section 2.6.2 of \citet{van2023weak}) in $\calF$ forms a VC-class of sets (see Section 2.6.1 of \citet{van2023weak}).
\end{definition}

\begin{definition}[VC type class]
A function class $\calF$ with envelope $\F$ is said to be of VC type with characteristic $(A, v)$ if $\sup_{Q} \sfN \left( \calF, \| \cdot \|_{Q, 2}, \epsilon \| \F \|_{Q, 2} \right) \leq (A / \epsilon)^{v}$ for all $\epsilon \in (0, 1]$, where $\sup_{Q}$ denotes the supremum over all finitely discrete probability measures.
\end{definition}

\subsection{Local Maximal Inequalities for \texorpdfstring{$U$}{}-Processes} \label{app:Maximal Inequalities}

In this section we allow the function class to depend explicitly on the sample size $n$. Throughout, we denote a generic function class by $\calF_n$
to emphasize this dependence and derive non-asymptotic bounds that hold for each fixed $n$.

\cite{chen2020jackknife} establish local maximal inequalities for $U$-processes indexed by a VC-type class $\calF_n \coloneqq \{f: \calO^2 \rightarrow \bbR\}$, whose envelope is denoted by $\F_n$. Their bounds are expressed in terms of the VC characteristics $(A_n,v_n)$ and quantities including $\sup_{f\in\calF_n}\|\P^{2-k} f\|_{\P^{k},2}$ and $\|\P^{2-k} \F_n\|_{\P^{k},2}$ for $k=1,2$ (see their Corollary 5.5). 
Because their bound uses the projected envelope $\P \F_n (\cdot)$ (i.e. the conditional mean of $\F_{n}$ in one of the arguments), it can be unnecessarily loose when $\P \F_n$ is of greater magnitude in order than the quantity of actual interest, $\sup_{f\in\calF_n}|\P f|$. To sharpen the analysis, we work directly with $\sup_{f\in\calF_n} |\P f|$ and derive a refined maximal inequality tailored specifically for our problem. The next lemma records this result; as indicated, it is particularly useful when $\sup_{f\in\calF_n}|\P f|$ is much less than $\P \F_n$.

\begin{lemma}\label{lemma:Umaximal_inequality_order2} 
Suppose that (1) $\calF_n \coloneqq \{f: \calO^2 \rightarrow \bbR\}$ is a collection of symmetric pointwise measurable functions from $\calO^{2}$ to $\bbR$ with envelope function $\F_n$, and (2) $\calF_n$ is of VC type with characteristics $(A_n,v_n)$ with $A_n \geq e$ and $v_n \geq 1$ (recall from Definition~\ref{def:VC-type}). Then 
\begin{enumerate}[label=(\roman*)]
    \item $ \displaystyle
        \bbE \left[ \|\bbU_{n,2}(f) - \P^{2} f \|_{\calF_n}  \right] \lesssim \frac{ v_n\log A_n \cdot \|\F_n\|_{\P^2,2} }{n} + \frac{(v_n\log A_n)^{1/2}\cdot\|\P\F_{n}\|_{\P^2,2}}{\sqrt{n}}.$

        \item If we further assume (3) $\sup_{f\in \calF_n} \|\P f (\cdot) \|_{\infty} \leq m_n$, and $\|\P \F_n (\cdot) \|_{\infty} \leq M_n$, then
    \begin{align*}
        \bbE \left[ \|\bbU_{n,2}(f) -  \P^{2} f \|_{\calF_n}  \right] \lesssim & \ \frac{ \left( \|\F_n\|_{\P^2,2} +  m_n \right) \cdot  v_n\log( A_{n}^{\prime} \vee n )}{n} \\
        & + \frac{ \sup_{f\in \calF_n} \|\P f \|_{\P,2} \cdot\sqrt{v_n\log (A_n^{\prime} \vee n) }}{\sqrt{n}}\, ,
    \end{align*}
    where $A_{n}^{\prime} \coloneqq \sqrt{A_n}M_n/m_n$.
\end{enumerate}
\end{lemma}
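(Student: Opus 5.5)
We start from the Hoeffding decomposition of Section~\ref{app:notation},
\begin{align*}
\bbU_{n,2} f - \P^2 f = 2\,\bbP_n(\pi_1 f) + \bbU_{n,2}(\pi_2 f),
\end{align*}
and bound the two suprema over $\calF_n$ separately. In both parts the degenerate term $\bbU_{n,2}(\pi_2 f)$ is handled the same way. Randomization (decoupling plus Rademacher symmetrization for degenerate $U$-processes) conditions on the data and reduces it to a Rademacher chaos of order two. Its sub-exponential (order-two chaos) increments are then controlled by Dudley/Talagrand chaining in the empirical $L_2(\bbU_{n,2})$ metric. The VC-type condition gives entropy $v_n\log(A_n/\tau)$. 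Since $\pi_2\calF_n$ inherits VC type with characteristics polynomial in $(A_n,v_n)$ (e.g.\ Lemma~5.4 of \citet{chen2020jackknife}), the entropy integral $J_2(1,\calF_n,\F_n)\lesssim v_n\log A_n$. This yields the bound $n^{-1}\,v_n\log A_n\,\|\F_n\|_{\P^2,2}$, after controlling the random empirical envelope norm in expectation via Jensen.

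For part~(i), the linear term $\bbP_n(\pi_1 f)=\bbP_n(\P f - \P^2 f)$ is an ordinary empirical process indexed by $\P\calF_n$. The class $\P\calF_n$ is VC type with envelope $\P\F_n$: conditional expectation does not increase uniform covering numbers, by Jensen applied to the product measure $Q\otimes\P$ approximated by finitely discrete measures. The standard local maximal inequality then gives $n^{-1/2}(v_n\log A_n)^{1/2}\|\P\F_n\|_{\P,2}$ plus a lower-order term $n^{-1}v_n\log A_n\|\P\F_n\|_{\infty}$. We avoid this $L_\infty$ term by using the cruder $J(1)\|\P\F_n\|/\sqrt n$ form of the inequality, namely Theorem~2.14.1 of \citet{van2023weak}. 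Adding the two pieces proves~(i).

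Part~(ii) is the real content, and the linear term is the main obstacle. There we apply the sharper local inequality of Chernozhukov--Chetverikov--Kato (their Theorem~5.2) to the class $\calG_n:=\{\P f:f\in\calF_n\}$. We do not use the envelope $\P\F_n$; instead we take the constant envelope $m_n$, which is valid because $\sup_f\|\P f\|_\infty\le m_n$, and the variance proxy $\sigma_n:=\sup_f\|\P f\|_{\P,2}\le m_n$. The difficulty is that the VC characteristic is only known relative to $\P\F_n$, so we must transfer it to the new envelope. For any finitely discrete $Q$ we have $\|\P\F_n\|_{Q,2}\le M_n$, so
\begin{align*}
\sfN(\calG_n,\|\cdot\|_{Q,2},\epsilon m_n)\le \sfN\bigl(\calG_n,\|\cdot\|_{Q,2},(\epsilon m_n/M_n)\|\P\F_n\|_{Q,2}\bigr)\le (\sqrt{A_n}M_n/(m_n\epsilon))^{v_n}.
\end{align*}
The factor $\sqrt{A_n}$ absorbs the constant lost in passing to $\P\calF_n$. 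This shows $\calG_n$ is VC type with characteristics $(A_n',v_n)$ relative to the envelope $m_n$. The local inequality then gives
\begin{align*}
\bbE\|\bbP_n\pi_1 f\|_{\calF_n}\lesssim \sigma_n\sqrt{v_n\log(A_n' m_n/\sigma_n)/n}+m_n v_n\log(A_n'm_n/\sigma_n)/n .
\end{align*}

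Two cases remain. If $\sigma_n\ge m_n/n$, then $\log(m_n/\sigma_n)\le\log n$, and we recover the stated terms with $\log(A_n'\vee n)$. If $\sigma_n< m_n/n$, we bound the linear term trivially: the supremum is at most twice $\sup_f\|\P f\|_{\P,1}$ in expectation by the triangle inequality, hence at most $2\sigma_n\le 2m_n/n$, which is absorbed in $m_n v_n\log(A_n'\vee n)/n$. Finally, the degenerate term is bounded exactly as in~(i). Since $\log A_n\le\log(A_n'\vee n)$ when $M_n\ge m_n$, we may assume this without loss, as $\P\F_n$ dominates $|\P f|$. Combining the two terms gives~(ii).
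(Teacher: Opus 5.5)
Your route is the paper's. You use the Hoeffding decomposition and the Chen--Kato chaos bound $n^{-1}J_2(1,\calF_n,\F_n)\|\F_n\|_{\P^2,2}$ for the degenerate part, and the $J_1(1)\|\P\F_n\|_{\P,2}/\sqrt{n}$ bound for the linear part in (i). In (ii) you transfer the VC characteristic of $\P\calF_n$ to the constant envelope $m_n$ via $\|\P\F_n\|_{Q,2}\le M_n$, which is exactly Lemma~\ref{lemma:PcalF-VC}(ii). You then apply a local maximal inequality with variance proxy $\sigma_n=\sup_f\|\P f\|_{\P,2}$, as in Lemma~\ref{lemma:sup-p1}. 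Indexing by $\calG_n=\{\P f\}$ and writing $\bbP_n(\pi_1 f)=(\bbP_n-\P)(\P f)$ even spares you the paper's remark about centering.

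The step that fails is your treatment of the case $\sigma_n<m_n/n$. The bound $\bbE\sup_f|\bbP_n(\pi_1 f)|\le 2\sup_f\|\P f\|_{\P,1}$ swaps supremum and expectation. The triangle inequality only yields $2\,\bbE\sup_f|\P f(O)|$, which is an envelope quantity.

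The bound is false already for a simple VC-type class. Take $N\gg n^2$ functions $g_j=m_n\mathbbm{1}_{S_j}$ with disjoint $S_j$ of probability $1/(2N)$ each. Its $L_2(Q)$ covering number at radius $\epsilon m_n$ is at most $1+\epsilon^{-2}$, and it lifts to the lemma's setting via $f_j(o_1,o_2)=\{g_j(o_1)+g_j(o_2)\}/2$. Here $\sup_j\|g_j\|_{\P,1}=m_n/(2N)$ and $\sup_j\|g_j\|_{\P,2}<m_n/n$. Yet with probability at least $1/2$ some observation falls in $\bigcup_j S_j$, so $\bbE\sup_j|(\bbP_n-\P)g_j|\gtrsim m_n/n\gg m_n/N$.

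Your conclusion for this case, a bound of order $m_n v_n\log(A_n'\vee n)/n$, is nevertheless right. The correct justification is that the local maximal inequality holds for any $\sigma$ with $\sup_f\|\P f\|_{\P,2}\le\sigma\le m_n$. Apply it once with $\bar\sigma_n=\sigma_n\vee(m_n/n)$. Then $\log(A_n'm_n/\bar\sigma_n)\lesssim\log(A_n'\vee n)$. The extra piece $(m_n/n)\sqrt{v_n\log(A_n'\vee n)/n}$ is absorbed by $m_n v_n\log(A_n'\vee n)/n$, and no case split is needed. With that repair, your handling of the $\log(m_n/\sigma_n)$ factor is more careful than the paper's, which invokes Lemma~\ref{lemma:sup-p1} without checking its hypothesis on this ratio.

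Two cosmetic points affect constants only. Lemma~5.4 of Chen and Kato gives characteristics $(4\sqrt{A_n}M_n/m_n,2v_n)$ rather than $(A_n',v_n)$. And since only $A_n'\ge\sqrt{A_n}$ holds, you get $\log A_n\le 2\log(A_n'\vee n)$ rather than $\log A_n\le\log(A_n'\vee n)$.
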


\begin{remark}
For the class $\calF_n = \calH_{2,n}$ defined in \eqref{eq:def-H2} in Section~\ref{app:beta-consistency}, $\calF_n$ is of VC type with fixed characteristics $(4A,v)$ and envelope $\F_n = \H_{2,n}$.  In this case, the following hold: $\|\P \F_n \|_{\P,2} \le \|\F_n\|_{\P^2,2} \lesssim \sqrt{k}$, $\|\F_n (\cdot) \|_{\infty}\lesssim k$, $\sup_{f\in\calF_n} \|f\|_{\P^2,2} \lesssim \sqrt{k}$ with $k\lesssim n^2$. Moreover, because each $\P f (\cdot)$ is a conditional expectation (recall from the notation \eqref{recall notation}), \eqref{eq:supnorm_proj}  implies $\sup_{f\in\calF_n} \|\P f (\cdot) \|_{\infty} \lesssim 1$ and $\sup_{f\in\calF_n} \|\P f\|_{\P,2} \lesssim~1$. Plugging these bounds into Corollary 5.5 of \cite{chen2020jackknife} renders all higher–order projection terms negligible, leaving the first order term:
    \begin{align*}
          \bbE \left[ \|\bbU_{n,2}(f) -  \P^{2} f \|_{\calF_n}  \right] \lesssim & \ \frac{\sup_{f\in
          \calF_{n}}\|\P f\|_{\P,2}\sqrt{v\log(A \vee n) } +  \|\P \F_n\|_{\P,2}\cdot v\log(A \vee n) }{\sqrt{n}} \\
          \lesssim &\ \frac{\sqrt{k} \log(n)}{\sqrt{n} }\,.
    \end{align*}

Instead, our maximal inequality replaces the projected envelope $\P \F_n$ by the tighter quantity
$\sup_{f\in\calF_n}|\P f|$, and $\P \F_n $ only influences the bound through the VC characteristics. Thus, $\P \F_n$ only appears in a logarithmic factor, which in turn is dominated by $\log n$. Specifically, Lemma~\ref{lemma:Umaximal_inequality_order2} (ii) implies that:
    \begin{align*}
         \bbE \left[ \|\bbU_{n,2}(f) -  \P^{2} f \|_{\calF_n}  \right] \lesssim&\  \frac{\log(  n)}{\sqrt{n}} + \frac{\sqrt{k} \log(n) }{n}\,.
    \end{align*}
    Even when $n \lesssim k \lesssim n^2$, the right–hand side of the above display still shrinks to zero.  Hence, the refined inequality remains informative 
in the regime where the bound by directly applying results in \cite{chen2020jackknife} is not sufficiently sharp.
\end{remark}

\begin{proof}[Proof of Lemma~\ref{lemma:Umaximal_inequality_order2}] 
For any $f$ defined on $\calO^2$, we have the following decomposition 
    \begin{align*}
      \bbU_{n,2}(f) -  \P^{2} f  =   \bbU_{n, 2} \left( \pi_2 f   \right) + 2\bbU_{n, 1} \left(  	 \pi_1  f   \right).
    \end{align*}
    Here, $\pi_1$ is the \Hajek{} projections defined in Section~\ref{app:notation}.
      Since the function class $\calF_n$ is of VC type with  characteristics $(A_n,v_n)$ satisfying $A_n\geq e$ and $v_n  \geq 1$,  
 applying Corollary 5.6 of \citet{chen2020jackknife} to  function class $\calF_n$ with their $k =1,2$ and $p = 1$, we obtain 
\begin{align}\label{eq:pi2fbound}
     \bbE \left[ \left\|  \bbU_{n,2} (\pi_2 f) \right\|_{\calF_n} \right] \lesssim & \ \frac{J_{2}(1, \calF_n, \F_n)\cdot \|\F_n\|_{\P^2,2}}{n} \lesssim \frac{v_n\log A_n\cdot\|\F_n\|_{\P^2,2}}{n},\\
     \bbE \left[ \left\|  \bbU_{n,1} (\pi_1 f) \right\|_{\calF_n} \right] \lesssim & \ \frac{J_{1}(1, \calF_n, \F_n )\cdot \|\P\F_n \|_{\P,2}}{\sqrt{n}} \lesssim \frac{(v_n\log A_n)^{1/2}\cdot\|\P\F_n \|_{\P,2}}{\sqrt{n}}.\label{eq:pi1fbound}
\end{align}
Combining \eqref{eq:pi2fbound}--\eqref{eq:pi1fbound}, we prove part~(i) of this lemma. 

If we further assume condition (3) stated in this lemma, we can sharpen the bound for the first–order
term $\bbU_{n, 1} \left(  	 \pi_1  f   \right)$.  Define the function class $\pi_1 \calF_n \coloneqq\{f = \pi_1 f : f \in\calF_n\}$. For any $f\in\calF_n$, by definition of $\pi_1$, we have $\pi_1 f = \P f - \P^2 f$. Up to a universal constant, centering does not alter the VC-type characteristics.  Consequently, using Lemma~\ref{lemma:PcalF-VC}, we obtain  $\pi_1 \calF_n$ is still of VC type with characteristics of order $(4\sqrt{A_n}M_n/m_n,2v_n)$ and envelope $m_n$.  Since $\sup_{f\in \calF_n} \|\P f\|_{\P,2}/m_n \le 1$, applying Lemma~\ref{lemma:sup-p1} to the function class $\pi_{1}\calF_n$, we obtain
\begin{align}\label{eq:pi1fbound-new}
        \bbE \left[ \left\| \bbU_{n,1} (f)  \right\|_{\pi_1\calF_n} \right] \lesssim \frac{ \sup_{f\in \calF_n} \|\P f \|_{\P,2} \cdot\sqrt{v_n\log ( A_{n}^{\prime} \vee n) }}{\sqrt{n}} + \frac{m_n v_n\log( A_{n}^{\prime} \vee n ) }{n} ,
    \end{align}
    where $A_{n}^{\prime} = \sqrt{A_n}M_n/m_n$.
Since $A_{n}^{\prime} \geq A_n$, combining \eqref{eq:pi2fbound} and \eqref{eq:pi1fbound-new}, we obtain
\begin{align*}
     \bbE \left[ \|\bbU_{n,2}(f) -  \P^{2} f \|_{\calF_n}  \right] \lesssim & \ \bbE \left[ \left\|  \bbU_{n,2} (\pi_2 f) \right\|_{\calF_n} \right]  + \bbE \left[ \left\|  \bbU_{n,1} (\pi_1 f) \right\|_{\calF_n} \right]  \\
    \lesssim & \ \frac{ \left( \|\F_n\|_{\P^2,2} +  m_n \right) \cdot  v_n\log( A_{n}^{\prime} \vee n )}{n} + \frac{ \sup_{f\in \calF_n} \|\P f \|_{\P,2} \cdot\sqrt{v_n\log (A_n^{\prime} \vee n) }}{\sqrt{n}} ,
\end{align*}
which completes the proof.
\end{proof}

The function class encountered in this paper generally has a product structure  $\calH_{n}= \bigl\{\,f \cdot g_{n} : f\in\calF \bigr\}$,
where $\calF$ is of a fixed VC-type class with characteristics
$(A,v)$, and
$g_{n}$ is a single function that depends on $n$. By Lemma~\ref{lemma:VC-sum-prod} below,
$\calH_{n}$ is again of a VC-type class whose characteristics remain independent of $n$, whereas its envelope, $\F_{n} =\sup_{f\in\calF} |f|\cdot|g_{n}|$, can vary with $n$.  For such a function class, we develop a new and readily useful maximal inequality for third-order $U$-processes, which are heavily used in our higher-order estimators.  The proof parallels that of Lemma~\ref{lemma:Umaximal_inequality_order2}, with additional steps to handle the third-order Hoeffding projection term $\bbU_{n,3}(\pi_3 f)$.

\begin{lemma}\label{lemma:Umaximal_inequality_order3}
Suppose that (1) $\calF_n \coloneqq \{f: \calO^3 \rightarrow \bbR\}$ is a collection of symmetric functions from $\calO^{3}$ to $\bbR$ with envelope function $\F_n$, (2) $\calF_n$ is of VC type with bounded characteristics $(A,v)$ (recall from Definition~\ref{def:VC-type}), and further (3) $\sup_{f\in \calF_n} \|\P^2 f (\cdot) \|_{\infty} \leq m_n$, and $\|\P^2 \F_n (\cdot) \|_{\infty} \leq M_n$.  Then
    \begin{align*}
        \bbE \left[ \|\bbU_{n,3}(f) - \P^3 f \|_{\calF_n}  \right] \lesssim \frac{ \left( \|\F_n\|_{\P^3,2} + m_n \right) \cdot \log( A_{n}^{\prime} \vee n ) }{n} + \frac{ \sup_{f\in \calF_n} \|\P^2 f \|_{\P,2} \cdot \sqrt{\log ( A_{n}^{\prime} \vee n) } }{\sqrt{n}},
    \end{align*}
    where $A_{n}^{\prime} \coloneqq M_n/ m_n$.
\end{lemma}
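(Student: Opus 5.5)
We follow the proof of Lemma~\ref{lemma:Umaximal_inequality_order2}. Start from the Hoeffding decomposition for a symmetric kernel of order three,
\begin{align*}
\bbU_{n,3}(f) - \P^3 f = 3\,\bbU_{n,1}(\pi_1 f) + 3\,\bbU_{n,2}(\pi_2 f) + \bbU_{n,3}(\pi_3 f).
\end{align*}
Taking $\sup_{f\in\calF_n}$ and using the triangle inequality, it suffices to bound the expected suprema of the three degenerate pieces. Each piece is handled by the appropriate order of the Corollary~5.6 bounds of \citet{chen2020jackknife}, except the linear piece, where we use the sharpened argument built on $\P^2 f$.

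For the two highest-order degenerate terms, we apply Corollary~5.6 of \citet{chen2020jackknife} with orders $3$ and $2$ and $p=1$. Since $\calF_n$ is VC type with fixed characteristics $(A,v)$, the uniform entropy integrals satisfy $J_3(1,\calF_n,\F_n)\lesssim (v\log A)^{3/2}\lesssim 1$ and $J_2(1,\calF_n,\F_n)\lesssim v\log A\lesssim 1$. This gives
\begin{align*}
\bbE\|\bbU_{n,3}(\pi_3 f)\|_{\calF_n}\lesssim \frac{\|\F_n\|_{\P^3,2}}{n^{3/2}}, \qquad \bbE\|\bbU_{n,2}(\pi_2 f)\|_{\calF_n}\lesssim \frac{\|\P\F_n\|_{\P^2,2}}{n}\le \frac{\|\F_n\|_{\P^3,2}}{n},
\end{align*}
where the last step uses Jensen's inequality. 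Both terms are absorbed into $\|\F_n\|_{\P^3,2}\log(A_n'\vee n)/n$. One subtlety needs checking: the $J_2$ integral is taken over the class $\P\calF_n$. Its covering numbers relative to the envelope $\P\F_n$ are controlled by those of $\calF_n$, by the standard projection argument already implicit in the definition of $J_j$. So the characteristics stay of order $(A,v)$.

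The main step is the linear term $\bbU_{n,1}(\pi_1 f)$, where $\pi_1 f=\P^2 f-\P^3 f$. We mimic \eqref{eq:pi1fbound-new}. First, we invoke Lemma~\ref{lemma:PcalF-VC}, applied twice (or in its iterated form), to show that $\P^2\calF_n$ is VC type with envelope $m_n$ and characteristics of order $(C\,M_n/m_n, Cv)$. The rescaling from envelope $\P^2\F_n\le M_n$ down to $m_n$ is exactly what produces $A_n'=M_n/m_n$. Because $A$ is fixed, the $\sqrt{A}$ factor appearing in the order-two version is absorbed into constants here. Centering by $\P^3 f$ preserves VC type up to constants. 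We then apply Lemma~\ref{lemma:sup-p1} to $\pi_1\calF_n$ with envelope $m_n$ and variance proxy $\sup_f\|\P^2 f\|_{\P,2}\le m_n$. This yields
\begin{align*}
\bbE\|\bbU_{n,1}(\pi_1 f)\|_{\calF_n}\lesssim \frac{\sup_{f}\|\P^2 f\|_{\P,2}\sqrt{\log(A_n'\vee n)}}{\sqrt n}+\frac{m_n\log(A_n'\vee n)}{n}.
\end{align*}

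Summing the three bounds gives the stated inequality. The main obstacle is the entropy transfer from $\calF_n$ to the twice-integrated class $\P^2\calF_n$ measured relative to the smaller envelope $m_n$ instead of $\P^2\F_n$. This is where $M_n/m_n$ enters and where one must verify that Lemma~\ref{lemma:PcalF-VC} iterates without the VC index compounding beyond a constant multiple of $v$. If direct iteration is awkward, we would instead view $\P^2 f$ as $\P(\P f)$, with $\P f$ belonging to a VC class on $\calO^2$, and apply Lemma~\ref{lemma:PcalF-VC} once to that class.
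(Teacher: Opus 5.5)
Your proposal is correct and follows the paper's proof essentially step for step. It uses the same order-three Hoeffding decomposition, Corollary~5.6 of \citet{chen2020jackknife} plus Jensen for the $\pi_3$ and $\pi_2$ pieces, and Lemma~\ref{lemma:PcalF-VC} together with Lemma~\ref{lemma:sup-p1} for the linear piece. The ``main obstacle'' you flag is not an issue: Lemma~\ref{lemma:PcalF-VC}(ii) is stated for $\P^{r-k}\calF_n$ with general $r$ and $k$, so $r=3$, $k=1$ gives the VC property of $\P^2\calF_n$ with envelope $m_n$ in one step, and no iteration is needed.
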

\begin{proof} 
Similar to the proof of Lemma~\ref{lemma:Umaximal_inequality_order2},   we first apply the Hoeffding decomposition:
    \begin{align*}
      \bbU_{n,3}(f) -\P^3 f =   \bbU_{n, 3} \left( \pi_3  f   \right) + 3\bbU_{n, 2} \left( \pi_2 f   \right) + 3\bbU_{n, 1} \left(  	 \pi_1  f   \right),
    \end{align*}
    where $\pi_j$ for $j=1,2,3$ represent the Hoeffding projections as defined in Section~\ref{app:notation}. By Jensen's inequality, we have $ \|
      \P^{3-2} \F_n \|_{\P^2 ,2}  = O( \|\F_n\|_{\P^3,2})$. 
      Since the function class $\calF_n$ is of VC type with bounded characteristics $(A,v)$ independent of the sample size $n$, we have $J_{3}(1, \calF_n, \F_n)\lesssim 1$ and $J_{2}(1, \calF_n, \F_n)\lesssim 1$. Thus,
 applying Corollary 5.6 of \citet{chen2020jackknife} to  function class $\calF$ with their $k =2,3$ and $p = 1$, we obtain 
\begin{align}\label{eq:pi3fbound-3}
    \bbE \left[ \left\|  \bbU_{n,3} (\pi_3 f) \right\|_{\calF_n} \right] \lesssim \frac{J_{3}(1, \calF_n, \F_n)\cdot \|\F_n\|_{\P^3,2}}{n^{3/2} } \lesssim \frac{ \|\F_n\|_{\P^3,2}}{n^{3/2} }
\end{align}
and
\begin{align}\label{eq:pi2fbound-3}
     \bbE \left[ \left\|  \bbU_{n,2} (\pi_2 f) \right\|_{\calF_n} \right] \lesssim \frac{J_{2}(1, \calF_n, \F_n)\cdot \|\P^{3-1}\F_n\|_{\P^2,2}}{n} \lesssim \frac{\|\F_n\|_{\P^3,2}}{n}.
\end{align}
Following the same analysis for the empirical process $\pi_1 \calF_n \coloneqq\{f = \pi_1 f : f \in\calF_n\}$ as in Lemma~\ref{lemma:Umaximal_inequality_order2}, we obtain
 \begin{align}\label{eq:pi1fbound-3}
     \bbE \left[ \left\|   \bbU_{n,1} ( f) \right\|_{ \pi_{1}\calF_n } \right] 
     \lesssim & \ \frac{ \sup_{f\in \calF_n} \|\P f \|_{\P,2} \cdot \sqrt{\log ( A_{n}^{\prime} \vee n) } }{\sqrt{n}} + \frac{m_n\log( A_{n}^{\prime} \vee n ) }{n},
 \end{align}
 where $A_{n}^{\prime} = M_n/m_n$.
 
Combining \eqref{eq:pi3fbound-3}--\eqref{eq:pi1fbound-3}, we conclude
\begin{align*}
    \bbE \left[ \|\bbU_{n,3}(f) \|_{\calF_n}  \right] & \lesssim \bbE \left[ \left\|  \bbU_{n,3} (\pi_3 f) \right\|_{\calF_n} \right] 
 + \bbE \left[ \left\|  \bbU_{n,2} (\pi_2 f) \right\|_{\calF_n} \right]  + \bbE \left[ \left\|  \bbU_{n,1} (\pi_1 f) \right\|_{\calF_n} \right]  \\
    & \lesssim  \frac{ \|\F_n\|_{\P^3,2} + m_n\log( A_{n}^{\prime} \vee n ) }{n} + \frac{ \sup_{f\in \calF_n} \|\P f \|_{\P,2} \cdot \sqrt{\log ( A_{n}^{\prime} \vee n) } }{\sqrt{n}} ,
\end{align*}
 which completes the proof.
\end{proof}

We next quote a result from Theorem 3 of \citet{andrews1994empirical} that will be useful in the proof.
\begin{lemma}[Uniform entropy for sums and products] \label{lemma:entropy+*}
Let $\calF$ and $\calG$ be two classes of functions with envelopes $\F$ and $\G$, respectively. Then, for any $\epsilon>0$, the uniform entropy numbers of $\calF\calG = \{fg:f\in\calF, g\in\calG\}$ and $\calF + \calG = \{f+g:f\in\calF, g\in\calG\}$ satisfy
\begin{gather*}
\sup_Q \sfN (\calF+\calG, \|\cdot\|_{Q,2},2\epsilon\|\F+\G\|_{Q,2}) \leq \sup_Q \sfN (\calF,\|\cdot\|_{Q,2},\epsilon\|\F\|_{Q,2})\cdot \sup_Q \sfN (\calG,\|\cdot\|_{Q,2},\epsilon\|\G\|_{Q,2}),\\
\sup_Q \sfN (\calF\cdot\calG, \|\cdot\|_{Q,2}, 2\epsilon\|\F\cdot\G\|_{Q,2})\leq \sup_Q \sfN (\calF, \|\cdot\|_{Q,2}, \epsilon\|\F\|_{Q,2},) \cdot \sup_Q \sfN (\calG, \|\cdot\|_{Q,2}, \epsilon\|\G\|_{Q,2}).
\end{gather*}
\end{lemma}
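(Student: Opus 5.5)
The plan is to prove both inequalities by building an explicit cover of the combined class from covers of $\calF$ and $\calG$, one finitely discrete $Q$ at a time. At the end we bound the resulting count by the product of the two suprema. Throughout I take the envelopes $\F,\G$ to be nonnegative. I choose the cover centres inside the classes themselves, as in the convention of \citet{van2023weak} that \citet{andrews1994empirical} follows, so that every centre $f_i$ satisfies $|f_i|\le \F$. If one insists on centres outside the class, the usual radius-doubling argument converts such a cover into an internal one, but that changes the constant; I would state the convention explicitly rather than absorb that loss.

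\emph{Sums.} Fix a finitely discrete $Q$. Let $\{f_i\}$ be an $\epsilon\|\F\|_{Q,2}$-cover of $\calF$ and $\{g_j\}$ an $\epsilon\|\G\|_{Q,2}$-cover of $\calG$, both in $L_2(Q)$. For $f+g$, pick the centres $f_i$ and $g_j$ nearest to $f$ and $g$. The triangle inequality gives $\|f+g-f_i-g_j\|_{Q,2}\le \epsilon(\|\F\|_{Q,2}+\|\G\|_{Q,2})$. Since $0\le \F,\G\le \F+\G$ pointwise, both $\|\F\|_{Q,2}$ and $\|\G\|_{Q,2}$ are at most $\|\F+\G\|_{Q,2}$, so the distance is at most $2\epsilon\|\F+\G\|_{Q,2}$. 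Hence the pairs $\{f_i+g_j\}$ cover $\calF+\calG$ at that radius. Their number is bounded by the product of the two covering numbers at $Q$, hence by the product of the suprema. Taking $\sup_Q$ on the left finishes the first inequality.

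\emph{Products.} This is the step needing an idea, because the $L_2(Q)$ norm of $(f-f_i)g$ is not controlled by $\|f-f_i\|_{Q,2}$. I will use a change of measure. Write
\[
fg-f_ig_j=(f-f_i)g+f_i(g-g_j).
\]
Assuming $\|\G\|_{Q,2}>0$, define the finitely discrete measure $Q_{\G}$ by $\diff Q_{\G}=\G^2\diff Q/\|\G\|_{Q,2}^2$. Then
\[
\|(f-f_i)g\|_{Q,2}\le \|(f-f_i)\G\|_{Q,2}=\|f-f_i\|_{Q_{\G},2}\,\|\G\|_{Q,2},
\]
and $\|\F\|_{Q_{\G},2}\,\|\G\|_{Q,2}=\|\F\G\|_{Q,2}$. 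So an $\epsilon\|\F\|_{Q_{\G},2}$-cover of $\calF$ in $L_2(Q_{\G})$ makes the first term at most $\epsilon\|\F\G\|_{Q,2}$. Symmetrically, define $Q_{\F}$ with $\diff Q_{\F}\propto \F^2\diff Q$ and take an $\epsilon\|\G\|_{Q_{\F},2}$-cover $\{g_j\}$ of $\calG$ in $L_2(Q_{\F})$. Using $|f_i|\le\F$ (this is where internal centres matter), the second term is at most $\|\F(g-g_j)\|_{Q,2}\le\epsilon\|\F\G\|_{Q,2}$. The products $\{f_ig_j\}$ therefore form a $2\epsilon\|\F\G\|_{Q,2}$-cover of $\calF\calG$. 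Since $Q_{\F}$ and $Q_{\G}$ are again finitely discrete, each covering number is bounded by its supremum over all finitely discrete measures, which yields the second inequality.

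The degenerate cases where $\|\F\|_{Q,2}$ or $\|\G\|_{Q,2}$ vanishes are trivial. In that case $\F\G=0$ $Q$-a.e., every $fg$ is $0$ in $L_2(Q)$, and a single point covers $\calF\calG$. The main obstacle is the product case: the change of measure is what allows the intrinsic normalisation $\epsilon\|\F\|$ on the right-hand side to match $\epsilon\|\F\G\|$ on the left. The sum case is routine.
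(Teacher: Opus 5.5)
The paper gives no proof of this lemma; it quotes it from Andrews (1994, Theorem 3). Your construction is the standard way such stability bounds are proved: pairwise sums of centres for $\calF+\calG$, and for $\calF\calG$ covers taken in the reweighted measures $\diff Q_{\G}\propto\G^2\,\diff Q$ and $\diff Q_{\F}\propto\F^2\,\diff Q$. The sum case is correct under any convention.

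\textbf{The gap: the bound $|f_i|\le\F$ in the product step.} You justify it by taking centres inside the class and attribute that convention to van der Vaart and Wellner. Their definition (Section 2.1, to which the paper defers for $\sfN$) explicitly allows centres outside the class. Under the paper's definition, your argument therefore gives the product bound only with the \emph{internal} covering number of $\calF$ on the right. That number can exceed $\sfN(\calF,\|\cdot\|_{Q,2},\epsilon\|\F\|_{Q,2})$, so the stated inequality does not yet follow. The radius-doubling conversion you mention would only give it with $\epsilon$ replaced by $\epsilon/2$ in the $\calF$ factor.

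\textbf{The fix costs nothing.} Take any $\epsilon\|\F\|_{Q_{\G},2}$-cover $\{f_i\}$ of $\calF$ in $L_2(Q_{\G})$ with arbitrary centres, and replace each $f_i$ by its pointwise clip $\tilde f_i\coloneqq (f_i\wedge\F)\vee(-\F)$.
\begin{itemize}
\item Every $f\in\calF$ takes values in $[-\F(x),\F(x)]$, and clipping is the $1$-Lipschitz projection onto that interval. Hence $|f-\tilde f_i|\le|f-f_i|$ pointwise.
\item So $\{\tilde f_i\}$ is a cover of the same cardinality and radius, with $|\tilde f_i|\le\F$.
\item Your bound $\|\tilde f_i(g-g_j)\|_{Q,2}\le\|\F(g-g_j)\|_{Q,2}\le\epsilon\|\F\G\|_{Q,2}$ then goes through verbatim. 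The centres $g_j$ need no modification.
\end{itemize}

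\textbf{A smaller point: the degenerate case.} The case to set aside in the product step is $\|\F\G\|_{Q,2}=0$, not only $\|\F\|_{Q,2}=0$ or $\|\G\|_{Q,2}=0$. If $\F$ and $\G$ charge disjoint atoms of $Q$, both norms are positive but $\|\F\|_{Q_{\G},2}=0$. The normalized covering number you invoke at $Q_{\G}$ is then not among those in the supremum. This case is equally trivial, since every $fg$ vanishes in $L_2(Q)$, but it is the one that must be excluded.
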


\begin{lemma}[VC--type characteristics of sums and products]\label{lemma:VC-sum-prod}
Suppose $\mathcal F$ and $\mathcal G$ are pointwise measurable measurable and VC type with characteristics $(A_{\calF},v_{\calF})$ and $(A_{\calG},v_{\calG})$
, and envelopes $\F$ and $\G$, respectively. Then, $\calF + \calG = \{f+g:f\in\calF, g\in\calG\}$ and $\calF\calG = \{fg:f\in\calF, g\in\calG\}$ are VC type with characteristics $A' \coloneqq  2( A_{\calF}^{v_{\calF}} A_{\calG}^{v_{\calG}} )^{1/v'}$ and  $ v' \coloneqq v_{\calF} + v_{\calG}$.  Specially, when $\calG=\{g\}$, $\calF + \calG$ and $\calF \calG$ will be VC type with characteristics $(2A_{\calF} , v_{\calF})$.
\end{lemma}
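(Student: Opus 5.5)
The plan is to apply Lemma~\ref{lemma:entropy+*} directly and then convert the resulting covering-number bound into the VC-type form of Definition~\ref{def:VC-type}. First I will fix $\epsilon\in(0,1]$ and work with the envelopes $\F+\G$ for $\calF+\calG$ and $\F\G$ for $\calF\calG$. These are valid envelopes, since $|f+g|\le \F+\G$ and $|fg|\le \F\G$ pointwise. Pointwise measurability passes to sums and products, because a countable pointwise-dense subclass of each factor yields a countable pointwise-dense subclass of the sum or product class.

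Next I will invoke Lemma~\ref{lemma:entropy+*} with $\epsilon/2$ in place of $\epsilon$. This gives, for every finitely discrete $Q$,
\begin{align*}
\sfN\bigl(\calF+\calG,\|\cdot\|_{Q,2},\epsilon\|\F+\G\|_{Q,2}\bigr)
\le \sup_{Q'}\sfN\bigl(\calF,\|\cdot\|_{Q',2},\tfrac{\epsilon}{2}\|\F\|_{Q',2}\bigr)\,\sup_{Q'}\sfN\bigl(\calG,\|\cdot\|_{Q',2},\tfrac{\epsilon}{2}\|\G\|_{Q',2}\bigr).
\end{align*}
Applying the VC-type hypotheses at radius $\epsilon/2\in(0,1/2]$, the right-hand side is at most $(2A_{\calF}/\epsilon)^{v_{\calF}}(2A_{\calG}/\epsilon)^{v_{\calG}}$. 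The product class is handled by the identical computation using the second inequality of Lemma~\ref{lemma:entropy+*}.

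The remaining step is algebraic:
\begin{align*}
(2A_{\calF}/\epsilon)^{v_{\calF}}(2A_{\calG}/\epsilon)^{v_{\calG}}
= 2^{v'}A_{\calF}^{v_{\calF}}A_{\calG}^{v_{\calG}}\epsilon^{-v'}
=\bigl(2(A_{\calF}^{v_{\calF}}A_{\calG}^{v_{\calG}})^{1/v'}/\epsilon\bigr)^{v'},
\end{align*}
with $v'=v_{\calF}+v_{\calG}$. This is exactly the characteristic $(A',v')$ claimed in the statement.

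For the special case $\calG=\{g\}$, I will not use the product formula, which would inflate $v$. Instead I will argue directly. For $\calF g$ with envelope $\F|g|$, take an $\epsilon\|\F\|_{\tilde Q,2}$-cover of $\calF$ under the reweighted measure $\diff\tilde Q\propto g^2\diff Q$. Here $\tilde Q$ is still finitely discrete, and the case $\int g^2\diff Q=0$ is trivial. Because $\|(f-f')g\|_{Q,2}$ and $\|\F g\|_{Q,2}$ rescale by the same factor, this cover is also an $\epsilon\|\F g\|_{Q,2}$-cover of $\calF g$, giving characteristic $(A_{\calF},v_{\calF})\preceq(2A_{\calF},v_{\calF})$.

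For $\calF+g$ with envelope $\F+|g|$, translation by $g$ preserves $L_2(Q)$ distances, and $\|\F\|_{Q,2}\le\|\F+|g|\|_{Q,2}$. So an $\epsilon\|\F\|_{Q,2}$-cover of $\calF$ translates into an $\epsilon\|\F+|g|\|_{Q,2}$-cover of $\calF+g$, and the characteristic $(A_{\calF},v_{\calF})$ again suffices.

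I expect the main obstacle to be bookkeeping rather than substance. One point is matching the factor $2$ in the radius of Lemma~\ref{lemma:entropy+*} against the restriction $\epsilon\le 1$ in Definition~\ref{def:VC-type}; substituting $\epsilon/2$ keeps the radius inside $(0,1/2]$, so the hypotheses apply. The other is the reweighting argument in the singleton product case, which needs $\tilde Q$ to remain finitely discrete and requires handling degenerate $g$ on the support of $Q$.
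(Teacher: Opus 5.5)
Your general case is exactly the paper's proof. You apply Lemma~\ref{lemma:entropy+*} at radius $\epsilon/2$, invoke the two VC-type bounds (legitimate, since $\epsilon/2\le 1$), and regroup $(2A_{\calF}/\epsilon)^{v_{\calF}}(2A_{\calG}/\epsilon)^{v_{\calG}}$ as $(A'/\epsilon)^{v'}$. The paper writes this out only for $\calF+\calG$ and leaves the product class and the singleton case implicit.

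Where you genuinely differ is the special case $\calG=\{g\}$. The paper's $(2A_{\calF},v_{\calF})$ follows directly from the general formula. A singleton has covering number $1$ at every radius, so it is VC type with $v_{\calG}=0$; the definition puts no lower bound on $v$. Then $A'=2A_{\calF}$ and $v'=v_{\calF}$. So your worry that the product formula ``would inflate $v$'' only applies if you insist on $v_{\calG}\ge 1$.

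Your direct argument is nevertheless correct. For sums you use translation invariance of $L_2(Q)$ distances together with $\F\le \F+|g|$. For products you use the change of measure $\diff\tilde Q\propto g^2\diff Q$, which keeps $\tilde Q$ finitely discrete. This buys the slightly sharper characteristic $(A_{\calF},v_{\calF})$, and it makes explicit the reweighting step hidden inside the product half of Lemma~\ref{lemma:entropy+*}. The paper's route is shorter but inherits that lemma's factor of $2$.
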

\begin{proof} Using Lemma~\ref{lemma:entropy+*}, we have
    \begin{align*}
  \sup_Q
  \sfN\bigl(\mathcal F+\mathcal G,\,
         \|\cdot\|_{Q,2},\,
         \epsilon\|\F+\G\|_{Q,2}\bigr)
 \le & \
  \sup_Q
  \sfN\bigl(\mathcal F,\|\cdot\|_{Q,2},\epsilon/2\|\F\|_{Q,2}\bigr)\;
  \sup_Q
  \sfN\bigl(\mathcal G,\|\cdot\|_{Q,2},\epsilon/2\|\G\|_{Q,2}\bigr)\\
  \leq & \ \left(\frac{2 A_{\calF}}{\epsilon}\right)^{v_{\calF}} \left(\frac{2 A_{\calG}}{\epsilon}\right)^{v_{\calG}} \\
  = & \ \left( \frac{ 2( A_{\calF}^{v_{\calF}} A_{\calG}^{v_{\calG}} )^{1/(v_{\calF} + v_{\calG})} }{\epsilon}\right)^{v_{\calF} + v_{\calG}},
\end{align*}
which implies $\calF + \calG$ is of VC type with characteristics $A' =  2(A_{\calF}^{v_{\calF}} A_{\calG}^{v_{\calG}} )^{1/v'}$ and  $ v' = v_{\calF} + v_{\calG}$.
\end{proof}

\begin{lemma}\label{lemma:sup-p1}
    Suppose that function class  $\calF_{n}: \calO \rightarrow \bbR$ is pointwise measurable with envelope function $\F_n$ and is of VC type with  characteristics $(A_n,v_n)$, satisfying $A_n \geq e$ and $v_n \geq 1$. Suppose that  $\log (\sup_{f\in\calF}\|f_n\|_{\P,2}/\|\F_n\|_{\P,2}) \lesssim \log n$ and $\| \F_n (\cdot) \|_{\infty}\lesssim M_n$. Then
    \begin{align*}
        \bbE \left[ \left\| \bbU_{n,1} f - \bbP f \right\|_{\calF_n} \right] \lesssim \frac{ \sup_{f\in \calF_n} \|f \|_{\P,2} \cdot\sqrt{v_n\log(A_n \vee n) }}{\sqrt{n}} + \frac{M_n \cdot v_n\log( A_n \vee n) }{n} .
    \end{align*}
\end{lemma}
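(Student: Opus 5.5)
The approach is the standard symmetrization-plus-chaining argument for empirical processes indexed by a VC-type class, with the bound expressed through the small variance $\sigma_n \coloneqq \sup_{f\in\calF_n}\|f\|_{\P,2}$ rather than through the envelope. This is essentially the local maximal inequality of \citet{chernozhukov2014gaussian} (Corollary 5.1 there), and I would reproduce its proof in our notation.

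\emph{Step 1: symmetrization.} First I would write
\begin{align*}
\bbE\|\bbU_{n,1}f-\P f\|_{\calF_n}\le \frac{2}{n}\,\bbE\Big\|\sum_{i=1}^n\varepsilon_i f(O_i)\Big\|_{\calF_n},
\end{align*}
with i.i.d.\ Rademacher variables $\varepsilon_i$. Conditionally on the data, the symmetrized process is sub-Gaussian with respect to $\|\cdot\|_{\P_n,2}$.

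\emph{Step 2: chaining.} Dudley's entropy integral then gives, conditionally,
\begin{align*}
\bbE_\varepsilon\Big\|\sum_i\varepsilon_i f(O_i)\Big\|_{\calF_n}\lesssim \sqrt n\,\|\F_n\|_{\P_n,2}\int_0^{\hat\sigma_n/\|\F_n\|_{\P_n,2}}\sqrt{1+v_n\log(A_n/\tau)}\,\diff\tau ,
\end{align*}
where $\hat\sigma_n^2\coloneqq\sup_f\bbP_n f^2$. By the VC-type condition, the right-hand side is $\lesssim \sqrt n\,\hat\sigma_n\sqrt{v_n\log(A_n\|\F_n\|_{\P_n,2}/\hat\sigma_n)}$. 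Taking expectations and using Jensen through the concavity of $x\mapsto x\sqrt{\log(A/x)}$, as in the cited proof, yields a bound in terms of $\bbE\hat\sigma_n^2$ and $\|\F_n\|_{\P,2}$.

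\emph{Step 3: controlling $\bbE\hat\sigma_n^2$.} This is the main obstacle, because $\hat\sigma_n^2$ depends on the same supremum we are trying to bound. I would write $\bbE\hat\sigma_n^2\le\sigma_n^2+\bbE\|\bbP_n f^2-\P f^2\|_{\calF_n}$. Applying symmetrization and the contraction principle, together with $|f|\le M_n$, gives
\begin{align*}
\bbE\|\bbP_n f^2-\P f^2\|_{\calF_n}\lesssim M_n\,Z/n, \qquad Z\coloneqq \bbE\Big\|\sum_i\varepsilon_i f(O_i)\Big\|_{\calF_n}.
\end{align*}
Substituting into the Step 2 bound produces the self-bounding inequality
\begin{align*}
Z\lesssim\sqrt{n\sigma_n^2 L}+\sqrt{M_n Z L}, \qquad L\coloneqq v_n\log(A_n\|\F_n\|_{\P,2}/\sigma_n).
\end{align*}
Solving this quadratic inequality in $\sqrt Z$ gives $Z\lesssim\sqrt{n\sigma_n^2L}+M_nL$.

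\emph{Step 4: conclusion.} Dividing by $n$ gives the stated rates. It remains to replace $\log(A_n\|\F_n\|_{\P,2}/\sigma_n)$ by $\log(A_n\vee n)$. This follows from the hypothesis $\log(\sigma_n/\|\F_n\|_{\P,2})\gtrsim-\log n$, which I read as the ratio $\|\F_n\|_{\P,2}/\sigma_n$ being at most polynomial in $n$; then $\log(A_n\|\F_n\|_{\P,2}/\sigma_n)\lesssim\log A_n+\log n\lesssim\log(A_n\vee n)$.
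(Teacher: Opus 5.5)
Your proposal is correct and follows essentially the same route as the paper. The paper bounds $J_1(\delta,\calF_n,\F_n)\lesssim\delta\sqrt{v_n\log(A_n\vee n)}$ for $\delta=\sup_{f\in\calF_n}\|f\|_{\P,2}/\|\F_n\|_{\P,2}$ and then invokes the local maximal inequality of \citet{chen2020jackknife} (their Theorem~5.1) as a black box. The order-one case of that inequality is exactly the symmetrization, chaining, contraction and self-bounding argument you reproduce, with your deterministic bound $M_n$ standing in for the $L_2$ norm of $\max_i \F_n(O_i)$.

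On the hypothesis, your reading $\|\F_n\|_{\P,2}/\sup_{f}\|f\|_{\P,2}\le n^{C}$ is the one actually needed. As literally stated, the ratio inside the logarithm is at most one, so the condition is vacuous; the paper's last step has the matching sign slip, writing $\log(A_n a_n/b_n)$ where $\log(A_n b_n/a_n)$ is meant. In fact the hypothesis can be dropped: the inequality holds for any $\sigma\in[\sup_f\|f\|_{\P,2},\|\F_n\|_{\P,2}]$, and choosing $\sigma=\sup_f\|f\|_{\P,2}\vee\|\F_n\|_{\P,2}/n$ costs only an extra $\|\F_n\|_{\P,2}\sqrt{v_n\log(A_n n)}/n^{3/2}\lesssim M_n v_n\log(A_n\vee n)/n$.
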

 \begin{proof}
     For any constant sequence satisfying $\log a_{n}/b_{n} \lesssim \log n$,  we have the following inequality holds: 
	\begin{align*}
		J_{1}(a_{n}/b_{n} , \calF,\F) = & \ \int_{0}^{a_{n}/b_{n}}  \sup_{Q}  \left\{ 1 + \log \sfN (\calF_n, \|\cdot\|_{Q,2} , \epsilon \| \F_n\|_{Q,2}) \right\}^{1/2} \diff  \epsilon \\
\leq &\ \int_{0}^{a_{n}/b_{n}} \sqrt{ 1 + v_n\log (A_n/\epsilon) } \diff \epsilon \\
			\leq & \ \left(\int_{0}^{a_{n}/b_{n}} 1^2 \diff \epsilon \right)^{1/2}\cdot \left(\int_{0}^{a_{n}/b_{n}}  \{ 1 + v_n\log (A_n/\epsilon) \} \diff \epsilon \right)^{1/2}\notag\\
			= & \ \sqrt{\frac{a_n}{b_n}}\cdot\left\{ \frac{a_n}{b_n}(1 + v_n\log A_n) +  v_n\left( -\epsilon \log \epsilon + \epsilon \right)\big|_{0}^{a_{n}/b_{n}} \right\}^{1/2} \notag\\
			= & \ \sqrt{\frac{a_n}{b_n}}\cdot \left\{ \frac{a_n}{b_n}(1 + v_n\log A_n) + v_n\frac{a_n}{b_n} \{\log(a_{n}/b_{n}) +  1\} \right\}^{1/2} \notag\\
			\lesssim & \ \frac{a_n}{b_n}\cdot\sqrt{v_n\log(A_na_{n}/b_{n})} \lesssim \frac{a_n}{b_n}\cdot\sqrt{ v_n\log(A_n 
 \vee n)},\notag
		\end{align*}
   where the second inequality holds by Cauchy–Schwarz inequality,  the penultimate inequality by $A_n \geq e$, $v_n \geq 1$,  and the last by  $\log a_{n}/b_{n} \lesssim \log n$.     
Then, applying this result and Theorem~5.1 in \cite{chen2020jackknife}, we obtain
  \begin{align*}
        \bbE \left[ \left\| \bbU_{n,1} f - \bbP f \right\|_{\calF_n}  \right] = & \ \frac{J_{1}( \sup_{f\in \calF_n} \|f \|_{\P,2}/ \|  \F_n \|_{\P,2} , \calF_n,\F_n )\cdot \| \F_n \|_{\P,2} }{\sqrt{n}} \\
        &+ \frac{J_{1}( \sup_{f\in \calF_n} \|f \|_{\P,2}/ \|  \F_n \|_{\P,2} , \calF_n,\F_n )^2  \cdot \| \F_n (\cdot) \|_{\infty}}{n\cdot \sup_{f\in \calF_n }\|f \|_{\P,2}^2/  \| \F_n\|_{\P,2}^2}\notag\\
     \lesssim & \ \frac{ \sup_{f\in \calF_n} \|f \|_{\P,2} \cdot\sqrt{v_n\log(A_n \vee n) }}{\sqrt{n}} + \frac{ M_n \cdot v_n \log( A_n \vee n) }{n}.\notag
    \end{align*}
 \end{proof}

\begin{lemma}\label{lemma:PcalF-VC}
 If $\calF_n \coloneqq \{f: \calO^r \rightarrow \bbR\}$ is of VC type with characteristics $(A_n,v_n)$ and  envelope $\F_n$, then for every $k = 1,\ldots,r-1$, 
 \begin{enumerate}[label=(\roman*)]
     \item $\P^{r-k}\calF_n$ is also of VC type with characteristics $(4\sqrt{A_n},2v_n)$  and envelope $\P^{r-k} \F_n$.
     
     \item Suppose further that  $\sup_{f\in \calF_n} \|\P^{r-k}f (\cdot) \|_{\infty} \leq m_n$ and $\|\P^{r-k} \F_n (\cdot) \|_{\infty} \leq M_n$. Then, $\P^{r-k}\calF_n$ is of VC type with characteristics $(4\sqrt{A_n} M_n/m_n, 2v_n)$ and  envelope  $m_n$, i.e.,
 \begin{align*}
  \sup_{Q} \sfN (\P^{r-k}\calF_n, \|\cdot\|_{Q,2}, \tau m_n )\leq  \left( \frac{4 M_n\sqrt{A_n}}{m_n\tau} \right)^{2v_n},\ \text{for all}\  0 < \tau \leq  1.  
\end{align*}
 \end{enumerate}
\end{lemma}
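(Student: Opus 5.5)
The plan is to compare $L_{2}(Q)$-distances between projected functions $\P^{r-k}f$ with distances between the original functions $f$ under a suitable measure on $\calO^{r}$. Then the VC-type bound for $\calF_n$ transfers to $\P^{r-k}\calF_n$. The loss of a square root in $A_n$ and the doubling of $v_n$ come from passing through an $L_{1}$-type bound, so that a radius $\tau$ for the projected class corresponds to a radius of order $\tau^{2}$ for the original class. This is the mechanism behind the analogous statement for $U$-processes in \citet{chen2020jackknife}, and I would follow that route.

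\emph{Step 1 (pointwise comparison).} Fix a finitely discrete $Q$ on $\calO^{k}$ with $\|\P^{r-k}\F_n\|_{Q,2}>0$, and write $o=(x,y)$ with $x\in\calO^{k}$ and $y\in\calO^{r-k}$. For $f,g\in\calF_n$ we have $|f-g|\le 2\F_n$. Cauchy--Schwarz followed by this bound gives
\begin{align*}
|\P^{r-k}(f-g)(x)|^{2}\le \P^{r-k}\F_n(x)\int \frac{(f-g)^{2}}{\F_n}(x,y)\,\diff\P^{r-k}(y)\le 2\,\P^{r-k}\F_n(x)\int |f-g|(x,y)\,\diff\P^{r-k}(y).
\end{align*}
Integrating in $Q$ yields $\|\P^{r-k}(f-g)\|_{Q,2}^{2}\le 2\int |f-g|\,\diff R$, where $\diff R(x,y)\coloneqq \P^{r-k}\F_n(x)\,\diff Q(x)\,\diff\P^{r-k}(y)$. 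The total mass of $R$ is $\int\P^{r-k}\F_n\,\diff Q$ and $\int \F_n\,\diff R=\|\P^{r-k}\F_n\|_{Q,2}^{2}$. Normalize $R$ to a probability measure $\bar R$. Then $\|\P^{r-k}(f-g)\|_{Q,2}^{2}\le 2\|\P^{r-k}\F_n\|_{Q,1}\,\|f-g\|_{\bar R,1}$. Next apply Cauchy--Schwarz under $\bar R$, passing to $\|f-g\|_{\bar R,2}$ and $\|\F_n\|_{\bar R,2}$, and use the identity for $\int\F_n\,\diff R$ to simplify the constants.

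\emph{Step 2 (discretization and counting).} The measure $\bar R$ is not finitely discrete, because its second factor is $\P^{r-k}$. I would replace $\P^{r-k}$ by the empirical measure of $m$ i.i.d.\ draws. By the law of large numbers applied to the finitely many relevant integrals (a minimal cover is fixed, and $\F_n$ is integrable), there is a realization and a large $m$ for which all norms change by at most a factor $2$. On that finitely discrete measure the VC-type condition applies. Step 1 shows that an $\epsilon\|\F_n\|$-net of $\calF_n$ with $\epsilon=\tau^{2}/16$ induces a $\tau\|\P^{r-k}\F_n\|_{Q,2}$-net of $\P^{r-k}\calF_n$. Hence $\sup_Q\sfN\le (16A_n/\tau^{2})^{v_n}=(4\sqrt{A_n}/\tau)^{2v_n}$, which is part~(i). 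The envelope claim is immediate, since $|\P^{r-k}f|\le\P^{r-k}\F_n$.

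\emph{Step 3 (part (ii)).} The constant $m_n$ is a valid envelope for $\P^{r-k}\calF_n$. Moreover $\|\P^{r-k}\F_n\|_{Q,2}\le M_n$, so radius $\tau m_n$ is at least $(\tau m_n/M_n)\|\P^{r-k}\F_n\|_{Q,2}$. Applying (i) with $\tau'=\tau m_n/M_n\le 1$ gives the bound $(4\sqrt{A_n}M_n/(m_n\tau))^{2v_n}$.

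The main obstacle is Step 1: getting constants that close exactly to $(4\sqrt{A_n},2v_n)$ without an extra factor of $\|\P^{r-k}\F_n\|_{Q,1}/\|\P^{r-k}\F_n\|_{Q,2}$. If the direct reweighting leaves such a ratio, I would instead weight by $\P^{r-k}\F_n(x)/\|\P^{r-k}\F_n\|_{Q,2}^{2}$ and absorb the ratio through Jensen's inequality. Failing that, I would accept a universal constant in place of $4$, which suffices for all later uses.
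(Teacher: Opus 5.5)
For (i) the paper gives no argument of its own; it cites Lemma~5.4 of Chen and Kato (2020). For (ii) it performs essentially the same rescaling as your Step~3, so that step is fine.

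\textbf{The gap is in the last move of your Step~1.} Cauchy--Schwarz under $\bar{R}$ gives $\|f-g\|_{\bar{R},1}\le\|f-g\|_{\bar{R},2}$, and the VC-type net then gives $\|f-g\|_{\bar{R},2}\le\epsilon\|\F_n\|_{\bar{R},2}$, where
\begin{equation*}
\|\F_n\|_{\bar{R},2}^{2}=\frac{\int \P^{r-k}\F_n(x)\,\P^{r-k}(\F_n^{2})(x)\,\diff Q(x)}{\|\P^{r-k}\F_n\|_{Q,1}} .
\end{equation*}
This involves $\P^{r-k}(\F_n^{2})$. The identity $\int\F_n\,\diff R=\|\P^{r-k}\F_n\|_{Q,2}^{2}$ controls $\|\F_n\|_{\bar{R},1}$, not $\|\F_n\|_{\bar{R},2}$. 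Moreover, $\P^{r-k}(\F_n^{2})$ is not bounded by any constant multiple of $(\P^{r-k}\F_n)^{2}$. Take $r-k=1$, $Q=\delta_{x_0}$, and $\F_n(x_0,y)=M$ on a set of $\P$-probability $1/M$ and zero elsewhere. Then $\P^{r-k}\F_n(x_0)=1$ and $\bar{R}=R$, but $\|\F_n\|_{\bar{R},2}=\sqrt{M}$. Your chain therefore only yields $\|\P^{r-k}(f-g)\|_{Q,2}^{2}\le 2\epsilon\sqrt{M}\,\|\P^{r-k}\F_n\|_{Q,2}^{2}$.

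So the obstacle is misdiagnosed. The ratio $\|\P^{r-k}\F_n\|_{Q,1}/\|\P^{r-k}\F_n\|_{Q,2}$ is harmless, being at most one. The real factor is unbounded. Neither of your fallbacks helps: a weight depending on $x$ alone only rescales $R$ (in the example every such weight is a constant), and inflating the constant $4$ cannot absorb an unbounded factor.

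\textbf{The missing idea} is to pass from $L_1$ to $L_2$ by reweighting with $1/\F_n(x,y)$. On $\{\F_n>0\}$, where all of $\calF_n$ lives, write $|f-g|=(|f-g|\F_n^{-1/2})\F_n^{1/2}$ and apply Cauchy--Schwarz under $R$:
\begin{equation*}
\int|f-g|\,\diff R\le\Bigl(\int (f-g)^{2}\F_n^{-1}\,\diff R\Bigr)^{1/2}\Bigl(\int\F_n\,\diff R\Bigr)^{1/2}=Z\,\|f-g\|_{U,2}\,\|\F_n\|_{U,2}.
\end{equation*}
Here $U$ is the probability measure proportional to $\F_n^{-1}\,\diff R$, with normalizing constant $Z$. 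Since $Z\|\F_n\|_{U,2}^{2}=\int\F_n\,\diff R=\|\P^{r-k}\F_n\|_{Q,2}^{2}$, the condition $\|f-g\|_{U,2}\le\epsilon\|\F_n\|_{U,2}$ gives $\|\P^{r-k}(f-g)\|_{Q,2}^{2}\le 2\epsilon\|\P^{r-k}\F_n\|_{Q,2}^{2}$ with nothing left over. This is the standard domination of uniform $L_1$ covering numbers (relative to $\|\F_n\|_{Q,1}$) by uniform $L_2$ covering numbers (relative to $\|\F_n\|_{Q,2}$).

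\textbf{Step~2 also needs adjusting.} You cannot fix a cover of $\calF_n$ in advance, because it would depend on the discretized measure. Instead:
\begin{enumerate}
\item Fix a maximal $\tau\|\P^{r-k}\F_n\|_{Q,2}$-separated set $\P^{r-k}f_1,\ldots,\P^{r-k}f_N$.
\item Discretize $\P^{r-k}$ so that the finitely many ratios $\int|f_i-f_j|\,\diff R/\int\F_n\,\diff R$ drop by at most a factor of two.
\item Conclude that the $f_j$ are $(\tau^{2}/4)$-separated in $L_1$, relative to the envelope, under a finitely discrete measure.
\end{enumerate}
Packing--covering comparisons together with the reweighting above then give $N\le(8A_n/\tau^{2})^{v_n}\le(4\sqrt{A_n}/\tau)^{2v_n}$, as claimed.
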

\begin{proof}
Lemma 5.4 of \citet{chen2020jackknife} states that for any VC-type function class $\calF$
with characteristics $(A_n,v_n)$, $\P^{r-k}\calF$ satisfies 
\begin{align*}
  \sup_{Q} \sfN(\P^{r-k}\calF_n, \|\cdot\|_{Q,2}, \epsilon\|\P^{r-k} \F_n\|_{Q,2} )\leq  (4 \sqrt{A_n}/\epsilon)^{2v_n}, 0 < \forall \epsilon \leq  1,  
\end{align*} 
which establishes part $(i)$.

For part $(ii)$, set
$\varepsilon=\tau m_n/\|\P^{r-k} \F_n\|_{Q,2}$ in the display above. Then, since $\|\P^{r-k} \F_n\|_{Q,2} \le \|\P^{r-k} \F_n (\cdot) \|_\infty \le M_n$, we obtain
\begin{align*}
     \sup_{Q} \sfN(\P^{r-k}\calF_n, \|\cdot\|_{Q,2}, \tau m_n ) \le \left( \frac{4 \|\P^{r-k}\F_n\|_{Q,2}\sqrt{A_n}}{m_n\tau} \right)^{2v_n} \leq \left( \frac{4 M_n\sqrt{A_n}}{m_n\tau} \right)^{2v_n}.
\end{align*}
\end{proof}

\section{Proof of Results in Section \ref{sec:properties}}\label{app:properties}
\setcounter{equation}{0}
Before starting the proof, we first summarize some frequently used results. Under Assumptions~\ref{as:nuisance_est}(iii) and~\ref{as:basis}(i)--(iii), Lemma~16 of \cite{liu2017semiparametric} implies the following uniform bound: for any function $f \in \mathcal{C}(\mathcal{X}\times\mathcal{T})$,
\begin{align}\label{eq:supnorm_proj}
 \left\|\Pi [f \mid \bar{\phi}_{k}] (\cdot, \cdot) \right\|_{\infty} \leq  \|f (\cdot, \cdot) \|_{\infty}. 
\end{align}
Recall that $\Pi [f\mid \bar{\phi}_{k}]$ is the  projection of $f$ onto the linear span of basis $\bar{\phi}_{k}$, defined in Section~\ref{sec:notation}. 

Because the proof carefully tracks changes in $\bbeta$, we now make this dependence explicit by moving $\bbeta$ out of subscripts into argument parentheses. For notational simplicity, hereafter we sometimes write $\Gamma_{\bbeta}\coloneqq \Gamma(\cdot,\cdot,\bbeta)$ and denote $\hat{\Xi}_{\bbeta} \coloneqq \hat{\Xi}(O_1,O_2;\bbeta)$ if this causes no ambiguity.

\subsection{Proof of Theorem \ref{th:consistency}}\label{app:beta-consistency}

\begin{proof}[Proof of Theorem \ref{th:consistency}]

First, we will show, for any $\bbeta \in \calB$,
\begin{align}\label{eq:psi>beta}
\|\psi (\bbeta )\|   \gtrsim  \|\bbeta - \bbeta^*\| \wedge c
\end{align}
for some constant $c$. Given \eqref{eq:psi>beta}, we obtain  $\bbE[\|\tilde{\bbeta}^{(2)} -\bbeta^{*}\|] \lesssim \tilde{r}_{n,\bbeta}$, if we  show that $\|\psi(\tilde{\bbeta}^{(2)})\| \lesssim \tilde{r}_{n,\bbeta} $ with probability approaching one.

By Assumption~\ref{as:smoothness}(ii), for any $\epsilon \leq c_3/2p$  ($c_3$ is as defined in Assumption~\ref{as:smoothness}(ii)), there exists $\delta>0$ such that for any $\bbeta$ satisfying $\|\bbeta - \bbeta^*\|\leq \delta$, we have $\|\nabla_{\bbeta} \psi (\bbeta ) - \nabla_{\bbeta} \psi (\bbeta^* )\|_{\infty}\leq \epsilon $.
Then, for  any $\bbeta$ satisfying $\|\bbeta - \bbeta^*\|\leq \delta$, since $\psi (\bbeta^{\ast}) = 0$, by Taylor expansion, we have 
 \begin{align*}
    \psi (\bbeta)  = \nabla_{\bbeta} \psi (\bbeta^{\dag}) (\bbeta -  \bbeta^{\ast}),
 \end{align*}
 where $\bbeta^{\dag}$ lies between $\bbeta^{\ast}$ and $\bbeta$.   Then, by Assumption~\ref{as:smoothness}(iii), we have 
 \begin{align*}
   \| \nabla_{\bbeta} \psi (\bbeta^{\dag}) ( \tilde{\bbeta}^{(2)} -  \bbeta^{\ast}) \| \geq & \
   \| \nabla_{\bbeta} \psi (\bbeta^{\ast}) ( \tilde{\bbeta}^{(2)} -  \bbeta^{\ast}) \| -\|\{ \nabla_{\bbeta} \psi (\bbeta^{\dag}) -\nabla_{\bbeta} \psi (\bbeta^{\ast}) \} ( \tilde{\bbeta}^{(2)} -  \bbeta^{\ast}) \|\\
   \geq & \ \| \nabla_{\bbeta} \psi (\bbeta^{\ast}) ( \tilde{\bbeta}^{(2)} -  \bbeta^{\ast}) \| - \epsilon p \cdot\|\tilde{\bbeta}^{(2)} -  \bbeta^{\ast}\|\\
    \geq & \ \| \nabla_{\bbeta} \psi (\bbeta^{\ast}) ( \tilde{\bbeta}^{(2)} -  \bbeta^{\ast}) \| -  \frac{1}{2}\|\nabla_{\bbeta} \psi (\bbeta^{\ast})(\tilde{\bbeta}^{(2)} -  \bbeta^{\ast} )\| \\
   = & \ \frac{1}{2}\| \nabla_{\bbeta} \psi ( \bbeta^{\ast}) \cdot ( \tilde{\bbeta}^{(2)} -  \bbeta^{\ast}) \| \geq \frac{c_3}{2} \| \tilde{\bbeta}^{(2)} -  \bbeta^{\ast} \|.
 \end{align*}
For any $\|\bbeta - \bbeta^*\| \geq \delta$, by Assumption~\ref{as:uniqueness}, $\|\psi(\bbeta)\| \geq c$ for some constant $c$. Hence, combining these two cases, we obtain \eqref{eq:psi>beta}. 
 
Next, we show $\|\psi (\tilde{\bbeta}^{(2)} )\| \lesssim \tilde{r}_{n,\bbeta}$ with probability approaching one. Since $\tilde{\psi}^{(2)}_{k}(\tilde{\bbeta}^{(2)}) \equiv 0 $ and $\psi (\bbeta^{\ast})=0$, we decompose $\|\psi (\tilde{\bbeta}^{(2)} )\|$ as follows
\begin{align*}
     \|\psi (\tilde{\bbeta}^{(2)} )\|  
     \leq & \| \psi (\tilde{\bbeta}^{(2)} ) - \bar{\psi}_{k}(\tilde{\bbeta}^{(2)}) \| +  \| \bar{\psi}_{k}(\tilde{\bbeta}^{(2)}) - \tilde{\psi}^{(2)}_{k}(\tilde{\bbeta}^{(2)})\| + \| \tilde{\psi}^{(2)}_{k}(\tilde{\bbeta}^{(2)}) - \tilde{\psi}^{(2)}_{k}(\bbeta^{*})\| \\
     & + \| \tilde{\psi}^{(2)}_{k}(\bbeta^{*}) - \bar{\psi}_{k}(\bbeta^*) \|  + \|\bar{\psi}_{k}(\bbeta^*) - \psi (\bbeta^{*} ) \|.
\end{align*}
Since $\tilde{\psi}^{(2)}_{k}(\tilde{\bbeta}^{(2)}) \equiv 0 $ by the definition of $\tilde{\bbeta}^{(2)}$, we derive 
\begin{align*}
    &\| \tilde{\psi}^{(2)}_{k}(\tilde{\bbeta}^{(2)}) - \tilde{\psi}^{(2)}_{k}(\bbeta^{*})\|  =  \|\tilde{\psi}^{(2)}_{k}(\bbeta^{*})\| 
     \leq \| \tilde{\psi}^{(2)}_{k}(\bbeta^{*})  - \bar{\psi}_{k}(\bbeta^*) \|  + \| \bar{\psi}_{k}(\bbeta^*)  - \psi(\bbeta^*) \|.
\end{align*}
Thus 
\begin{align*}
     \|\psi (\tilde{\bbeta}^{(2)} )\| & \leq  3 \sup_{\bbeta\in\calB} \| \psi(\bbeta) - \bar{\psi}_{k}(\bbeta)\| 
     + 3\sup_{\bbeta\in\calB} \| \tilde{\psi}^{(2)}_{k}(\bbeta) - \bar{\psi}_{k}(\bbeta) \| \eqqcolon  3 \uppercase\expandafter{\romannumeral1} + 3 \uppercase\expandafter{\romannumeral2}.
\end{align*}

\textbf{Convergence Rate for $\uppercase\expandafter{\romannumeral1}$:} \begin{align}\label{eq:tildepsi-psi}
        & \psi(\bbeta) -\bar{\psi}_{k}(\bbeta)\notag\\
        = & \ \bbE \left[ \xi (X, T)  \Gamma (Y, T, \bbeta)  \right] 
        - \bbE \left[ \hat{\xi} (X, T) \left\{ \Gamma (Y, T, \bbeta) - \hat{b}_{\bbeta_{\mathsf{init}}} (X, T) \right\} + \xi(X,T) \hat{b}_{\bbeta_{\mathsf{init}}} (X, T)  \right] \notag\\
        & - \bbE \left[  \left\{ \xi (X, T) - \hat{\xi} (X, T) \right\} \bar{\phi}_{k} (X, T) ^{\top} \right]\Sigma_{k}^{-1} \bbE \left[ \bar{\phi}_{k} (X, T) \left\{ b_{\bbeta} (X, T) - \hat{b}_{\bbeta_{\mathsf{init}}} (X, T)\right\} \right] \notag\\
        = & \ \bbE \left[ \left\{\xi (X, T) -  \hat{\xi} (X, T) \right\}\cdot \left\{b_{\bbeta} (X, T) - \hat{b}_{\bbeta_{\mathsf{init}}} (X, T) \right\} \right]\notag\\
        & - \bbE \left[  \left\{ \xi (X, T) - \hat{\xi} (X, T) \right\} \bar{\phi}_{k} (X, T)^{\top} \right]\Sigma_{k}^{-1} \bbE \left[ \bar{\phi}_{k} (X, T) \left\{ b_{\bbeta} (X, T) - \hat{b}_{\bbeta_{\mathsf{init}}} (X, T) \right\} \right] \notag\\
        = & \ \bbE \left[ \Pi^{\bot} \left[ \xi-\hat{\xi} \mid \bar{\phi}_{k}\right] (X, T) \cdot \Pi^{\bot} \left[ b_{\bbeta}-\hat{b}_{\bbeta_{\mathsf{init}}}  \mid \bar{\phi}_{k} \right] (X, T) \right] = \mathrm{TB}_{\psi, k} (\bbeta).
    \end{align}	
    Hence, we obtain
    \begin{align}\label{eq:rate-tildepsi}
        \uppercase\expandafter{\romannumeral1} \leq \ &\sup_{\bbeta\in\calB} \| \mathrm{TB}_{\psi, k} (\bbeta)\| \lesssim  \| \Pi^{\perp} (\hat{\xi} - \xi | \bar{\phi}_{k}) \|_{\P,2} \cdot \sup_{\bbeta\in\calB}\|  \Pi^{\perp} [b_{\bbeta} - \hat{b}_{\bbeta_{\mathsf{init}}} | \bar{\phi}_{k}]  \|_{\P,2}. 
    \end{align} 
  
\textbf{Convergence Rate for $\uppercase\expandafter{\romannumeral2}$:}
Recalling the definition of $\IF_{\bar{\psi}_{k}}^{(2)}$ from Lemma~\ref{lem:HOIF} in Section~\ref{app:truncation},  we consider the following decomposition:
\begin{align} \label{eq:hatpsi-tildepsi}		 
		 \tilde{\psi}^{(2)}_{k}(\bbeta) - \bar{\psi}_{k}(\bbeta)\notag
		= & \ \bbU_{n, 3} \IF_{\bar{\psi}_{k}}^{(2)} - \bar{\psi}_{k}(\bbeta)  \notag\\
		= & \ \bbU_{n, 3} \left(   \IF_{\mathrm{B}_{\psi, k}}^{(2)}  - \bbE[\IF_{\mathrm{B}_{\psi, k}}^{(2)}] \right)   + \bbU_{n, 2} \left( \hat{\Xi}_{\bbeta}   - \bbE[\hat{\Xi}_{\bbeta}] \right). 
	\end{align}
We will derive the convergence rate uniformly in $\bbeta$ for each term in \eqref{eq:hatpsi-tildepsi}. 
For the second term in~\eqref{eq:hatpsi-tildepsi}, consider the function class $\calH_1$ defined as follows
	\begin{align}\label{eq:def-H1}
		\calH_1\coloneqq&\left\{ h(O_1, O_2,\bbeta) =
	\hat{\Xi} (O_1, O_2;\bbeta) : \bbeta\in \calB\right\}.
	\end{align}
 By Assumption \ref{as:criterion}, $\{\Gamma (y,t,\bbeta):\bbeta\in\calB\}$ is of VC type with  characteristics $(A,v)$.  By Lemma \ref{lemma:VC-sum-prod},  $\calH_1$ is of VC type with bounded characteristics $(4A,v)$ and  envelope   $ \H_{1} (O_1,O_2)= \sup_{\bbeta \in \calB } \left\| 		\hat{\Xi} (O_1, O_2;\bbeta)\right\|$. By  Assumptions~\ref{as:nuisance_est} and \ref{as:criterion}, we have $\| \H_{1} \|_{\P^2,2}  \lesssim 1 $. By Jensen's inequality, we have $ \|
      \P \H_1 \|_{\P ,2}  \lesssim \|\H_1\|_{\P^2,2} \lesssim 1$. Then, using Lemma~\ref{lemma:Umaximal_inequality_order2} for function class $\calH_1$, we obtain
\begin{align}\label{eq:supH1}
    \bbE \left[ \sup_{\bbeta \in \calB} \left\|  \bbU_{n, 2} ( \hat{\Xi}_{\bbeta}  )  - \P^2 (\hat{\Xi}_{\bbeta}) \right\| \right] \lesssim \frac{1}{\sqrt{n}}.
\end{align}

To control the first  term in \eqref{eq:hatpsi-tildepsi}, we need to consider the function class
\begin{align}\label{eq:def-H2}
\calH_{2,n} \coloneqq \left\{ h(O_1, O_2, O_3,\bbeta)=\IF_{\mathrm{B}_{\psi, k}}^{(2)} (O_1, O_2, O_3;\bbeta): \bbeta\in \calB\right\}.
\end{align}
Then, by Assumption \ref{as:criterion} and Lemma \ref{lemma:VC-sum-prod},  $\calH_{2,n}$ is of VC type with bounded characteristics $(4A,v)$ and  envelope  $ \H_{2,n} $, which is defined as follows
\begin{align*}
\H_{2,n} (O_1,O_2,O_3) \coloneqq& \left| \left\{ \hat{\xi} (X_{1}, T_{1}) \bar{\phi}_{k} (X_{1}, T_{1}) - \sfzbar_{k_{x}} (X_{1}) \otimes  \sfwbar_{k_{t}} (T_{3}) \right\}^{\top} \Sigma_{k}^{-1} \bar{\phi}_{k} (X_{2}, T_{2}) \right| \\
&\qquad\qquad\qquad\times\sup_{\bbeta\in \calB} \left\|  \Gamma (Y_{2}, T_{2},\bbeta) - \hat{b}_{\bbeta_{\mathsf{init}}}  (X_{2}, T_{2}) \right\| .
\end{align*}
 Using   Assumptions~\ref{as:nuisance_est}, \ref{as:criterion} and \ref{as:basis}, we have $\|\P^2 \H_{2,n} (\cdot) \|_{\infty}\lesssim \|\H_{2,n} (\cdot,\cdot,\cdot)\|_{\infty}\lesssim \zeta(k)^2 \lesssim k$, and
 \begin{align*}
   \|\H_{2,n}\|_{\P^3,2} =&  \ \left\{ \bbE \left[ \| \H_{2,n}(O_1,O_2,O_3)\|^2\right] \right\}^{1/2} \\
   \lesssim & \ \left\{\bbE \bigg[ \bar{\phi}_{k} (X_{2}, T_{2})^{\top} \Sigma_{k}^{-1} 
 \bar{\phi}_{k} (X_{2}, T_{2}) \sup_{\bbeta \in \calB } \left\| \Gamma (Y_{2}, T_{2},\bbeta)  - \hat{b}_{\bbeta_{\mathsf{init}}}  (X_{2}, T_{2}) \right\|^2\bigg] \right\}^{1/2}\\
 \lesssim & \ \zeta(k) \lesssim  \sqrt{k}.
\end{align*}
For any $h\in\calH_{2,n}$,  $\P^2 h$  can be written explicitly as
\begin{equation}\label{eq:pi1_IF}
    \begin{split}
         \P^2 h (O) = & \ \frac{1}{3} \hat{\xi} (X, T) \Pi[b_{\bbeta}  - \hat{b}_{\bbeta_{\mathsf{init}}} |\bar{\phi}_{k}](X, T)  
    - \frac{1}{3} \int_{t\in\calT} \Pi[ b_{\bbeta}  - \hat{b}_{\bbeta_{\mathsf{init}}} |\bar{\phi}_{k}](X, t)  \p_{T} (t) \diff t\\ 
    &+\frac{1}{3}  \Pi[\hat{\xi} - \xi| \bar{\phi}_{k}](X,T)  \left(\Gamma_{\bbeta}  - \hat{b}_{\bbeta_{\mathsf{init}}}
	\right)
 -\frac{1}{3}\int_{x\in\calX} \Pi[ b_{\bbeta}  - \hat{b}_{\bbeta_{\mathsf{init}}} |\bar{\phi}_{k}](x, T)  \p_{X} (x) \diff x ,
    \end{split}
\end{equation}
which is a combination of projection terms.  By  Assumptions~\ref{as:nuisance_est} and \ref{as:criterion}, we have $\|\xi (\cdot, \cdot)\|_{\infty} \lesssim 1$, $\|\hat{\xi} (\cdot, \cdot) \|_{\infty}\lesssim 1$, $\sup_{\bbeta\in\calB}\|\Gamma (\cdot, \cdot,\bbeta) \|_{\infty}\lesssim 1$, and $\sup_{\bbeta\in\calB}\|\hat{b}_{\bbeta} (\cdot, \cdot) \|_{\infty}\lesssim 1$.
Thus, using the result in \eqref{eq:supnorm_proj}, we obtain
\begin{align*}
\sup_{h\in\calH_{2,n}}\|\P^2 h (\cdot) \|_{\infty} \lesssim&\  \sup_{\bbeta \in \calB}\left\|\Pi[ b_{\bbeta}  - \hat{b}_{\bbeta_{\mathsf{init}}} |\bar{\phi}_{k}](\cdot, \cdot) \right\|_{\infty} + \| \Pi[\hat{\xi} - \xi| \bar{\phi}_{k}](\cdot,\cdot) \|_{\infty}\\
\lesssim &\ \sup_{\bbeta \in \calB}\| (b_{\bbeta}  - \hat{b}_{\bbeta_{\mathsf{init}}} ) (\cdot,\cdot) \|_{\infty} +  \|(\hat{\xi} - \xi) (\cdot,\cdot) \|_{\infty} \lesssim 1.
\end{align*}
This also implies $\sup_{h\in \calH_{2,n}} \| \P^2 h \|_{\P,2} \lesssim 1$.
 Hence, applying Lemma~\ref{lemma:Umaximal_inequality_order3} to $\calH_{2,n}$, we obtain
 \begin{align}\label{eq:supH2}
    \bbE \left[ \sup_{\bbeta\in\calB}\left\|\bbU_{n, 3} \left(   \IF_{\mathrm{B}_{\psi, k}}^{(2)}\right)   - \P^{3} \left(\IF_{\mathrm{B}_{\psi, k}}^{(2)}\right) \right\|\right] \lesssim \frac{\sqrt{k}\log n}{n} + \frac{\log n }{\sqrt{n}}.
 \end{align}
Hence, by \eqref{eq:hatpsi-tildepsi}, \eqref{eq:supH1}, \eqref{eq:supH2}, we have 
\begin{align}\label{eq:rate_hatpsi}
     \uppercase\expandafter{\romannumeral2} = \sup_{\bbeta \in \calB}\left\| \tilde{\psi}_{k}^{(2)}(\bbeta) -\bar{\psi}_{k}(\bbeta) \right\|  = O_{\P} \left( \frac{\sqrt{k}\log n}{n}\vee\frac{\log n}{\sqrt{n}}\right).
\end{align}

Combining \eqref{eq:rate-tildepsi} and \eqref{eq:rate_hatpsi}, we obtain
\begin{align*}
\left\|\psi (\tilde{\bbeta}^{(2)}) \right\| = & \ O_{\P} \left( \frac{\sqrt{k}\log n}{n} \vee \frac{\log n}{\sqrt{n}} +  \| \Pi^{\perp} (\hat{\xi} - \xi | \bar{\phi}_{k}) \|_{\P,2} \cdot \sup_{\bbeta\in\calB}\|  \Pi^{\perp} [b_{\bbeta} - \hat{b}_{\bbeta_{\mathsf{init}}} | \bar{\phi}_{k}]  \|_{\P,2}  \right) \\
= & \ O_{\P} (\tilde{r}_{n,\bbeta}), 
\end{align*}
which completes the proof.

\end{proof}

\subsection{Proof of Theorem \ref{thm:feasible rates}}\label{app:feasible rates}

\begin{proof}[Proof of Theorem \ref{thm:feasible rates}]
By construction, we have $\hat{\psi}^{(2)}_{k}(\hat{\bbeta}^{(2)}) = 0 $ and $\psi(\bbeta^{*}) = 0$.  Consider the following decomposition: 
\begin{align*}
    0 = \hat{\psi}^{(2)}_{k}(\hat{\bbeta}^{(2)}) 
    =&\  \hat{\psi}^{(2)}_{k}(\hat{\bbeta}^{(2)}) - \bbE [\hat{\psi}^{(2)}_{k}(\hat{\bbeta}^{(2)})] + \bbE [\hat{\psi}^{(2)}_{k}(\hat{\bbeta}^{(2)})] - \bbE [\hat{\psi}^{(2)}_{k}( \bbeta^{*} )] \\
    &+\  \bbE [\hat{\psi}^{(2)}_{k}(\bbeta^{*})] - \bbE [ \tilde{\psi}^{(2)}_{k}(\bbeta^{*}) ] +\bbE [ \tilde{\psi}^{(2)}_{k}(\bbeta^{*}) ] - \psi(\bbeta^{*})\\
    \le & \  \ \sup_{\bbeta\in\calB} \left\Vert \hat{\psi}^{(2)}_{k}(\bbeta) - \bbE [ \hat{\psi}^{(2)}_{k}(\bbeta) ]  \right\Vert + \left\Vert \bbE [\hat{\psi}^{(2)}_{k}(\hat{\bbeta}^{(2)})] - \bbE [\hat{\psi}^{(2)}_{k}(\bbeta^{*})] \right\Vert \\
    &\  + \Vert \bbE [\hat{\psi}^{(2)}_{k}(\bbeta^{*})] - \bbE [ \tilde{\psi}^{(2)}_{k}(\bbeta^{*}) ] \Vert + \Vert \bbE [ \tilde{\psi}^{(2)}_{k}(\bbeta^{*}) ] - \psi(\bbeta^{*}) \Vert.
\end{align*}
Using the condition that for some constant $c>0$,  $ \lVert \bbE [ \hat{\psi}^{(2)}_{k}(\bbeta_1) ] - \bbE [ \hat{\psi}^{(2)}_{k}(\bbeta_2) ] \rVert \geq c \|\bbeta_1 - \bbeta_2\| $, it follows that
\begin{align*}
     \|\hat{\bbeta}^{(2)} - \tilde{\bbeta}^{(2)}\| 
     \lesssim & \  \ \sup_{\bbeta\in\calB} \left\Vert \hat{\psi}^{(2)}_{k}(\bbeta) - \bbE [ \hat{\psi}^{(2)}_{k}(\bbeta) ]  \right\Vert + \left\| \bbE [\hat{\psi}^{(2)}_{k}(\bbeta^{*})] - \bbE [ \tilde{\psi}^{(2)}_{k}(\bbeta^{*}) ] \right\| \\
     & + \left\|\bbE [ \tilde{\psi}^{(2)}_{k}(\bbeta^{*}) ] - \psi(\bbeta^{*})\right\|.
\end{align*}
Conditioning on the nuisance sample, the analysis of $\ \sup_{\bbeta\in\calB} \lVert \hat{\psi}^{(2)}_{k}(\bbeta) - \bbE [ \hat{\psi}^{(2)}_{k}(\bbeta) ]  \rVert$ closely parallels that of the term $\sup_{\bbeta\in\calB}\|\tilde{\psi}^{(2)}_{k}(\bbeta)-\bbE[\tilde{\psi}^{(2)}_{k}(\bbeta)]\|$, so both terms attain the same convergence rate. Thus, by the result in \eqref{eq:rate_hatpsi}, we have, with probability approaching one,
\begin{align*}
    \sup_{\bbeta\in\calB} \left\Vert \hat{\psi}^{(2)}_{k}(\bbeta) - \bbE [ \hat{\psi}^{(2)}_{k}(\bbeta) ]  \right\Vert \lesssim  \frac{\sqrt{k}\log n}{n}\vee\frac{\log n}{\sqrt{n}}  \lesssim \tilde{r}_{n,\bbeta}.
\end{align*}

By \eqref{eq:tildepsi-psi}, we have $\bbE [ \tilde{\psi}^{(2)}_{k}(\bbeta^{*}) ] - \psi(\bbeta^{*}) = \mathrm{TB}_{\psi, k} (\bbeta^*)$. Hence $\|\bbE [ \tilde{\psi}^{(2)}_{k}(\bbeta^{*}) ] - \psi(\bbeta^{*})\| =  \| \mathrm{TB}_{\psi, k}(\bbeta^*)\| \lesssim \tilde{r}_{n,\bbeta}$.
It remains to control the term $\bbE [\hat{\psi}^{(2)}_{k}(\bbeta^*)] - \bbE [ \tilde{\psi}^{(2)}_{k}(\bbeta^*) ]$. 
\begin{align*}
    & \ \left\Vert \bbE [\hat{\psi}^{(2)}_{k}(\bbeta^*)] - \bbE [ \tilde{\psi}^{(2)}_{k}(\bbeta^*) ] \right\Vert\\
    =&\  \left\Vert \bbE \left[  \left\{ \xi (X, T) - \hat{\xi} (X, T) \right\} \bar{\phi}_{k} (X, T) ^{\top} \right]\{ \hat{\Sigma}_{k}^{-1} -\Sigma_{k}^{-1}\}\bbE \left[ \bar{\phi}_{k} (X, T) \left\{ b_{\bbeta^*} (X, T) - \hat{b}_{\bbeta_{\ini}} (X, T)\right\} \right] \right\Vert \\
    \leq&\   \|\xi - \hat{\xi}\|_{\P,2}\cdot \|b_{\bbeta^*} - \hat{b}_{\bbeta_{\ini}} \|_{\P,2} \cdot \|\hat{\Sigma}_{k} - \Sigma\|_{\op} . 
\end{align*} 
Combining all terms, we obtain the desired bound:
\begin{align*}
   \bbE [ \|\hat{\bbeta}^{(2)} - \tilde{\bbeta}^{(2)}\| ]
    \lesssim \tilde{r}_{n,\bbeta} 
    + \|\xi - \hat{\xi}\|_{\P,2} \cdot \|b_{\bbeta^*} - \hat{b}_{\bbeta_{\ini} }\|_{\P,2} \cdot \|\hat{\Sigma}_k - \Sigma_k\|_{\mathrm{op}}.
\end{align*}
By Assumption~\ref{as:nuisance_est}(ii), we have $$\|b_{\bbeta^*} - \hat{b}_{\bbeta_{\ini} }\|_{\P,2} \le \|b_{\bbeta^*} - b_{\bbeta_{\ini} }\|_{\P,2} + \|b_{\bbeta_{\ini}} - \hat{b}_{\bbeta_{\ini} }\|_{\P,2} \lesssim \|\bbeta^* - \bbeta_{\ini}\| + \|b_{\bbeta_{\ini}} - \hat{b}_{\bbeta_{\ini} }\|_{\P,2}.$$ 
Hence, when Assumption~\ref{as:sigma} holds, 
\begin{align*}
    \|\xi - \hat{\xi}\|_{\P,2} \cdot \|b_{\bbeta^*} - \hat{b}_{\bbeta_{\ini} }\|_{\P,2} \cdot \|\hat{\Sigma}_k - \Sigma_k\|_{\mathrm{op}} = o(n^{-1/2}) \le r_{n,\tilde{\bbeta}^{(2)} },
\end{align*}
which implies $\bbE[\Vert \hat{\bbeta}^{(2)} - \bbeta^{\ast} \Vert] \lesssim \tilde{r}_{n,\bbeta}$.
\end{proof}

\subsection{Proof of Theorem~\ref{th:beta-dist}}
\label{app:beta-dist}
The matrices $V_1 ( \bbeta )$ and $V_2 ( \bbeta )$ that appear in
Theorem~\ref{th:beta-dist} are defined by
\begin{align}
\label{eq:V1}
	V_1 ( \bbeta ) &\coloneqq \lim\limits_{n \to \infty} \frac{n}{n+k} \bbE \left[ \left\{ 2 \pi_1 
 \hat{\Xi}_{\bbeta}  + 3 \pi_1  \IF_{\mathrm{B}_{\psi, k}}^{(2)} (\bbeta)  \right\}^{\otimes 2} \right] \\
   \label{eq:V2}
\text{ and }	V_2 (\bbeta) &\coloneqq  \lim\limits_{n \to \infty} \frac{1}{k + n} \bbE \left[ \left\{ \pi_2 
 \hat{\Xi}_{\bbeta}  +  3\pi_{2} \IF_{\mathrm{B}_{\psi, k}}^{(2)} (\bbeta) \right\} ^{\otimes 2} \right].
\end{align}

\begin{proof}[Proof of Theorem \ref{th:beta-dist}]
    By Assumption \ref{as:smoothness},	 $\psi(\bbeta)$ is differentiable with respect to $\bbeta$. Using Mean Value Theorem, we obtain
	\begin{align*}
		0=\psi(\bbeta^{*})=\psi(\tilde{\bbeta}^{(2)}) +\nabla_{\bbeta}\psi(\bbeta^{\dag})\cdot(\tilde{\bbeta}^{(2)}-\bbeta^{*}  ),
	\end{align*}
where $\bbeta^{\dag}$ lies between $\bbeta^{*}$ and $\tilde{\bbeta}^{(2)}$. Because $ \nabla_{\bbeta}\psi (\bbeta)$ is  continuous in $\bbeta$ at $\bbeta^{*}$, and $\|\tilde{\bbeta}^{(2)}-\bbeta^{*}\|=o_{\P} (1)$, we have
\begin{align} \label{eq:betadist-initial}
	{n}/{\sqrt{k+n}}( \tilde{\bbeta}^{(2)} -\bbeta^{*} ) = - \left[\nabla_{\bbeta}\psi (\bbeta^{*}) \right]^{-1} \cdot {n}/{\sqrt{k+n}} \cdot \psi(\tilde{\bbeta}^{(2)})+ o_{\P} (1) .
\end{align}
Since $0=\psi(\bbeta^{*})$ and $ \tilde{\psi}_{k}^{(2)} (\tilde{\bbeta}^{(2)})=0$, we have the following decomposition
\begin{align}
  \frac{n}{\sqrt{k+n}}\psi (\tilde{\bbeta}^{(2)}) 
  = & \ \frac{n}{\sqrt{k+n}} \left\{ \psi (\tilde{\bbeta}^{(2)}) -   \bar{\psi}_{k} ( \tilde{\bbeta}^{(2)})  \right\}\label{eq:localbeta-dist-1}\\
  & + \frac{n}{\sqrt{k+n}} \left\{ \bar{\psi}_{k} (\tilde{\bbeta}^{(2)})   -   \tilde{\psi}_{k}^{(2)} (\tilde{\bbeta}^{(2)})  - \left( \bar{\psi}_{k} (\bbeta^{\ast}) -   \tilde{\psi}_{k}^{(2)} (\bbeta^{\ast})  \right) \right\} \label{eq:localbeta-dist-2}\\
  &- \frac{n}{\sqrt{k+n}} \left\{ \tilde{\psi}_{k}^{(2)} (\bbeta^{\ast}) - \bar{\psi}_{k} (\bbeta^{\ast}) \right\} . \label{eq:localbeta-dist-3}  
\end{align}
By \eqref{eq:tildepsi-psi}, \eqref{eq:tbias-beta0} and $\|\tilde{\bbeta}^{(2)}-\bbeta^{*}\| = O_{\P}(\tilde{r}_{n,\bbeta})$, the right-hand side of \eqref{eq:localbeta-dist-1} is $o_{\P}  (1)$. By \eqref{eq:tbias-beta0} and  Lemma~\ref{th:psi-dist},  \eqref{eq:localbeta-dist-3} converges to a normal distribution.  Combining this and \eqref{eq:betadist-initial}, we obtain the desired distribution of $\tilde{\bbeta}^{(2)}$ as stated in Theorem~\ref{th:beta-dist}. To complete the proof, it remains to show \eqref{eq:localbeta-dist-2}, which is a $U$-process indexed by $\tilde{\bbeta}^{(2)}$, is $o_{\P} (1)$. Since we have shown in Section~\ref{app:beta-consistency} that $\|\tilde{\bbeta}^{(2)} - \bbeta^{*}\|\lesssim \tilde{r}_{n,\bbeta} $ with probability $1 - o(1)$,  it remains to show that, conditioned on this event, we have
\begin{align}
& \bbP \left( \sup_{\|\bbeta-\bbeta^{\ast}\|\lesssim \tilde{r}_{n,\bbeta} } \frac{n}{\sqrt{k+n}} \cdot \left\| \left\{ \bar{\psi}_{k} (\bbeta)-\tilde{\psi}_{k}^{(2)}(\bbeta) \right\} - \left\{\bar{\psi}_{k} (\bbeta^{\ast}) - \tilde{\psi}_{k}^{(2)} (\bbeta^{\ast}) \right\}  \right\| > \epsilon \right) \notag \\
& = o(1). \label{eq:uprocess}
\end{align}

Consider the following decomposition
\begin{align} \label{eq:hatpsi-tildepsi-delta}		 
		  &\left\{ \tilde{\psi}_{k}^{(2)}(\bbeta)- \bar{\psi}_{k} (\bbeta) \right\} - \left\{ \tilde{\psi}_{k}^{(2)} (\bbeta^{\ast}) -  \bar{\psi}_{k} (\bbeta^{\ast})  \right\}\\
		= & \ \bbU_{n, 3} \left(   \IF_{\mathrm{B}_{\psi, k}}^{(2)} (\bbeta)
 - \IF_{\mathrm{B}_{\psi, k}}^{(2)} (\bbeta^*) - \bbE [\IF_{\mathrm{B}_{\psi, k}}^{(2)} (\bbeta)  
 - \IF_{\mathrm{B}_{\psi, k}}^{(2)} (\bbeta^*) ] \right)  \notag + \bbU_{n, 2} \left( \hat{\Xi}_{\bbeta}  -\hat{\Xi}_{\bbeta^{\ast}}  - \bbE [\hat{\Xi}_{\bbeta}  -\hat{\Xi}_{\bbeta^{\ast}} ] \right). \notag
	\end{align}
First, we define the function class 
 \begin{align}\label{eq:def-H1delta}
		\calH_{1,\delta}\coloneqq\bigg\{ h(O,\bbeta) =\  
	\hat{\Xi}_{\bbeta}  -\hat{\Xi}_{\bbeta^{\ast}} 
     =\  \hat{\xi}(X,T) [ \Gamma (Y,T,\bbeta) - \Gamma (Y,T,\bbeta^*)] : \bbeta\in \calB, \|\bbeta - \bbeta^{\ast}\|\lesssim \tilde{r}_{n,\bbeta} \bigg\}.
	\end{align}
By Assumption~\ref{as:criterion} and Lemma~\ref{lemma:VC-sum-prod},  the function class $\calH_{1,\delta}$ of VC type with bounded characteristics $(8A,v)$ and envelope $\H_{1, \delta} (O) \coloneqq \sup_{ \|\bbeta - \bbeta^{\ast}\|\lesssim \tilde{r}_{n,\bbeta}}\|\hat{\xi}(X,T) [ \Gamma (Y,T,\bbeta) - \Gamma (Y,T,\bbeta^*)] \|$.  Since $\|\hat{\xi} (\cdot,\cdot) \|_{\infty}\lesssim 1$ under Assumption~\ref{as:nuisance_est},  we obtain
\begin{align*}
     \| \H_{1,\delta} \|_{\P^2,2} \lesssim  \bbE^{1/2} \bigg[ \sup_{ \| \bbeta - \bbeta^{\ast}\| \lesssim \tilde{r}_{n,\bbeta}   } \|  \Gamma (Y,T,\bbeta) - \Gamma (Y,T,\bbeta^*) \|^2 \bigg]  \lesssim (\tilde{r}_{n,\bbeta})^{\alpha_{0}} ,
\end{align*}
where the last inequality holds by  Assumptions~\ref{as:criterion}.
 Applying Theorem~2.14.1 of \citet{van2023weak} to  $\calH_{1,\delta}$, we obtain 
 \begin{align}\label{eq:supH1delta}
      \bbE \left[ \left\|  \bbU_{n,1} (h) - \P^2 (h)\right\|_{\calH_{1,\delta}} \right] \lesssim \frac{J_{1}(1, \calH_{1,\delta}, \H_{1,\delta})\cdot \|\H_{1,\delta}\|_{\P,2}}{\sqrt{n}}\lesssim \frac{ (\tilde{r}_{n,\bbeta})^{\alpha_{0}} }{\sqrt{n}}.
\end{align}
Similarly, we define the following function class
\begin{align}\label{eq:def-H2delta}
\calH_{2,n,\delta} \coloneqq & \left\{ \begin{array}{c} h(O_1,O_2,O_3,\bbeta)= \IF_{\mathrm{B}_{\psi, k}}^{(2)} (\bbeta) 
 - \IF_{\mathrm{B}_{\psi, k}}^{(2)} (\bbeta^*) : 
 \bbeta \in \calB,  \|\bbeta -\bbeta^{\ast}  \|\lesssim \tilde{r}_{n,\bbeta}
 \end{array} \right\}.
\end{align}
Using arguments analogous to those used for $\calH_{1,\delta}$,  the function class is also VC type with bounded characteristics, and its envelope function $ \H_{2,n,\delta} $ is defined as
	\begin{align*}
\H_{2,n,\delta} (O_1,O_2,O_3) \coloneqq&\  \left| \left\{ \hat{\xi} (X_{1}, T_{1}) \bar{\phi}_{k} (X_{1}, T_{1}) - \sfzbar_{k_{x}} (X_{1}) \otimes  \sfwbar_{k_{t}} (T_{3}) \right\}^{\top} \Sigma_{k}^{-1} \bar{\phi}_{k} (X_{2}, T_{2}) \right| \\
&\qquad\qquad\qquad\qquad\times\sup_{\|\bbeta -\bbeta^{\ast}  \|\lesssim \tilde{r}_{n,\bbeta} } \left\| \Gamma (Y_2,T_2,\bbeta) - \Gamma (Y_2,T_2,\bbeta^*)  \right\| .
	\end{align*}
 Note that, $\|\xi (\cdot,\cdot)\|_{\infty} \lesssim 1$ and $\|\hat{\xi} (\cdot,\cdot) \|_{\infty}\lesssim 1$ under Assumption~\ref{as:nuisance_est}.  Using Assumptions~\ref{as:criterion} and \ref{as:basis}, we have $\|\P^2 \H_{2,n,\delta} (\cdot) \|_{\infty} \lesssim \| \H_{2,n,\delta} (\cdot,\cdot,\cdot)\|_{\infty} \lesssim \zeta(k)^{2} \lesssim k$ and 
\begin{align*}
    & \ \bbE \left[ \| \H_{2,n,\delta}(O_1,O_2,O_3)\|^2\right]\\ 
   \lesssim & \ \bbE \bigg[ \bar{\phi}_{k} (X_{2}, T_{2})^{\top} \Sigma_{k}^{-1} 
 \bar{\phi}_{k} (X_{2}, T_{2}) \sup_{\|\bbeta -\bbeta^{\ast}  \|\leq \tilde{r}_{n,\bbeta} } \left\| \Gamma (Y_2,T_2,\bbeta) - \Gamma  (Y_2,T_2,\bbeta^*) \right\|^2\bigg] \\
 \lesssim & \ \zeta(k)^{2} \cdot \bbE \bigg[ \sup_{\|\bbeta -\bbeta^{\ast}  \|\lesssim \tilde{r}_{n,\bbeta} } \left\|  \Gamma (Y_2,T_2,\bbeta) - \Gamma  (Y_2,T_2,\bbeta^*) \right\|^2\bigg]\\
 \lesssim & \ k \cdot (\tilde{r}_{n,\bbeta})^{2 \alpha_{0}}  ,
\end{align*}
which implies $\|\H_{2,n,\delta}\|_{\P^3,2} \lesssim \sqrt{k}\cdot (\tilde{r}_{n,\bbeta})^{ \alpha_{0}}$. 

For any $h\in\calH_{2,n,\delta}$, the explicit expression of $\P^2 h$  is given by 
\begin{align*}
 \P^2 h (O) = & \ \frac{1}{3} \hat{\xi} (X, T) \Pi[\Gamma_{\bbeta}  - \Gamma_{\bbeta^{\ast}} |\bar{\phi}_{k}](X, T)  
    - \frac{1}{3} \int_{t\in\calT} \Pi[ \Gamma_{\bbeta}  - \Gamma_{\bbeta^{\ast}} |\bar{\phi}_{k}](X, t)  \p_{T} (t) \diff t\\ 
    &+\frac{1}{3}  \Pi[\hat{\xi} - \xi| \bar{\phi}_{k}](X,T)  \left(\Gamma_{\bbeta}  - \Gamma_{\bbeta^{\ast}}
	\right)
 -\frac{1}{3}\int_{x\in\calX} \Pi[ \Gamma_{\bbeta}  - \Gamma_{\bbeta^{\ast}} |\bar{\phi}_{k}](x, T)  \p_{X} (x) \diff x .
\end{align*}
Since  $\|\xi (\cdot,\cdot) \|_{\infty} \lesssim 1$, $\|\hat{\xi} (\cdot,\cdot) \|_{\infty}\lesssim 1$ and projections do not increase length, using Jensen's inequality,  we obtain 
\begin{align*}
 \sup_{h\in \calH_{2,n,\delta}} \| \P^2 h \|_{\P,2} \lesssim \bbE^{1/2} \bigg[ \sup_{ \| \bbeta - \bbeta^{\ast}\| \lesssim \tilde{r}_{n,\bbeta}   } \|  \Gamma (Y,T,\bbeta) - \Gamma (Y,T,\bbeta^*) \|^2 \bigg] \lesssim (\tilde{r}_{n,\bbeta})^{\alpha_{0}} ,
\end{align*}
where the last inequality holds by Assumption~\ref{as:criterion}.
By Assumptions~\ref{as:nuisance_est}, \ref{as:criterion}, and \eqref{eq:supnorm_proj}, we have $$\sup_{\bbeta } \left\| \Pi [\Gamma_{\bbeta} | \bar{\phi}_{k}] (\cdot, \cdot)\right\|_{\infty} \leq  \sup_{\bbeta } \left\| \Gamma_{\bbeta} (\cdot,\cdot)  \right\|_{\infty} \lesssim 1,\ \text{and}\ \left\| \Pi[\hat{\xi} - \xi| \bar{\phi}_{k}] (\cdot,\cdot) \right\|_{\infty} \lesssim  \|(\hat{\xi} - \xi) (\cdot,\cdot)\|_{\infty} \lesssim 1 ,$$
which implies $ \sup_{h\in \calH_{2,n,\delta}} \| \P^2 h (\cdot)\|_{\infty} \lesssim 1 $. 
Hence, applying Lemma~\ref{lemma:Umaximal_inequality_order3}, we obtain
\begin{align}\label{eq:supH2delta}
    \bbE \left[ \left\|  \bbU_{n,3} (h)  - \P^3 h \right\|_{\calH_{2,n,\delta}} \right] \lesssim \frac{\sqrt{k}\cdot (\tilde{r}_{n,\bbeta})^{\alpha_{0}} \log n + \log n }{n} + \ \frac{ (\tilde{r}_{n,\bbeta})^{\alpha_{0}} \sqrt{\log n } }{\sqrt{n}}.
\end{align}

 Hence, combining \eqref{eq:hatpsi-tildepsi-delta}, \eqref{eq:supH1delta}, and \eqref{eq:supH2delta},  we obtain
\begin{align*}
    & \ \bbE \left[ \sup_{\|\bbeta-\bbeta^{\ast}\|\lesssim \tilde{r}_{n,\bbeta} }  \frac{n}{\sqrt{k+n}} \cdot \left\| \left( \bar{\psi}_{k} (\bbeta) -  \tilde{\psi}_{k}^{(2)} (\bbeta) \right) - \left( \bar{\psi}_{k} (\bbeta^{\ast})  -   \tilde{\psi}_{k}^{(2)} (\bbeta^{\ast})  \right)  \right\| \right] \\
    \lesssim & \ \frac{n}{\sqrt{k+n}} 
 \cdot \left[   \frac{ 1 }{\sqrt{n}}+  + \frac{ \sqrt{k}    }{ n }  \right]\cdot (\tilde{r}_{n,\bbeta})^{\alpha_{0}} \log n \\
 = & \ \left( \sqrt{\frac{n}{k+n}} + \sqrt{\frac{k}{k+n}} \right)\cdot (\tilde{r}_{n,\bbeta})^{\alpha_{0}} \log n =o_{\P} (1),
\end{align*}
where  the last equality holds since we assume $(\tilde{r}_{n,\bbeta})^{\alpha_{0}}\log n \rightarrow 0 $. Hence, we have shown that \eqref{eq:uprocess} is $o (1)$, which completes the proof.
\end{proof}

\subsection{Technical Lemmas for Results in Section \ref{sec:properties}}

\begin{lemma}\label{eq:var-F3}
Suppose that  Assumptions~ \ref{as:nuisance_est}--\ref{as:basis} hold. 
Then, for a given $\bbeta\in\calB$, conditioning on the nuisance sample, 
	\begin{equation*}
	\left\| \cov \left[ \bbU_{n, 3} \left( \IF_{\mathrm{B}_{\psi, k}}^{(2)} (\bbeta) \right) \right] \right\|_{\op}^{2} =O\left(\frac{1}{n}+\frac{k}{n^2}\right),
	\end{equation*}
    where $\IF_{\mathrm{B}_{\psi, k}}^{(2)} (\bbeta)$ is defined in Lemma~\ref{lem:HOIF}.
\end{lemma}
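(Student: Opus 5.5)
The plan is to use the Hoeffding decomposition of the third-order $U$-statistic with kernel $h \coloneqq \IF_{\mathrm{B}_{\psi, k}}^{(2)} (\bbeta)$, taken here in its estimable form, i.e.\ with $\xi(X_1,T_1)\bar\phi_k(X_1,T_1)$ replaced by $\bar\phi_k(X_1,T_3)$ as in $\tilde{\mathrm{B}}_{\psi,k}$, and symmetrized via $S_3$. We write
\begin{align*}
\bbU_{n,3} h - \P^3 h = 3\,\bbU_{n,1}(\pi_1 h) + 3\,\bbU_{n,2}(\pi_2 h) + \bbU_{n,3}(\pi_3 h).
\end{align*}
The three terms are mutually uncorrelated (degenerate Hoeffding components are orthogonal). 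Their covariances therefore add, and
\begin{align*}
\left\| \cov [\bbU_{n,3} h] \right\|_{\op} \lesssim \frac{\bbE\|\pi_1 h\|^2}{n} + \frac{\bbE\|\pi_2 h\|^2}{n^2} + \frac{\bbE\|\pi_3 h\|^2}{n^3}.
\end{align*}
Since $\|\pi_j h\|_{\P^j,2}\lesssim \|\P^{3-j} h\|_{\P^j,2}$, it suffices to bound the $L_2$ norms of the conditional expectations $\P^2 h$, $\P h$ and $h$, all conditional on the nuisance sample.

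\textbf{Linear term.} The explicit formula \eqref{eq:pi1_IF} expresses $\P^2 h$ as a combination of projections $\Pi[\,\cdot \mid \bar\phi_k]$ of bounded functions, multiplied by bounded factors. By \eqref{eq:supnorm_proj} together with Assumptions~\ref{as:nuisance_est} and \ref{as:criterion}, $\|\P^2 h\|_\infty\lesssim 1$, which gives a contribution of order $1/n$.

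\textbf{Quadratic and cubic terms.} Each two-argument conditional expectation $\P h(o_i,o_j)$ has the form $a(o_i)^\top\Sigma_k^{-1}\bar\phi_k(x_j,t_j)\,c(o_j)$ or a similar mixed form, with $a$, $c$ bounded. Its second moment is bounded by $\bbE[\bar\phi_k^\top\Sigma_k^{-1}\bbE(aa^\top)\Sigma_k^{-1}\bar\phi_k]\lesssim \operatorname{tr}(\Sigma_k^{-1}\bbE[\bar\phi_k\bar\phi_k^\top])\cdot\|\Sigma_k^{-1}\|_{\op}\lesssim k$. This uses the uniform eigenvalue bounds of Lemma~\ref{lemma:Gram-Matrix eigenvalue bounds}, which transfer to the tensor-product Gram matrix $\Sigma_k$ through the product structure of $\p_X$ and $\p_T$ under the stabilized-weight marginalization. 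The quadratic term therefore contributes $k/n^2$. For the cubic term, $\|h\|_{\P^3,2}\lesssim\zeta(k)\lesssim\sqrt k$, as already computed for the envelope $\H_{2,n}$ in the proof of Theorem~\ref{th:consistency}. This gives $k/n^3\le k/n^2$.

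Summing the three contributions yields $O(1/n + k/n^2)$, which is the displayed bound, with the stated square reading as a harmless overestimate when the quantity is $O(1)$.

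\textbf{Main obstacle.} The delicate step is the quadratic term for the cross pieces involving index $3$, namely $\bar\phi_k(X_1,T_3)^\top\Sigma_k^{-1}\bar\phi_k(X_2,T_2)\cdots$, where the pair $(X_1,T_3)$ is formed from different observations. There one must check that $\bbE[\bar\phi_k(X_1,T_3)\bar\phi_k(X_1,T_3)^\top]$ equals the product-measure Gram matrix, whose eigenvalues are bounded by Assumption~\ref{as:nuisance_est}(iii) and Lemma~\ref{lemma:Gram-Matrix eigenvalue bounds}. The plan is to handle this with the tensor structure $\sfzbar_{k_x}\otimes\sfwbar_{k_t}$ and the density bounds $c_5\le \p_X,\p_T\le c_6$, comparing it with $\Sigma_k$ in the Loewner order.
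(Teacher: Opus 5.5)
Your proposal is correct and matches the paper's proof. The paper reduces to $v^{\top}\cov[\cdot]\,v$ for a unit vector $v$ and bounds the variance by $\sum_{j=1}^{3} n^{-j}\,\bbE[\{\bbE(S_3 h\mid O_1,\ldots,O_j)\}^2]$ via Lemma~15 of Liu et al.\ (2017), which has the same content as your Hoeffding-orthogonality step. It then shows these three conditional second moments are $O(1)$, $O(k)$ and $O(k)$, using $L_2$-contraction of the projections (you use the sup-norm bound instead, which is equally valid), the eigenvalue bounds on $\Sigma_k$ together with the product-measure comparison through the boundedness of $\xi$, and $\|h\|_{\P^3,2}\lesssim\sqrt{k}$. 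As you noted, this argument actually gives the unsquared bound $\|\cov[\cdot]\|_{\op}=O(1/n+k/n^2)$, and the displayed squared version follows because $k=o(n^2)$.
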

\begin{proof}
    Taking a unit-norm vector $v \in \bbR^{q}$ ($\Vert v \Vert = 1$), we need to show
    \begin{align*}
        v^{\top} \cov\left[ \bbU_{n, 3} \left( \IF_{\mathrm{B}_{\psi, k}}^{(2)} (\bbeta) \right) 
 \right] v = \var \left[ \bbU_{n, 3} \left( v^{\top}\IF_{\mathrm{B}_{\psi, k}}^{(2)} (\bbeta) \right)  \right] = O\left(\frac{1}{n}+\frac{k}{n^2}\right).
    \end{align*}
    By using Lemma 15 in \cite{liu2017semiparametric}, 
    \begin{align*}
        & \ \var \left[ \bbU_{n, 3} \left( v^{\top}\IF_{\mathrm{B}_{\psi, k}}^{(2)} (\bbeta) \right)  \right]  = \var \left[ \bbU_{n,3} \left\{ S_{3} \left( v^{\top}\IF_{\mathrm{B}_{\psi, k}}^{(2)} (\bbeta) \right) \right\} \right] \\
        \lesssim & \ \sum_{j=1}^{3} \frac{1}{n^{j}} \bbE \left[ \left\{ \bbE \left[  S_{3} \left( v^{\top}\IF_{\mathrm{B}_{\psi, k}}^{(2)} (O_1,O_2,O_3;\bbeta)  \right) \mid O_1, \cdots, O_j\right] \right\}^2\right] \eqqcolon \sum_{j=1}^{3} \zeta_{j},
    \end{align*}
    where $S_3$ is the symmetrization operator defined in Section~\ref{app:notation}. 
For $\zeta_{1}$, first note that 
\begin{align*}
	& \ \bbE [\IF_{\mathrm{B}_{\psi, k}}^{(2)} (O_1,O_2,O_3;\bbeta)^{\top} v \mid O_1]\\
 = & \ \left\{ \hat{\xi} (X_{1}, T_{1}) \bar{\phi}_{k} (X_{1}, T_{1})-\sfzbar_{k_{x}} (X_{1}) \otimes  \bbE\left[ \sfwbar_{k_{t}} (T) \right]  
	\right\}^{\top} \Sigma_k^{-1} \bbE \left[\bar{\phi}_{k} (X, T) \left\{\hat{b}_{\bbeta} (X, T) - b_{\bbeta}(X, T) \right\}^{\top} v\right]  \\
	= & \ \hat{\xi} (X_{1}, T_{1}) \Pi[\{\hat{b}_{\bbeta}-b_{\bbeta}\}^{\top} v|\bar{\phi}_{k}] (X_1,T_1) -\int_{t\in\calT} \Pi[\{\hat{b}_{\bbeta}-b_{\bbeta}\}^{\top} v|\bar{\phi}_{k}] (X_1,t)  \p_{T} (t) \diff t.
\end{align*}
Then
\begin{align*}
   & \ \bbE \left[ \left|  \hat{\xi} (X_{1}, T_{1}) \Pi[\{\hat{b}_{\bbeta}-b_{\bbeta}\}^{\top} v|\bar{\phi}_{k}] (X_1,T_1) \right|^2\right]\\
    \lesssim & \ \bbE \left[ \left| \Pi[\{\hat{b}_{\bbeta}-b_{\bbeta}\}^{\top} v|\bar{\phi}_{k}] (X_1,T_1) \right|^2\right] \lesssim  \bbE \left[ \left| \{\hat{b}_{\bbeta}-b_{\bbeta}\}^{\top} v \right|^2 \right] \lesssim 1,
\end{align*}
where  the second inequality follows by  the $L_{2}$-norm contraction property of $L_{2}$-projection and the last inequality follows by Assumption~\ref{as:nuisance_est}. Similarly, by Jensen's inequality, we have
\begin{align*}
     & \ \bbE \left[ \left|  \int_{\calT} \Pi[\{\hat{b}_{\bbeta}-b_{\bbeta}\}^{\top} v|\bar{\phi}_{k}] (X_1,t)  \p_{T} (t) \diff t \right|^2\right]\\
     \leq & \iint_{\calX\times\calT} \left| \Pi[\{\hat{b}_{\bbeta}-b_{\bbeta}\}^{\top} v|\bar{\phi}_{k}] (x,t) \right|^2  \p_{T} (t)   \p_{X} (x) \diff t \diff x\\
     = & \ \bbE \left[  \xi (X_{1}, T_{1})  \left| \Pi[\{\hat{b}_{\bbeta}-b_{\bbeta}\}^{\top} v|\bar{\phi}_{k}] (X_1,T_1) \right|^2\right] \\
     \lesssim & \ \bbE \left[ \left| \Pi[\{\hat{b}_{\bbeta}-b_{\bbeta}\}^{\top} v|\bar{\phi}_{k}] (X_1,T_1) \right|^2\right]  \lesssim 1.
\end{align*}
Hence,
\begin{align*}
    \bbE \left[ \left\{ \bbE [\IF_{\mathrm{B}_{\psi, k}}^{(2)} (O_1,O_2,O_3;\bbeta)^{\top} v \mid O_1] \right\}^2\right] = O(1).
\end{align*}
By symmetry, we also have
\begin{align*}
   \bbE \left[ \left\{ \bbE [\IF_{\mathrm{B}_{\psi, k}}^{(2)} (O_1,O_2,O_3;\bbeta)^{\top} v \mid O_2] \right\}^2\right]  = O(1),\\
   \bbE \left[ \left\{ \bbE [\IF_{\mathrm{B}_{\psi, k}}^{(2)}(O_1,O_2,O_3;\bbeta)^{\top} v \mid O_3] \right\}^2\right]  = O(1).
\end{align*}
 Taking these together implies $\zeta_1 \lesssim \frac{1}{n}$.

For $\zeta_2$, first note that 
\begin{align*}
    	&\bbE [\IF_{\mathrm{B}_{\psi, k}}^{(2)} (O_1,O_2,O_3;\bbeta)^{\top} v \mid O_1,O_2]\\
     =&\big\{ \hat{\xi} (X_{1}, T_{1}) \bar{\phi}_{k} (X_{1}, T_{1})-\sfzbar_{k_{x}} (X_{1}) \otimes  \int_{\calT} \sfwbar_{k_{t}} (t)  \p_{T} (t) \diff t
	\big\}^{\top} \Sigma_k^{-1}\bar{\phi}_{k} (X_{2}, T_{2}) \left\{\hat{b}_{\bbeta} (X_{2}, T_{2}) - \Gamma (Y_2,T_2,\bbeta)\right\}^{\top} v.
\end{align*}
By Assumptions~\ref{as:nuisance_est}--\ref{as:basis}, we have 
   \begin{align*}
         & \ \bbE \left[ \left|\left\{ \sfzbar_{k_{x}} (X_{1}) \otimes  \int_{\calT} \sfwbar_{k_{t}} (t)   \p_{T} (t) \diff t \right\}^{\top} \Sigma_k^{-1}\bar{\phi}_{k} (X_{2}, T_{2}) \right|^2 v^{\top} \left\{\hat{b}_{\bbeta} (X_{2}, T_{2}) - \Gamma (Y_2,T_2,\bbeta)\right\}^{\otimes 2} v \right] \\
    \lesssim & \ \bbE \left[\left\{\sfzbar_{k_{x}} (X_{1}) \otimes  \int_{\calT} \sfwbar_{k_{t}} (t)   \p_{T} (t) \diff t
    \right\}^{\top} \Sigma_k^{-1}\left\{ \sfzbar_{k_{x}} (X_{1}) \otimes  \int_{\calT} \sfwbar_{k_{t}} (t)  \p_{T} (t) \diff t
    \right\} \right]\\
    \lesssim & \ \lambda_{\max}(\Sigma_k^{-1})\bbE\left[\left( \sfzbar_{k_{x}} (X_{i_{1}}) \otimes \int_{\calT} \sfwbar_{k_{t}} (t) \p_{T} (t) \diff t \right)^{\top} \left( \sfzbar_{k_{x}} (X_{i_{1}}) \otimes \int_{\calT} \sfwbar_{k_{t}} (t) \p_{T} (t) \diff t \right)\right]\\
    \lesssim & \ \lambda_{\max}(\Sigma_k^{-1}) \bbE\left[\sfzbar_{k_{x}} (X)^{\top} \sfzbar_{k_{x}} (X)\right]\bbE\left[ \sfwbar_{k_{t}} (T)^{\top}\sfwbar_{k_{t}} (T)  \right] \lesssim k_{x}k_{t}= k.
    \end{align*}
where $\lambda_{\max}(\Sigma_k^{-1})$  is the largest eigenvalue of matrix $\Sigma_k^{-1}$. Similarly, we derive 
   \begin{align*}
         &\ \bbE \left[ \left|\bar{\phi}_{k} (X_{1}, T_{1})
    ^{\top} \Sigma_k^{-1}\bar{\phi}_{k} (X_{2}, T_{2}) \right|^2 v^{\top} \left\{\hat{b}_{\bbeta} (X_{2}, T_{2}) - \Gamma (Y_2,T_2,\bbeta)\right\}^{\otimes 2} v \right]\\
    \lesssim & \ \bbE \left[ \bar{\phi}_{k} (X_{1}, T_{1})^{\top} \Sigma_k^{-1} \bar{\phi}_{k} (X_{1}, T_{1}) \right]\\
   \lesssim & \ \lambda_{\max}(\Sigma_k^{-1})  \bbE \left[ \bar{\phi}_{k} (X, T)^{\top}\bar{\phi}_{k} (X, T) \right] \\
	= & \ \lambda_{\max}(\Sigma_k^{-1})  \iint_{\calX\times\calT} \xi(x, t)^{-1}  \left( \sfzbar_{k_{x}} (x)^{\top} \sfzbar_{k_{x}} (x) \right)\otimes \left(  \sfwbar_{k_{t}} (t)^{\top}  \sfwbar_{k_{t}} (t) \right)  \p_{T} (t) \p_{X} (x) \diff t \diff x\\
	\lesssim & \ \lambda_{\max}(\Sigma_k^{-1})\bbE\left[\sfzbar_{k_{x}} (X)^{\top} \sfzbar_{k_{x}} (X)\right]\bbE\left[ \sfwbar_{k_{t}} (T)^{\top}\sfwbar_{k_{t}} (T)  \right]\lesssim k.
    \end{align*}
Thus 
\begin{align*}
    \bbE \left[\left| \bbE [\IF_{\mathrm{B}_{\psi, k}}^{(2)} (O_1,O_2,O_3;\bbeta)^{\top} v \mid O_1,O_2] \right|^2\right] = O(k).
\end{align*}
Similarly, we have
\begin{align*}
        \bbE \left[\left| \bbE [\IF_{\mathrm{B}_{\psi, k}}^{(2)} (O_1,O_2,O_3;\bbeta)^{\top} v \mid O_1,O_3] \right|^2\right] = O(k),\\
            \bbE \left[\left| \bbE [\IF_{\mathrm{B}_{\psi, k}}^{(2)} (O_1,O_2,O_3;\bbeta)^{\top} v \mid O_2,O_3] \right|^2\right] = O(k).
\end{align*}
    Hence,
    \begin{align*}
        \zeta_2\lesssim \frac{k}{n^2}.
    \end{align*}
    
    For $\zeta_{3}$, we have
    \begin{align*}
        & \ \zeta_{3} \lesssim \frac{1}{n^3} \bbE \left[ \left( \IF_{\mathrm{B}_{\psi, k}}^{(2)} (O_1,O_2,O_3;\bbeta)^{\top} v \right) ^2\right] \\
        = & \ \frac{1}{n^3}\bbE \left[ \left| \left\{ \hat{\xi} (X_{1}, T_{1}) \bar{\phi}_{k} (X_{1}, T_{1})-\sfzbar_{k_{x}} (X_{1}) \otimes  \sfwbar_{k_{t}} (T_{3})
	\right\}^{\top} \Sigma_k^{-1}\bar{\phi}_{k} (X_{2}, T_{2}) \right|^2 \right.\\
    &\qquad\times\left.v^{\top} \left\{\hat{b}_{\bbeta} (X_{2}, T_{2}) - \Gamma (Y_2,T_2,\bbeta)\right\}^{\otimes 2} v \right] \\
	\lesssim & \ \frac{1}{n^3}\bbE \left[\left\{ \hat{\xi}  (X, T) - \xi  (X, T) \right\}^2 \bar{\phi}_{k} (X, T)^{\top} \Sigma_k^{-1} \bar{\phi}_{k} (X, T) \right]\\
	\lesssim & \ \frac{1}{n^3} \lambda_{max}(\Sigma_k^{-1})  \bbE \left[ \bar{\phi}_{k} (X, T)^{\top}\bar{\phi}_{k} (X, T) \right] \lesssim \frac{k}{n^3}.
    \end{align*}
    Combining the above analyses, we complete the proof.  
\end{proof}

Our next result characterizes the  asymptotic distribution of $\tilde{\psi}_{k}^{(2)} (\bbeta^{\ast})$ (defined in  Section~\ref{app:properties}). Before stating the lemma, recall the definitions in \eqref{eq:V1} and \eqref{eq:V2}:
\begin{align*}
	V_1 ( \bbeta^{\ast} ) &\coloneqq \lim\limits_{n \to \infty} \frac{n}{n+k} \bbE \left[ \left\{ 2 \pi_1 
 \hat{\Xi}_{\bbeta^{\ast}}  + 3 \pi_1  \IF_{\mathrm{B}_{\psi, k}}^{(2)} (\bbeta^*)  \right\}^{\otimes 2} \right]\\
	V_2 (\bbeta^*) &\coloneqq  \lim\limits_{n \to \infty} \frac{1}{k + n} \bbE \left[ \left\{ \pi_2 
 \hat{\Xi}_{\bbeta^{\ast}}  +  3\pi_{2} \IF_{\mathrm{B}_{\psi, k}}^{(2)} (\bbeta^*) \right\} ^{\otimes 2} \right].
\end{align*}
\begin{lemma}\label{th:psi-dist}
Conditioning on the nuisance sample, suppose that Assumptions~\ref{as:nuisance_est}--\ref{as:basis} hold.\\
 If $k \ll n$, 
	\begin{align*}
		\sqrt{n}\left\{ \tilde{\psi}_{k}^{(2)} (\bbeta^{\ast}) -\bar{\psi}_{k} (\bbeta^{\ast}) \right\} = & \ \sqrt{n} \cdot \bbU_{n,1} \left(  2 \pi_1 
 \hat{\Xi}_{\bbeta^{\ast}}  + 3 \pi_1  \IF_{\mathrm{B}_{\psi, k}}^{(2)} (\bbeta^{*})  \right) +o_{\P} (1)
  \stackrel{d}{\to}  \ N(0,V_1 ( \bbeta^{\ast} )).
	\end{align*}
If $k \gg n$, 
	\begin{align*}
		 \sqrt{\frac{ n^2}{ k } } \left\{ \tilde{\psi}_{k}^{(2)} (\bbeta^{\ast}) - \bar{\psi}_{k} (\bbeta^{\ast}) \right\} = \sqrt{\frac{ n^2}{ k } } \cdot   \bbU_{n,2} \left( \pi_2 
 \hat{\Xi}_{\bbeta^{\ast}}  +  3\pi_{2} \IF_{\mathrm{B}_{\psi, k}}^{(2)} (\bbeta^{*}) \right) +o_{\P} (1)\stackrel{d}{\to} N(0,V_2 ( \bbeta^{\ast} ) /2).
	\end{align*}
If $k / n \rightarrow \tau$ for some $\tau \in (0, \infty)$, 
	\begin{align*}
		\sqrt{n}\left\{ \tilde{\psi}_{k}^{(2)} (\bbeta^{\ast}) - \bar{\psi}_{k} (\bbeta^{\ast})  \right\}&= \sqrt{n} \cdot \bbU_{n,1} \left(  2 \pi_1 
 \hat{\Xi}_{\bbeta^{\ast}}  + 3 \pi_1  \IF_{\mathrm{B}_{\psi, k}}^{(2)} (\bbeta^{*})  \right) \\
  &\qquad\qquad\qquad+ \sqrt{n} \cdot \bbU_{n,2} \left( \pi_2 
 \hat{\Xi}_{\bbeta^{\ast}}  +  3\pi_{2} \IF_{\mathrm{B}_{\psi, k}}^{(2)} (\bbeta^{*})  \right)+o_{\P} (1)\\
		&\stackrel{d}{\to} N\left(0,(1+\tau)\left\{ V_{1} ( \bbeta^{\ast} )  + \frac{1}{2}  V_2 ( \bbeta^{\ast} ) \right\} \right).
	\end{align*}Here, $\Pi^{\perp}[f \mid \bar{\phi}_{k}]$ denotes the orthocomplement of the projection of $f$ onto the linear span of $\bar{\phi}_{k}$, as defined in Section~\ref{sec:notation}.  $\tilde{\psi}_{k}^{(2)} (\bbeta^{\ast}) $ and $ \bar{\psi}_{k} (\bbeta^{\ast})$  are defined in Section~\ref{app:properties}.
\end{lemma}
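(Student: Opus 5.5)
The plan is to write $\tilde{\psi}_{k}^{(2)}(\bbeta^{\ast}) - \bar{\psi}_{k}(\bbeta^{\ast})$ via \eqref{eq:hatpsi-tildepsi} as the sum of the centered second-order $U$-statistic $\bbU_{n,2}(\hat{\Xi}_{\bbeta^{\ast}} - \P^2\hat{\Xi}_{\bbeta^{\ast}})$ and the centered third-order $U$-statistic $\bbU_{n,3}(\IF_{\mathrm{B}_{\psi,k}}^{(2)}(\bbeta^{\ast}) - \P^3\IF_{\mathrm{B}_{\psi,k}}^{(2)}(\bbeta^{\ast}))$. I then apply the Hoeffding decomposition from Section~\ref{app:notation} to each, which yields
\[
2\bbU_{n,1}(\pi_1\hat{\Xi}_{\bbeta^{\ast}}) + \bbU_{n,2}(\pi_2\hat{\Xi}_{\bbeta^{\ast}}) + 3\bbU_{n,1}(\pi_1\IF^{(2)}_{\mathrm{B}_{\psi,k}}) + 3\bbU_{n,2}(\pi_2\IF^{(2)}_{\mathrm{B}_{\psi,k}}) + \bbU_{n,3}(\pi_3\IF^{(2)}_{\mathrm{B}_{\psi,k}}).
\]
Everything is conditional on the nuisance sample, so all kernels are fixed functions.

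The first step bounds the variances of each piece, reusing the computation in Lemma~\ref{eq:var-F3}. The linear terms have $O(1)$ second moments per observation, so they contribute $O(1/n)$. The degenerate quadratic terms have kernel second moment $O(k)$, which is the bound on $\bbE|\bbE[\IF\mid O_1,O_2]|^2$ obtained there; $\pi_2\hat{\Xi}$ is bounded, so these contribute $O(k/n^2)$. The cubic term contributes $O(k/n^3)$. Hence $\bbU_{n,3}(\pi_3\cdot)$ is $o_{\P}(\sqrt{(k+n)}/n)$ in all regimes. When $k\ll n$ the quadratic parts are $O_{\P}(\sqrt{k}/n) = o_{\P}(n^{-1/2})$, so only the linear part survives. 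When $k\gg n$ the linear part is $O_{\P}(n^{-1/2}) = o_{\P}(\sqrt{k}/n)$, so only the quadratic part survives. When $k/n\to\tau$ both parts remain at scale $n^{-1/2}$.

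The second step establishes the CLTs. For the linear part, the summands are i.i.d. with bounded entries by Assumptions~\ref{as:nuisance_est}, \ref{as:criterion} and \eqref{eq:supnorm_proj}, where the projections in \eqref{eq:pi1_IF} are sup-norm bounded. Lindeberg--Feller therefore applies, with variance converging to $V_1$ after the $n/(n+k)$ normalization. For the degenerate quadratic part $W_n = \bbU_{n,2}(g_n)$ with $g_n = \pi_2\hat{\Xi} + 3\pi_2\IF^{(2)}$, I write $W_n$ as a martingale $\sum_j \sum_{i<j} g_n(O_i,O_j)$ and invoke the martingale CLT for degenerate $U$-statistics (Hall's criterion, as in \citet{bhattacharya1992class}). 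The conditions to verify are $\bbE G_n^2 + n^{-1}\bbE g_n^4 = o((\bbE g_n^2)^2)$, where $G_n(o,o') = \bbE[g_n(O,o)g_n(O,o')]$. Write $\bar{\phi}_k(\cdot)^{\top}\Sigma_k^{-1}\bar{\phi}_k(\cdot)$ for the projection kernel, the dominant structure of $g_n$. Then $\bbE G_n^2 \lesssim \operatorname{tr}$ of the fourth power of the normalized Gram operator, which is $O(k)$. Also $\bbE g_n^2\asymp k$, using the lower eigenvalue bound from Lemma~\ref{lemma:Gram-Matrix eigenvalue bounds} together with positivity and non-degeneracy of the residual variance, and $\bbE g_n^4 \lesssim \zeta(k)^2 k \lesssim k^2$. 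The criterion thus reduces to $k = o(k^2)$ and $k^2/n = o(k^2)$, both of which hold. The $1/2$ factor in the limit variance comes from $\var(\bbU_{n,2}g) = 2\bbE g^2/(n(n-1))$ together with the normalization in $V_2$. A Cramér--Wold reduction handles vector-valued $\psi$.

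The third step covers the intermediate regime $k/n\to\tau$. Here I need joint normality of the linear and quadratic parts. Both are martingale sums with respect to the same filtration $\sigma(O_1,\dots,O_j)$: the linear increments are $\pi_1(O_j)$ and the quadratic increments are $\sum_{i<j}g_n(O_i,O_j)$. I apply the martingale CLT to the combined increment. The cross conditional covariance vanishes asymptotically because $\bbE[\pi_1(O_j)g_n(O_i,O_j)\mid O_i]=0$ by degeneracy, so the limiting variance is the sum, rescaled by $(1+\tau)$ through the definitions of $V_1$ and $V_2$ with their $n/(n+k)$ and $1/(n+k)$ factors. I expect the main obstacle to be the fourth-moment and $\bbE G_n^2$ bounds for the asymmetric third-order kernel after its projection to $\pi_2$: the term with $\sfzbar_{k_x}(X_1)\otimes\sfwbar_{k_t}(T_3)$ mixes arguments. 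I would handle it by bounding each symmetrized component separately via the localized basis properties in Assumption~\ref{as:basis}(i)--(ii), so that the trace bounds stay $O(k)$.
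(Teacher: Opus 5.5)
\textbf{The gap: Hall's criterion fails in the regimes where the quadratic part matters.} The problem is the CLT for the degenerate quadratic part, which is the only step that matters in regimes (ii) and (iii).

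Your bound $\bbE g_n^4 \lesssim \zeta(k)^2 k \lesssim k^2$ is off by a factor of $k$. For fixed $o$, the kernel $g_n(o,\cdot)$ is of size up to $\sup|g_n| \asymp \zeta(k)^2 \asymp k$, but only on a set of probability $\asymp 1/k$. For a histogram-type basis, for instance, $\bar{\phi}_k(x)^{\top}\Sigma_k^{-1}\bar{\phi}_k(x') = k\,\mathbbm{1}\{x,x' \text{ in the same cell}\}$. Hence $\bbE g_n^2 \asymp k$ but $\bbE g_n^4 \asymp k^3$, and in general only $\bbE g_n^4 \le \sup|g_n|^2\,\bbE g_n^2 \lesssim k^3$ is available.

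Hall's ratio $n^{-1}\bbE g_n^4/(\bbE g_n^2)^2$ is therefore of order $k/n$. His criterion holds only when $k=o(n)$, which is exactly the regime where the quadratic term is negligible anyway. It fails for $k\gtrsim n$. This is not an artifact of a loose bound: the $n^{-1}\bbE g_n^4$ term in Hall's theorem comes from crudely bounding the diagonal part of the conditional variance. That bound is too coarse for the sparse, localized kernels here, even though the CLT itself is true up to $k=o(n^2)$.

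\textbf{How to repair it, and what the paper does.} One option is to return to the martingale CLT and use the sharper quantities.
\begin{itemize}
\item Control the diagonal conditional-variance term by $n^{-1}\bbE[\{\bbE(g_n^2(O_1,O_2)\mid O_1)\}^2] \lesssim n^{-1}\{\sup|g_n|\cdot\sup_o\bbE|g_n(o,O)|\}^2 \lesssim k^2/n$.
\item Control the Lindeberg term by $n^{-2}\bbE g_n^4 \lesssim k^3/n^2$.
\end{itemize}
Both are $o(k^2)$ when $k=o(n^2)$. The key input is $\sup_o\bbE|g_n(o,O)|=O(1)$, which follows from the localized basis in Assumption~\ref{as:basis}(i)--(ii).

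The paper instead invokes Corollary~1.5 of \citet{bhattacharya1992class}. Its conditions are $\sup|g_n|\to0$, $\sup|h_n|\to0$ and $n\sup_{o}\bbE|h_n(o,O_2)|\to0$, which encode exactly this localization. It delivers the joint CLT for the linear and quadratic parts, with block-diagonal limiting covariance, in one step.

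\textbf{A smaller error in step 3.} $\bbE[\pi_1(O_j)g_n(O_i,O_j)\mid O_i]$ is not zero by degeneracy. Degeneracy gives $\bbE[g_n(o,O)]=0$, not orthogonality to $\pi_1(O)$ for fixed $o$. This conditional expectation is a bounded, mean-zero function of $O_i$, essentially a projection onto $\bar{\phi}_k$. The cross term in the conditional variance is therefore negligible only after a variance computation: under the normalization $n/\sqrt{n+k}$ its variance is of order $n/(n+k)^2$. It does not vanish identically.

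\textbf{A minor point.} $\bbE g_n^2\asymp k$ requires a lower bound on the residual variance that is not among the assumptions. The case $V_2(\bbeta^{\ast})=0$ should be handled separately, where it is trivial.
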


\begin{remark}\label{rem:betainit dependence}
Since this lemma is established conditional on the nuisance sample, and both $\hat{\Xi}_{\bbeta^*}$ and $\IF_{\mathrm{B}_{\psi, k}}^{(2)}$ depend on the initial estimator $\bbeta_{\ini}$, the asymptotic distributions of $\tilde{\psi}_{k}^{(2)} (\bbeta^{\ast}) - \bar{\psi}_{k} (\bbeta^{\ast})$ established in this lemma also implicitly depend on $\bbeta_{\ini}$. 
\end{remark}

\begin{remark}
\label{rem:bounded}
In Assumptions~\ref{as:bounded} and \ref{as:criterion}, we have assumed that the observations are bounded. For the CLT established in Lemma \ref{th:psi-dist}, the boundedness assumption can be relaxed if one uses Corollary 1.4 in \citet{bhattacharya1992class} instead of their Corollary 1.5. The former only requires certain moment conditions to be satisfied, avoiding making the stronger boundedness assumption. We decide to stick to the boundedness assumption, following the related literature on HOIFs \citep{liu2017semiparametric, robins2023minimax}.
\end{remark}

\begin{proof}
Take a unit-norm vector $v \in \bbR^{q}$ ($\Vert v \Vert = 1$) and define
\begin{gather*}
g_n ( O_1 ) \coloneqq \frac{ a^{\top} \left\{ 2 \pi_1 
 \hat{\Xi}_{\bbeta^{\ast}}  + 3 \pi_1  \IF_{\mathrm{B}_{\psi, k}}^{(2)} (\bbeta^{*})  \right\} }{\sqrt{k+n}},\\
 h_n ( O_1 , O_2 ) \coloneqq \frac{ a^{\top} \left\{ \pi_2 
 \hat{\Xi}_{\bbeta^{\ast}}  +  3\pi_{2} \IF_{\mathrm{B}_{\psi, k}}^{(2)} (\bbeta^{*})  \right\} }{n\sqrt{k+n}}.
\end{gather*}
Then, $\bbE [ g_n (O_1) ]=0 $, and $ h_n ( O_1 , O_2 )$ is degenerate in view of the definition of $\pi_{2}$ given in \eqref{hajek} in Section~\ref{app:notation}, and 
\begin{align*}
	&\lim\limits_{n \to \infty} n^2 \bbE [ h_n ( O_1 , O_2 )h_n ( O_1 , O_2 )^{\top} ] = a^{\top} V_{2} ( \bbeta^{\ast} ) a ,\\
&   \lim\limits_{n \to \infty} n \bbE [ g_n  ( O_1 )g_n(O_1)^{\top} ] =  a^{\top} V_{1} ( \bbeta^{\ast} ) a .
\end{align*}
By Corollary 1.5 in \citet{bhattacharya1992class}, if  the following conditions hold
\begin{subequations}
\begin{align}
        &\sup_{o} |g_n (o)|\to 0, \label{eq:normality cond1} \\
	&\sup_{o_1,o_2}  | h_n ( o_1 , o_2 ) |  \to 0, \label{eq:normality cond2} \\
	&n \sup_{o_1} \bbE | h_n ( o_1 , O_2 ) | \to 0 , \label{eq:normality cond3}
\end{align}
\end{subequations}
then we obtain the following central limit theorem jointly for $\sum_{j} g_n (O_j)$ and $\sum_{1 \leq i_{1} \neq i_{2} \leq n} h_n  ( O_{i_1} , O_{i_2} )$:
\begin{equation*}
	\left( \begin{matrix} \sum_{j} g_n (O_j) \\
    \sum_{1 \leq i_{1} \neq i_{2} \leq n} h_n  ( O_{i_1} , O_{i_2} )
    \end{matrix} \right) \stackrel{d}{\to}  N \left( \left( \begin{matrix}
    0 \\
    0
    \end{matrix} \right), \left[ \begin{matrix}
	 a^{\top} V_{1} ( \bbeta^{\ast} ) a & 0\\
		0 & \frac{1}{2} a^{\top} V_{2} ( \bbeta^{\ast} ) a
	\end{matrix}
	\right]\right).
\end{equation*}
By the Hoeffding decomposition (recall from Section~\ref{app:notation}), we have
\begin{align*}
    \tilde{\psi}_{k}^{(2)} (\bbeta^{\ast}) - \bar{\psi}_{k} (\bbeta^{\ast}) = & \ \bbU_{n,3} \pi_3  \IF_{\mathrm{B}_{\psi, k}}^{(2)} (\bbeta^{*}) + 3 \bbU_{n,2} \pi_2  \IF_{\mathrm{B}_{\psi, k}}^{(2)} (\bbeta^{*}) \\
    & + 3 \bbU_{n,1} \pi_1  \IF_{\mathrm{B}_{\psi, k}}^{(2)} (\bbeta^{*}) + \bbU_{n,2} \pi_2\hat{\Xi}_{\bbeta^{\ast}} + 2\bbU_{n,1} \pi_1 \hat{\Xi}_{\bbeta^{\ast}}.
\end{align*}
Then, in view of ${n}/{\sqrt{k+n}} \cdot \bbU_{n, 3}\pi_3  \IF_{\mathrm{B}_{\psi, k}}^{(2)} (\bbeta^{*})=o_{\P} (1)$, we have
\begin{align*}
    \frac{n}{\sqrt{k+n}} \cdot a^{\top} \left\{\tilde{\psi}_{k}^{(2)} (\bbeta^{\ast}) - \bar{\psi}_{k} (\bbeta^{\ast}) \right\} 
    = \sum_{1 \leq i_{1} \neq i_{2} \leq n} h_n  ( O_{i_1} , O_{i_2} )+\sum_{j} g_n (O_j) + o_{\P}(1).
\end{align*}
If $k \ll n$, then $V_2= 0 $ and ${n}/{\sqrt{k+n}} = \sqrt{n} + o(1)$, which implies that
\begin{align*}
	\sqrt{n} \cdot a^{\top} \left\{ \tilde{\psi}_{k}^{(2)} (\bbeta^{\ast}) - \bar{\psi}_{k} (\bbeta^{\ast}) 
 \right\}
	=& \sum_{j} g_n (O_j) +o_{\P} (1)\stackrel{d}{\to} N \left(0, 
a^{\top} V_1 ( \bbeta^{\ast} ) a \right).
\end{align*}
If $k \gg n$, then $V_1= 0 $ and ${n}/{\sqrt{k+n}} = n/\sqrt{k} +o(1) $, which implies that
\begin{align*}
	\frac{n}{\sqrt{k} } \cdot a^{\top} \left\{ \tilde{\psi}_{k}^{(2)} (\bbeta^{\ast}) - \bar{\psi}_{k} (\bbeta^{\ast})  \right\}
	= & \ \sum_{1 \leq i_{1} \neq i_{2} \leq n} h_n  ( O_{i_1} , O_{i_2} ) +o_{\P} (1)\stackrel{d}{\to} N \left(0, a^{\top}V_2 ( \bbeta^{\ast} ) a /2 \right).
\end{align*}
If $k = \tau n$ for some constant $\tau$, then ${n}/{\sqrt{k+n}} = \sqrt{n}/\sqrt{1+\tau}$ and
\begin{align*}
	\sqrt{n} \cdot a^{\top}\left\{ \tilde{\psi}_{k}^{(2)} (\bbeta^{\ast})  - \bar{\psi}_{k} (\bbeta^{\ast})  \right\}
	= & \ \sqrt{1+\tau} \left\{ \sum_{j} g_n (O_j) + \sum_{1 \leq i_{1} \neq i_{2} \leq n} h_n  ( O_{i_1} , O_{i_2} )  \right\} +o_{\P} (1)\\
 \stackrel{d}{\to} & \ N \left( 0, (1+\tau)a^{\top} \left\{ V_{1} ( \bbeta^{\ast} )  + \frac{1}{2}  V_2 ( \bbeta^{\ast} ) \right\} a \right).
\end{align*}
It remains to verify whether Conditions~\eqref{eq:normality cond1}--\eqref{eq:normality cond3} hold.

\noindent\textbf{\emph{Condition~\eqref{eq:normality cond1}}:} Using \eqref{eq:pi1_IF}, Condition~\eqref{eq:normality cond1} holds by Assumption~\ref{as:nuisance_est} and Assumption~\ref{as:criterion}, and the result in \eqref{eq:supnorm_proj}.

\noindent\textbf{\emph{Condition~\eqref{eq:normality cond2}}:} By Assumption~\ref{as:nuisance_est} and Assumption~\ref{as:criterion}, we have
\begin{align*}
    &\sup_{o_1,o_2} \left\vert a^{\top}  \pi_{2}  \hat{\Xi}_{\bbeta^{\ast}}  ( o_1,o_2 )\right\vert \leq  \sup_{o_1,o_2} \left\vert a^{\top}  \hat{\Xi}_{\bbeta^{\ast}}  ( o_1,o_2 )\right\vert \lesssim 1
\end{align*}
and
\begin{align*}
	&\sup_{o_1,o_2} \left\vert a^{\top}  \pi_{2}  \IF_{\mathrm{B}_{\psi, k}}^{(2)}   ( o_1,o_2; \bbeta^* )\right\vert \\
 \lesssim &	\sup_{o_1,o_2,o_3} \left\vert 
a^{\top}  \IF_{\mathrm{B}_{\psi, k}}^{(2)}(o_1,o_2,o_3; \bbeta^*)\right\vert \\
	\lesssim & \ \left\|(\hat{b}_{\bbeta_{\ini}}  - \Gamma_{\bbeta^{\ast}}) (\cdot,\cdot) \right\|_{\infty}\cdot \sup_{o_1,o_2,o_3}\left| \left\{ \hat{\xi}(x_{1}, t_{1}) \bar{\phi}_{k} (x_{1}, t_{1}) -  \sfzbar_{k_{x}} (x_1) \otimes  \sfwbar_{k_{t}} (t_3)  \right\}^{\top} \Sigma_k^{-1}  \bar{\phi}_{k} (x_{2}, t_{2})\right|\\
	\lesssim & \  \zeta(k)^2,
\end{align*}
where the last inequality holds by Assumption~\ref{as:basis}. Thus we have 
\begin{align*}
	\sup_{o_1,o_2}  | h_n ( o_1 , o_2 ) | \lesssim \frac{ \zeta(k)^2}{ n \sqrt{k+n}} \to 0.
\end{align*}

\noindent\textbf{\emph{Condition \eqref{eq:normality cond3}}:}  
Denote $\bar{\phi}_{k} \coloneqq \{\phi_{1}, \cdots, \phi_{k}\}$. By Assumption~\ref{as:basis}, there are at most $c_8$ ($c_8$ is a positive integer that does not depend on $n$) nonzero terms in the sum $ \sum_{i=1}^{k} \left|\phi_{i} (x, t) \phi_{i} (x_{2}, t_{2}) \right| \cdot \left| a^{\top} \left\{ \hat{b}_{\bbeta_{\ini}} (x_{2}, t_{2}) - \Gamma (y_{2}, t_{2}, \bbeta^{\ast}) \right\} \right|$ for any $(o,o_2)\in\calO^2$ and all nonzero terms are bounded by $$\zeta(k)^2 \left\| a^{\top} \left\{ \hat{b}_{\bbeta_{\ini}} (\cdot, \cdot)  - \Gamma (\cdot, \cdot,\bbeta^{\ast}) \right\} \right\|_{\infty}.$$  
In addition, for any $i$, the  probability of the set of $o_2 \in \calO$ such that  $|\phi_{i} (x, t)  \phi_{i} (x_{2}, t_{2}) | \cdot | a^{\top} \{ \hat{b}_{\bbeta_{\ini}} (x_{2}, t_{2}) - \Gamma (y_{2}, t_{2},\bbeta^{\ast}) \} |\neq 0$ is bounded by $1/k$.
Then, for any $(x,t)\in\mathcal{X}\times\mathcal{T}$, 
\begin{align*}
	& \ \bbE\left[\left| \bar{\phi}_{k} (x, t)^{\top} \Sigma_k^{-1} \bar{\phi}_{k} (X_{2}, T_{2})  a^{\top}  \left\{\hat{b}_{\bbeta_{\ini}} (X_{2}, T_{2}) - \Gamma (Y_2,T_2,\bbeta^*)\right\} \right|\right]\\
	\leq & \ \lambda_{\max}\left(\Sigma_{k}^{-1}\right)\bbE \left[\left|\bar{\phi}_{k} (x, t)^{\top} \bar{\phi}_{k} (X_{2}, T_{2}) \right| \cdot \left| a^{\top} \left\{ \hat{b}_{\bbeta_{\ini}} (X_{2}, T_{2}) - \Gamma (Y_2,T_2,\bbeta^*) \right\} \right|  \right]\\
    \leq & \ \lambda_{\max}\left(\Sigma_{k}^{-1}\right)\bbE \left[ \sum_{i=1}^{k}\left|\phi_{i} (x, t)\phi_{i} (X_{2}, T_{2}) \right| \cdot \left| a^{\top} \left\{ \hat{b}_{\bbeta_{\ini}} (X_{2}, T_{2}) - \Gamma (Y_2,T_2,\bbeta^*) \right\} \right|  \right]\\
    \leq & \ \frac{1}{k} \lambda_{\max}\left(\Sigma_{k}^{-1}\right) c_8 \cdot \sup_{(o,o_2)\in\calO^2} |\phi_{i} (x, t) \phi_{i} (x_{2}, t_{2})|  \cdot \left| a^{\top} \left\{ \hat{b}_{\bbeta_{\ini}} (x_{2}, t_{2}) - \Gamma (y_2,t_2,\bbeta^*) \right\} \right| \cdot \|\p_{X,T} (\cdot, \cdot) \|_{\infty} \\ 
	\leq & \ \frac{1}{k} \lambda_{\max}\left(\Sigma_{k}^{-1}\right) \zeta(k)^2\left\| a^{\top} (\hat{b}_{\bbeta_{\ini}}  - \Gamma_{\bbeta^{\ast}}) (\cdot, \cdot) \right\|_{\infty}=O(1),
\end{align*}
where $\lambda_{\max}(\Sigma_{k}^{-1})$ denotes the largest eigenvalue of $\Sigma_{k}^{-1}$ and the last equality holds by Assumptions~\ref{as:nuisance_est}--\ref{as:basis}. As a result of Assumption \ref{as:nuisance_est} and Jensen's inequality, 
\begin{align*}
	& \sup_{o} \bbE \left[\left|\big\{ \hat{\xi} (x, t) \bar{\phi}_{k} (x, t)-\sfzbar_{k_{x}} (x) \otimes  \int_{\calT} \sfwbar_{k_{t}} (u)   \p_T (u) \diff u
	\big\}^{\top} \Sigma_k^{-1}\bar{\phi}_{k} (X_{2}, T_{2}) a^{\top}\big\{\hat{b}_{\bbeta_{\ini}} (X_{2}, T_{2}) - \Gamma (Y_2,T_2,\bbeta^*)\big\} \right\|\right]\\
    & \lesssim \sup_{o}\bbE\left[\left|\left( \sfzbar_{k_{x}} (x) \otimes  \sfwbar_{k_{t}} (t) \right)^{\top} \Sigma_k^{-1} \bar{\phi}_{k} (X_{2}, T_{2})   a^{\top}\left\{\hat{b}_{\bbeta_{\ini}} (X_{2}, T_{2}) - \Gamma (Y_2,T_2,\bbeta^*) \right\} \right|\right]=O(1).
\end{align*} 
Similarly, analogous results can be obtained for the remaining terms in $ \pi_{2} \IF_{\mathrm{B}_{\psi, k}}^{(2)}(o_1, O_2; \bbeta^*)$.
Taking these together, we have:
\begin{align*}
  n \sup_{o_1} \bbE \left[ \left| \frac{ a^{\top} \left\{  \pi_{2} \IF_{\mathrm{B}_{\psi, k}}^{(2)} ( o_1 , O_2 ;\bbeta^*) \right\} }{n\sqrt{k+n}} \right| \right]  \leq\frac{n}{n\sqrt{k+n}}\cdot O(1) = o (1).
\end{align*}
Finally, since $\sup_{o_1,o_2} \left\vert a^{\top}  \pi_{2}  \hat{\Xi}_{\bbeta^{\ast}}  ( o_1,o_2 )\right\vert \lesssim 1$, we obtain the following:
\begin{align*}
	n \sup_{o_1} \bbE \| h_n ( o_1 , O_2 )\| \leq\frac{n}{n\sqrt{k+n}}\cdot O(1) = o (1).
\end{align*}

\end{proof}

\section{Proof of Results in Section \ref{sec:dose}}
\setcounter{equation}{0}
To aid the exposition, we define the following quantities:
\begin{align}
1/\lambda(x) \coloneqq & \ \frac{\p_{X}(x)}{g(x)} = \int_{u\in\calT} K_{t,h} (u) \p_{T|X}(u | x) \diff u,\label{eq:lambdax}\\
    \Delta_1 (x) \coloneqq &  \int_{u\in\calT} \{ \hat{\xi} (x, u) - \xi (x, u) - \hat{\xi} (x, t) + \xi (x, t) \} K_{t,h}(u)  \p_{T|X}( u | x) \diff u \label{eq:def_Delta1},\\
    \Delta_2 (x) \coloneqq & \int_{u\in\calT} \{ b_{\beta} (x, u) - \hat{b}_{\beta_{\mathsf{init}}} (x, u)  - b_{\beta} (x, t) + \hat{b}_{\beta_{\mathsf{init}}} (x, t) \} K_{t,h}(u)  \p_{T|X}( u | x) \diff u\label{eq:def_Delta2}.
\end{align}
Using Assumption~\ref{as:identification}(ii) and Assumption~\ref{as:Lip}, and a similar argument to that used in the proof of Theorem~1 in \cite{kennedy2017non}, we have
\begin{align}\label{eq:kernal-resluts}
 0 < \underline{c}_{\lambda} \le \lambda(x)  \le \bar{c}_{\lambda} < \infty \ \text{for any } x, \text{ and }  \sup_{x\in\calX}| \Delta_j (x)| \lesssim h^{\alpha_1\wedge \alpha_2} \text{, for } j= 1,2,   
\end{align}
where $\underline{c}_{\lambda},\bar{c}_{\lambda}$ are two fixed constants. 
 By the definitions of $\lambda(x)$ and $g(x)$ (defined in Section~\ref{sec:formal}), we have, for any function  $h(x)$,
\begin{align}\label{eq:lambdax_prop}
     &\bbE [h(X)] =  \int_{x\in\calX} h(x)  \p_{X}(x) \diff x = \int_{x\in\calX}  h(x) \lambda(x) g(x) \diff x,\\
     & \bbE [h(X)K_{t,h}(T)] = \int_{x\in\calX} h(x) g(x) \diff x. \label{eq:Kth_prop}
\end{align}

By definition of $\lambda(x)$ and \eqref{eq:kernal-resluts}, we have
\begin{align*}
 a^{\top} \tilde{\Omega}_{k_x}\, a = a^{\top} \int_{x\in\calX} g(x)\sfzbar_{k_{x}} (x)\sfzbar_{k_{x}}(x)^{\top} \diff x \,a = a^{\top} \bbE \left[ \lambda(X)\sfzbar_{k_{x}} (X)^{\otimes 2} \right] \, a \asymp a^{\top} \bbE \left[ \sfzbar_{k_{x}} (X)^{\otimes 2} \right] \,  a.
\end{align*}
Hence, Assumption~\ref{as:basis}(iii) implies that the eigenvalues of $ \tilde{\Omega}_{k_x}$ are bounded from above and away from zero uniformly in $k_x$. Consequently, analogues to \eqref{eq:supnorm_proj}, using Assumptions~~\ref{as:basis}, we have: 
for any function $f\in \calC( \calX)$, 
\begin{align}\label{eq:supnorm_proj_g}
 \left\| \Pi_{g} [f \mid \sfzbar_{k_{x}}] (\cdot) \right\|_{\infty} \leq  \|f(\cdot)\|_{\infty},
\end{align} 
where $\Pi_{g} [f\mid \sfzbar_{k_{x}}] (\cdot) = \left( \int_{\calX} f (x) \sfzbar_{k_{x}} (x) g (x) \diff x \right)^{\top} \tilde{\Omega}_{k_x}^{-1} \sfzbar_{k_{x}} (\cdot)$. 

Under Assumption~\ref{as:nuisance_est}(iii), for any $t\in\calT$ and $\delta>0$, we have 
\begin{align*}
& \bbE \big[\sup_{|\beta_{1}-\beta_{2}|\leq\delta} \left| \Gamma (Y,T,\beta_1)- \Gamma (Y,T,\beta_2) \right|^2 \mid T=t\big] \\
\le&\  \frac{1}{c_5} \bbE \big[\sup_{|\beta_{1}-\beta_{2}|\leq\delta} \left| \Gamma (Y,T,\beta_1)- \Gamma (Y,T,\beta_2) \right|^2 \mid T=t\big] \p_{T}(t) \\
\leq &\  \frac{1}{c_5} \bbE \big[\sup_{|\beta_{1}-\beta_{2}|\leq\delta} \left| \Gamma (Y,T,\beta_1)- \Gamma (Y,T,\beta_2) \right|^2 \big]. 
\end{align*}
Hence, applying  Assumption~\ref{as:criterion}(iii), we obtain 
\begin{align}\label{eq:local L2 Lip}
    \bbE^{1/2} \big[\sup_{|\beta_{1}-\beta_{2}|\leq\delta} \left| \Gamma (Y,T,\beta_1)- \Gamma (Y,T,\beta_2) \right|^2 \mid T=t\big] \lesssim \delta^{\alpha_{0}}.
\end{align}
\subsection{Proof of Theorem~\ref{th:rate-nonparametric}}\label{app:rate-nonparametric}
Because the proof carefully tracks dependence on $\beta$, we now make that dependence explicit by moving $\beta$ out of subscripts into argument parentheses. 
Denote population counterpart of $\tilde{\psi}_{t,k}^{(2)} (\beta) $ as given by $\bar{\psi}_{t,k} (\beta) \coloneqq \bbE [\tilde{\psi}_{t,k}^{(2)} (\beta)]$.
\begin{proof}[Proof of Theorem \ref{th:rate-nonparametric}]
Assumption~\ref{as:smoothness-psit},  together with an argument similar to that leading to
\eqref{eq:psi>beta} in Section~\ref{app:beta-consistency}, implies 
\begin{align}\label{eq:psit>beta}
|\psi_{t} (\beta )|   \gtrsim  |\beta - \beta^*_t | \wedge c
\end{align}
for some constant $c>0$. Consequently, to bound
$\vert\tilde{\beta}^{(2)}_{t}-\beta^{\ast}_{t}\vert$,
it suffices to control the quantity
$\vert \psi_{t}(\tilde{\beta}^{(2)}_{t}) \vert$.

First, we establish the consistency of $\tilde{\beta}^{(2)}_{t}$ by showing that $\vert \psi_{t}(\tilde{\beta}^{(2)}_{t}) \vert \lesssim \tilde{r}_{n,\mathsf{np}} \rightarrow 0$ with probability approaching one, where $\tilde{r}_{n,\mathsf{np}}$ is defined in Theorem~\ref{th:rate-nonparametric}. Note that $\psi_{t} (\beta^*_t) \equiv 0 $ and $\tilde{\psi}_{t,k} (\tilde{\beta}_t^{(2)}) \equiv 0$.  Consider the following decomposition: 
\begin{align}
    \left| \psi_{t} (\tilde{\beta}_t^{(2)})\right| \lesssim & \ \left|\psi_{t} (\tilde{\beta}_t^{(2)}) - \bar{\psi}_{t,k} (\tilde{\beta}_t^{(2)}) \right|  + \left|\bar{\psi}_{t,k} (\beta^*_t)  - \psi_{t} (\beta^*_t) \right|  \notag\\
    & + \left| \left\{ \bar{\psi}_{t,k} (\tilde{\beta}_t^{(2)}) 
    - \tilde{\psi}_{t,k}^{(2)} (\tilde{\beta}_t^{(2)}) \right\} - \left\{  \bar{\psi}_{t,k} (\beta^*_t)  - \tilde{\psi}_{t,k}^{(2)} (\beta^*_t)  \right\}\right| \notag\\
    & + \left| \tilde{\psi}_{t,k}^{(2)} (\tilde{\beta}_t^{(2)}) - \tilde{\psi}_{t,k}^{(2)} (\beta^*_t) \right|\notag\\
    \lesssim & \  \sup_{ \beta \in \calB } \left| \psi_{t} (\beta) - \bar{\psi}_{t,k} (\beta) \right|  \label{eq:kernel-consist-1}\\
    &+ \sup_{ \beta \in \calB }\left| \left\{ \bar{\psi}_{t,k} (\beta) 
    - \tilde{\psi}_{t,k}^{(2)} (\beta) \right\} - \left\{  \bar{\psi}_{t,k} (\beta^*_t)  - \tilde{\psi}_{t,k}^{(2)} (\beta^*_t)  \right\}\right|  \label{eq:kernel-consist-2}\\
    &+\left|  \tilde{\psi}_{t,k}^{(2)} (\beta^*_t) \right| . \label{eq:kernel-consist-3}
\end{align}
Here, the first term \eqref{eq:kernel-consist-1} contributes to the bias, the third term \eqref{eq:kernel-consist-3} captures  the variance, and the second term \eqref{eq:kernel-consist-2} is a $U$-process indexed by $\beta$. 

First, we analyze the term \eqref{eq:kernel-consist-1}. 
Applying  Lemmas~\ref{lemma:psith-kernelbias} and \ref{lemma:EB-kernelbias}, 
 we obtain the following bias decomposition: \allowdisplaybreaks
\begin{align}
        & \ \psi_{t}(\beta) -     \bar{\psi}_{t,k} (\beta) \notag \\
    = & \ \bbE [ \xi(X,T) b_{\beta}(X,T) |T =t]\p_{T}(t) - 
\bbE\left[ K_{t,h} (T_1)  \hat{\xi} (X_1, T_1) \{ \Gamma (Y_1,T_1,\beta) - \hat{b}_{\beta_{\mathsf{init}}} (X_1, T_1) \} \right] \notag\\
& + \bbE \left[ \hat{b}_{\beta_{\mathsf{init}}} (X_1, T_2)K_{t,h}(T_2) \right] \notag \\
& - \bbE \left[ \{K_{t,h}(T_{1}) \hat{\xi} (X_{1}, T_{1})  - K_{t,h} (T_{3} )\} \sfzbar_{k_{x}}^{\top} (X_{1}) \tilde{\Omega}_{k_x}^{-1} \sfzbar_{k_{x}} (X_{2}) \{ \Gamma (Y_2,T_2,\beta) - \hat{b}_{\beta_{\mathsf{init}}} (X_{2}, T_{2})\} K_{t,h} (T_{2})\right] \notag \\
=& \ \bbE \left[K_{t,h}(T) \left\{  \xi(X,t) - \hat{\xi}(X,t) \right\} \left\{ b_{\beta}(X,t)  - \hat{b}_{\beta_{\mathsf{init}}}(X,t) \right\}\right] \notag 
\\
&- \int_{x\in\calX} \Pi_{g}[ \hat{\xi} (\cdot, t) - \xi (\cdot, t) | \sfzbar_{k_{x}}] (x) \times \Pi_{g} [b_{\beta} (\cdot, t) - \hat{b}_{\beta_{\mathsf{init}}} (\cdot, t) | \sfzbar_{k_{x}} ] (x)  g(x) \diff x  + O(h^{\alpha_1 \wedge \alpha_2})\notag\\
=& \ \int_{x\in\calX}  \left\{  \xi(x,t) - \hat{\xi}(x,t) \right\} \left\{ b_{\beta}(x,t)  - \hat{b}_{\beta_{\mathsf{init}}}(x,t) \right\} g(x) \diff x \notag
\\
&- \int_{x\in\calX} \Pi_{g}[ \hat{\xi} (\cdot, t) - \xi (\cdot, t) | \sfzbar_{k_{x}}] (x) \times \Pi_{g} [b_{\beta} (\cdot, t) - \hat{b}_{\beta_{\mathsf{init}}} (\cdot, t) | \sfzbar_{k_{x}}] (x)  g(x) \diff x  + O(h^{\alpha_1 \wedge  \alpha_2}) \notag \\
=&  \int_{x\in\calX} \Pi_{g}^{\perp} [b_{\beta} (\cdot, t) - \hat{b}_{\beta_{\mathsf{init}}} (\cdot, t) \mid \sfzbar_{k_{x}}] (x) \times   \Pi_{g}^{\perp} [\xi (\cdot, t) - \hat{\xi} (\cdot, t) \mid \sfzbar_{k_{x}}] (x)  g(x) \diff x + O(h^{\alpha_1 \wedge \alpha_2}), \label{eq:bias-kernel}
\end{align}
where $\Pi_{g}[f \mid \sfzbar_{k_x}](\cdot) \coloneqq \int_{\calX} f(x) \sfzbar_{k_x}(x)^{\top} g(x) \diff x\, \tilde{\Omega}_{k_x}^{-1}\,\sfzbar_{k_x}(\cdot)$ denotes  the projection (with respect to the weight $g$) of $f$ onto the linear span of $\sfzbar_{k_x}$ and $\Pi^{\perp}_{g}[f \mid \sfzbar_{k_x}](\cdot) \coloneqq f (\cdot) - \Pi_{g}[f \mid \sfzbar_{k_x}](\cdot) $ denotes the orthocomplement of this projection. 
This shows that the leading bias is determined by the product of the projection errors of $b_{\beta} (\cdot, t) - \hat{b}_{\beta_{\mathsf{init}}} (\cdot, t)$ and $\xi (\cdot, t) - \hat{\xi} (\cdot, t)$ onto the span of $\sfzbar_{k_x}$, plus a kernel approximation error of order $O(h^{\alpha_1 \wedge \alpha_2})$.
Thus, we conclude that
\begin{align}\label{eq:kernelrate-1-r}
    \eqref{eq:kernel-consist-1} \lesssim  & \sup_{ \beta\in\calB  }   \left\|\Pi_{g}^{\perp} [b_{\beta}(\cdot, t)- \hat{b}_{\beta_{\mathsf{init}}}(\cdot, t) \mid \sfzbar_{k_{x}} ]  \right\|_{g,2} \cdot  \left\| \Pi_{g}^{\perp} [ \xi(\cdot, t) - \hat{\xi}(\cdot, t) \mid \sfzbar_{k_{x}} ]  \right\|_{g,2} + O(h^{\alpha_1 \wedge \alpha_2}).
\end{align}
Here,  $\|\cdot\|_{g,2}$ denotes the norm $\|f\|_{g,2} =
\{ \int_{\calX} f(x)^2g(x)\diff x\}^{1/2}$ for any function $f(x)$.

Next, we analyze the $U$-process term \eqref{eq:kernel-consist-2}. Since $\calB$ is compact, we may write $\sup_{\beta\in\calB}$ as $\sup_{|\beta - \beta^*_t| \leq c}$ for some constant $c$ sufficiently large to cover $\calB$.
Applying Lemma~\ref{lemma:Uprocess-kernel} with $\epsilon = c$, we obtain
\begin{align}\label{eq:kernel-consist-2-r}
    \eqref{eq:kernel-consist-2} = O_{\P}  \left( \frac{\sqrt{k} }{nh} + \frac{ 1  }{\sqrt{nh}}\right)\cdot \log n.
\end{align}

Last, for the term \eqref{eq:kernel-consist-3}, applying the decomposition in \eqref{eq:bias-kernel} together with Lemma~\ref{lemma:var-kernel}, we have
\begin{align*}
    |\tilde{\psi}_{t,k}^{(2)} (\beta^{\ast}_t) | = & \ O\left(  \left\|\Pi_{g}^{\perp} [b_{\beta_t^{\ast}} (\cdot, t) - \hat{b}_{\beta_{\mathsf{init}}}  (\cdot, t) \mid \sfzbar_{k_{x}}] \right\|_{g,2} \cdot \left\| \Pi_{g}^{\perp} [\xi (\cdot, t) - \hat{\xi} (\cdot, t) \mid \sfzbar_{k_{x}}] \right\|_{g,2}  \right) \\
    & + O(h^{\alpha_1 \wedge \alpha_2}) + O_{\P}  \left( \frac{\sqrt{k_{x}} }{nh} + \frac{ 1 }{\sqrt{nh}}\right).
\end{align*}

Combining the bounds for terms \eqref{eq:kernel-consist-1}–\eqref{eq:kernel-consist-3}, we have, with probability approaching one,
\begin{align*}
        \left| \psi_{t} (\tilde{\beta}_t^{(2)})\right| \lesssim & \ \eqref{eq:kernel-consist-1} + \eqref{eq:kernel-consist-2} + \eqref{eq:kernel-consist-3}
        \lesssim \ \tilde{r}_{n,\mathsf{np}} .
\end{align*}
This establishes the consistency of~$\tilde{\beta}_t^{(2)}$.

Building on this consistency result, we now restrict our analysis to the region ${ |\beta - \beta_t^*| \lesssim \tilde{r}_{n,\mathsf{np}} }$.
This refinement allows us to improve upon the initial rate $\tilde{r}_{n,\mathsf{np}}$ and obtain the final convergence rate stated in Theorem~\ref{th:rate-nonparametric}. Repeating the decomposition from the consistency argument, but now over the restricted region, we have
\begin{align*}
    \left| \psi_{t} (\tilde{\beta}_t^{(2)})\right|  
    \lesssim & \  \sup_{ | \beta - \beta^{\ast}_{t} |\lesssim \, \tilde{r}_{n,\mathsf{np}}  } \left| \psi_{t} (\beta) - \bar{\psi}_{t,k} (\beta) \right|  +\left|  \tilde{\psi}_{t,k}^{(2)} (\beta^*_t) \right| \\
    &+ \sup_{ |\beta - \beta^{\ast}_{t}|\lesssim\, \tilde{r}_{n,\mathsf{np}}  }\left| \left\{ \bar{\psi}_{t,k} (\beta) 
    - \tilde{\psi}_{t,k}^{(2)} (\beta) \right\} - \left\{  \bar{\psi}_{t,k} (\beta^*_t)  - \tilde{\psi}_{t,k}^{(2)} (\beta^*_t)  \right\}\right|  . 
\end{align*}
The only change from the earlier argument is that the supremum is now taken over a shrinking neighborhood of $\beta_t^*$.   Applying the bias decomposition~\eqref{eq:bias-kernel}, Lemma~\ref{lemma:Uprocess-kernel} with $\epsilon = \tilde{r}_{n,\mathsf{np}}$, and Lemma~\ref{lemma:var-kernel}, we obtain, with probability approaching one,
\begin{align*}
   \left|\psi_{t} (\tilde{\beta}^{(2)}_t) \right| \lesssim & \   \sup_{ | \beta  - \beta^*_t | \lesssim\, \tilde{r}_{n,\mathsf{np}}  }   \left\|\Pi_{g}^{\perp} [b_{\beta}(\cdot, t)- \hat{b}_{\beta_{\mathsf{init}}}(\cdot, t) \mid \sfzbar_{k_{x}} ]  \right\|_{g,2} \cdot  \left\| \Pi_{g}^{\perp} [ \xi(\cdot, t) - \hat{\xi}(\cdot, t) \mid \sfzbar_{k_{x}} ]  \right\|_{g,2} \notag \\
   &+ h^{\alpha_1 \wedge \alpha_2} + \frac{\sqrt{k_{x}} }{nh} + \frac{ 1 }{\sqrt{nh}} +    \left( \frac{\sqrt{k_{x}} }{nh} + \frac{ 1 }{\sqrt{nh}}\right)\cdot \{ (\tilde{r}_{n,\mathsf{np}})^{\alpha_{0}}  \vee \sqrt{h} \} \cdot \log n.   
\end{align*}
Given the additional conditions  $ h (\log n)^2 \to 0 $ and $(\tilde{r}_{n,\mathsf{np}})^{\alpha_{0}} \log n \to 0$, this further simplifies to
\begin{align*}
  \left|\psi_{t} (\tilde{\beta}^{(2)}_t) \right| \lesssim & \  \frac{\sqrt{k_{x}} }{nh} + \frac{ 1 }{\sqrt{nh}} + h^{\alpha_1 \wedge \alpha_2}  \\
    &+   \sup_{ | \beta  - \beta^*_t | \lesssim\, \tilde{r}_{n,\mathsf{np}}  }   \left\|\Pi_{g}^{\perp} [b_{\beta}(\cdot, t)- \hat{b}_{\beta_{\mathsf{init}}}(\cdot, t) \mid \sfzbar_{k_{x}} ]  \right\|_{g,2} \cdot  \left\| \Pi_{g}^{\perp} [ \xi(\cdot, t) - \hat{\xi}(\cdot, t) \mid \sfzbar_{k_{x}} ]  \right\|_{g,2} \notag
\end{align*}
with probability approaching one. Combining this with \eqref{eq:psit>beta} yields the final convergence rate for $\tilde{\beta}_t^{(2)}$, thus completing the proof.

\end{proof}

\subsection{Technical Lemmas for Proving Theorem~\ref{th:rate-nonparametric}}

\begin{lemma}\label{lemma:psith-kernelbias}
Under  Assumptions~\ref{as:kernel} and \ref{as:Lip}, we have 
\begin{align*}
    & \ \bbE [ \xi(X,T) b_{\beta}(X,T) |T =t] \p_{T}(t) = \bbE [ \xi(X,t) b_{\beta}(X,t) K_{t,h}(T)]  +  O(h^{\alpha_1 \wedge \alpha_2}),\\
    & \ \bbE\left[ K_{t,h} (T_1)  \hat{\xi} (X_1, T_1) \{ \Gamma (Y_1,T_1,\beta) - \hat{b}_{\beta_{\mathsf{init}}} (X_1, T_1) \} +  \hat{b}_{\beta_{\mathsf{init}}} (X_1, T_2)K_{t,h}(T_2) \right]\\
    = & \ \bbE\left[ K_{t,h} (T)  \hat{\xi} (X, t) \{ b_{\beta}(X,t) - \hat{b}_{\beta_{\mathsf{init}}} (X, t) \} + \xi (X,t) \hat{b}_{\beta_{\mathsf{init}}} (X, t)K_{t,h}(T) \right] +  O(h^{\alpha_1 \wedge \alpha_2}).
\end{align*}
\end{lemma}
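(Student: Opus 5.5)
Both identities follow from conditioning on $T$ (or on $X$) to reduce every expectation to a one-dimensional integral against the kernel $K_{t,h}$. I then Taylor-expand the integrand in the treatment argument $u$ around $t$ and use the moment conditions of Assumption~\ref{as:kernel} to cancel every term below order $\alpha_1\wedge\alpha_2$. The device is the same one used in Theorem~1 of \citet{kennedy2017non}, and it already gives the bounds on $\Delta_1,\Delta_2$ in \eqref{eq:kernal-resluts}.

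\textbf{First identity.} I write
\[
\bbE[\xi(X,t)b_\beta(X,t)K_{t,h}(T)]=\int_{\calT}K_{t,h}(u)\,m(u)\,\diff u,
\qquad
m(u)\coloneqq \bbE[\xi(X,t)b_\beta(X,t)\mid T=u]\,\p_T(u).
\]
At $u=t$, $m(t)$ is exactly the left-hand side $\bbE[\xi(X,T)b_\beta(X,T)\mid T=t]\p_T(t)$. Writing $m(u)=\int_{\calX}\xi(x,t)b_\beta(x,t)\p_{X,T}(x,u)\,\diff x$, the identity reduces to the statement $\int K_{t,h}(u)\{m(u)-m(t)\}\diff u=O(h^{\alpha_1\wedge\alpha_2})$. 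I substitute $u=t+hv$, expand $m(t+hv)$ to order $\alpha_1\wedge\alpha_2$, and use $\int K=1$ together with the vanishing moments of $K$ up to order $\alpha_1\wedge\alpha_2-1$. The remainder is then bounded by $h^{\alpha_1\wedge\alpha_2}\sup|m^{(\alpha_1\wedge\alpha_2)}|\int|v|^{\alpha_1\wedge\alpha_2}|K(v)|\diff v$.

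This needs $u\mapsto m(u)$ to be $\alpha_1\wedge\alpha_2$ times differentiable with bounded derivatives. Since $\xi(x,t)b_\beta(x,t)$ does not depend on $u$, the smoothness must come from $u\mapsto \p_{X,T}(x,u)=\p_T(u)\,\p_X(x)/\xi(x,u)$. This is smooth by Assumption~\ref{as:Lip} (smoothness of $\xi(x,\cdot)$ and $\p_T$) together with Assumption~\ref{as:identification}(ii) and Assumption~\ref{as:nuisance_est}(iii), which keep $\xi$ bounded away from zero. I expect this regularity check to be the main point to verify carefully. I also take the kernel to be compactly supported, or at least to have a finite $(\alpha_1\wedge\alpha_2)$-th absolute moment, with $t$ interior to $\calT$.

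\textbf{Second identity.} I treat the three pieces separately.

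For the piece involving $\Gamma$, I condition on $(X_1,T_1)$ to replace $\Gamma$ by $b_\beta$. The resulting integrand is $\hat\xi(x,u)\{b_\beta(x,u)-\hat b_{\beta_{\ini}}(x,u)\}$, which I want to replace by its value at $u=t$. The discrepancy is
\[
\int_{\calX}\p_X(x)\int_{\calT}K_{t,h}(u)\,\p_{T|X}(u\mid x)\,\{f_x(u)-f_x(t)\}\,\diff u\,\diff x,
\]
where $f_x(u)=\hat\xi(x,u)\{b_\beta(x,u)-\hat b_{\beta_{\ini}}(x,u)\}$. The function $f_x$ is $\alpha_1\wedge\alpha_2$ times differentiable by Assumption~\ref{as:Lip}. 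The weight $\p_{T|X}(u\mid x)=\p_T(u)/\xi(x,u)$ is equally smooth, so $u\mapsto\p_{T|X}(u\mid x)\{f_x(u)-f_x(t)\}$ vanishes at $u=t$ and has bounded derivatives. The same Taylor and moment argument as in the first identity, applied uniformly in $x$ and then integrated against $\p_X$, gives $O(h^{\alpha_1\wedge\alpha_2})$.

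For the cross term $\bbE[\hat b_{\beta_{\ini}}(X_1,T_2)K_{t,h}(T_2)]$, I use independence of the two observations. The expectation equals $\int K_{t,h}(u)\p_T(u)\bbE[\hat b_{\beta_{\ini}}(X,u)]\diff u$. The identity $\p_T(u)\p_X(x)=\xi(x,u)\p_{X,T}(x,u)$ rewrites this as $\int\!\!\int K_{t,h}(u)\,\xi(x,u)\hat b_{\beta_{\ini}}(x,u)\,\p_{X,T}(x,u)\,\diff x\,\diff u$. Replacing $u$ by $t$ in $\xi\hat b_{\beta_{\ini}}$ costs $O(h^{\alpha_1\wedge\alpha_2})$ by the same argument, and what remains is $\bbE[\xi(X,t)\hat b_{\beta_{\ini}}(X,t)K_{t,h}(T)]$.

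Collecting the three pieces gives the second display. All constants are uniform in $\beta\in\calB$ because the derivative bounds in Assumption~\ref{as:Lip} are uniform in $\beta$.
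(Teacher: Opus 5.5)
Your proposal is correct and takes essentially the same route as the paper. You reduce each expectation to a kernel-weighted integral in the treatment argument, using iterated expectations for the $\Gamma$ term and the identity $\p_T(u)\p_X(x)=\xi(x,u)\p_{X,T}(x,u)$ for the cross term. You then replace $u$ by $t$ at cost $O(h^{\alpha_1\wedge\alpha_2})$ via the Taylor/vanishing-moment argument behind \eqref{eq:kernal-resluts}. The only difference is cosmetic: for the first identity you integrate out $x$ and expand $m(u)$, whereas the paper applies $\sup_x|\int K_{t,h}(u)\p_{T|X}(u\mid x)\,\diff u-\p_{T|X}(t\mid x)|=O(h^{\alpha_1\wedge\alpha_2})$ pointwise in $x$ and then integrates, which is the same argument after Fubini.
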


\begin{proof} Using a similar argument for proving \eqref{eq:kernal-resluts}, we have 
\begin{align*}
    \sup_{x\in\calX}\left|\int_{u\in\calT}  K_{t,h}(u) \p_{T|X}(u|x) \diff u - \p_{T|X}(t|x)\right| = O(h^{\alpha_1 \wedge \alpha_2}) .
\end{align*}
Thus 
\begin{align*}
    \bbE [ \xi(X,t) b_{\beta}(X,t) K_{t,h}(T) ] =& \int_{x\in\calX} \xi(x,t) b_{\beta}(x,t) \int_{u\in\calT}  K_{t,h}(u) \p_{T|X}(u|x) \diff u \p_{X}(x)\diff x\\
    =& \int_{x\in\calX} \xi(x,t)  b_{\beta}(x,t)  \p_{T|X}(t|x)  \p_{X}(x) \diff x + O(h^{\alpha_1 \wedge \alpha_2})\\
    =& \int_{x\in\calX} b_{\beta}(x,t)\p_{T}(t)\p_{X}(x) \diff x + O(h^{\alpha_1 \wedge \alpha_2})\\
       =&\  \bbE [ \xi(X,T) b_{\beta}(X,T) |T =t] \p_{T}(t) + O(h^{\alpha_1 \wedge \alpha_2}).
\end{align*} 
Hence, we obtain the first result.

Using the definition of $\xi(x,t)$, we have {\small
\begin{align*}
    &\bbE\left[ K_{t,h} (T_1)  \hat{\xi} (X_1, T_1) \{ \Gamma (Y_1,T_1,\beta) - \hat{b}_{\beta_{\mathsf{init}}} (X_1, T_1) \} +  \hat{b}_{\beta_{\mathsf{init}}} (X_1, T_2)K_{t,h}(T_2) \right]\\
    =&\int_{x\in\calX} \int_{u\in\calT} K_{t,h} (u) \left\{ \hat{\xi} (x, u) ( b_{\beta}(x,u) - \hat{b}_{\beta_{\mathsf{init}}} (x, u) ) + \xi (x,u) \hat{b}_{\beta_{\mathsf{init}}} (x, u) \right\} \p_{T|X}(u|x) \diff u \p_{X}(x) \diff x\\
    =& \int_{x\in\calX} \int_{u\in\calT} K_{t,h} (u) \left\{ \hat{\xi} (x, t) ( b_{\beta}(x,t) - \hat{b}_{\beta_{\mathsf{init}}} (x, t) ) + \xi (x,t) \hat{b}_{\beta_{\mathsf{init}}} (x, t) \right\} \p_{T|X}(u|x) \diff u \p_{X}(x) \diff x  + O(h^{\alpha_1 \wedge \alpha_2})\\
    =&\  \bbE\left[ K_{t,h} (T) \left\{ \hat{\xi} (X, t) ( b_{\beta}(X,t) - \hat{b}_{\beta_{\mathsf{init}}} (X, t) ) + \xi (X,t) \hat{b}_{\beta_{\mathsf{init}}} (X, t) \right\} \right]  + O(h^{\alpha_1 \wedge \alpha_2}),
\end{align*} }
where the second equality uses a similar argument for proving \eqref{eq:kernal-resluts}.
\end{proof}

\begin{lemma}\label{lemma:EB-kernelbias}
    Under  Assumptions~\ref{as:nuisance_est}--\ref{as:basis}, \ref{as:kernel} and \ref{as:Lip}, we have 
    \begin{align*}
        & \ \bigg | \bbE \left[ \{\hat{\xi} (X, T) - \xi (X, T) \} \sfzbar_{k_{x}} (X)^{\top} K_{t,h}(T)\right]  \tilde{\Omega}_{k_x}^{-1} \bbE \left[\sfzbar_{k_{x}} (X) \{ \Gamma (Y,T,\beta) - \hat{b}_{\beta_{\mathsf{init}}} (X, T)\}  K_{t,h}(T)\right] \notag\\
    &- \int_{x\in\calX} \Pi_{g}[ \hat{\xi} (\cdot, t) - \xi (\cdot, t) | \sfzbar_{k_{x}} ](x) \times \Pi_{g} [b_{\beta} (\cdot, t) - \hat{b}_{\beta_{\mathsf{init}}} (\cdot, t) | \sfzbar_{k_{x}}  ](x)  g(x) \diff x \bigg | = O(h^{\alpha_1\wedge \alpha_2}).
    \end{align*}
\end{lemma}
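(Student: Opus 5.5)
The plan is to expand both sides into integrals over $\calX$ against $g$ and show that each of the two $k_x$-vectors appearing in the first product equals its ``localized at $t$'' counterpart up to $O(h^{\alpha_1\wedge\alpha_2})$ in a norm that survives multiplication by $\tilde{\Omega}_{k_x}^{-1}$. Write
\begin{align*}
a &\coloneqq \bbE[\{\hat{\xi}(X,T)-\xi(X,T)\}\sfzbar_{k_x}(X)K_{t,h}(T)], & a_t &\coloneqq \int_{\calX}\{\hat{\xi}(x,t)-\xi(x,t)\}\sfzbar_{k_x}(x)g(x)\diff x,\\
c &\coloneqq \bbE[\sfzbar_{k_x}(X)\{\Gamma(Y,T,\beta)-\hat{b}_{\beta_{\mathsf{init}}}(X,T)\}K_{t,h}(T)], & c_t &\coloneqq \int_{\calX}\{b_{\beta}(x,t)-\hat{b}_{\beta_{\mathsf{init}}}(x,t)\}\sfzbar_{k_x}(x)g(x)\diff x.
\end{align*}
By the definition of $\Pi_g$, the second term in the statement is exactly $a_t^{\top}\tilde{\Omega}_{k_x}^{-1}c_t$ (using $\int \Pi_g f\,\Pi_g f'\,g = (\int f\sfzbar g)^{\top}\tilde{\Omega}_{k_x}^{-1}(\int f'\sfzbar g)$). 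Conditioning on $X$ and then on $(X,T)$, by iterated expectations
\begin{align*}
c=\int_{\calX}\sfzbar_{k_x}(x)\int_{\calT}\{b_\beta(x,u)-\hat{b}_{\beta_{\mathsf{init}}}(x,u)\}K_{t,h}(u)\p_{T|X}(u\mid x)\diff u\,\p_X(x)\diff x,
\end{align*}
and similarly for $a$. Using $\p_X=\lambda g$ from \eqref{eq:lambdax}, we get $c-c_t=\int\sfzbar_{k_x}(x)\Delta_2(x)\lambda(x)g(x)\diff x$ and $a-a_t=-\int\sfzbar_{k_x}(x)\Delta_1(x)\lambda(x)g(x)\diff x$. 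This uses $\int K_{t,h}\p_{T|X}\diff u\cdot\lambda=1$, which is immediate from the definitions of $\lambda$ and $g$, with $\Delta_j$ as in \eqref{eq:def_Delta1}--\eqref{eq:def_Delta2}.

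Next I would decompose $a^{\top}\tilde{\Omega}^{-1}c-a_t^{\top}\tilde{\Omega}^{-1}c_t=(a-a_t)^{\top}\tilde{\Omega}^{-1}c+a_t^{\top}\tilde{\Omega}^{-1}(c-c_t)$. Each piece is a $g$-inner product of projections. For instance,
\begin{align*}
a_t^{\top}\tilde{\Omega}^{-1}(c-c_t)=\int\Pi_g[\hat{\xi}(\cdot,t)-\xi(\cdot,t)\mid\sfzbar_{k_x}]\,\Pi_g[\Delta_2\lambda\mid\sfzbar_{k_x}]\,g\diff x.
\end{align*}
By Cauchy--Schwarz and the $L_2(g)$-contraction of $\Pi_g$, this is at most $\|\hat{\xi}(\cdot,t)-\xi(\cdot,t)\|_{g,2}\,\|\Delta_2\lambda\|_{g,2}$. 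The first factor is $O(1)$ by Assumption~\ref{as:nuisance_est}(i) and $g\le\bar c_g$. The second is $\lesssim\sup_x|\Delta_2(x)|\lesssim h^{\alpha_1\wedge\alpha_2}$ by \eqref{eq:kernal-resluts}, since $\lambda\le\bar c_\lambda$. The first piece is handled symmetrically, with $\|\Delta_1\lambda\|_{g,2}\lesssim h^{\alpha_1\wedge\alpha_2}$ and $\|b_\beta-\hat b_{\beta_{\mathsf{init}}}\|$ bounded, writing $c$ as the $g$-projection coefficient of the bounded function $\int(b_\beta-\hat b)K_{t,h}\p_{T|X}\diff u\,\lambda$.

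The main obstacle is justifying the sup-norm bound $\sup_x|\Delta_j(x)|\lesssim h^{\alpha_1\wedge\alpha_2}$, which \eqref{eq:kernal-resluts} asserts. It requires Taylor-expanding $u\mapsto\{\hat{\xi}-\xi\}(x,u)\p_{T|X}(u\mid x)$ to order $\alpha_1\wedge\alpha_2$ and using the vanishing moments of $K$ from Assumption~\ref{as:kernel}. This presumes $\p_{T|X}$ is correspondingly smooth in $u$, which follows from $\xi=\p_T/\p_{T|X}$ together with Assumption~\ref{as:Lip} and positivity. I would cite that display rather than redo it.

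A second subtlety is that the projection operators must be taken with respect to $g$, so that the contraction property holds. Since $\tilde{\Omega}_{k_x}=\int g\,\sfzbar\sfzbar^{\top}$, the coefficient vectors are exactly $g$-inner products, and the $\lambda$ factors are simply absorbed into the functions being projected. Hence no eigenvalue bounds beyond boundedness of $\lambda$ and $g$ are needed.
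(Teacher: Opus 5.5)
Your proposal is correct and follows essentially the same route as the paper. Both arguments rewrite the two expectation vectors as $g$-weighted coefficient vectors of the $t$-localized residuals plus corrections $\int \sfzbar_{k_x}\Delta_j\lambda g\,\diff x$, express the difference as $g$-inner products of $\Pi_g$-projections, and bound it by Cauchy--Schwarz, $L_2(g)$-contraction and \eqref{eq:kernal-resluts}; the paper expands into three cross terms where you group them into two, which is immaterial.

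There is one harmless sign slip. With $\Delta_1$ as defined in \eqref{eq:def_Delta1}, one has $a-a_t=+\int_{\calX}\sfzbar_{k_x}(x)\Delta_1(x)\lambda(x)g(x)\diff x$, not minus, and this does not affect the bound.
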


\begin{proof} By the definition of $\xi(X,T)$, 
\begin{align*}
   & \ \bbE \left[ \{K_{t,h}(T_{1}) \hat{\xi} (X_{1}, T_{1})  - K_{t,h} (T_{3} )\} \sfzbar_{k_{x}}^{\top} (X_{1}) \tilde{\Omega}_{k_x}^{-1} \sfzbar_{k_{x}} (X_{2}) \{ \Gamma (Y_2,T_2,\beta) - \hat{b}_{\beta_{\mathsf{init}}} (X_{2}, T_{2})\} K_{t,h} (T_{2})\right]\\
    = & \ \bbE \left[ \{\hat{\xi} (X, T) - \xi (X, T) \} \sfzbar_{k_{x}} (X)^{\top} K_{t,h}(T)\right]  \tilde{\Omega}_{k_x}^{-1} \bbE \left[\sfzbar_{k_{x}} (X) \{ b_{\beta} (X, T) - \hat{b}_{\beta_{\mathsf{init}}} (X, T)\}  K_{t,h}(T)\right] . 
\end{align*}
Recalling the definitions of  $\Delta_1(X)$ and $\Delta_2(X)$ in \eqref{eq:def_Delta1} and \eqref{eq:def_Delta2}, we have 
\begin{align*}
     \bbE \left[ \{ b_{\beta} (X, T) - \hat{b}_{\beta_{\mathsf{init}}} (X, T)  - b_{\beta} (X, t) + \hat{b}_{\beta_{\mathsf{init}}} (X, t)\}  K_{t,h}(T)  \sfzbar_{k_{x}} (X) \right] 
     =  \bbE [\Delta_2 (X) \sfzbar_{k_{x}} (X)^{\top}]
\end{align*}
and 
\begin{align*}
    \bbE [ \{\hat{\xi} (X, T) - \xi (X, T) - \hat{\xi} (X, t) + \xi (X, t)\} \sfzbar_{k_{x}} (X)^{\top} K_{t,h}(T)]
    =    \bbE [\Delta_1 (X) \sfzbar_{k_{x}} (X)^{\top}]. 
\end{align*}
Then, we obtain
\begin{align*}
    &\bbE [ \{\hat{\xi} (X, T) - \xi (X, T) \} \sfzbar_{k_{x}} (X)^{\top} K_{t,h}(T)]  \tilde{\Omega}_{k_x}^{-1} \bbE \left[\sfzbar_{k_{x}} (X) \{ \Gamma (Y,T,\beta) - \hat{b}_{\beta_{\mathsf{init}}} (X, T)\}  K_{t,h}(T)\right] \\
    =&\left(\bbE [ \{\hat{\xi} (X, t) - \xi (X, t) \} \sfzbar_{k_{x}} (X)^{\top} K_{t,h}(T)] +  \bbE [\Delta_1 (X) \sfzbar_{k_{x}} (X)^{\top}] \right) \times \tilde{\Omega}_{k_x}^{-1} \\
    &\times \left( \bbE [ \{ b_{\beta} (X, t) - \hat{b}_{\beta_{\mathsf{init}}} (X, t) \} \sfzbar_{k_{x}} (X) K_{t,h}(T) ] + \bbE [\Delta_2 (X) \sfzbar_{k_{x}} (X)] \right)\\
    = &\left[\int_{x\in\calX} \{\hat{\xi} (x, t) - \xi (x, t) \} \sfzbar_{k_{x}} (x)^{\top} g(x) \diff x + \int_{x\in\calX}  \Delta_{1}(x)  \lambda(x)\sfzbar_{k_{x}} (x)^{\top} g(x) \diff x \right]\\
    &\times \tilde{\Omega}_{k_x}^{-1} \times \left[ \int_{x\in\calX} \{ b_{\beta} (x, t) - \hat{b}_{\beta_{\mathsf{init}}} (x, t) \}\sfzbar_{k_{x}} (x) g(x) \diff x + \int_{x\in\calX}  \Delta_{2}(x) \lambda(x)\sfzbar_{k_{x}} (x)^{\top} g(x) \diff x \right] \\
    = & \int_{x\in\calX} \left\{ \Pi_{g}[ \hat{\xi} (\cdot, t) - \xi (\cdot, t) | \sfzbar_{k_{x}} ](x) + \Pi_{g}[ \Delta_{1} \lambda | \sfzbar_{k_{x}}  ](x) \right\} \\
    &\times \left\{ \Pi_{g} [b_{\beta} (\cdot, t) - \hat{b}_{\beta_{\mathsf{init}}} (\cdot, t) | \sfzbar_{k_{x}}  ](x) + \Pi_{g} [  \Delta_{2} \lambda|\sfzbar_{k_{x}} ](x) \right\} g(x) \diff x. 
\end{align*}
where the second equality holds by using  \eqref{eq:lambdax_prop} and \eqref{eq:Kth_prop}.
Hence
\begin{align}\label{eq:ifjj_t}
    & \ \bigg | \bbE [ \{\hat{\xi} (X, T) - \xi (X, T) \} \sfzbar_{k_{x}} (X)^{\top} K_{t,h}(T)]  \tilde{\Omega}_{k_x}^{-1} \bbE \left[\sfzbar_{k_{x}} (X) \{ \Gamma (Y,T,\beta) - \hat{b}_{\beta_{\mathsf{init}}} (X, T)\}  K_{t,h}(T)\right] \notag\\
    &- \int_{x\in\calX} \Pi_{g} [ \hat{\xi} (\cdot, t) - \xi (\cdot, t) | \sfzbar_{k_{x}}  ](x) \times \Pi_{g} [b_{\beta} (\cdot, t) - \hat{b}_{\beta_{\mathsf{init}}} (\cdot, t) | \sfzbar_{k_{x}} ] (x)  g(x) \diff x \bigg |\notag\\
    \leq & \ \bigg | \int_{x\in\calX}  \Pi_{g} [ \hat{\xi} (\cdot, t) - \xi (\cdot, t) | \sfzbar_{k_{x}} ](x) \times \Pi_{g} [   \Delta_{2} \lambda|\sfzbar_{k_{x}} ](x) g(x) \diff x \bigg |\notag\\
    & + \bigg | \int_{x\in\calX} \Pi_{g}[ \Delta_{1} \lambda | \sfzbar_{k_{x}}  ](x) \times \Pi_{g} [b_{\beta} (\cdot, t) - \hat{b}_{\beta_{\mathsf{init}}} (\cdot, t) | \sfzbar_{k_{x}}  ](x)    g(x) \diff x \bigg |\notag\\
    &+ \bigg | \int_{x\in\calX} \Pi_{g}[ \Delta_{1} \lambda | \sfzbar_{k_{x}}  ] (x) \times \Pi_{g} [   \Delta_{2} \lambda|\sfzbar_{k_{x}} ] (x) g(x) \diff x \bigg | \notag\\
    \lesssim & \ \|   \Pi_{g}[ \hat{\xi} (\cdot, t) - \xi (\cdot, t) | \sfzbar_{k_{x}} ] \|_{g,2} \cdot \| \Pi_{g} [   \Delta_{2} \lambda |\sfzbar_{k_{x}} ] \|_{g,2}\notag \\
    & + \left\|\Pi_{g}[ \Delta_{1} \lambda | \sfzbar_{k_{x}}  ] \right\|_{g,2}\cdot \left\| \Pi_{g} [b_{\beta} (\cdot, t) - \hat{b}_{\beta_{\mathsf{init}}} (\cdot, t) | \sfzbar_{k_{x}}  ]  \right\|_{g,2}\notag\\
     & + \left\| \Pi_{g}[ \Delta_{1} \lambda | \sfzbar_{k_{x}}  ]  \right\|_{g,2}\cdot \left\| \Pi_{g} [   \Delta_{2} \lambda|\sfzbar_{k_{x}} ] \right\|_{g,2}\notag\\
     \lesssim & \ \|   \hat{\xi} (X, t) - \xi (X, t) \|_{g,2} \cdot \|   \Delta_{2}(X) \lambda(X) \|_{g,2}\notag \\
    & + \left\|  \Delta_{1}(X) \lambda(X) \right\|_{g,2}\cdot \left\| b_{\beta} (X, t) - \hat{b}_{\beta_{\mathsf{init}}} (X, t)  \right\|_{g,2}\notag\\
     & + \left\|  \Delta_{1}(X) \lambda(X)   \right\|_{g,2}\cdot \left\|   \Delta_{2}(X) \lambda(X) \right\|_{g,2} \notag\\
     = & \ O(h^{\alpha_1 \wedge \alpha_2}),
\end{align}
where the last equality holds by using Assumptions~\ref{as:nuisance_est}, \ref{as:Lip}, and \eqref{eq:kernal-resluts}.
\end{proof}

\begin{lemma}\label{lemma:H3_L2}   Under Assumptions~\ref{as:nuisance_est}, \ref{as:criterion}, \ref{as:kernel}, and  \ref{as:Lip}, 
we have 
    \begin{gather*}
        \ \bbE \bigg[\sup_{|\beta_1 - \beta_2|\lesssim \epsilon } \left| \{ \Gamma (Y,T,\beta_1) - \Gamma (Y,T,\beta_2) \}K_{t,h}\left(T\right) \right|^2 \bigg] = O \left( \frac{\epsilon^{2\alpha_{0}}\vee h}{h}  \right), \\
       a^{\top} \bbE \left[ K_{t,h}(T)^2 \{ \hat{\xi} (X, T)  -  1\}^2 \xi(X, T) \sfzbar_{k_{x}} (X) \sfzbar_{k_{x}} (X)^{\top} \right] a \lesssim \frac{1}{h} a^{\top} \tilde{\Omega}_{k_x}a,
    \end{gather*}
    where $a\in\bbR^{k_x}$ is a unit vector.
\end{lemma}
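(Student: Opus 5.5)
Both bounds reduce to direct moment calculations that combine the kernel scaling with the conditional $L_2$-continuity bound \eqref{eq:local L2 Lip} and the identities \eqref{eq:lambdax_prop}--\eqref{eq:Kth_prop}. No empirical-process argument is needed.

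\emph{First bound.} Condition on $T$ and write
\[
\bbE \Bigl[\sup_{|\beta_1-\beta_2|\lesssim\epsilon}\bigl|\{\Gamma(Y,T,\beta_1)-\Gamma(Y,T,\beta_2)\}K_{t,h}(T)\bigr|^2\Bigr]
= \int_{\calT} K_{t,h}(u)^2\, \bbE\Bigl[\sup_{|\beta_1-\beta_2|\lesssim\epsilon}|\Gamma(Y,T,\beta_1)-\Gamma(Y,T,\beta_2)|^2 \Bigm| T=u\Bigr]\p_T(u)\diff u .
\]
By \eqref{eq:local L2 Lip}, the inner conditional expectation is $\lesssim \epsilon^{2\alpha_0}$ uniformly in $u$. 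This step uses Assumption~\ref{as:nuisance_est}(iii), namely $\p_T\ge c_5$, to pass from the unconditional bound in Assumption~\ref{as:criterion}(iii) to the conditional one.

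Since $\p_T\le c_6$ and $K$ is uniformly bounded (Assumption~\ref{as:kernel}), we have $\int K_{t,h}(u)^2\p_T(u)\diff u \lesssim h^{-1}\int K(v)^2\diff v \lesssim 1/h$. The product is therefore $O(\epsilon^{2\alpha_0}/h)\le O((\epsilon^{2\alpha_0}\vee h)/h)$.

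The $\vee h$ in the statement is only a safeguard. Because $\sup_\beta\|\Gamma_\beta\|_\infty\lesssim1$, the same computation also gives the cruder bound $O(1/h)$. I would record this, noting that $(\epsilon^{2\alpha_0}\vee h)/h\ge 1$ covers any $O(1)$ remainder that might arise if the conditioning step is done with a non-uniform version.

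\emph{Second bound.} Fix a unit vector $a$ and set $f(x)=a^\top\sfzbar_{k_x}(x)$. Conditioning on $X$, the left side becomes
\[
\int_{\calX} f(x)^2\int_{\calT}K_{t,h}(u)^2\{\hat\xi(x,u)-1\}^2\xi(x,u)\,\p_{T|X}(u\mid x)\diff u\,\p_X(x)\diff x .
\]
By Assumption~\ref{as:nuisance_est}(i) and the positivity bounds in Assumption~\ref{as:identification}(ii), the factor $\{\hat\xi-1\}^2\xi$ is uniformly bounded. Also $\xi\,\p_{T|X}=\p_T$, so the inner integral is at most $C\int K_{t,h}(u)^2\p_T(u)\diff u\lesssim 1/h$, uniformly in $x$.

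Hence the left side is $\lesssim h^{-1}\bbE[f(X)^2]=h^{-1}\int f(x)^2\lambda(x)g(x)\diff x$ by \eqref{eq:lambdax_prop}. Since $\lambda\le\bar c_\lambda$ by \eqref{eq:kernal-resluts}, this is $\lesssim h^{-1}\int f^2 g = h^{-1}a^\top\tilde\Omega_{k_x}a$.

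The main obstacle is the uniformity of the conditional bounds in the first display: the conditioning on $T=u$ requires $\p_T$ bounded below on the kernel's support near $t$. Assumption~\ref{as:nuisance_est}(iii) supplies this, so I expect no real difficulty. Everything else is bounded-kernel bookkeeping.
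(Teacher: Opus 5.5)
Your proof is correct relative to the displays the paper has already established. For the first bound it takes a more direct route than the paper; the second bound is essentially the paper's argument.

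\textbf{First bound.} The paper proceeds in four steps:
\begin{enumerate}
\item it substitutes $u=t+hv$;
\item it Taylor-expands $v\mapsto\bbE[\sup_{|\beta_1-\beta_2|\lesssim\epsilon}|\Gamma(Y,T,\beta_1)-\Gamma(Y,T,\beta_2)|^2\mid T=t+hv]\,\p_T(t+hv)$ to first order around $v=0$;
\item it invokes \eqref{eq:local L2 Lip} only at the single point $T=t$;
\item it bounds the first-order remainder by $O(h)$.
\end{enumerate}
After the $1/h$ prefactor, that remainder is the $O(1)$ term that forces the $\vee h$ in the statement.

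You instead apply \eqref{eq:local L2 Lip} uniformly over the kernel window and integrate $K_{t,h}^2\p_T$ directly. This gives the sharper $O(\epsilon^{2\alpha_0}/h)$. It also avoids needing differentiability in $u$ of the conditional second moment of the supremum. The paper's Taylor step uses that property, but this function is not among those covered by Assumption~\ref{as:Lip}.

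The price is that you need the conditional modulus at every $u$ near $t$, not only at $t$. Since \eqref{eq:local L2 Lip} is stated for every $t\in\calT$ with a uniform constant, that is available.

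\textbf{Second bound.} The paper uses $\bbE[\xi(X,T)K_{t,h}(T)^2 f(X)^2]=\bbE[K_{t,h}(T)^2]\,\bbE[f(X)^2]$ together with $\sup_x\p_X(x)/g(x)=\sup_x\lambda(x)<\infty$. Your conditioning on $X$ with $\xi\,\p_{T\mid X}=\p_T$ amounts to the same thing. To use that identity, bound only $\{\hat\xi-1\}^2$ and keep $\xi$, rather than first absorbing $\xi$ into the constant.

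\textbf{A caution on the step you call unproblematic.} A lower bound $\p_T\ge c_5$ does not by itself turn the unconditional modulus in Assumption~\ref{as:criterion}(iii) into a pointwise conditional one when $T$ is continuous. The inequality $\bbE[Z\mid T=u]\,\p_T(u)\le\bbE[Z]$ for $Z\ge 0$ holds for point masses but not for densities; a $Z$ spiking near $T=u$ breaks it.

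You inherit this from the paper's derivation of \eqref{eq:local L2 Lip}, and the paper's own proof leans on the same display at $u=t$. So citing it is legitimate within the paper's framework. The real input, however, is a conditional version of Assumption~\ref{as:criterion}(iii). For the QDRF, for instance, that means a conditional density of $Y$ given $T=u$ bounded uniformly in $u$ near $t$.
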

\begin{proof}
    By Taylor expansion, 
Assumptions \ref{as:nuisance_est} and \eqref{eq:local L2 Lip}, we have 
\begin{align}
 & \ \bbE \bigg[\sup_{|\beta_1 - \beta_2|\lesssim \epsilon } \left| \{ \Gamma (Y,T,\beta_1) - \Gamma (Y,T,\beta_2) \}K_{t,h}\left(T\right) \right|^2 \bigg]\\
    =& \ \frac{1}{h^2}\int_{u\in\calT} \bbE \bigg[\sup_{|\beta_1 - \beta_2|\lesssim \epsilon } \left| \hat{\xi}(X,T) \{ \Gamma (Y,T,\beta_1) - \Gamma (Y,T,\beta_2) \} \right|^2 \mid T=u  \bigg] \left| K\left(\frac{u -t }{h}\right) \right|^2 \p_{T}(u) \diff u\notag\\
    =& \ \frac{1}{h}\int_{u\in\calT} \bbE \bigg[ \sup_{|\beta_1 - \beta_2|\leq \epsilon } \left| \hat{\xi}(X,T) \{ \Gamma (Y,T,\beta_1) - \Gamma (Y,T,\beta_2) \} \right|^2\mid T=t+hu \bigg] \left| K\left(u\right) \right|^2 \p_{T}(t + h u) \diff u \notag\\
    = &\ \frac{1}{h}  \bbE \bigg[ \sup_{|\beta_1 - \beta_2|\lesssim \epsilon } \left| \hat{\xi}(X,T) \{ \Gamma (Y,T,\beta_1) - \Gamma (Y,T,\beta_2) \} \right|^2 \mid T=t \bigg]\p_{T}(T=t)\cdot\kappa_{0,2} \notag\\
    &+ \frac{1}{h} \int_{u\in\calT}  \nabla_{t}\bbE \left[ \left. \sup_{|\beta_1 - \beta_2|\lesssim \epsilon } \left| \hat{\xi}(X,T) \{ \Gamma (Y,T,\beta_1) - \Gamma (Y,T,\beta_2) \} \right|^2 \right| T= t + \tau hu \right] hu K(u)^2 \diff u \notag \\
    \lesssim & \ \frac{1}{h} \left\{ \bbE \left[ \sup_{|\beta_1 - \beta_2|\lesssim \epsilon } \left|   \Gamma (Y,T,\beta_1) - \Gamma (Y,T,\beta_2)  \right|^2 \mid T=t \right]\p_{T}(T=t)\cdot\kappa_{0,2} + O(h)\right\} = O \left( \frac{\epsilon^{2\alpha_{0}}\vee h}{h}  \right), \notag
\end{align}
where $\tau\in(0,1)$ and $\kappa_{0,2} = \int K\left(u\right)^2\diff u$. Thus, we obtain the first result. Similarly, we have $ \bbE [K_{t,h}(T)^2] = O\left( {1}/{h}  \right)$. Then, we obtain 
 \begin{align*}
     & \ a^{\top} \bbE \left[ K_{t,h}(T)^2 \{ \hat{\xi} (X, T)  -  1\}^2 \xi(X, T) \sfzbar_{k_{x}} (X) \sfzbar_{k_{x}} (X)^{\top} \right] a \\
     \lesssim & \ a^{\top} \bbE \left[ K_{t,h}(T)^2 \xi(X, T) \sfzbar_{k_{x}} (X) \sfzbar_{k_{x}} (X)^{\top} \right] a\\
      = &\  \bbE \left[ K_{t,h}(T)^2 \right] \cdot a^{\top} \bbE \left[ \sfzbar_{k_{x}} (X) \sfzbar_{k_{x}} (X)^{\top} \right] a \\
      = & \ \bbE \left[ K_{t,h}(T)^2 \right] \cdot a^{\top} \int_{x\in\calX}  \sfzbar_{k_{x}} (x) \sfzbar_{k_{x}} (x)^{\top} g(x) \cdot \frac{\p_{X}(x)}{g(x)} \diff x a\\
      \le & \ \sup_{x\in\calX} \left|  \frac{\p_{X}(x)}{g(x)} \right|\cdot \bbE \{K_{t,h}(T)^2\} \cdot a^{\top} \tilde{\Omega}_{k_x}\,a \lesssim \frac{1}{h} a^{\top} \tilde{\Omega}_{k_x}\,a, 
\end{align*}
where the first inequality holds since we have $\|\hat{\xi}\| \lesssim 1$; the first equality holds by the definition of $\xi(x,t)$.
\end{proof}

\begin{lemma}\label{lemma:kernel-pih-order} Under Assumptions~\ref{as:nuisance_est}--\ref{as:basis}, \ref{as:kernel}, and  \ref{as:Lip}, we have the following results:
\begin{align*}
\mathrm{(i)} & \sup\limits_{|\beta - \beta^{\ast}_{t}|\lesssim \epsilon } \bbE \big[ \big|\bbE [ \{ \hat{\xi} (X, T)  -  \xi (X, T)\}K_{t,h}(T)\sfzbar_{k_{x}} (X)^{\top} ]  \tilde{\Omega}_{k_x}^{-1} \sfzbar_{k_{x}} (X) K_{t,h} (T) ( \Gamma_{\beta}  - \Gamma_{\beta^{\ast}_{t}} ) (Y,T) \big|^2\big] \\
& \lesssim \frac{\epsilon^{2\alpha_{0}}\vee h}{h}; \\
\mathrm{(ii)} & \sup\limits_{|\beta - \beta^{\ast}_{t}|\lesssim \epsilon } \bbE \big[ \big| K_{t,h} (T ) \bbE [ \sfzbar_{k_{x}} (X)^{\top}]\tilde{\Omega}_{k_x}^{-1} \bbE [ \sfzbar_{k_{x}} (X) K_{t,h} (T)  \{ b_{\beta}(X,T)  - b_{\beta^{\ast}_{t}}(X,T) \} ] \big|^2\big] \\
& \lesssim \frac{\epsilon^{2\alpha_{0}}\vee h^{2(\alpha_1\wedge \alpha_2)}}{h}; \\
\mathrm{(iii)} & \sup\limits_{|\beta - \beta^{\ast}_{t}|\lesssim \epsilon } \bbE \big[ \big|  K_{t,h}(T) \hat{\xi} (X, T) \sfzbar_{k_{x}} (X)^{\top} \tilde{\Omega}_{k_x}^{-1} \bbE [ \sfzbar_{k_{x}} (X) K_{t,h} (T)  \{ b_{\beta}(X,T)  - b_{\beta^{\ast}_{t}}(X,T) \} ] \big|^2\big] \\
& \lesssim \frac{\epsilon^{2\alpha_{0}}\vee h^{2(\alpha_1\wedge \alpha_2)}}{h}; \\
\mathrm{(iv)} & \sup\limits_{|\beta - \beta^{\ast}_{t}|\lesssim \epsilon } \bbE \big[ \big|  \bbE [ K_{t,h} (T ) ]  \sfzbar_{k_{x}} (X)^{\top} \tilde{\Omega}_{k_x}^{-1} \bbE [ \sfzbar_{k_{x}} (X) K_{t,h} (T)  \{ b_{\beta}(X,T)  - b_{\beta^{\ast}_{t}}(X,T) \} ] \big|^2\big] \\
& \lesssim \epsilon^{2\alpha_{0}}\vee h^{2(\alpha_1\wedge \alpha_2)}.
\end{align*}
\end{lemma}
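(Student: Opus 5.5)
The plan is to treat each of the four parts as a second-moment bound for a product of a ``$k_x$-dimensional projection factor'' and a ``kernel-localized difference factor''. The projection factor is controlled by the uniform bounds on $\tilde{\Omega}_{k_x}$ and by \eqref{eq:supnorm_proj_g}. The difference factor is controlled either by Lemma~\ref{lemma:H3_L2} or by the local $L_2$ continuity \eqref{eq:local L2 Lip} combined with the kernel bias bound \eqref{eq:kernal-resluts}. Throughout I will use that the eigenvalues of $\tilde{\Omega}_{k_x}$ are bounded above and away from zero, and the identities \eqref{eq:lambdax_prop}--\eqref{eq:Kth_prop} to convert $K_{t,h}$-weighted expectations into $g$-weighted integrals.

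\textbf{Part (i).} The deterministic vector $\bbE[\{\hat{\xi}-\xi\}K_{t,h}(T)\sfzbar_{k_x}(X)^{\top}]\tilde{\Omega}_{k_x}^{-1}$, applied to $\sfzbar_{k_x}(x)$, equals $\Pi_g[\cdot\mid\sfzbar_{k_x}](x)$ of a bounded function. Up to an $O(h^{\alpha_1\wedge\alpha_2})$ correction coming from $\Delta_1\lambda$, that function is $\hat{\xi}(\cdot,t)-\xi(\cdot,t)$, exactly as in the proof of Lemma~\ref{lemma:EB-kernelbias}. By \eqref{eq:supnorm_proj_g} this projection is uniformly bounded in sup norm. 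Pulling it out of the expectation leaves $\bbE[\sup_{|\beta-\beta_t^*|\lesssim\epsilon}|K_{t,h}(T)(\Gamma_\beta-\Gamma_{\beta_t^*})|^2]$, which is $O((\epsilon^{2\alpha_0}\vee h)/h)$ by the first display of Lemma~\ref{lemma:H3_L2}.

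\textbf{Parts (ii)--(iv).} All three share the vector $\eta:=\tilde{\Omega}_{k_x}^{-1}\bbE[\sfzbar_{k_x}(X)K_{t,h}(T)\{b_\beta-b_{\beta^*_t}\}(X,T)]$. Write $b_\beta(X,T)-b_{\beta_t^*}(X,T)$ as $\{b_\beta-b_{\beta^*_t}\}(X,t)$ plus a $\Delta_2$-type remainder. Then $\eta^{\top}\sfzbar_{k_x}=\Pi_g[\{b_\beta-b_{\beta^*_t}\}(\cdot,t)+\Delta\lambda\mid\sfzbar_{k_x}]$, where $\|\Delta\|_\infty\lesssim h^{\alpha_1\wedge\alpha_2}$. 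The $L_2(g)$ contraction of $\Pi_g$ then gives $\|\eta^{\top}\sfzbar_{k_x}\|_{g,2}\lesssim\|\{b_\beta-b_{\beta^*_t}\}(\cdot,t)\|_{g,2}+h^{\alpha_1\wedge\alpha_2}$. By Jensen's inequality and \eqref{eq:local L2 Lip}, $\|b_\beta(\cdot,t)-b_{\beta^*_t}(\cdot,t)\|_{g,2}\lesssim\epsilon^{\alpha_0}$, uniformly over $|\beta-\beta^*_t|\lesssim\epsilon$. I use $g\asymp\p_X$, which follows from Assumption~\ref{as:kernel} and Assumption~\ref{as:nuisance_est}(iii). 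This yields $\|\eta\|^2\lesssim\epsilon^{2\alpha_0}\vee h^{2(\alpha_1\wedge\alpha_2)}$.

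\begin{itemize}
\item For (iv), $\bbE[K_{t,h}(T)]=O(1)$, so the bound is $\bbE[(\eta^{\top}\sfzbar_{k_x}(X))^2]\lesssim\|\eta\|^2$, which is the claimed rate.
\item For (ii), the scalar $\bbE[\sfzbar_{k_x}(X)]^{\top}\eta=\bbE[\Pi_g(\cdot)\lambda]$ is bounded by $\|\eta\|$. It multiplies $K_{t,h}(T)$, and $\bbE[K_{t,h}(T)^2]=O(1/h)$.
\item For (iii), I condition on $T$ and use $\|\hat{\xi}\|_\infty\lesssim1$ together with the second display of Lemma~\ref{lemma:H3_L2}, with $a=\eta/\|\eta\|$. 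This gives $\eta^{\top}\bbE[K_{t,h}^2\sfzbar\sfzbar^{\top}]\eta\lesssim h^{-1}\eta^{\top}\tilde{\Omega}_{k_x}\eta\lesssim\|\eta\|^2/h$.
\end{itemize}

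The main obstacle is the $\Delta$-type remainder in $\eta$. Its control requires Assumption~\ref{as:Lip} applied to $t\mapsto b_\beta(x,t)$ uniformly in $\beta$, and checking that the kernel-order bias is uniform over the shrinking neighbourhood; the rest is bookkeeping of $1/h$ factors. If uniformity in $\beta$ of the Taylor remainder is not directly available, I will instead bound the difference $b_\beta-b_{\beta^*_t}$ crudely. I would do this at $T$ rather than at $t$, using \eqref{eq:local L2 Lip} under the $K_{t,h}$-weighted measure. This still produces $\epsilon^{2\alpha_0}$ at the cost of absorbing the $h$ term.
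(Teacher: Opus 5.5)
Your proposal is correct and follows essentially the same route as the paper. Part (i) uses the sup-norm stability \eqref{eq:supnorm_proj_g} of the projected, bounded nuisance-error factor together with the first display of Lemma~\ref{lemma:H3_L2}. Parts (ii)--(iv) rest on the single bound $\|\eta\|^2\lesssim\epsilon^{2\alpha_0}\vee h^{2(\alpha_1\wedge\alpha_2)}$, which is the paper's \eqref{eq:pi-Delta5}, obtained by exactly your split into the value at $t$ plus an $O(h^{\alpha_1\wedge\alpha_2})$ kernel remainder, combined with $\bbE[K_{t,h}(T)^2]\lesssim 1/h$ and $\bbE[K_{t,h}(T)]\lesssim 1$. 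Assumption~\ref{as:Lip} already controls that remainder uniformly in $\beta$, so it is not a real obstacle.

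One small correction in (iii): the second display of Lemma~\ref{lemma:H3_L2} carries the extra weight $\{\hat{\xi}-1\}^2\xi$, so it does not apply verbatim. What you actually need, and what your conditioning on $T$ delivers, is $\hat{\xi}^2\lesssim\xi$ (using that $\xi$ is bounded below), followed by the factorization $\bbE[K_{t,h}(T)^2\xi(X,T)f(X)]=\bbE[K_{t,h}(T)^2]\,\bbE[f(X)]$; this is precisely the paper's step.
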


\begin{proof}
To aid the exposition,  we define the following quantities:
\begin{align}
    \Delta_3 (x) \coloneqq & \ \int_{u\in\calT} \{ \hat{\xi} (x, u)  -  \xi (x, u)\} K_{t,h}(u)  \p_{T|X}( u | x) \diff u,\label{eq:def_Delta3}\\ 
     \Delta_4 (x) \coloneqq & \ \int_{u\in\calT} \{ b_{\beta} (x, u)  -  b_{\beta^{\ast}_{t}} (x, u)\} K_{t,h}(u)  \p_{T|X}( u | x) \diff u - \{b_{\beta} (x, t)  -  b_{\beta^{\ast}_{t}} (x, t)\}\p_{T|X}( t | x) \label{eq:def_Delta4}\\
     \Delta_5 (x) \coloneqq & \  \Delta_4 (x)  + \{b_{\beta} (x, t)  -  b_{\beta^{\ast}_{t}} (x, t)\}\p_{T|X}( t | x) \notag\\
     =&  \  \int_{u\in\calT} \{ b_{\beta} (x, u)  -  b_{\beta^{\ast}_{t}} (x, u)\} K_{t,h}(u)  \p_{T|X}( u | x) \diff u\label{eq:def_Delta5}
\end{align}
Note that both $\Delta_4(x)$ and $\Delta_5(x)$ depend on $\beta$, but, for notational convenience, we suppress this argument. Assumption \ref{as:Lip} states that the functions $b_\beta$ share the same degree of smoothness for every $\beta$, we can repeat the argument used to establish \eqref{eq:kernal-resluts} and obtain the following results: 
\begin{align}\label{eq:kernel-results2}
    \sup_{x\in\calX}| \Delta_3 (x)| \lesssim 1, \ \sup_{\beta}\sup_{x\in\calX}| \Delta_4 (x)| \lesssim h^{\alpha_1\wedge \alpha_2},\ \text{and}\ \sup_{\beta}\sup_{x\in\calX}| \Delta_5 (x)| \lesssim 1.
\end{align}
 First, we have
 \begin{align}\label{eq:pi-Delta5}
     & \ \sup_{|\beta - \beta^{\ast}_{t}|\lesssim \epsilon } \bbE \left[ \left|    \Pi_{g}[\Delta_5(\cdot)\lambda(\cdot)| \sfzbar_{k_{x}} ](X)  \right|^2\right]\notag\\
     = & \ \sup_{|\beta - \beta^{\ast}_{t}|\lesssim \epsilon }  \int_{x\in\calX} \left| \Pi_{g}[\Delta_5(\cdot)\lambda(\cdot)| \sfzbar_{k_{x}} ](x) \right|^2 \lambda(x)g(x)\diff x \notag \\ 
     \lesssim & \ \sup_{|\beta - \beta^{\ast}_{t}|\lesssim \epsilon } \int_{x\in\calX} \left| \Delta_5(x)\lambda(x) \right|^2 g(x) \diff x \notag\\
     \lesssim & \ \sup_{|\beta - \beta^{\ast}_{t}|\lesssim \epsilon }\sup_{x\in\calX}  |\Delta_4 (x)|^2  + \sup_{|\beta - \beta^{\ast}_{t}|\lesssim \epsilon } \bbE \left[\xi(X,T)\big\{b_{\beta} (X, T)  -  b_{\beta^{\ast}_{t}} (X, T)\big\}^2 |T = t\right] \notag\\
     \lesssim & \ h^{2(\alpha_1\wedge \alpha_2)}  + \bbE \left[ \sup_{|\beta - \beta^{\ast}_{t}|\lesssim \epsilon }[\Gamma (Y,T,\beta)   -  \Gamma (Y,T,\beta^{\ast}_{t}) ]^2 | T= t\right] \notag\\
     =& \ O \left( \epsilon^{2\alpha_{0}}\vee h^{2(\alpha_1\wedge \alpha_2)}  \right), 
 \end{align}
 where the first equality holds by \eqref{eq:lambdax_prop}; the first inequality by \eqref{eq:kernal-resluts} and the properties of  projections;  the second inequality holds by definition of $\xi(x,t)$ and $\sup_{x\in\calX} |\lambda(x)| \lesssim 1$; the third inequality holds by \eqref{eq:kernel-results2} and Jensen's inequality; the last equality by \eqref{eq:local L2 Lip}.
 
   Recalling the definition of $\Delta_3 (x)$ in \eqref{eq:def_Delta3} and using \eqref{eq:lambdax_prop},
we have
\begin{align}\label{eq:Delta3-prop}
  &\bbE [ \{ \hat{\xi} (X, T)  -  \xi (X, T)\} \sfzbar_{k_{x}} (X) K_{t,h}(T)]   
  = \bbE [ \Delta_3(X) \sfzbar_{k_{x}} (X) ] = \int_{x\in\calX} \Delta_3 (x) 
  \lambda(x)\sfzbar_{k_{x}} (x) g(x) \diff x,\\
&\bbE [ \{ b_{\beta}(X,T)  - b_{\beta^{\ast}_{t}}(X,T) \}  \sfzbar_{k_{x}} (X) K_{t,h} (T)  ] = \bbE [ \Delta_5 (X) \sfzbar_{k_{x}} (x) ] = \int_{x\in\calX} \Delta_5 (x) 
  \lambda(x)\sfzbar_{k_{x}} (x) g(x) \diff x \label{eq:Delta5-prop}.
 \end{align}
Then, we have 
\begin{align}\label{eq:|pi1h_H4|}
     & \ \bbE \left[ \sup_{|\beta - \beta^{\ast}_{t}|\lesssim \epsilon }\left|\bbE [ \{ \hat{\xi} (X, T)  -  \xi (X, T)\}K_{t,h}(T)\sfzbar_{k_{x}} (X)^{\top} ]  \tilde{\Omega}_{k_x}^{-1} \sfzbar_{k_{x}} (X) K_{t,h} (T) [ \Gamma (Y,T,\beta)  - \Gamma (Y,T,\beta^{\ast}_{t}) ] \right|^2\right]\\
     = & \ \bbE \left[ \sup_{|\beta - \beta^{\ast}_{t}|\lesssim \epsilon }\left|   \int_{x\in\calX} \Delta_3 (x) \lambda(x) \sfzbar_{k_{x}} (x)^{\top} g(x) \diff x \tilde{\Omega}_{k_x}^{-1} \sfzbar_{k_{x}} (X) K_{t,h} (T) [ \Gamma (Y,T,\beta)  - \Gamma (Y,T,\beta^{\ast}_{t}) ] \right|^2\right]\notag\\
     = & \ \bbE \left[\sup_{|\beta - \beta^{\ast}_{t}|\lesssim \epsilon } \left| \Pi_{g} \left[ \Delta_3 (X) \lambda(X) \big| \sfzbar_{k_{x}} (X) \right]  K_{t,h}(T) [ \Gamma (Y,T,\beta)  - \Gamma (Y,T,\beta^{\ast}_{t}) ] \right|^2 \right] \notag \\
     \lesssim & \ \sup_{x\in\calX} \left| \Pi_{g} \left[ \Delta_3 (\cdot) \lambda(\cdot) \big| \sfzbar_{k_{x}}  \right](x) \right|^2 \cdot \bbE \left[ \sup_{|\beta - \beta^{\ast}_{t}|\lesssim \epsilon }  \left| K_{t,h}(T) [ \Gamma (Y,T,\beta)  - \Gamma (Y,T,\beta^{\ast}_{t}) ] \right|^2 \right] \notag\\
      \lesssim & \ \sup_{x\in\calX} \left| \Delta_3 (x)\lambda(x) \right|^2 \cdot \bbE \left[ \sup_{|\beta - \beta^{\ast}_{t}|\lesssim \epsilon }  \left| K_{t,h}(T) [ \Gamma (Y,T,\beta)  - \Gamma (Y,T,\beta^{\ast}_{t}) ] \right|^2 \right] = O \left( \frac{\epsilon^{2\alpha_{0}}\vee h}{h}  \right),\notag
 \end{align}
 where the last inequality holds by \eqref{eq:supnorm_proj_g} and the last equality holds by Lemma~\ref{lemma:H3_L2} and \eqref{eq:kernal-resluts}. Hence, we obtain the result $(i)$.

 For the result $(ii)$, 
 \begin{align*}
     & \ \sup_{|\beta - \beta^{\ast}_{t}|\lesssim \epsilon } \bbE \left[ \left| K_{t,h} (T ) \bbE [ \sfzbar_{k_{x}} (X)^{\top}]\tilde{\Omega}_{k_x}^{-1} \bbE [ \sfzbar_{k_{x}} (X) K_{t,h} (T)  \{ b_{\beta}(X,T)  - b_{\beta^{\ast}_{t}}(X,T) \} ] \right|^2\right]  \\
    = & \ \bbE \left[ \left| K_{t,h} (T ) \right|^2 \right] \cdot \sup_{|\beta - \beta^{\ast}_{t}|\lesssim \epsilon }  \left|\bbE [ \sfzbar_{k_{x}} (X)^{\top}]\tilde{\Omega}_{k_x}^{-1} \bbE [ \sfzbar_{k_{x}} (X) K_{t,h} (T)  \{ b_{\beta}(X,T)  - b_{\beta^{\ast}_{t}}(X,T) \} ] \right|^2\\
     = & \ \bbE \left[ \left| K_{t,h} (T ) \right|^2 \right] \cdot \sup_{|\beta - \beta^{\ast}_{t}|\lesssim \epsilon }  \left|\bbE [ \sfzbar_{k_{x}} (X)^{\top}]\tilde{\Omega}_{k_x}^{-1} \int_{x\in\calX} \sfzbar_{k_{x}} (x) \Delta_5(x) \lambda(x) g(x) \diff x  \right|^2 \\
     = & \ \bbE \left[ \left| K_{t,h} (T ) \right|^2 \right] \cdot \sup_{|\beta - \beta^{\ast}_{t}|\lesssim \epsilon }  \left| \bbE \left[     \Pi_{g}[\Delta_5(\cdot)\lambda(\cdot)| \sfzbar_{k_{x}} ](X)  \right] \right|^2 \\
     \lesssim  & \ \bbE \left[ \left| K_{t,h} (T ) \right|^2 \right] \cdot \sup_{|\beta - \beta^{\ast}_{t}|\lesssim \epsilon } \bbE \left[ \left|    \Pi_{g}[\Delta_5(\cdot)\lambda(\cdot)| \sfzbar_{k_{x}} ](X)  \right|^2\right]\\
     \lesssim & \ \frac{ \epsilon^{2\alpha_{0}}\vee h^{2(\alpha_1\wedge \alpha_2)}}{h},
 \end{align*}
where the second equality uses \eqref{eq:Delta5-prop} and the last inequality holds by \eqref{eq:pi-Delta5}.

 For the result $(iii)$,
 \begin{align*}
     & \ \sup_{|\beta - \beta^{\ast}_{t}|\lesssim \epsilon } \bbE \left[ \left|  K_{t,h}(T) \hat{\xi} (X, T) \sfzbar_{k_{x}} (X)^{\top} \tilde{\Omega}_{k_x}^{-1} \bbE [ \sfzbar_{k_{x}} (X) K_{t,h} (T)  \{ b_{\beta}(X,T)  - b_{\beta^{\ast}_{t}}(X,T) \} ] \right|^2\right] \\
     = & \ \sup_{|\beta - \beta^{\ast}_{t}|\lesssim \epsilon } \bbE \left[ \left|  K_{t,h}(T) \hat{\xi} (X, T) \sfzbar_{k_{x}} (X)^{\top} \tilde{\Omega}_{k_x}^{-1} \int_{x\in\calX} \sfzbar_{k_{x}} (x) \Delta_5(x) \lambda(x) g(x) \diff x \right|^2\right] \\
     \lesssim & \ \sup_{|\beta - \beta^{\ast}_{t}|\lesssim \epsilon } \bbE \left[ \left|  K_{t,h}(T)  \Pi_{g}[\Delta_5(\cdot)\lambda(\cdot)| \sfzbar_{k_{x}} ](X)  \right|^2 \xi (X, T)\right] \\
     =& \ \bbE \left[ \left|  K_{t,h}(T)  \right|^2\right] \cdot\sup_{|\beta - \beta^{\ast}_{t}|\lesssim \epsilon } \bbE \left[ \left|    \Pi_{g}[\Delta_5(\cdot)\lambda(\cdot)| \sfzbar_{k_{x}} ](X)  \right|^2\right] \\
      \lesssim & \ \frac{ \epsilon^{2\alpha_{0}}\vee h^{2(\alpha_1\wedge \alpha_2)}}{h},
 \end{align*}
where the first inequality holds since $\xi(x,t)$ and $\hat{\xi}(x,t)$ are uniformly bounded, and the last inequality holds by \eqref{eq:pi-Delta5}.

 Similarly, for the result $(iv)$,  we have
 \begin{align*}
&\sup_{|\beta - \beta^{\ast}_{t}|\lesssim \epsilon } \bbE \left[ \left|  \bbE [ K_{t,h} (T ) ]  \sfzbar_{k_{x}} (X)^{\top} \tilde{\Omega}_{k_x}^{-1} \bbE [ \sfzbar_{k_{x}} (X) K_{t,h} (T)  \{ b_{\beta}(X,T)  - b_{\beta^{\ast}_{t}}(X,T) \} ] \right|^2\right] \\
= & \left|\bbE [ K_{t,h} (T ) ] \right|^2\sup_{|\beta - \beta^{\ast}_{t}|\lesssim \epsilon } \bbE \left[ \left|   \sfzbar_{k_{x}} (X)^{\top} \tilde{\Omega}_{k_x}^{-1} \int_{x\in\calX} \sfzbar_{k_{x}} (x) \Delta_5(x) \lambda(x) g(x) \diff x \right|^2\right] \\
= & \left|\bbE [ K_{t,h} (T ) ] \right|^2\sup_{|\beta - \beta^{\ast}_{t}|\lesssim \epsilon } \bbE \left[ \left|    \Pi_{g}[\Delta_5(\cdot)\lambda(\cdot)| \sfzbar_{k_{x}} ] (X) \right|^2\right] \lesssim \epsilon^{2\alpha_{0}}\vee h^{2(\alpha_1 \wedge \alpha_2)},
 \end{align*}
 where the last inequality uses $\bbE [ K_{t,h} (T ) ] \lesssim 1$ and  \eqref{eq:pi-Delta5}.
\end{proof}

\begin{lemma}\label{lemma:H4_pi1}
    Under Assumptions~\ref{as:nuisance_est}--\ref{as:basis}, \ref{as:kernel}, and  \ref{as:Lip}, for any $h\in\calH_{4,n}$, we have
\begin{align*}
    \sup_{\beta\in\calB}\sup_{o\in\calO}|\P^2 h(o;\beta)| \lesssim \frac{1}{h}.
\end{align*}
\end{lemma}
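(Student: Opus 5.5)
We take $\calH_{4,n}$ to be the kernel analogue of $\calH_{2,n}$. Its elements are the kernels of $\tilde{\mathrm{B}}_{\psi,t,k}(\beta)$, indexed by $\beta\in\calB$:
\begin{align*}
h(O_1,O_2,O_3;\beta)=\{K_{t,h}(T_1)\hat{\xi}(X_1,T_1)-K_{t,h}(T_3)\}\,\sfzbar_{k_x}(X_1)^{\top}\tilde{\Omega}_{k_x}^{-1}\sfzbar_{k_x}(X_2)\,\{\Gamma(Y_2,T_2,\beta)-\hat{b}_{\beta_{\mathsf{init}}}(X_2,T_2)\}\,K_{t,h}(T_2),
\end{align*}
with everything understood for the symmetrized version $S_3h$. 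Then $\P^2 h(o;\beta)$ is one third of the sum of three terms. In each term the fixed point $o=(x,t',y)$ occupies slot $1$, $2$ or $3$, and the other two arguments are integrated out. The plan is to write each term explicitly in the spirit of \eqref{eq:pi1_IF}, as a product of a factor that is at most one kernel evaluation, hence $O(1/h)$, and a $g$-weighted projection. We then bound the projection in sup-norm by \eqref{eq:supnorm_proj_g}.

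\textbf{The two auxiliary functions.} We use \eqref{eq:lambdax_prop}--\eqref{eq:Kth_prop} and the independence of $X_1$ and $T_3$. Define $\Delta_6(x;\beta)\coloneqq\int_{\calT}\{b_\beta(x,u)-\hat{b}_{\beta_{\mathsf{init}}}(x,u)\}K_{t,h}(u)\p_{T|X}(u|x)\diff u$. Then
\begin{align*}
\bbE[\sfzbar_{k_x}(X)K_{t,h}(T)\{\Gamma_\beta-\hat b_{\beta_{\mathsf{init}}}\}]=\int\Delta_6\lambda\,\sfzbar_{k_x}g\,\diff x,
\end{align*}
and similarly
\begin{align*}
\bbE[\{K_{t,h}(T_1)\hat\xi(X_1,T_1)-K_{t,h}(T_3)\}\sfzbar_{k_x}(X_1)]=\int\Delta_3\lambda\,\sfzbar_{k_x}g\,\diff x,
\end{align*}
with $\Delta_3$ as in \eqref{eq:def_Delta3}. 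Both $\Delta_3$ and $\Delta_6$ are bounded uniformly in $x$ and $\beta$. This follows from $\int|K_{t,h}(u)|\p_{T|X}(u|x)\diff u\lesssim 1$ (Assumptions~\ref{as:identification}(ii), \ref{as:kernel}) together with the boundedness of $\xi,\hat\xi,b_\beta,\hat b_\beta$ (Assumption~\ref{as:nuisance_est}). In addition, $\lambda$ is bounded by \eqref{eq:kernal-resluts}.

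\textbf{Slot-by-slot bounds.}
\begin{itemize}
\item Slot 1: the term is $\{K_{t,h}(t')\hat\xi(x,t')-\bbE K_{t,h}(T)\}\,\Pi_g[\Delta_6\lambda\mid\sfzbar_{k_x}](x)$. Its first factor is $O(1/h)$ because $K$ is bounded and $\bbE|K_{t,h}(T)|\lesssim1$. Its second factor is at most $\|\Delta_6\lambda\|_\infty\lesssim1$ by \eqref{eq:supnorm_proj_g}.
\item Slot 2: the term is $\Pi_g[\Delta_3\lambda\mid\sfzbar_{k_x}](x)\,\{\Gamma(y,t',\beta)-\hat b_{\beta_{\mathsf{init}}}(x,t')\}K_{t,h}(t')$. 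It is $O(1/h)$ by the same argument, using Assumption~\ref{as:criterion}(ii).
\item Slot 3: the term is
\begin{align*}
\bbE[K_{t,h}(T)\hat\xi(X,T)\,\Pi_g[\Delta_6\lambda](X)]-K_{t,h}(t')\,\bbE[\Pi_g[\Delta_6\lambda](X)].
\end{align*}
The first expectation is $O(1)$ and the second piece is $O(1/h)$.
\end{itemize}
None of these bounds depends on $\beta$ or $o$, which gives $\sup_\beta\sup_o|\P^2h|\lesssim 1/h$.

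The main obstacle is the bookkeeping needed to rewrite each inner expectation as an integral against $g$ of the form $\int f\lambda\,\sfzbar_{k_x}g$. This is what makes the vector contracted with $\tilde\Omega_{k_x}^{-1}$ a $g$-projection, so that \eqref{eq:supnorm_proj_g} applies. A naive Cauchy--Schwarz bound would instead give $\zeta(k)^2\asymp k$. In slot 3 one should also check that no factor $K_{t,h}$ is squared, so that only a single $1/h$ appears.
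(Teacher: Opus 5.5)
Your proposal is correct and follows the paper's proof. You compute $\P^2 h$ slot by slot, use \eqref{eq:lambdax_prop}--\eqref{eq:Kth_prop} (plus the $\xi$-weighting identity for the $T_3$ term) to rewrite the inner expectations as $g$-weighted projections $\Pi_g[\Delta\lambda\mid\sfzbar_{k_x}]$, bound these in sup-norm by \eqref{eq:supnorm_proj_g}, and collect a single factor $\|K_{t,h}\|_\infty\lesssim 1/h$. The one discrepancy is that the paper's $\calH_{4,n}$ is the class of differences $\IF_{\mathrm{B}_{\psi,t,k}}^{(2)}(\beta)-\IF_{\mathrm{B}_{\psi,t,k}}^{(2)}(\beta_t^{*})$, so $\Gamma_\beta-\Gamma_{\beta_t^*}$ and $\Delta_5$ take the place of your $\Gamma_\beta-\hat b_{\beta_{\mathsf{init}}}$ and $\Delta_6$; this is harmless, because your bound is uniform over $\beta\in\calB$ and linearity of $\P^2$ plus the triangle inequality transfer it directly to that class.
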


\begin{proof}

For any $h\in\calH_{4,n}$, the explicit expression of $\P^{2} h$  is given by
\begin{align}\label{eq:pi1h_H4}
     & \ \P^2 h (O;\beta) \\
     = & \ \frac{1}{3} \{ K_{t,h}(T) \hat{\xi} (X, T) - \bbE [ K_{t,h} (T ) ]\}  \sfzbar_{k_{x}} (X)^{\top} \tilde{\Omega}_{k_x}^{-1} \bbE [ \sfzbar_{k_{x}} (X) K_{t,h} (T)  \{ b_{\beta}(X,T)  - b_{\beta^{\ast}_{t}}(X,T) \} ] \notag \\ 
    & + \frac{1}{3} \bbE [ \{ \hat{\xi} (X, T)  -  \xi (X, T)\}K_{t,h}(T)\sfzbar_{k_{x}} (X)^{\top} ]  \tilde{\Omega}_{k_x}^{-1} \sfzbar_{k_{x}} (X) K_{t,h} (T) [ \Gamma (Y,T,\beta)  - \Gamma (Y,T,\beta^{\ast}_{t}) ] \notag \\
    & - \frac{1}{3}  K_{t,h} (T ) \bbE [ \sfzbar_{k_{x}} (X)^{\top}]\tilde{\Omega}_{k_x}^{-1} \bbE [ \sfzbar_{k_{x}} (X) K_{t,h} (T)  \{ b_{\beta}(X,T)  - b_{\beta^{\ast}_{t}}(X,T) \} ] .\notag
\end{align}
    Recalling the definitions of $\Delta_3(x)$ and $\Delta_4(x)$ in \eqref{eq:def_Delta3} and \eqref{eq:def_Delta5}, for any $h\in\calH_{4,n}$, we have:
\begin{align*}
    \P^2 h (O;\beta) =&\  \frac{1}{3} \{ K_{t,h}(T) \hat{\xi} (X, T)  -  \bbE [ K_{t,h} (T ) ] \} \Pi_{g}[\Delta_5(\cdot) \lambda(\cdot)| \sfzbar_{k_{x}} ](X) \\ 
   &+ \frac{1}{3} \Pi_{g} \left[ \Delta_3 (\cdot) \lambda(\cdot) \big| \sfzbar_{k_{x}}  \right](X)  K_{t,h}(T) [ \Gamma (Y,T,\beta)  - \Gamma (Y,T,\beta^{\ast}_{t}) ] \\
   &  - \frac{1}{3}K_{t,h} (T ) \bbE \left[ \Pi_{g}[\Delta_5(\cdot)\lambda(\cdot)| \sfzbar_{k_{x}} ](X) \right].
\end{align*}
By \eqref{eq:supnorm_proj_g}, we have 
\begin{align*}
\sup_{x\in\calX} \left| \Pi_{g}[ \Delta_3(\cdot) \lambda(\cdot) | \sfzbar_{k_{x}} ](x) \right| \lesssim &\sup_{x\in\calX} \left|\Delta_3(x)\lambda(x)\right| \lesssim 1,\\
  \text{and}\  \sup_{x\in\calX} \left| \Pi_{g}[\Delta_5(\cdot) \lambda(\cdot) | \sfzbar_{k_{x}} ](x) \right| \lesssim & \sup_{x\in\calX} \left|\Delta_5(x)\lambda(x)\right| \lesssim 1.
\end{align*}
Since  $  \|\hat{\xi}(\cdot,\cdot)\|_{\infty} \lesssim 1$, $\|\Gamma (\cdot,\cdot,\beta)\|_{\infty} \lesssim 1$, and $\|K_{t,h}(\cdot)\|_{\infty} \lesssim 1/h$, it follows that 
\begin{align*}
\sup_{\beta\in\calB}\sup_{o\in\calO}|\pi_1 h(o;\beta)| \lesssim \frac{1}{h},
\end{align*}
which completes the proof.
\end{proof}

\begin{lemma}   
\label{lemma:var-kernel}
Suppose that Assumptions~\ref{as:nuisance_est}--\ref{as:basis} and  \ref{as:kernel} hold and  $k_{x}, n \to \infty$ with $k_{x} = o(n^2)$. Then, for a given $\beta$, conditioning on the nuisance sample, 
\begin{align*}
    \var\left( \tilde{\psi}_{t,k}^{(2)} (\beta) \right) =  O\left( \frac{1}{nh} + \frac{k_{x}}{n^2h^2}\right).
\end{align*}
\end{lemma}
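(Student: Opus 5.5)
The plan mirrors Lemma~\ref{eq:var-F3}, adapted to the kernel-localized kernels. Write $\tilde{\psi}_{t,k}^{(2)}(\beta)=\hat{\psi}^{(1)}_{t}(\beta)-\tilde{\mathrm{B}}_{\psi,t,k}(\beta)$, a second-order $U$-statistic minus a third-order one. Since $\var(A-B)\le 2\var(A)+2\var(B)$, it suffices to bound each variance separately. After symmetrizing with $S_m$, I will use the variance bound for $U$-statistics (Lemma~15 of \citet{liu2017semiparametric}, as in Lemma~\ref{eq:var-F3}):
\[
\var(\bbU_{n,m}f)\lesssim \sum_{j=1}^{m} n^{-j}\,\bbE\big[\{\bbE(S_m f\mid O_1,\dots,O_j)\}^2\big].
\]
The task then reduces to bounding the conditional second moments: the $j=1$ terms by $1/h$, the $j=2$ terms by $k_x/h^2$, and the $j=3$ term by $k_x/h^2$ (it is further divided by $n^3$, which is harmless).

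\textbf{First-order part.} Its kernel is
\[
K_{t,h}(T_1)\hat\xi(X_1,T_1)\{\Gamma_\beta-\hat b\}(O_1)+\hat b(X_1,T_2)K_{t,h}(T_2).
\]
All factors except $K_{t,h}$ are bounded by Assumptions~\ref{as:nuisance_est} and~\ref{as:criterion}. Using $\bbE[K_{t,h}(T)^2]\lesssim 1/h$, as shown in Lemma~\ref{lemma:H3_L2}, both the $j=1$ and $j=2$ second moments are $O(1/h)$. This gives a contribution $O(1/(nh)+1/(n^2h))$.

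\textbf{Correction term, $j=1$.} Conditioning on $O_1$ alone, the inner expectations become projections: $\Pi_g[\Delta\lambda\mid\sfzbar_{k_x}]$ evaluated at $X_1$, multiplied by $K_{t,h}(T_1)$ times bounded factors. This is exactly the structure written out in \eqref{eq:pi1h_H4}, with $\Delta_3$ and $\Delta_5$ replaced by the corresponding residual integrals, which are uniformly bounded by \eqref{eq:kernal-resluts} and \eqref{eq:kernel-results2}. By \eqref{eq:supnorm_proj_g} the projections are bounded in sup norm, so each conditional mean is bounded by $C|K_{t,h}(T_1)|$ or by a constant. Its second moment is therefore $O(1/h)$.

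\textbf{Correction term, $j=2$ and $j=3$.} A typical conditional kernel is
\[
\{K_{t,h}(T_1)\hat\xi(X_1,T_1)-\bbE K_{t,h}(T)\}\,\sfzbar_{k_x}(X_1)^\top\tilde\Omega_{k_x}^{-1}\sfzbar_{k_x}(X_2)\,K_{t,h}(T_2)\,e(O_2),
\]
where $e(O_2)$ is bounded. I bound its second moment in two steps.

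1. Condition on $O_1$ and integrate over $O_2$. Using $\bbE[K_{t,h}(T)^2\mid X]\lesssim h^{-1}$ uniformly in $x$, this step gives
\[
h^{-1}\,a_1^\top\tilde\Omega_{k_x}^{-1}\,\bbE[\sfzbar\sfzbar^\top]\,\tilde\Omega_{k_x}^{-1}a_1,
\qquad a_1=\sfzbar_{k_x}(X_1)\{\cdots\}.
\]
The eigenvalues of $\bbE[\sfzbar\sfzbar^\top]$ and of $\tilde\Omega_{k_x}$ are bounded above and below (Lemma~\ref{lemma:Gram-Matrix eigenvalue bounds} and the discussion after \eqref{eq:lambdax_prop}).

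2. Integrate over $O_1$ using the second display of Lemma~\ref{lemma:H3_L2}. This gives $h^{-1}\bbE[\sfzbar^\top\tilde\Omega^{-1}\sfzbar]\lesssim h^{-1}\operatorname{tr}(I_{k_x})$, so the total is $O(k_x/h^2)$.

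The $j=3$ term is bounded by the full second moment, which is $O(k_x/h^2)$ by the same computation. After division by $n^3$ it is negligible.

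Summing all contributions gives $O(1/(nh)+k_x/(n^2h^2))$.

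\textbf{Main obstacle.} The delicate step is the $j=2$ bound. One has to avoid the crude bound $\|\sfzbar\|_\infty^2\lesssim k_x$ applied twice, which would give $k_x^2$. Instead, the two $K_{t,h}$ factors must be exploited through conditional second moments in $T$ given $X$, with only one trace producing the factor $k_x$. The positivity bounds in Assumption~\ref{as:identification}(ii) are needed here to convert $\p_{T\mid X}$ into $\p_T$.
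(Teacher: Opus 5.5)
Your proposal is correct and follows the paper's proof essentially step for step. You split $\tilde{\psi}_{t,k}^{(2)}(\beta)$ into the second- and third-order $U$-statistics, apply Lemma~15 of \citet{liu2017semiparametric}, bound the $j=1$ terms by $1/h$ via the bounded $g$-weighted projections (the structure of \eqref{eq:pi1h_H4}) and $\bbE[K_{t,h}(T)^2]\lesssim 1/h$, and bound the $j=2,3$ terms by a second moment of order $k_x/h^2$. The only difference is cosmetic: the paper gets $\zeta_2$ directly from the full second moment \eqref{eq:EH4^2} by Jensen's inequality, using $\zeta(k_x)^2\lesssim k_x$ once, so the conditional two-step computation you flag as the main obstacle is not actually needed.
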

\begin{proof}
 First, we have: 
    \begin{align*}
       \var\left( \tilde{\psi}_{t,k}^{(2)} (\beta) \right) \lesssim \var\left( \bbU_{n,2}  \hat{\Upsilon}_{\beta} \right) +  \var\left( \bbU_{n, 3} \left( \IF_{\mathrm{B}_{\psi,t, k}}^{(2)} (\beta)   \right) \right),
    \end{align*}
    where 
    \begin{align*}
        &\hat{\Upsilon}_{\beta}(O_1,O_2) \coloneqq K_{t,h} (T_1)  \hat{\xi} (X_1, T_1) \{ \Gamma_{\beta} - \hat{b}_{\beta_{\mathsf{init}}} (X_1, T_1) \} +  \hat{b}_{\beta_{\mathsf{init}}} (X_1, T_2)K_{t,h}(T_2) \text{ and }\\
        &\IF_{\mathrm{B}_{\psi,t, k}}^{(2)} (\beta) \coloneqq \{K_{t,h}(T_{1}) \hat{\xi} (X_{1}, T_{1})  - K_{t,h} (T_{3} )\} \sfzbar_{k_{x}}^{\top} (X_{1}) \tilde{\Omega}_{k_x}^{-1} \sfzbar_{k_{x}} (X_{2}) \{ \Gamma (Y_2,T_2,\beta) - \hat{b}_{\beta_{\mathsf{init}}} (X_{2}, T_{2})\} K_{t,h} (T_{2}).
    \end{align*}
     Using Lemma 15 in \cite{liu2017semiparametric} and a Taylor expansion argument similar to that used in the proof of Lemma~\ref{lemma:H3_L2}, we obtain 
     \begin{align*}
        & \ \var\left( \bbU_{n,2}  \hat{\Upsilon}_{\beta} \right) = \var \left[ \bbU_{n,2} \left\{ S_{2} \left( \hat{\Upsilon}_{\beta}   \right) \right\} \right] \\
        \lesssim & \ \frac{1}{n} \bbE \left[ \left\{ \bbE \left[  S_{2} \left(  \hat{\Upsilon}_{\beta}  \right) \mid O_1\right] \right\}^2\right] + \frac{1}{n^{2}} \bbE \left[ \left\{ \bbE \left[  S_{2} \left( \hat{\Upsilon}_{\beta}  \right) \mid O_1, O_2\right] \right\}^2\right] \lesssim \frac{1}{nh},
    \end{align*}
    where $S_2$ is the symmetrization operator defined in Section~\ref{app:notation}.  
 Similarly, for $\var\left\{ \bbU_{n, 3} \left( \IF_{\mathrm{B}_{\psi,t, k}}^{(2)} (\beta)   \right) \right\}$,  we have 
     \begin{align*}
        & \ \var\left( \bbU_{n, 3} \left( \IF_{\mathrm{B}_{\psi,t, k}}^{(2)} (\beta)   \right) \right) = \var \left[ \bbU_{n,3} \left\{ S_{3} \left( \IF_{\mathrm{B}_{\psi,t, k}}^{(2)} (\beta)   \right) \right\} \right] \\
        \lesssim & \ \sum_{j=1}^{3} \frac{1}{n^{j}} \bbE \left[ \left\{ \bbE \left[  S_{3} \left( \IF_{\mathrm{B}_{\psi,t, k}}^{(2)} (\beta)  \right) \mid O_1, \cdots, O_j\right] \right\}^2\right] \eqqcolon \sum_{j=1}^{3} \zeta_{j}.
    \end{align*}
 Following \eqref{eq:pi1h_H4} and Lemma~\ref{lemma:kernel-pih-order}, we have  
\begin{align*}
	\bbE \left[ \left\{\bbE [ S(\IF_{\mathrm{B}_{\psi,t, k}}^{(2)} (\beta))  \mid O_1]\right\}^2 \right] \lesssim \frac{1}{h}, \text{ and hence } \zeta_1 \lesssim \frac{1}{nh}.
\end{align*}

For $\zeta_3$, following the proof of \eqref{eq:EH4^2}, we conclude 
  \begin{align*}
        &\zeta_{3} \lesssim \frac{1}{n^3} \bbE \left[ \left(  \IF_{\mathrm{B}_{\psi,t, k}}^{(2)} (\beta)  \right) ^2\right] \lesssim \frac{k_{x}}{n^3h^2}.
    \end{align*}
    For $\zeta_2$, by Jensen's inequality, we obtain:
    \begin{align*}
        \zeta_{2} \lesssim \frac{1}{n^2} \bbE \left[ \left(  \IF_{\mathrm{B}_{\psi,t, k}}^{(2)} (\beta)  \right) ^2\right] \lesssim \frac{k_{x}}{n^2h^2}.
    \end{align*} 
    Combining all the analysis above, we obtain 
    \begin{align*}
        \var\left( \bbU_{n, 3} \left( \IF_{\mathrm{B}_{\psi,t, k}}^{(2)} (\beta)    \right) \right) \lesssim \sum_{j=1}^{3} \zeta_{j} \lesssim \frac{1}{nh} + \frac{k_{x}}{n^2h^2},
    \end{align*}
    which completes the proof.
\end{proof}

\begin{lemma}\label{lemma:Uprocess-kernel} 
Under Assumptions~\ref{as:nuisance_est}--\ref{as:basis}, \ref{as:kernel}, and  \ref{as:Lip}, for any $\epsilon>0$,  
    \begin{align*}
      \bbE \bigg[  \sup_{ | \beta - \beta^{\ast}_{t}|\lesssim \epsilon  }\left| \left\{ \bar{\psi}_{t,k} (\beta) 
    - \tilde{\psi}_{t,k}^{(2)} (\beta) \right\} - \left\{  \bar{\psi}_{t,k} (\beta^*_t)  - \tilde{\psi}_{t,k}^{(2)} (\beta^*_t)  \right\}\right| \bigg] \lesssim \left( \frac{\sqrt{k} }{nh} + \frac{ 1  }{\sqrt{nh}} \right)\cdot ( \epsilon^{\alpha_{0}}  \vee \sqrt{h} )\cdot\log n .
    \end{align*}
\end{lemma}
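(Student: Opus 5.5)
The plan is to mirror the local $U$-process argument used for \eqref{eq:uprocess} in the proof of Theorem~\ref{th:beta-dist}, now with kernel-weighted classes and bandwidth-dependent envelopes. First I will write the centered difference as a second-order and a third-order $U$-process, as in \eqref{eq:hatpsi-tildepsi-delta}:
\begin{align*}
&\{\tilde{\psi}_{t,k}^{(2)}(\beta)-\bar{\psi}_{t,k}(\beta)\}-\{\tilde{\psi}_{t,k}^{(2)}(\beta^*_t)-\bar{\psi}_{t,k}(\beta^*_t)\}\\
&\quad=(\bbU_{n,2}-\P^2)(\hat{\Upsilon}_{\beta}-\hat{\Upsilon}_{\beta^*_t})-(\bbU_{n,3}-\P^3)\{\IF_{\mathrm{B}_{\psi,t,k}}^{(2)}(\beta)-\IF_{\mathrm{B}_{\psi,t,k}}^{(2)}(\beta^*_t)\}.
\end{align*}
Here $\hat{\Upsilon}_\beta$ and $\IF^{(2)}_{\mathrm{B}_{\psi,t,k}}$ are as in Lemma~\ref{lemma:var-kernel}. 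The differences involve only $K_{t,h}(T)\hat{\xi}(X,T)\{\Gamma_\beta-\Gamma_{\beta^*_t}\}(Y,T)$ and the analogous term with the basis factor. By Assumption~\ref{as:criterion}(i) and Lemma~\ref{lemma:VC-sum-prod}, the index sets $\calH_{3,\delta}$ (second order) and $\calH_{4,n}$ (third order), restricted to $|\beta-\beta^*_t|\lesssim\epsilon$, are VC type with characteristics that do not depend on $n$. This holds because multiplying by a single fixed function preserves the VC characteristics.

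For the second-order piece, the $\hat b_{\beta_{\ini}}$ terms cancel, so the kernel reduces to a one-sample function of $O_1$. I will then apply Lemma~\ref{lemma:sup-p1}, or Theorem 2.14.1 of \citet{van2023weak}. The envelope $L_2$-norm comes from the first display of Lemma~\ref{lemma:H3_L2} and is $\lesssim(\epsilon^{\alpha_0}\vee\sqrt h)/\sqrt h$. The sup-norm envelope is $\lesssim 1/h$. This yields
\[
\frac{\epsilon^{\alpha_0}\vee\sqrt h}{\sqrt{nh}}\sqrt{\log n}+\frac{\log n}{nh}.
\]
The second term should be absorbed into the first, for instance via $\sqrt h\cdot\sqrt{k}/(nh)$ when $k\ge1$.

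For the third-order piece I will invoke Lemma~\ref{lemma:Umaximal_inequality_order3}, which needs three inputs.
\begin{itemize}
\item[(a)] The envelope $\H_{4,n}$ satisfies $\|\H_{4,n}\|_{\P^3,2}\lesssim \sqrt{k_x}(\epsilon^{\alpha_0}\vee\sqrt h)/h$. To show this, I bound $\bbE[K_{t,h}(T_1)^2\cdots]\lesssim 1/h$ together with the basis quadratic form $\sfzbar^\top\tilde\Omega^{-1}\sfzbar$, whose expectation is $\lesssim k_x/h$ by the second display of Lemma~\ref{lemma:H3_L2}. The remaining factor $K_{t,h}(T_2)^2|\Gamma_\beta-\Gamma_{\beta^*_t}|^2$ contributes $(\epsilon^{2\alpha_0}\vee h)/h$ by the first display. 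This is the analogue of the bound referred to as \eqref{eq:EH4^2}.
\item[(b)] $\sup_h\|\P^2h\|_{\P,2}\lesssim(\epsilon^{\alpha_0}\vee h^{\alpha_1\wedge\alpha_2})/\sqrt h$, using the explicit form \eqref{eq:pi1h_H4} and parts (i)--(iv) of Lemma~\ref{lemma:kernel-pih-order}.
\item[(c)] $m_n\lesssim 1/h$ by Lemma~\ref{lemma:H4_pi1}, with $M_n\lesssim k_x/h^2$ polynomial in $n$, so that $\log(A_n'\vee n)\lesssim\log n$.
\end{itemize}
Plugging these into the lemma gives
\[
\frac{\sqrt{k_x}(\epsilon^{\alpha_0}\vee\sqrt h)\log n}{nh}+\frac{\log n}{nh}+\frac{(\epsilon^{\alpha_0}\vee\sqrt h)\sqrt{\log n}}{\sqrt{nh}}.
\]
Summing the two pieces, and using $k_x\le k$, gives the claimed bound.

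The main obstacle is the $m_n\log n/(nh)$ term from Lemma~\ref{lemma:Umaximal_inequality_order3}. It lacks the factor $\epsilon^{\alpha_0}\vee\sqrt h$, because the sup-norm bound on $\P^2h$ in Lemma~\ref{lemma:H4_pi1} does not shrink with $\epsilon$. My first attempt would be to sharpen Lemma~\ref{lemma:H4_pi1} by restricting to the localized class, so that $m_n\lesssim(\epsilon^{\alpha_0}\vee\sqrt h)/h$. Since indicator-type $\Gamma$ differences are not small in sup norm, the fallback is to redo the $\pi_1$ step with Lemma~\ref{lemma:sup-p1}, using the envelope $\|K_{t,h}\|_\infty\lesssim1/h$ only in the Bernstein term. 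That leaves $\log n/(nh)\le \sqrt h\log n/\sqrt{nh}\cdot(nh)^{-1/2}$, which is dominated by $\sqrt h\log n/\sqrt{nh}$ whenever $nh\to\infty$. Carefully checking the bandwidth-dependent envelopes in (a) and (b) is the routine but delicate part.
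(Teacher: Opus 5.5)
Your proof follows the paper's own argument. You use the same split into a one-sample process and a third-order $U$-process, the same two VC classes, the same envelope bounds (a)--(c), and the same appeal to Lemma~\ref{lemma:Umaximal_inequality_order3}. There is one slip in (b): Lemma~\ref{lemma:kernel-pih-order}(i) gives only $(\epsilon^{\alpha_0}\vee\sqrt h)/\sqrt h$, not a bound with $h^{\alpha_1\wedge\alpha_2}$ in place of $\sqrt h$. Your final display uses $\sqrt h$, so this is harmless.

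You are right that Lemma~\ref{lemma:Umaximal_inequality_order3} produces a term $m_n\log(A_n'\vee n)/n\asymp\log n/(nh)$ that lacks the factor $\epsilon^{\alpha_0}\vee\sqrt h$. The paper's display \eqref{eq:supH4delta} silently drops it. The gap is that both of your arguments for discarding it are false:
\begin{itemize}
\item The inequality $\log n/(nh)\le\sqrt{k}\sqrt{h}\log n/(nh)$ requires $kh\ge1$, not $k\ge1$.
\item The quantity $\sqrt h\log n/\sqrt{nh}\cdot(nh)^{-1/2}$ equals $\sqrt h\log n/(nh)$, which is smaller than $\log n/(nh)$, so your fallback inequality points the wrong way. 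In fact $\log n/(nh)\lesssim\sqrt h\log n/\sqrt{nh}=\log n/\sqrt n$ holds if and only if $h\gtrsim n^{-1/2}$, which is strictly stronger than $nh\to\infty$.
\end{itemize}
Rerunning the $\pi_1$ step through Lemma~\ref{lemma:sup-p1} changes nothing, because that is exactly how Lemma~\ref{lemma:Umaximal_inequality_order3} is proved. So, as written, your argument delivers the stated bound only when $h\gtrsim\min\{k^{-1},n^{-1/2}\}$.

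To close the gap, avoid sup-norm (Bernstein-type) bounds entirely.
\begin{itemize}
\item For the one-sample piece, use the uniform-entropy bound (Theorem~2.14.1 of van der Vaart and Wellner) directly, as in \eqref{eq:supH3delta}. The localized envelope already has $L_2$ norm $\lesssim(\epsilon^{\alpha_0}\vee\sqrt h)/\sqrt h$, so no $1/(nh)$ term appears.
\item For the H\'ajek part of the third-order piece, use the entropy bound for $\P^2\calH_{4,n}$ relative to its envelope $\P^2\H_{4,n}$ (Lemma~\ref{lemma:PcalF-VC}(i)), as in \eqref{eq:pi1fbound}. Then bound $\|\P^2\H_{4,n}\|_{\P,2}$ directly. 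The sup-norm stability \eqref{eq:supnorm_proj_g} amounts to $\sup_x\int_{\calX}|\sfzbar_{k_x}(x)^\top\tilde\Omega_{k_x}^{-1}\sfzbar_{k_x}(x')|\,g(x')\diff x'\lesssim1$. Combined with Schur's test and Lemma~\ref{lemma:H3_L2}, this gives $\|\P^2\H_{4,n}\|_{\P,2}\lesssim(\epsilon^{\alpha_0}\vee\sqrt h)/\sqrt h$, hence a contribution of order $(\epsilon^{\alpha_0}\vee\sqrt h)/\sqrt{nh}$.
\item Alternatively, carry the extra $\log n/(nh)$ term. In Theorem~\ref{th:rate-nonparametric} the hypothesis $\tilde r_{n,\mathsf{np}}\to0$ forces $\log n/\sqrt{nh}\to0$, so $\log n/(nh)=o((nh)^{-1/2})$ and the theorem's conclusion is unaffected.
\end{itemize}
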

\begin{proof} 
We begin by decomposing the supremum as follows:
\begin{align}
    &\sup_{|\beta - \beta^{\ast}_{t}|\lesssim \epsilon } \left|   \bar{\psi}_{t,k}(\beta)   -   \tilde{\psi}_{t,k}^{(2)} (\beta) - \bar{\psi}_{t,k} (\beta^{\ast}_{t})  + \tilde{\psi}_{t,k}^{(2)} (\beta^{\ast}_{t}) \right|\notag\\
    =& \sup_{|\beta - \beta^{\ast}_{t}|\lesssim \epsilon } \left|   \bar{\psi}_{t,k} (\beta) - \hat{\psi}^{(1)}_{t}(\beta)  - \bbU_{n,3} [\IF_{\mathrm{B}_{\psi,t, k}}^{(2)} (\beta) ]  - \bar{\psi}_{t,k} (\beta^{\ast}_{t}) +  \hat{\psi}^{(1)}_{t}(\beta_t^*)  + \bbU_{n,3} [ \IF_{\mathrm{B}_{\psi,t, k}}^{(2)} (\beta_{t}^{*}) ] \right| \notag\\
    \leq & \  \sup_{|\beta - \beta^{\ast}_{t}|\lesssim \epsilon }  \left|   \bbU_{n, 3} \left\{ \IF_{\mathrm{B}_{\psi,t, k}}^{(2)} (\beta) - \IF_{\mathrm{B}_{\psi,t, k}}^{(2)} (\beta_{t}^{*})   -  \bbE \left( \IF_{\mathrm{B}_{\psi,t, k}}^{(2)} (\beta) - \IF_{\mathrm{B}_{\psi,t, k}}^{(2)} (\beta_{t}^{*})    \right) \right\} \right| \label{eq:Uprocess-kernel-order2}\\
    & +  \sup_{|\beta - \beta^{\ast}_{t}|\lesssim \epsilon }  \left| \hat{\psi}^{(1)}_{t} (\beta) -\hat{\psi}^{(1)}_{t} (\beta_t^*) - \bbE \left[ 		\hat{\psi}^{(1)}_{t}(\beta) - \hat{\psi}^{(1)}_{t}(\beta_t^*) \right] \right|\label{eq:Uprocess-kernel-order1}. 
\end{align}
We first analyze the term~\eqref{eq:Uprocess-kernel-order1}. Consider the function class
\begin{align*}
     \calH_3 \coloneqq \left\{ h(O,\beta) =K_{t,h} (T) \hat{\xi} (X,T) [ \Gamma (Y,T,\beta)  - \Gamma (Y,T,\beta^*_t) ] : \beta\in\calB , |\beta - \beta^{\ast}_{t}|\lesssim \epsilon  \right\}.
\end{align*}
with envelope  function $\H_3 (O) \coloneqq \sup_{|\beta - \beta^{\ast}_{t}|\lesssim \epsilon } | K_{t,h} (T) \hat{\xi} (X,T) [ \Gamma (Y,T,\beta)  - \Gamma (Y,T,\beta^*_t) ] | $. By an argument similar to that for  $\calH_{1,\delta}$ (see \eqref{eq:def-H1delta}), we find that $\calH_3$ is also of VC type with bounded characteristics. Since $\|\hat{\xi} (\cdot, \cdot) \|_{\infty}\lesssim 1$,  Lemma~\ref{lemma:H3_L2} gives 
\begin{align}\label{eq:EK2}
    \bbE [ |\H_{3} (O)|^2] = & \ \bbE \left[\sup_{ |\beta - \beta^{\ast}_{t} |\lesssim \epsilon } \left|\hat{\xi}(X,T) \{ \Gamma (Y,T,\beta)  - \Gamma (Y,T,\beta^*_t) \}K_{t,h}\left(T\right) \right|^2 \right] \notag\\
    \lesssim &\ \bbE \left[\sup_{|\beta - \beta^{\ast}_{t}|\lesssim \epsilon } \left| \{ \Gamma (Y,T,\beta)  - \Gamma (Y,T,\beta^*_t) \}K_{t,h}\left(T\right) \right|^2 \right]  = O \left( \frac{\epsilon^{2\alpha_{0}}\vee h}{h}  \right).  
\end{align}
 Thus, $  \| \H_3 \|_{\P,2} \lesssim (\epsilon^{\alpha_{0}} \vee \sqrt{h})/\sqrt{h}$.  Applying Theorem~2.14.1 of \citet{van2023weak} to  $\calH_{3}$  yields
 \begin{align}\label{eq:supH3delta}
   \bbE \left[ \big| \eqref{eq:Uprocess-kernel-order1} \big| \right] = \bbE \left[ \left\|  \bbU_{n} ( h) \right\|_{\calH_{3}} \right] &\lesssim \frac{J_{1}(1, \calH_{3}, \H_{3})\cdot \|\H_{3}\|_{\P,2}}{\sqrt{n}}\lesssim \frac{ \epsilon^{\alpha_{0}}  \vee \sqrt{h}}{\sqrt{nh}}.
\end{align}

We now turn to the term  \eqref{eq:Uprocess-kernel-order2}.  Consider the function class 
\begin{align*}
    \calH_{4,n}\coloneqq&\left\{ h(O_1,O_2,O_3,\beta) =   \IF_{\mathrm{B}_{\psi,t, k}}^{(2)} (\beta) -   \IF_{\mathrm{B}_{\psi,t, k}}^{(2)} (\beta_{t}^{*}) : \beta\in \calB, |\beta - \beta^{\ast}_{t}|\lesssim \epsilon \right\}.
\end{align*}
Similarly, we can show that $\calH_{4,n}$ is also of VC type with bounded characteristics and envelope 
\begin{align*}
    \H_{4,n} (O_1,O_2,O_3) \coloneqq & \ |\{ K_{t,h}(T_{1})\hat{\xi} (X_{1}, T_{1})  -  K_{t,h} (T_{3} )\}\sfzbar_{k_{x}} (X_{1})^{\top}\tilde{\Omega}_{k_x}^{-1} \sfzbar_{k_{x}} (X_{2}) K_{t,h} (T_{2})|\\
    &\qquad\qquad\qquad\times\sup_{|\beta - \beta^{\ast}_{t}|\lesssim \epsilon} | \Gamma (Y_2,T_2,\beta)  - \Gamma (Y_2,T_2,\beta^*_t) | .
\end{align*}
Under Assumptions~\ref{as:nuisance_est}, \ref{as:basis} and \ref{as:kernel}, we have $\|\H_{4,n} (\cdot,\cdot,\cdot)\|_{\infty} \lesssim h^2 \zeta(k_x)^2 \lesssim k_x h^2$.  In addition, 
\begin{align}\label{eq:EH4^2}
    &\bbE [ |\H_{4,n} (O_1,O_2,O_3) |^2] \\
    = & \ \bbE \bigg[ \sup_{|\beta - \beta^{\ast}_{t}|\lesssim \epsilon} | \Gamma (Y,T,\beta)  - \Gamma (Y,T,\beta^*_t) |^2 K_{t,h} (T)^2 \sfzbar_{k_{x}} (X)^{\top}\tilde{\Omega}_{k_x}^{-1}\notag\\
    & \times \bbE \left[ K_{t,h}(T)^2 \{ \hat{\xi} (X, T)  -  1\}^2 \xi(X, T) \sfzbar_{k_{x}} (X) \sfzbar_{k_{x}} (X)^{\top} \right] \tilde{\Omega}_{k_x}^{-1} \sfzbar_{k_{x}} (X) \bigg] \notag\\
    \lesssim & \ \frac{1}{h}\bbE \bigg[ \sup_{ |\beta - \beta^{\ast}_{t} |\lesssim \epsilon} | \Gamma (Y,T,\beta)  - \Gamma (Y,T,\beta^*_t) |^2 K_{t,h} (T)^2 \sfzbar_{k_{x}} (X)^{\top}\tilde{\Omega}_{k_x}^{-1}\cdot \tilde{\Omega}_{k_x} \cdot \tilde{\Omega}_{k_x}^{-1} \sfzbar_{k_{x}} (X) \bigg]\notag\\
   =  & \ \frac{1}{h} \bbE \bigg[\sup_{ |\beta - \beta^{\ast}_{t} |\lesssim \epsilon} | \Gamma (Y,T,\beta)  - \Gamma (Y,T,\beta^*_t) |^2   \big| K_{t,h}(T)^2 \sfzbar_{k_{x}}  (X)^{\top} 
\tilde{\Omega}_{k_x}^{-1} \sfzbar_{k_{x}} (X)\big|\bigg] \notag\\
    \lesssim & \ \frac{\zeta(k_x)^2}{h} \bbE \left[\sup_{ |\beta - \beta^{\ast}_{t} |\lesssim \epsilon} | \Gamma (Y,T,\beta)  - \Gamma (Y,T,\beta^*_t) |^2 \cdot  \big| K_{t,h}(T) \big|^2\right] \notag\\
\lesssim & \ \frac{ k_x \cdot (\epsilon^{2\alpha_{0}} \vee h) }{h^2},\notag
\end{align}
where the inequalities follow from Lemma~\ref{lemma:H3_L2} and Assumption~\ref{as:basis}. Thus, $  \| \H_{4,n} \|_{\P^3,2} \lesssim \sqrt{k_x}(\epsilon^{\alpha_{0}} \vee \sqrt{h})/h$. Further, by Lemma~\ref{lemma:kernel-pih-order}, we have $ \sup_{h\in \calH_{4,n}}\| \P^2 h \|_{\P,2} \lesssim (\epsilon^{\alpha_{0}} \vee \sqrt{h}) /\sqrt{h} $ and by Lemma~\ref{lemma:H4_pi1}, $\sup_{h\in \calH_{4,n}} \|  \P^2 h (\cdot) \|_{\infty} \lesssim 1/h$. 
Thus, applying Lemma~\ref{lemma:Umaximal_inequality_order3} yields 
 \begin{align}\label{eq:supH4delta}
        \bbE \left[ \big| \eqref{eq:Uprocess-kernel-order1} \big| \right] =  \bbE \left[  \left\|\bbU_{n,3} ( h )\right\|_{\calH_{4,n} } \right] \lesssim  \frac{\sqrt{k_x} \cdot (\epsilon^{\alpha_{0}} \vee \sqrt{h})\cdot\log n }{nh} + \frac{(\epsilon^{\alpha_{0}} \vee \sqrt{h} ) \cdot\sqrt{\log n} }{\sqrt{nh}}.
 \end{align}
 
Therefore, combining \eqref{eq:supH3delta} and \eqref{eq:supH4delta} completes the proof.
\end{proof}


\section{Supplementary Information on the Numerical Experiments}
\setcounter{equation}{0}

\subsection{Supplementary simulation results}
\label{app:simulation}

In this section, we collect supplementary information on the simulation study conducted in Section~\ref{sec:simulation}. In particular, we display the normal-QQ plots of our proposed higher-order estimators over various settings in the simulation (i.e. varying the sample size $n$ and the smoothness parameter $s$): see Figure~\ref{fig:sim_QQplot_beta0_1} for $\beta_{0}^*$ and Figure~\ref{fig:sim_QQplot_beta1_1} for $\beta_{1}^*$ below for Case~1 and Figure~\ref{fig:sim_QQplot_beta0_2} for $\beta_{0}^*$ and Figure~\ref{fig:sim_QQplot_beta1_2} for $\beta_{1}^*$ for Case~2.

It is quite clear that the higher-order estimator proposed here for QTE with binary treatment is asymptotically normally distributed across different sample sizes $$n \in \{250, 500, 750, 1000, 1500, 2000, 2500, 3000\}$$ and different smoothness levels of the nuisance parameters with $s = 0.25, 0.4, 0.6$.

The GOE estimator \citep{ai2021unified} is obtained by solving \eqref{eq:sample_est_eq} with   $\hat{\xi}$ estimated through the following quadratic programming problem:   
\begin{align*} \min_{\boldsymbol{\xi}} 
\sum_{i=1}^{n} \frac{(\xi_i-1)^2}{2} \text{ s.t. }\ \frac{1}{n}\sum_{i=1}^{n} \xi_i \bar{\phi}_{k}(X_i,T_i)= \frac{1}{n (n-1)} \sum_{1 \leq i \neq j \leq n} \bar{\phi}_{k}(X_i, T_j). 
\end{align*}
Here, $\boldsymbol{\xi} = (\xi_1, \ldots, \xi_n)$ with $\xi_i \equiv \xi(X_i, T_i)$, and $\bar{\phi}_{k}$ is a power series basis of length $k$. The parameter $k$ is selected by minimizing a cross-validation criterion.

Sensitivity analyses of our proposed higher-order estimators are also conducted, specifically on the choice of $k$ and on the initial estimator $\bbeta_{\ini}$.

\begin{figure}[htbp]
  \centering
  \includegraphics[width=\textwidth]{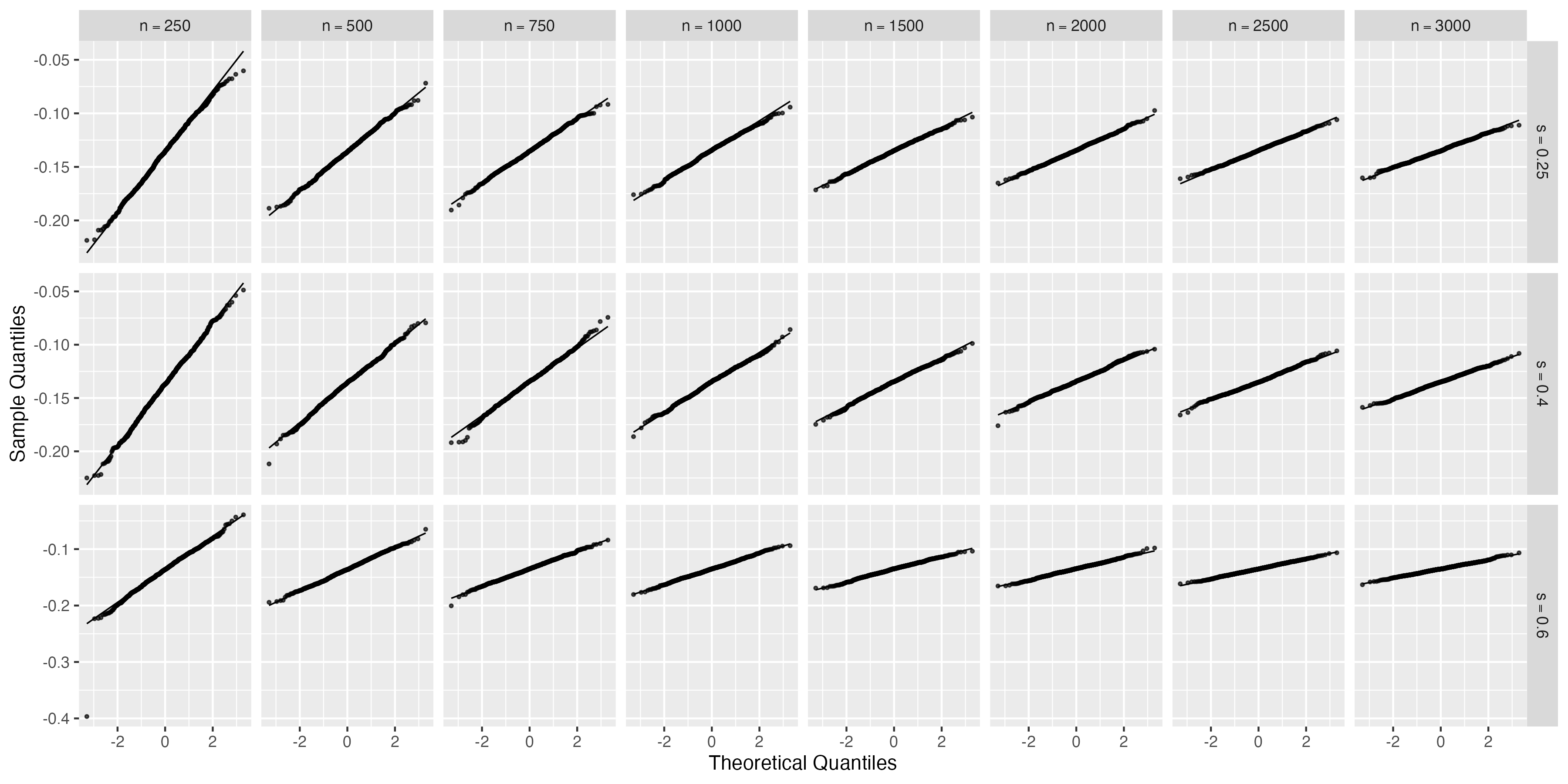} 
  \caption{Normal-QQ plots for the first coordinate of $\hat{\bbeta}^{(2)}$ across all simulation settings in Case~1, based on 1000 Monte Carlo replications.}
  \label{fig:sim_QQplot_beta0_1}
\end{figure}

\begin{figure}[htbp]
  \centering
  \includegraphics[width=\textwidth]{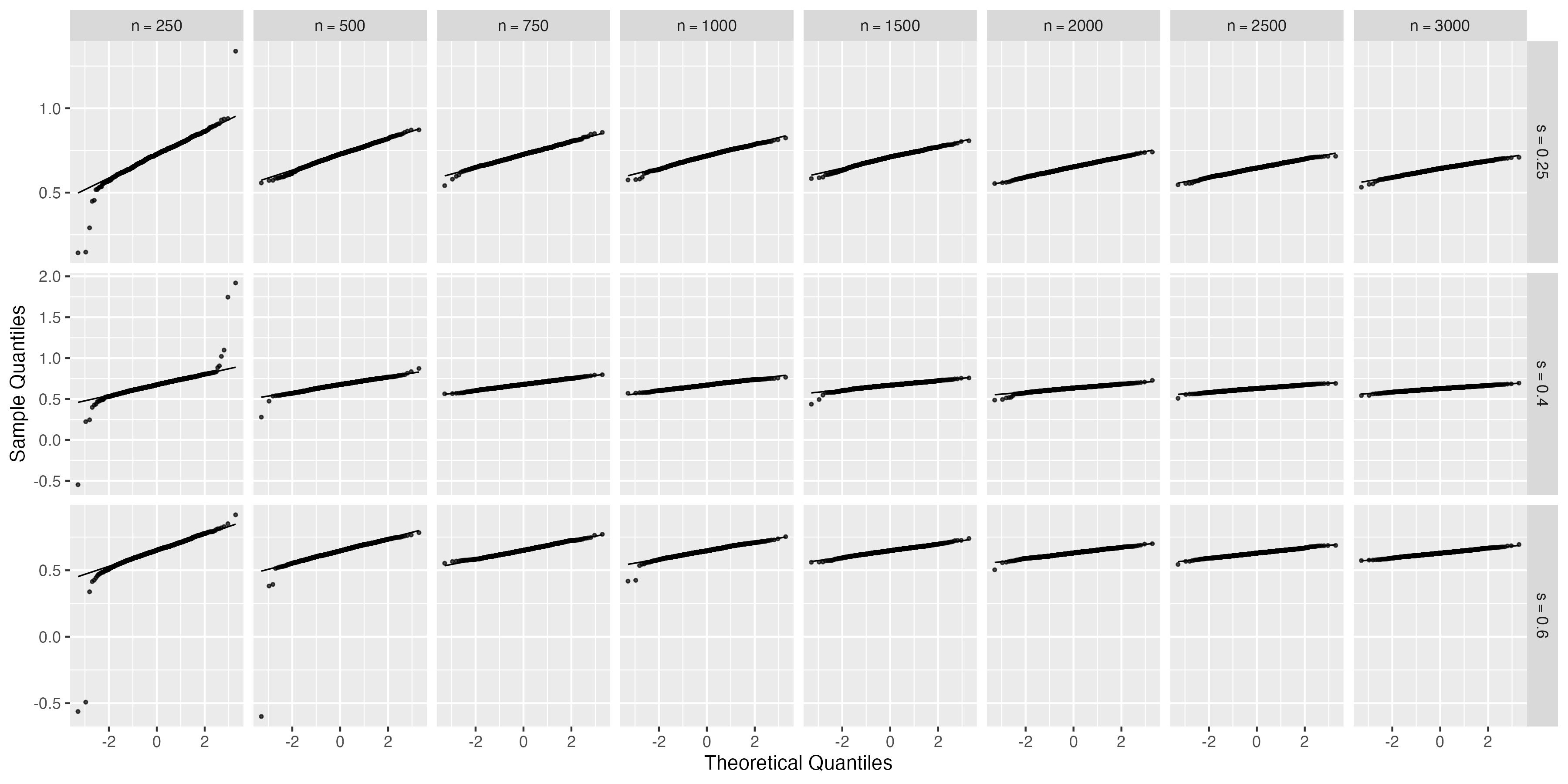} 
  \caption{Normal-QQ plots for the second coordinate of $\hat{\bbeta}^{(2)}$ across all simulation settings in Case~1, based on 1000 Monte Carlo replications.}
  \label{fig:sim_QQplot_beta1_1}
\end{figure}

\begin{figure}[htbp]
  \centering
  \includegraphics[width=\textwidth]{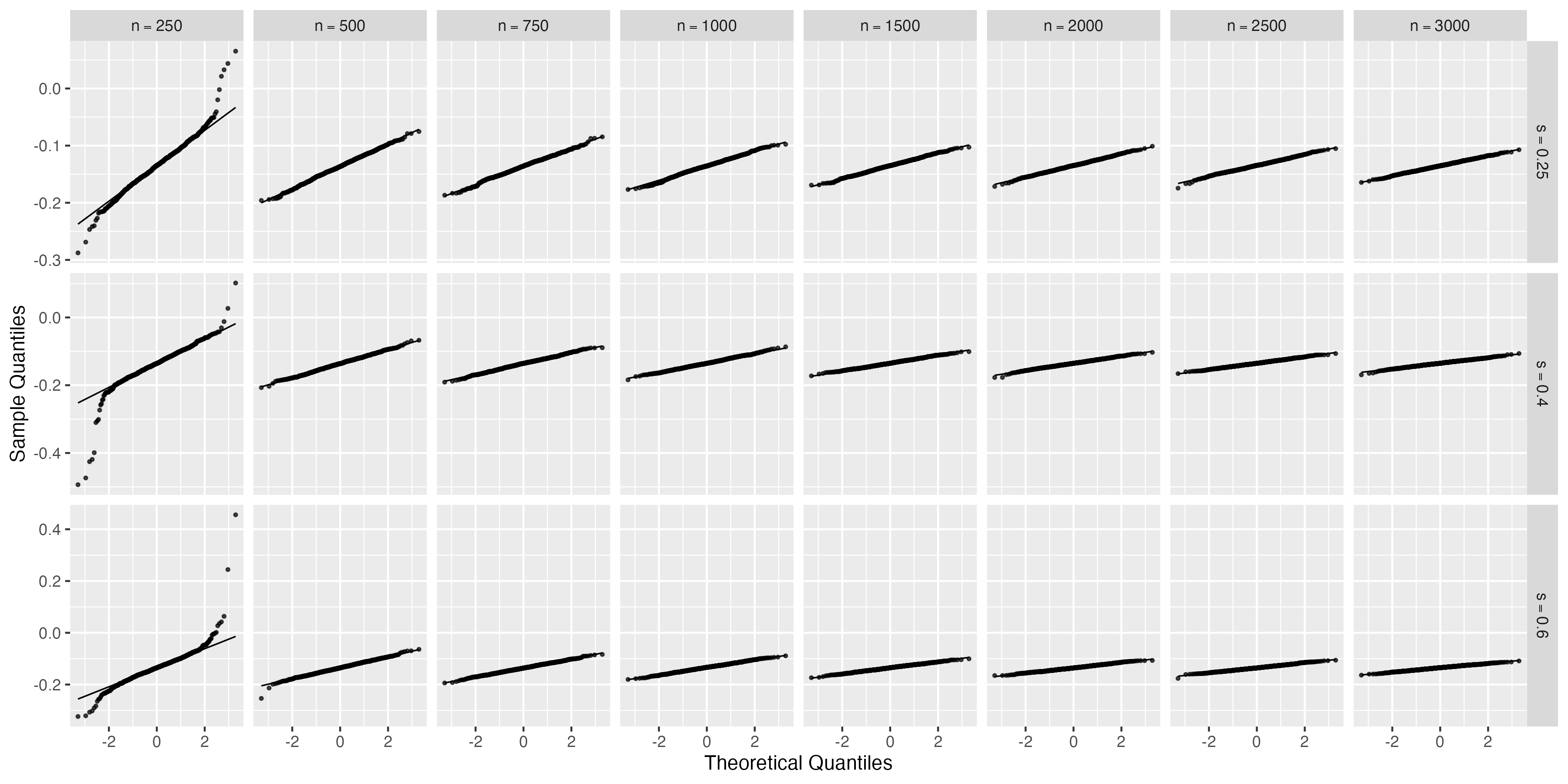} 
  \caption{Normal-QQ plots for the first coordinate of $\hat{\bbeta}^{(2)}$ across all simulation settings in Case~2, based on 1000 Monte Carlo replications.}
  \label{fig:sim_QQplot_beta0_2}
\end{figure}

\begin{figure}[htbp]
  \centering
  \includegraphics[width=\textwidth]{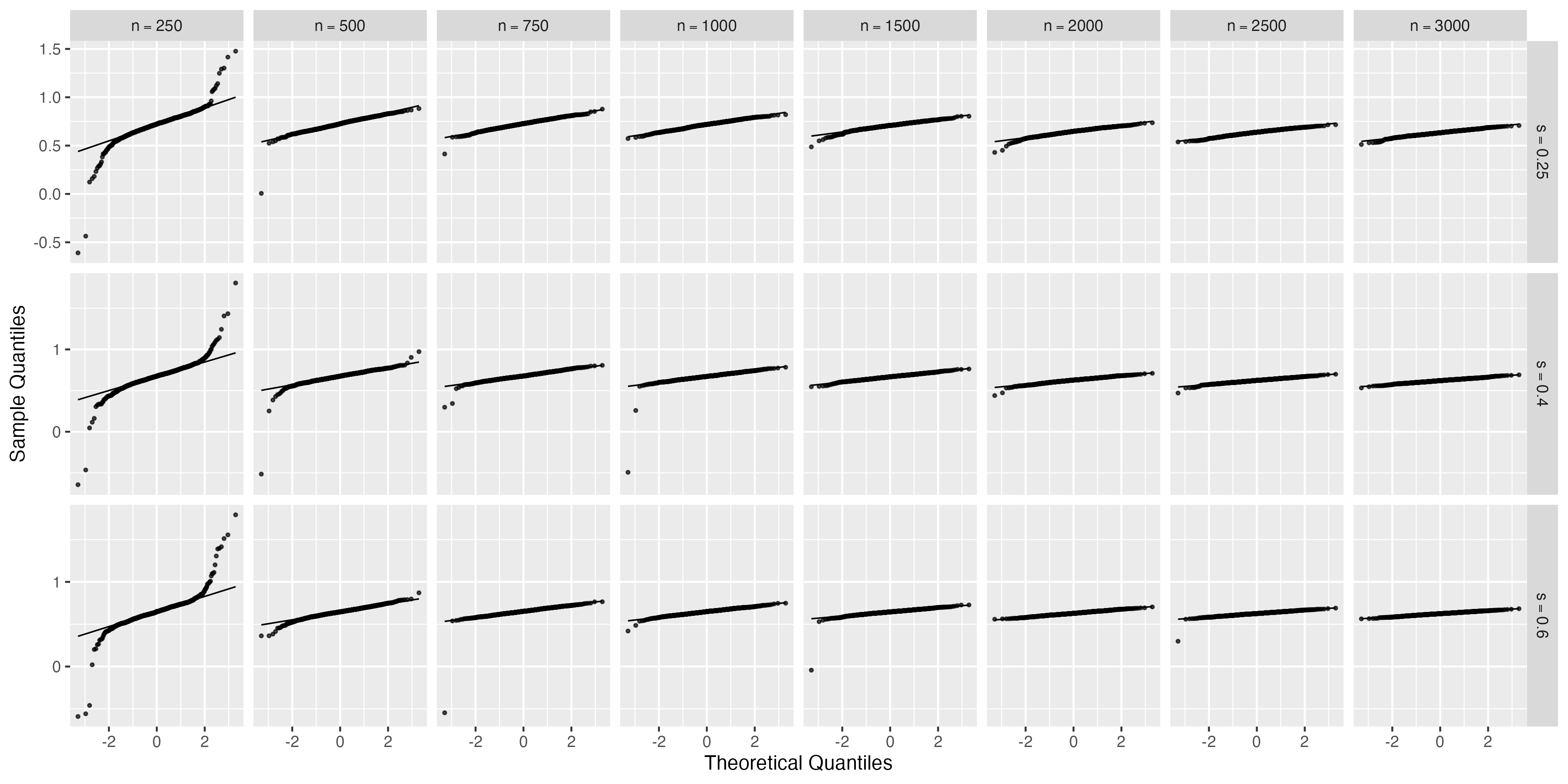} 
  \caption{Normal-QQ plots for the second coordinate of $\hat{\bbeta}^{(2)}$ across all simulation settings in Case~2, based on 1000 Monte Carlo replications.}
  \label{fig:sim_QQplot_beta1_2}
\end{figure}

\begin{figure}[htbp]
  \centering
  \includegraphics[width=\textwidth]{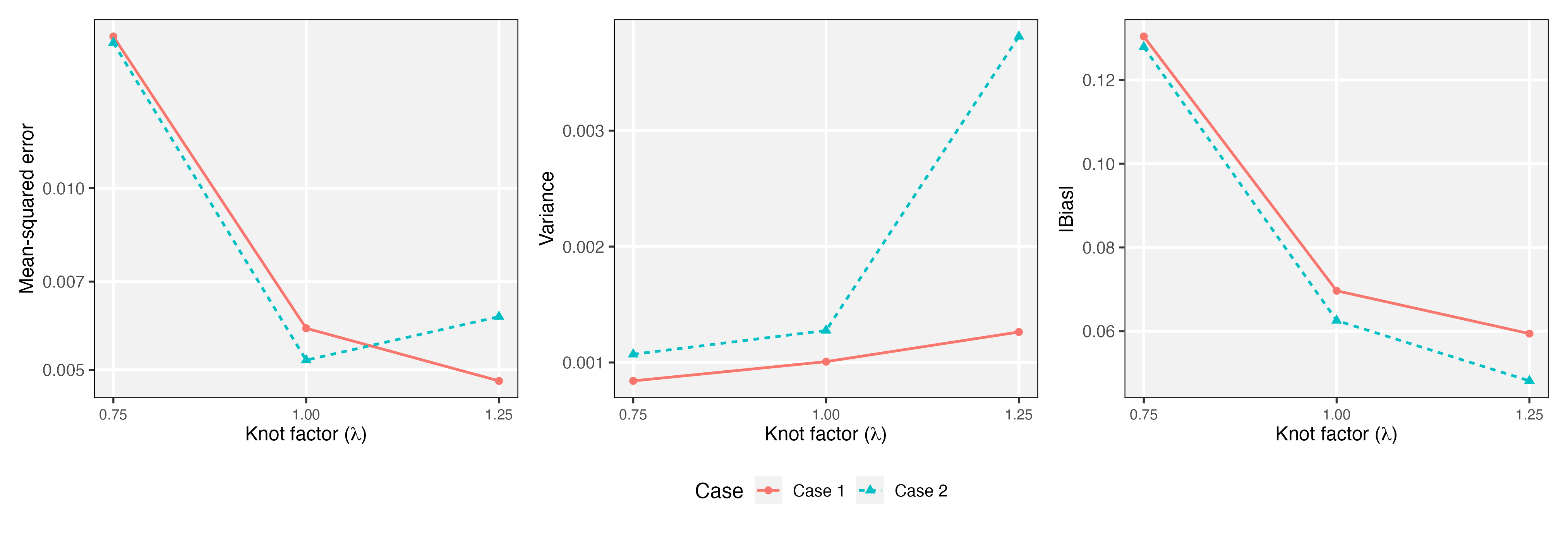} 
 \caption{Sensitivity of the higher-order estimator for the QTE, $\beta_1^*-\beta_0^*$, to the choice of spline basis dimension $k$. The horizontal axis reports the knot factor $\lambda$, where the number of interior knots for each continuous covariate is set to $\lceil \lambda n/100\rceil$. The three panels report the empirical mean squared error, variance, and absolute bias across 1000 Monte Carlo replications for Cases~1 and~2 with $n=2000$ and $s=0.25$.}
\label{fig:basis_sensitivity}
\end{figure}

\begin{figure}[htbp]
  \centering
  \includegraphics[width=\textwidth]{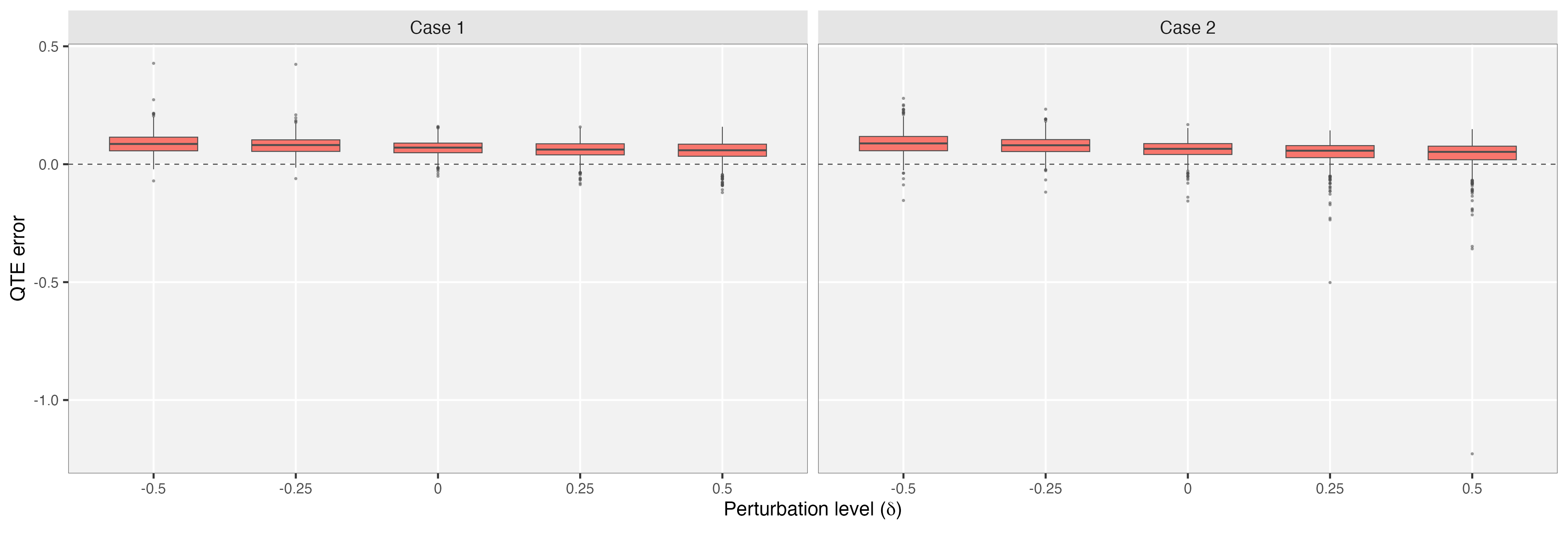} 
 \caption{Sensitivity of the higher-order estimator for the QTE, $\beta_1^*-\beta_0^*$, to perturbations in the initial estimator. The horizontal axis reports the perturbation level $\delta$, where the perturbed initial estimator is defined by $\bbeta_{\ini}^{(\delta)}=\bbeta_{\ini}+\delta(-1,1)^\top$. The boxplots display the resulting QTE estimation errors across 1000 Monte Carlo replications for Cases~1 and~2 with $n=2000$ and $s=0.25$.}
\label{fig:betainit_sensitivity}
\end{figure}

\subsection{Supplementary results on real data analysis}
\label{app:real}

This subsection provides additional numerical results for Section~\ref{sec:real}. Specifically, Tables~\ref{tab:basis_sensitivity}--\ref{tab:betainit_sensitivity} report sensitivity analyses of our higher-order estimator with respect to the choice of $k$ and to the quality of the initial estimator $\bbeta_{\ini}$. A summary of the results is provided at the end of Section~\ref{sec:real}.

\begin{table}[ht]
\centering
\begin{tabular}{cccccc}
  \toprule
$\tau$ & Knot Scale ($\lambda$)& Forest & Neural Net & Boosting & LASSO \\  
  \hline
\multirow{3}{*}{0.25} & 0.75   & 1.00 & 1.09 & 1.09 & 1.11 \\ 
   & 1.00   & 1.01 & 1.09 & 1.09 & 1.11 \\ 
   & 1.25   & 1.00 & 1.09 & 1.06 & 1.11 \\ 
   \hline
  \multirow{3}{*}{0.50} & 0.75   & 4.67 & 4.47 & 4.50 & 4.45 \\ 
   & 1.00   & 4.74 & 4.37 & 4.40 & 4.35 \\ 
   & 1.25   & 4.52 & 4.24 & 4.35 & 4.22 \\ 
   \hline
  \multirow{3}{*}{0.75} & 0.75   & 13.08 & 11.70 & 12.04 & 11.79 \\ 
   & 1.00   & 12.68 & 11.77 & 12.15 & 11.90 \\ 
   & 1.25   & 12.81 & 11.75 & 12.15 & 11.90 \\ 
   \bottomrule
\end{tabular}
\caption{Sensitivity of the HOE estimates of the QTE of 401(k) eligibility (in thousand dollars) to the choice of spline basis size. The knot scale $\lambda$, defined in Section~\ref{sec:real}, controls the size of the spline basis, with a larger value corresponding to a larger basis. Here $\tau\in\{25\%,50\%,75\%\}$ denotes the quantile level of the QTE.}
\label{tab:basis_sensitivity}
\end{table}

\begin{table}[H]
\centering

\begin{tabular}{cccccc}
  \toprule
$\tau$ & Perturbation Level ($\delta$) & Forest & Neural Net & Boosting & LASSO \\ 
  \hline
 \multirow{5}{*}{0.25} & -0.50 & 1.10 & 1.12 & 1.11 & 1.12 \\ 
   & -0.25 & 1.05 & 1.12 & 1.10 & 1.11 \\ 
   & 0.00 & 1.01 & 1.09 & 1.09 & 1.11 \\ 
   & 0.25 & 1.00 & 1.11 & 1.09 & 1.11 \\ 
   & 0.50 & 1.00 & 1.09 & 1.06 & 1.11 \\ 
   \hline
   \multirow{5}{*}{0.50} & -0.50 & 4.75 & 4.40 & 4.56 & 4.36 \\ 
   & -0.25 & 4.85 & 4.30 & 4.53 & 4.32 \\ 
   & 0.00 & 4.74 & 4.37 & 4.40 & 4.35 \\ 
   & 0.25 & 4.67 & 4.26 & 4.42 & 4.35 \\ 
   & 0.50 & 4.30 & 4.25 & 4.45 & 4.35 \\ 
   \hline
   \multirow{5}{*}{0.75} & -0.50 & 12.68 & 11.90 & 12.10 & 11.90 \\ 
   & -0.25 & 12.68 & 11.80 & 12.15 & 11.90 \\ 
   & 0.00 & 12.68 & 11.77 & 12.15 & 11.90 \\ 
   & 0.25 & 12.43 & 11.77 & 12.15 & 11.90 \\ 
   & 0.50 & 12.63 & 11.77 & 12.15 & 11.90 \\ 
   \bottomrule
\end{tabular}
\caption{Sensitivity of the HOE estimates of the QTE of 401(k) eligibility (in thousand dollars) to deterministic perturbations of the initial estimator. The perturbation follows the same scheme as in the simulation study, namely $\bbeta_{\ini}^{(\delta)}=\bbeta_{\ini}+\delta(-1,1)^\top$, where $\delta$ denotes the perturbation level. Here $\tau\in\{25\%,50\%,75\%\}$ denotes the quantile level of the QTE.}
\label{tab:betainit_sensitivity}
\end{table}
\end{appendices}

\putbib[\myreferences]
\end{bibunit}
	
\end{document}

%% file: preamble.tex
\usepackage{xcolor}
\definecolor{darkblue}{rgb}{0.0, 0.0, 0.55}
\definecolor{chicago-maroon}{RGB}{128,0,0}
\usepackage[bookmarks, colorlinks=true, plainpages = false, citecolor=darkblue, linkcolor=teal, anchorcolor=red, urlcolor=chicago-maroon, filecolor=chicago-maroon]{hyperref}
\usepackage{url}
\usepackage{amsmath, amsfonts, amsthm, amssymb, bm, bbm, verbatim, dsfont, mathtools, mathrsfs}
\usepackage{color, graphicx, appendix}
\usepackage{etoolbox}
\usepackage{almendra}
\usepackage{authblk}
\usepackage{array}
\usepackage{multirow}
\usepackage{bigints}
\usepackage{ulem}
\usepackage{cancel}

\usepackage{float}
\usepackage{pgf,tikz}
\usetikzlibrary{arrows.meta,shapes.arrows,shapes.geometric,shapes.multipart,fit,automata,decorations.pathmorphing,positioning}
\usetikzlibrary{calc}
\tikzset{
	>=Latex,
	true/.style={
		rectangle,
		draw=black, very thick,
		text width=6.5em,
		minimum height=2em,
		text centered,
		fill=gray, opacity = 0.5},
	punkt/.style={
		rectangle,
		rounded corners,
		draw=black, very thick,
		text width=6.5em,
		minimum height=2em,
		text centered},
	est/.style={
		circle,
		draw=black, very thick,
		text centered},
	shade/.style={
		circle,
		draw=black, very thick, fill=gray!50,
		text centered},
	weight/.style={
		circle,
		draw=black, very thick,
		text width=6.5em,
		minimum height=2em,
		text centered},
	pil/.style={
		->,
		thick,
		shorten <=2pt,
		shorten >=2pt,},
	double/.style={
		<->,
		thick,
		shorten <=2pt,
		shorten >=2pt,},
	dash/.style={
		dashed,
		thick,
		shorten <=2pt,
		shorten >=2pt,},
	dashdouble/.style={
		<->,
		dashed,
		thick,
		shorten <=2pt,
		shorten >=2pt,}
}

\usepackage{tikz-cd}
\makeatletter 
\newcolumntype{C}[1]{>{\centering\arraybackslash}p{#1}}

\usepackage{epstopdf}
\usepackage{fullpage}
\usepackage{algorithm}
\usepackage{algorithmicx}
\usepackage{subcaption}

\usepackage[nameinlink]{cleveref}
\usepackage[round]{natbib}

\usepackage[utf8]{inputenc} 
\usepackage[T1]{fontenc}    
\usepackage{url}            
\usepackage{booktabs}     
\usepackage{nicefrac}       
\usepackage{microtype}      
\usepackage{xcolor}
\usepackage{pdfpages}
\usepackage{scalerel}
\usepackage{dashrule}
\usepackage{lmodern}
\usepackage{fancyhdr}
\usepackage{tabu}
\usepackage{enumitem}

\def\IIFF{\mathbb{IF}}
\def\IF{\mathsf{IF}}

\def\E{\mathsf{E}}

\def\var{\mathrm{var}}
\def\cov{\mathrm{cov}}

\def\Holder{\text{H\"{o}lder}}

\def\op{\mathrm{op}}

\renewcommand{\[}{\left[}

\renewcommand{\hat}{\widehat}
\renewcommand{\tilde}{\widetilde}

\newcommand{\myreferences}{Master.bib}

\usepackage{xr,xspace}
\usepackage{todonotes}

\usepackage{caption,subcaption,soul}
\usepackage{algpseudocode}
\makeatletter
\usepackage{romanbar}

\theoremstyle{plain}
\newtheorem{theorem}{Theorem}
\newtheorem{lemma}{Lemma}
\newtheorem{proposition}{Proposition}

\theoremstyle{definition}
\newtheorem{definition}{Definition}
\newtheorem{example}{Example}

\newtheorem{assumption}{Assumption}

\newtheorem{problem}{Problem}

\newtheorem{remark}{Remark}
\AtBeginEnvironment{remark}{%
  \pushQED{\qed}%
}
\AtEndEnvironment{remark}{\popQED\endexample}

\newtheorem*{remark*}{Remark}

\usepackage{algorithm}
\usepackage{algpseudocode}

\usepackage{xspace, prettyref}

\newcommand{\diff}{{\mathrm d}}

\newcommand\indep{\protect\mathpalette{\protect\independenT}{\perp}}
\def\independenT#1#2{\mathrel{\rlap{$#1#2$}\mkern2mu{#1#2}}}

\def\E{\mathbb{E}}

\newrefformat{eq}{(\ref{#1})}
\newrefformat{chap}{Chapter~\ref{#1}}
\newrefformat{sec}{Section~\ref{#1}}
\newrefformat{alg}{Algorithm~\ref{#1}}
\newrefformat{fig}{Fig.~\ref{#1}}
\newrefformat{tab}{Table~\ref{#1}}
\newrefformat{rmk}{Remark~\ref{#1}}
\newrefformat{clm}{Claim~\ref{#1}}
\newrefformat{def}{Definition~\ref{#1}}
\newrefformat{cor}{Corollary~\ref{#1}}
\newrefformat{lmm}{Lemma~\ref{#1}}
\newrefformat{prop}{Proposition~\ref{#1}}
\newrefformat{prob}{Problem~\ref{#1}}
\newrefformat{app}{Appendix~\ref{#1}}
\newrefformat{hyp}{Hypothesis~\ref{#1}}
\newrefformat{thm}{Theorem~\ref{#1}}

\newcommand{\sfu}{{\mathsf{u}}}

\newcommand{\sfw}{{\mathsf{w}}}

\newcommand{\sfz}{{\mathsf{z}}}

\newcommand{\sfN}{{\mathsf{N}}}

\newcommand{\calB}{{\mathcal{B}}}
\newcommand{\calC}{{\mathcal{C}}}

\newcommand{\calF}{{\mathcal{F}}}
\newcommand{\calG}{{\mathcal{G}}}
\newcommand{\calH}{{\mathcal{H}}}

\newcommand{\calM}{{\mathcal{M}}}

\newcommand{\calO}{{\mathcal{O}}}

\newcommand{\calS}{{\mathcal{S}}}
\newcommand{\calT}{{\mathcal{T}}}
\newcommand{\calU}{{\mathcal{U}}}

\newcommand{\calX}{{\mathcal{X}}}
\newcommand{\calY}{{\mathcal{Y}}}

\def\TB{\mathrm{TB}}

\renewcommand{\tilde}{\widetilde}
\renewcommand{\hat}{\widehat}

\newcommand{\bma}{{\bm{a}}}
\newcommand{\bmb}{{\bm{b}}}

\newcommand{\bbE}{{\mathbb{E}}}

\newcommand{\bbM}{{\mathbb{M}}}

\newcommand{\bbP}{{\mathbb{P}}}

\newcommand{\bbR}{{\mathbb{R}}}

\newcommand{\bbU}{{\mathbb{U}}}

\newcommand{\bbZ}{{\mathbb{Z}}}

\newcommand{\sfzbar}{\bar{\sfz}}

\usepackage{tcolorbox}

\def\proj{\text{proj}}

\makeatletter
\def\ubar#1{\underline{\sbox\tw@{#1}\dp\tw@\z@\box\tw@}}
\makeatother

\def\Hajek{Haj\'{e}k}

\usepackage{caption}

%% file: HOIF_GTM_arXiv.bbl
\begin{thebibliography}{75}
\providecommand{\natexlab}[1]{#1}
\providecommand{\url}[1]{\texttt{#1}}
\expandafter\ifx\csname urlstyle\endcsname\relax
  \providecommand{\doi}[1]{doi: #1}\else
  \providecommand{\doi}{doi: \begingroup \urlstyle{rm}\Url}\fi

\bibitem[Abadie(2003)]{abadie2003semiparametric}
Alberto Abadie.
\newblock Semiparametric instrumental variable estimation of treatment response
  models.
\newblock \emph{Journal of Econometrics}, 113\penalty0 (2):\penalty0 231--263,
  2003.

\bibitem[Ai et~al.(2021)Ai, Linton, Motegi, and Zhang]{ai2021unified}
Chunrong Ai, Oliver Linton, Kaiji Motegi, and Zheng Zhang.
\newblock A unified framework for efficient estimation of general treatment
  models.
\newblock \emph{Quantitative Economics}, 12\penalty0 (3):\penalty0 779--816,
  2021.

\bibitem[Ao et~al.(2021)Ao, Calonico, and Lee]{ao2021multivalued}
Wallice Ao, Sebastian Calonico, and Ying-Ying Lee.
\newblock Multivalued treatments and decomposition analysis: An application to
  the {WIA} program.
\newblock \emph{Journal of Business \& Economic Statistics}, 39\penalty0
  (1):\penalty0 358--371, 2021.

\bibitem[Armstrong et~al.(2020)Armstrong, Koles{\'a}r, and
  Kwon]{armstrong2020bias}
Timothy~B Armstrong, Michal Koles{\'a}r, and Soonwoo Kwon.
\newblock Bias-aware inference in regularized regression models.
\newblock \emph{arXiv preprint arXiv:2012.14823}, 2020.

\bibitem[Athey et~al.(2018)Athey, Imbens, and Wager]{athey2018approximate}
Susan Athey, Guido~W Imbens, and Stefan Wager.
\newblock Approximate residual balancing: Debiased inference of average
  treatment effects in high dimensions.
\newblock \emph{Journal of the Royal Statistical Society Series B: Statistical
  Methodology}, 80\penalty0 (4):\penalty0 597--623, 2018.

\bibitem[Balakrishnan et~al.(2026)Balakrishnan, Kennedy, and
  Wasserman]{balakrishnan2026fundamental}
Sivaraman Balakrishnan, Edward~H Kennedy, and Larry Wasserman.
\newblock The fundamental limits of structure-agnostic functional estimation.
\newblock \emph{Statistical Science}, 2026.

\bibitem[Belloni et~al.(2015)Belloni, Chernozhukov, Chetverikov, and
  Kato]{belloni2015some}
Alexandre Belloni, Victor Chernozhukov, Denis Chetverikov, and Kengo Kato.
\newblock Some new asymptotic theory for least squares series: Pointwise and
  uniform results.
\newblock \emph{Journal of Econometrics}, 186\penalty0 (2):\penalty0 345--366,
  2015.

\bibitem[Benjamin(2003)]{benjamin2003does}
Daniel~J Benjamin.
\newblock Does 401 (k) eligibility increase saving? {E}vidence from propensity
  score subclassification.
\newblock \emph{Journal of Public Economics}, 87\penalty0 (5-6):\penalty0
  1259--1290, 2003.

\bibitem[Bhattacharya and Ghosh(1992)]{bhattacharya1992class}
Rabi~N Bhattacharya and Jayanta~K Ghosh.
\newblock A class of ${U}$-statistics and asymptotic normality of the number of
  $k$-clusters.
\newblock \emph{Journal of Multivariate Analysis}, 43\penalty0 (2):\penalty0
  300--330, 1992.

\bibitem[Bickel et~al.(1998)Bickel, Klaassen, Ritov, and
  Wellner]{bickel1998efficient}
Peter~J Bickel, Chris A~J Klaassen, Ya'acov Ritov, and Jon~A Wellner.
\newblock \emph{Efficient and Adaptive Estimation for Semiparametric Models}.
\newblock Johns Hopkins Series in the Mathematical Sciences. Springer New York,
  1998.
\newblock ISBN 9780387984735.

\bibitem[Bonhomme et~al.(2026)Bonhomme, Jochmans, Newey, and
  Weidner]{bonhomme2026higher}
St{\'e}phane Bonhomme, Koen Jochmans, Whitney~K Newey, and Martin Weidner.
\newblock Higher-order {N}eyman orthogonality in moment-condition models.
\newblock \emph{arXiv preprint arXiv:2605.10842}, 2026.

\bibitem[Bonvini and Kennedy(2022)]{bonvini2022fast}
Matteo Bonvini and Edward~H Kennedy.
\newblock Fast convergence rates for dose-response estimation.
\newblock \emph{arXiv preprint arXiv:2207.11825}, 2022.

\bibitem[Bonvini et~al.(2022)Bonvini, Kennedy, Ventura, and
  Wasserman]{bonvini2022sensitivity}
Matteo Bonvini, Edward Kennedy, Valerie Ventura, and Larry Wasserman.
\newblock Sensitivity analysis for marginal structural models.
\newblock \emph{arXiv preprint arXiv:2210.04681}, 2022.

\bibitem[Bonvini et~al.(2024)Bonvini, Kennedy, Dukes, and
  Balakrishnan]{bonvini2024doubly}
Matteo Bonvini, Edward~H Kennedy, Oliver Dukes, and Sivaraman Balakrishnan.
\newblock Doubly-robust inference and optimality in structure-agnostic models
  with smoothness.
\newblock \emph{arXiv preprint arXiv:2405.08525}, 2024.

\bibitem[Breunig and Chen(2024)]{breunig2024adaptive}
Christoph Breunig and Xiaohong Chen.
\newblock Adaptive, rate-optimal hypothesis testing in nonparametric {IV}
  models.
\newblock \emph{Econometrica}, 92\penalty0 (6):\penalty0 2027--2067, 2024.

\bibitem[Bruns-Smith et~al.(2026)Bruns-Smith, Dukes, Feller, and
  Ogburn]{bruns2026augmented}
David Bruns-Smith, Oliver Dukes, Avi Feller, and Elizabeth~L Ogburn.
\newblock Augmented balancing weights as linear regression.
\newblock \emph{Journal of the Royal Statistical Society Series B: Statistical
  Methodology}, 2026.

\bibitem[Cattaneo(2010)]{cattaneo2010efficient}
Matias~D Cattaneo.
\newblock Efficient semiparametric estimation of multi-valued treatment effects
  under ignorability.
\newblock \emph{Journal of Econometrics}, 155\penalty0 (2):\penalty0 138--154,
  2010.

\bibitem[Cattaneo and Jansson(2018)]{cattaneo2018kernel}
Matias~D Cattaneo and Michael Jansson.
\newblock Kernel-based semiparametric estimators: Small bandwidth asymptotics
  and bootstrap consistency.
\newblock \emph{Econometrica}, 86\penalty0 (3):\penalty0 955--995, 2018.

\bibitem[Cattaneo et~al.(2019)Cattaneo, Jansson, and Ma]{cattaneo2019two}
Matias~D Cattaneo, Michael Jansson, and Xinwei Ma.
\newblock Two-step estimation and inference with possibly many included
  covariates.
\newblock \emph{The Review of Economic Studies}, 86\penalty0 (3):\penalty0
  1095--1122, 2019.

\bibitem[Cattaneo et~al.(2020)Cattaneo, Farrell, and Feng]{cattaneo2020large}
Matias~D Cattaneo, Max~H Farrell, and Yingjie Feng.
\newblock Large sample properties of partitioning-based series estimators.
\newblock \emph{The Annals of Statistics}, 48\penalty0 (3):\penalty0
  1718--1741, 2020.

\bibitem[Cattaneo et~al.(2025)Cattaneo, Farrell, Jansson, and
  Masini]{cattaneo2025higher}
Matias~D Cattaneo, Max~H Farrell, Michael Jansson, and Ricardo~P Masini.
\newblock Higher-order refinements of small bandwidth asymptotics for
  density-weighted average derivative estimators.
\newblock \emph{Journal of Econometrics}, 252\penalty0 (Part B):\penalty0
  105855, 2025.

\bibitem[Cavaliere et~al.(2024)Cavaliere, Gon{\c{c}}alves, Nielsen, and
  Zanelli]{cavaliere2024bootstrap}
Giuseppe Cavaliere, S{\'\i}lvia Gon{\c{c}}alves, Morten~{\O}rregaard Nielsen,
  and Edoardo Zanelli.
\newblock Bootstrap inference in the presence of bias.
\newblock \emph{Journal of the American Statistical Association}, 119\penalty0
  (548):\penalty0 2908--2918, 2024.

\bibitem[Chakrabortty and Kuchibhotla(2025)]{chakrabortty2025tail}
Abhishek Chakrabortty and Arun~K Kuchibhotla.
\newblock Tail bounds for canonical ${U}$-statistics and ${U}$-processes with
  unbounded kernels.
\newblock \emph{arXiv preprint arXiv:2504.01318}, 2025.

\bibitem[Chen and Xie(2025)]{chen2025local}
Xiaohong Chen and Haitian Xie.
\newblock Local overidentification and efficiency gains in modern causal
  inference and data combination.
\newblock \emph{arXiv preprint arXiv:2510.16683}, 2025.

\bibitem[Chen et~al.(2003)Chen, Linton, and Van~Keilegom]{chen2003estimation}
Xiaohong Chen, Oliver Linton, and Ingrid Van~Keilegom.
\newblock Estimation of semiparametric models when the criterion function is
  not smooth.
\newblock \emph{Econometrica}, 71\penalty0 (5):\penalty0 1591--1608, 2003.

\bibitem[Chen et~al.(2024{\natexlab{a}})Chen, Liu, Ma, and
  Zhang]{chen2024causal}
Xiaohong Chen, Ying Liu, Shujie Ma, and Zheng Zhang.
\newblock Causal inference of general treatment effects using neural networks
  with a diverging number of confounders.
\newblock \emph{Journal of Econometrics}, 238\penalty0 (1):\penalty0 105555,
  2024{\natexlab{a}}.

\bibitem[Chen and Kato(2020)]{chen2020jackknife}
Xiaohui Chen and Kengo Kato.
\newblock Jackknife multiplier bootstrap: Finite sample approximations to the
  ${U}$-process supremum with applications.
\newblock \emph{Probability Theory and Related Fields}, 176\penalty0
  (3):\penalty0 1097--1163, 2020.

\bibitem[Chen et~al.(2024{\natexlab{b}})Chen, Liu, and
  Mukherjee]{chen2024method}
Xingyu Chen, Lin Liu, and Rajarshi Mukherjee.
\newblock Method-of-moments inference for {GLM}s and doubly-robust functionals
  under proportional asymptotics.
\newblock \emph{arXiv preprint arXiv:2408.06103}, 2024{\natexlab{b}}.

\bibitem[Chernozhukov and Hansen(2004)]{chernozhukov2004effects}
Victor Chernozhukov and Christian Hansen.
\newblock The effects of 401 (k) participation on the wealth distribution: An
  instrumental quantile regression analysis.
\newblock \emph{Review of Economics and Statistics}, 86\penalty0 (3):\penalty0
  735--751, 2004.

\bibitem[Chernozhukov and Hansen(2005)]{chernozhukov2005iv}
Victor Chernozhukov and Christian Hansen.
\newblock An {IV} model of quantile treatment effects.
\newblock \emph{Econometrica}, 73\penalty0 (1):\penalty0 245--261, 2005.

\bibitem[Chernozhukov et~al.(2013)Chernozhukov, Fern{\'a}ndez-Val, and
  Melly]{chernozhukov2013inference}
Victor Chernozhukov, Iv{\'a}n Fern{\'a}ndez-Val, and Blaise Melly.
\newblock Inference on counterfactual distributions.
\newblock \emph{Econometrica}, 81\penalty0 (6):\penalty0 2205--2268, 2013.

\bibitem[Chernozhukov et~al.(2018)Chernozhukov, Chetverikov, Demirer, Duflo,
  Hansen, Newey, and Robins]{chernozhukov2018double}
Victor Chernozhukov, Denis Chetverikov, Mert Demirer, Esther Duflo, Christian
  Hansen, Whitney Newey, and James Robins.
\newblock Double/debiased machine learning for treatment and structural
  parameters.
\newblock \emph{The Econometrics Journal}, 21\penalty0 (1):\penalty0 C1--C68,
  2018.

\bibitem[Chernozhukov et~al.(2022)Chernozhukov, Escanciano, Ichimura, Newey,
  and Robins]{chernozhukov2022locally}
Victor Chernozhukov, Juan~Carlos Escanciano, Hidehiko Ichimura, Whitney~K
  Newey, and James~M Robins.
\newblock Locally robust semiparametric estimation.
\newblock \emph{Econometrica}, 90\penalty0 (4):\penalty0 1501--1535, 2022.

\bibitem[Chernozhukov et~al.(2025)Chernozhukov, Deaner, Gao, Hausman, and
  Newey]{chernozhukov2025linear}
Victor Chernozhukov, Ben Deaner, Ying Gao, Jerry Hausman, and Whitney~K Newey.
\newblock Fisher-{S}chultz lecture: Linear estimation of structural and causal
  effects for nonseparable panel data.
\newblock Technical Report 33325, National Bureau of Economic Research, 2025.

\bibitem[Colangelo and Lee(2026)]{colangelo2026double}
Kyle Colangelo and Ying-Ying Lee.
\newblock Double debiased machine learning nonparametric inference with
  continuous treatments.
\newblock \emph{Journal of Business \& Economic Statistics}, 44\penalty0
  (1):\penalty0 67--79, 2026.

\bibitem[Donald and Hsu(2014)]{donald2014estimation}
Stephen~G Donald and Yu-Chin Hsu.
\newblock Estimation and inference for distribution functions and quantile
  functions in treatment effect models.
\newblock \emph{Journal of Econometrics}, 178:\penalty0 383--397, 2014.

\bibitem[Fan et~al.(2025)Fan, Qi, and Xu]{fan2025policy}
Yanqin Fan, Yuan Qi, and Gaoqian Xu.
\newblock Policy learning with $\alpha$-expected welfare.
\newblock \emph{arXiv preprint arXiv:2505.00256}, 2025.

\bibitem[Firpo(2007)]{firpo2007efficient}
Sergio Firpo.
\newblock Efficient semiparametric estimation of quantile treatment effects.
\newblock \emph{Econometrica}, 75\penalty0 (1):\penalty0 259--276, 2007.

\bibitem[Galvao and Wang(2015)]{galvao2015uniformly}
Antonio~F Galvao and Liang Wang.
\newblock Uniformly semiparametric efficient estimation of treatment effects
  with a continuous treatment.
\newblock \emph{Journal of the American Statistical Association}, 110\penalty0
  (512):\penalty0 1528--1542, 2015.

\bibitem[Hahn(1998)]{hahn1998role}
Jinyong Hahn.
\newblock On the role of the propensity score in efficient semiparametric
  estimation of average treatment effects.
\newblock \emph{Econometrica}, 66\penalty0 (2):\penalty0 315--331, 1998.

\bibitem[Hirano et~al.(2003)Hirano, Imbens, and Ridder]{hirano2003efficient}
Keisuke Hirano, Guido~W Imbens, and Geert Ridder.
\newblock Efficient estimation of average treatment effects using the estimated
  propensity score.
\newblock \emph{Econometrica}, 71\penalty0 (4):\penalty0 1161--1189, 2003.

\bibitem[Jin and Syrgkanis(2025)]{jin2025sharp}
Jikai Jin and Vasilis Syrgkanis.
\newblock Sharp structure-agnostic lower bounds for general linear functional
  estimation.
\newblock \emph{arXiv preprint arXiv:2512.17341}, 2025.

\bibitem[Kallus et~al.(2024)Kallus, Mao, and Uehara]{kallus2024localized}
Nathan Kallus, Xiaojie Mao, and Masatoshi Uehara.
\newblock Localized debiased machine learning: Efficient inference on quantile
  treatment effects and beyond.
\newblock \emph{Journal of Machine Learning Research}, 25:\penalty0 1--59,
  2024.

\bibitem[Kennedy(2019)]{kennedy2019nonparametric}
Edward~H Kennedy.
\newblock Nonparametric causal effects based on incremental propensity score
  interventions.
\newblock \emph{Journal of the American Statistical Association}, 114\penalty0
  (526):\penalty0 645--656, 2019.

\bibitem[Kennedy et~al.(2017)Kennedy, Ma, McHugh, and Small]{kennedy2017non}
Edward~H Kennedy, Zongming Ma, Matthew~D McHugh, and Dylan~S Small.
\newblock Non-parametric methods for doubly robust estimation of continuous
  treatment effects.
\newblock \emph{Journal of the Royal Statistical Society Series B: Statistical
  Methodology}, 79\penalty0 (4):\penalty0 1229--1245, 2017.

\bibitem[Kennedy et~al.(2024)Kennedy, Balakrishnan, Robins, and
  Wasserman]{kennedy2024minimax}
Edward~H Kennedy, Sivaraman Balakrishnan, James~M Robins, and Larry Wasserman.
\newblock Minimax rates for heterogeneous causal effect estimation.
\newblock \emph{The Annals of Statistics}, 52\penalty0 (2):\penalty0 793--816,
  2024.

\bibitem[Koltchinskii(2022)]{koltchinskii2022bootstrap}
Vladimir Koltchinskii.
\newblock Estimation of smooth functionals in high-dimensional models:
  Bootstrap chains and {G}aussian approximation.
\newblock \emph{The Annals of Statistics}, 50\penalty0 (4):\penalty0
  2386--2415, 2022.

\bibitem[Liu et~al.(2017)Liu, Mukherjee, Newey, and
  Robins]{liu2017semiparametric}
Lin Liu, Rajarshi Mukherjee, Whitney~K Newey, and James~M Robins.
\newblock Semiparametric efficient empirical higher order influence function
  estimators.
\newblock \emph{arXiv preprint arXiv:1705.07577}, 2017.

\bibitem[Liu et~al.(2024)Liu, Mukherjee, and Robins]{liu2024assumption}
Lin Liu, Rajarshi Mukherjee, and James~M Robins.
\newblock Assumption-lean falsification tests of rate double-robustness of
  double-machine-learning estimators.
\newblock \emph{Journal of Econometrics}, 240\penalty0 (2):\penalty0 105500,
  2024.

\bibitem[Manski(2004)]{manski2004statistical}
Charles~F Manski.
\newblock Statistical treatment rules for heterogeneous populations.
\newblock \emph{Econometrica}, 72\penalty0 (4):\penalty0 1221--1246, 2004.

\bibitem[McGrath and Mukherjee(2024)]{mcgrath2024nuisance}
Sean McGrath and Rajarshi Mukherjee.
\newblock Nuisance function tuning and sample splitting for optimal doubly
  robust estimation.
\newblock \emph{arXiv preprint arXiv:2212.14857}, 2024.

\bibitem[Newey(1990)]{newey1990semiparametric}
Whitney~K Newey.
\newblock Semiparametric efficiency bounds.
\newblock \emph{Journal of Applied Econometrics}, 5\penalty0 (2):\penalty0
  99--135, 1990.

\bibitem[Newey and McFadden(1994)]{newey1994large}
Whitney~K Newey and Daniel McFadden.
\newblock Large sample estimation and hypothesis testing.
\newblock \emph{Handbook of Econometrics}, 4:\penalty0 2111--2245, 1994.

\bibitem[Newey and Robins(2018)]{newey2018cross}
Whitney~K Newey and James~M Robins.
\newblock Cross-fitting and fast remainder rates for semiparametric estimation.
\newblock \emph{arXiv preprint arXiv:1801.09138}, 2018.

\bibitem[Newey et~al.(2004)Newey, Hsieh, and Robins]{newey2004twicing}
Whitney~K Newey, Fushing Hsieh, and James~M Robins.
\newblock Twicing kernels and a small bias property of semiparametric
  estimators.
\newblock \emph{Econometrica}, 72\penalty0 (3):\penalty0 947--962, 2004.

\bibitem[Patton et~al.(2019)Patton, Ziegel, and Chen]{patton2019dynamic}
Andrew~J Patton, Johanna~F Ziegel, and Rui Chen.
\newblock Dynamic semiparametric models for expected shortfall (and
  value-at-risk).
\newblock \emph{Journal of Econometrics}, 211\penalty0 (2):\penalty0 388--413,
  2019.

\bibitem[Powell(2020)]{powell2020quantile}
David Powell.
\newblock Quantile treatment effects in the presence of covariates.
\newblock \emph{Review of Economics and Statistics}, 102\penalty0 (5):\penalty0
  994--1005, 2020.

\bibitem[Richardson and Rotnitzky(2014)]{richardson2014causal}
Thomas~S Richardson and Andrea Rotnitzky.
\newblock Causal etiology of the research of {J}ames {M}. {R}obins.
\newblock \emph{Statistical Science}, 29\penalty0 (4):\penalty0 459--484, 2014.

\bibitem[Robins et~al.(2007)Robins, Sued, Lei-Gomez, and
  Rotnitzky]{robins2007comment}
James Robins, Mariela Sued, Quanhong Lei-Gomez, and Andrea Rotnitzky.
\newblock Comment: Performance of double-robust estimators when ``inverse
  probability'' weights are highly variable.
\newblock \emph{Statistical Science}, 22\penalty0 (4):\penalty0 544--559, 2007.

\bibitem[Robins et~al.(2008)Robins, Li, Tchetgen~Tchetgen, and van~der
  Vaart]{robins2008higher}
James Robins, Lingling Li, Eric Tchetgen~Tchetgen, and Aad van~der Vaart.
\newblock Higher order influence functions and minimax estimation of nonlinear
  functionals.
\newblock In \emph{Probability and Statistics: Essays in Honor of David A.
  Freedman}, pages 335--421. Institute of Mathematical Statistics, 2008.

\bibitem[Robins et~al.(2009)Robins, {Tchetgen Tchetgen}, Li, and {van der
  Vaart}]{robins2009semiparametric}
James Robins, Eric {Tchetgen Tchetgen}, Lingling Li, and Aad {van der Vaart}.
\newblock Semiparametric minimax rates.
\newblock \emph{Electronic Journal of Statistics}, 3:\penalty0 1305--1321,
  2009.

\bibitem[Robins et~al.(2016)Robins, Li, {Tchetgen Tchetgen}, and {van der
  Vaart}]{robins2016technical}
James Robins, Lingling Li, Eric {Tchetgen Tchetgen}, and Aad {van der Vaart}.
\newblock Technical report: Higher order influence functions and minimax
  estimation of nonlinear functionals.
\newblock \emph{arXiv preprint arXiv:1601.05820}, 2016.

\bibitem[Robins and Ritov(1997)]{robins1997toward}
James~M Robins and Ya'acov Ritov.
\newblock Toward a curse of dimensionality appropriate ({CODA}) asymptotic
  theory for semi-parametric models.
\newblock \emph{Statistics in Medicine}, 16\penalty0 (3):\penalty0 285--319,
  1997.

\bibitem[Robins et~al.(1994)Robins, Rotnitzky, and Zhao]{robins1994estimation}
James~M Robins, Andrea Rotnitzky, and Lue~Ping Zhao.
\newblock Estimation of regression coefficients when some regressors are not
  always observed.
\newblock \emph{Journal of the American Statistical Association}, 89\penalty0
  (427):\penalty0 846--866, 1994.

\bibitem[Robins et~al.(2000)Robins, Hern{\'a}n, and
  Brumback]{hernan2000marginal}
James~M Robins, Miguel~A Hern{\'a}n, and Babette Brumback.
\newblock Marginal structural models and causal inference in epidemiology.
\newblock \emph{Epidemiology}, 11\penalty0 (5):\penalty0 550--560, 2000.

\bibitem[Robins et~al.(2023)Robins, Li, Liu, Mukherjee, Tchetgen~Tchetgen, and
  van~der Vaart]{robins2023minimax}
James~M Robins, Lingling Li, Lin Liu, Rajarshi Mukherjee, Eric
  Tchetgen~Tchetgen, and Aad van~der Vaart.
\newblock Minimax estimation of a functional on a structured high-dimensional
  model ({C}orrected version).
\newblock \emph{arXiv preprint arXiv:1512.02174}, 2023.

\bibitem[Rotnitzky et~al.(2021)Rotnitzky, Smucler, and
  Robins]{rotnitzky2021characterization}
Andrea Rotnitzky, Ezequiel Smucler, and James~M Robins.
\newblock Characterization of parameters with a mixed bias property.
\newblock \emph{Biometrika}, 108\penalty0 (1):\penalty0 231--238, 2021.

\bibitem[Schick(1986)]{schick1986asymptotically}
Anton Schick.
\newblock On asymptotically efficient estimation in semiparametric models.
\newblock \emph{The Annals of Statistics}, 14\penalty0 (3):\penalty0
  1139--1151, 1986.

\bibitem[Shah and Peters(2020)]{shah2020hardness}
Rajen~D Shah and Jonas Peters.
\newblock The hardness of conditional independence testing and the generalised
  covariance measure.
\newblock \emph{The Annals of Statistics}, 48\penalty0 (3):\penalty0
  1514--1538, 2020.

\bibitem[Stone(1982)]{stone1982optimal}
Charles~J Stone.
\newblock Optimal global rates of convergence for nonparametric regression.
\newblock \emph{The Annals of Statistics}, 10\penalty0 (4):\penalty0
  1040--1053, 1982.

\bibitem[Su et~al.(2019)Su, Ura, and Zhang]{su2019non}
Liangjun Su, Takuya Ura, and Yichong Zhang.
\newblock Non-separable models with high-dimensional data.
\newblock \emph{Journal of Econometrics}, 212\penalty0 (2):\penalty0 646--677,
  2019.

\bibitem[Tchetgen~Tchetgen et~al.(2010)Tchetgen~Tchetgen, Robins, and
  Rotnitzky]{tchetgen2010doubly}
Eric~J Tchetgen~Tchetgen, James~M Robins, and Andrea Rotnitzky.
\newblock On doubly robust estimation in a semiparametric odds ratio model.
\newblock \emph{Biometrika}, 97\penalty0 (1):\penalty0 171--180, 2010.

\bibitem[van~der Vaart and Wellner(2023)]{van2023weak}
Aad~W van~der Vaart and Jon Wellner.
\newblock \emph{Weak Convergence and Empirical Processes: with Applications to
  Statistics}.
\newblock Springer Science \& Business Media, 2023.

\bibitem[Xu et~al.(2022)Xu, Liu, and Liu]{xu2022deepmed}
Siqi Xu, Lin Liu, and Zhonghua Liu.
\newblock Deep{M}ed: Semiparametric causal mediation analysis with debiased
  deep learning.
\newblock In \emph{Proceedings of the 36th International Conference on Neural
  Information Processing Systems}, pages 28238--28251, 2022.

\bibitem[Zheng et~al.(2025)Zheng, Bonvini, and Guo]{zheng2025perturbed}
Mengchu Zheng, Matteo Bonvini, and Zijian Guo.
\newblock Perturbed double machine learning: Nonstandard inference beyond the
  parametric length.
\newblock \emph{arXiv preprint arXiv:2511.01222}, 2025.

\end{thebibliography}


\begin{thebibliography}{22}
\providecommand{\natexlab}[1]{#1}
\providecommand{\url}[1]{\texttt{#1}}
\expandafter\ifx\csname urlstyle\endcsname\relax
  \providecommand{\doi}[1]{doi: #1}\else
  \providecommand{\doi}{doi: \begingroup \urlstyle{rm}\Url}\fi

\bibitem[Ai and Chen(2003)]{ai2003efficient}
Chunrong Ai and Xiaohong Chen.
\newblock Efficient estimation of models with conditional moment restrictions
  containing unknown functions.
\newblock \emph{Econometrica}, 71\penalty0 (6):\penalty0 1795--1843, 2003.

\bibitem[Ai et~al.(2021)Ai, Linton, Motegi, and Zhang]{ai2021unified}
Chunrong Ai, Oliver Linton, Kaiji Motegi, and Zheng Zhang.
\newblock A unified framework for efficient estimation of general treatment
  models.
\newblock \emph{Quantitative Economics}, 12\penalty0 (3):\penalty0 779--816,
  2021.

\bibitem[Andrews(1994)]{andrews1994empirical}
Donald~WK Andrews.
\newblock Empirical process methods in econometrics.
\newblock \emph{Handbook of Econometrics}, 4:\penalty0 2247--2294, 1994.

\bibitem[Bandeira et~al.(2025)Bandeira, Lucca, Nizic-Nikolac, and van
  Handel]{bandeira2025matrix}
Afonso~S Bandeira, Kevin Lucca, Petar Nizic-Nikolac, and Ramon van Handel.
\newblock Matrix chaos inequalities and chaos of combinatorial type.
\newblock In \emph{Proceedings of the 57th Annual ACM Symposium on Theory of
  Computing}, pages 795--805, 2025.

\bibitem[Bhattacharya and Ghosh(1992)]{bhattacharya1992class}
Rabi~N Bhattacharya and Jayanta~K Ghosh.
\newblock A class of ${U}$-statistics and asymptotic normality of the number of
  $k$-clusters.
\newblock \emph{Journal of Multivariate Analysis}, 43\penalty0 (2):\penalty0
  300--330, 1992.

\bibitem[Bonvini and Kennedy(2022)]{bonvini2022fast}
Matteo Bonvini and Edward~H Kennedy.
\newblock Fast convergence rates for dose-response estimation.
\newblock \emph{arXiv preprint arXiv:2207.11825}, 2022.

\bibitem[Cattaneo and Jansson(2018)]{cattaneo2018kernel}
Matias~D Cattaneo and Michael Jansson.
\newblock Kernel-based semiparametric estimators: Small bandwidth asymptotics
  and bootstrap consistency.
\newblock \emph{Econometrica}, 86\penalty0 (3):\penalty0 955--995, 2018.

\bibitem[Cattaneo et~al.(2019)Cattaneo, Jansson, and Ma]{cattaneo2019two}
Matias~D Cattaneo, Michael Jansson, and Xinwei Ma.
\newblock Two-step estimation and inference with possibly many included
  covariates.
\newblock \emph{The Review of Economic Studies}, 86\penalty0 (3):\penalty0
  1095--1122, 2019.

\bibitem[Cattaneo et~al.(2025)Cattaneo, Farrell, Jansson, and
  Masini]{cattaneo2025higher}
Matias~D Cattaneo, Max~H Farrell, Michael Jansson, and Ricardo~P Masini.
\newblock Higher-order refinements of small bandwidth asymptotics for
  density-weighted average derivative estimators.
\newblock \emph{Journal of Econometrics}, 252\penalty0 (Part B):\penalty0
  105855, 2025.

\bibitem[Chen and Xie(2025)]{chen2025local}
Xiaohong Chen and Haitian Xie.
\newblock Local overidentification and efficiency gains in modern causal
  inference and data combination.
\newblock \emph{arXiv preprint arXiv:2510.16683}, 2025.

\bibitem[Chen and Kato(2020)]{chen2020jackknife}
Xiaohui Chen and Kengo Kato.
\newblock Jackknife multiplier bootstrap: Finite sample approximations to the
  ${U}$-process supremum with applications.
\newblock \emph{Probability Theory and Related Fields}, 176\penalty0
  (3):\penalty0 1097--1163, 2020.

\bibitem[Kennedy et~al.(2017)Kennedy, Ma, McHugh, and Small]{kennedy2017non}
Edward~H Kennedy, Zongming Ma, Matthew~D McHugh, and Dylan~S Small.
\newblock Non-parametric methods for doubly robust estimation of continuous
  treatment effects.
\newblock \emph{Journal of the Royal Statistical Society Series B: Statistical
  Methodology}, 79\penalty0 (4):\penalty0 1229--1245, 2017.

\bibitem[Kosorok(2008)]{kosorok2008introduction}
Michael~R Kosorok.
\newblock \emph{Introduction to Empirical Processes and Semiparametric
  Inference.}
\newblock Springer, 2008.

\bibitem[Lin et~al.(2024)Lin, Su, Mou, Ding, and Wainwright]{lin2024worthwhile}
Licong Lin, Fangzhou Su, Wenlong Mou, Peng Ding, and Martin Wainwright.
\newblock When is it worthwhile to jackknife? {B}reaking the quadratic barrier
  for ${Z}$-estimators.
\newblock \emph{arXiv preprint arXiv:2411.02909}, 2024.

\bibitem[Liu et~al.(2017)Liu, Mukherjee, Newey, and
  Robins]{liu2017semiparametric}
Lin Liu, Rajarshi Mukherjee, Whitney~K Newey, and James~M Robins.
\newblock Semiparametric efficient empirical higher order influence function
  estimators.
\newblock \emph{arXiv preprint arXiv:1705.07577}, 2017.

\bibitem[Liu et~al.(2024)Liu, Mukherjee, and Robins]{liu2024assumption}
Lin Liu, Rajarshi Mukherjee, and James~M Robins.
\newblock Assumption-lean falsification tests of rate double-robustness of
  double-machine-learning estimators.
\newblock \emph{Journal of Econometrics}, 240\penalty0 (2):\penalty0 105500,
  2024.

\bibitem[Minsker and Wei(2019)]{minsker2019moment}
Stanislav Minsker and Xiaohan Wei.
\newblock Moment inequalities for matrix-valued ${U}$-statistics of order 2.
\newblock \emph{Electronic Journal of Probability}, 24\penalty0 (133):\penalty0
  1--32, 2019.

\bibitem[Robins et~al.(2016)Robins, Li, {Tchetgen Tchetgen}, and {van der
  Vaart}]{robins2016technical}
James Robins, Lingling Li, Eric {Tchetgen Tchetgen}, and Aad {van der Vaart}.
\newblock Technical report: Higher order influence functions and minimax
  estimation of nonlinear functionals.
\newblock \emph{arXiv preprint arXiv:1601.05820}, 2016.

\bibitem[Robins et~al.(2000)Robins, Hern{\'a}n, and
  Brumback]{hernan2000marginal}
James~M Robins, Miguel~A Hern{\'a}n, and Babette Brumback.
\newblock Marginal structural models and causal inference in epidemiology.
\newblock \emph{Epidemiology}, 11\penalty0 (5):\penalty0 550--560, 2000.

\bibitem[Robins et~al.(2023)Robins, Li, Liu, Mukherjee, Tchetgen~Tchetgen, and
  van~der Vaart]{robins2023minimax}
James~M Robins, Lingling Li, Lin Liu, Rajarshi Mukherjee, Eric
  Tchetgen~Tchetgen, and Aad van~der Vaart.
\newblock Minimax estimation of a functional on a structured high-dimensional
  model ({C}orrected version).
\newblock \emph{arXiv preprint arXiv:1512.02174}, 2023.

\bibitem[Tropp(2015)]{tropp2015introduction}
Joel~A Tropp.
\newblock An introduction to matrix concentration inequalities.
\newblock \emph{Foundations and Trends{\textregistered} in Machine Learning},
  8\penalty0 (1-2):\penalty0 1--230, 2015.

\bibitem[van~der Vaart and Wellner(2023)]{van2023weak}
Aad~W van~der Vaart and Jon Wellner.
\newblock \emph{Weak Convergence and Empirical Processes: with Applications to
  Statistics}.
\newblock Springer Science \& Business Media, 2023.

\end{thebibliography}
